\def\C{\mathbb{C}}
\def\Z{\mathbb{Z}}
\def\N{\mathbb{N}}
\let\ds\displaystyle
\newcommand{\be}{\begin{equation}}
\newcommand{\ee}{\end{equation}}
\newcommand{\pa}{\partial}
\theoremstyle{plain}
\newtheorem{theorem}{Theorem}[chapter]
\newtheorem{lemma}{Lemma}[chapter]
\newtheorem{corollary}{Corollary}[chapter]
\newtheorem{remark}{Remark}[chapter]
\newtheorem{proposition}{Proposition}[chapter]
\newtheorem{conjecture}{Conjecture}[chapter]
\newtheorem{question}{Question}[chapter]
\newtheorem{op}{Open problem}[chapter]
\newtheorem{observation}{Observation}[chapter]
\theoremstyle{remark}
\newtheorem{definition}{Definition}[chapter]
\newtheorem{example}{Example}[chapter]
\newtheorem{exercise}{Exercise}[chapter]
\def\dsize{\displaystyle}
\def\p{\partial }
\begin{document}

\begin{center}

\vspace{100mm}

{\Huge{\sc Symmetry approach to  integrability   and non-associative \\[5mm] algebraic structures.}
}

%\vspace{5mm}

\vspace{35pt}
{\bf \LARGE Vladimir Sokolov$^{a,b}$}\\[25pt]

{${}^a$\normalsize
 Landau Institute for Theoretical Physics, \\
142432 Chernogolovka (Moscow region), Russia\\}e-mail: {\normalsize
 \it
vsokolov@landau.ac.ru}\vspace{15pt}

{${}^b$\normalsize
 Sao Paulo University, \\
 Instituto de Matematica e Estatistica
 \\ 05508-090, Sao Paulo, Brazil} \vspace{10pt}

\vspace{3mm}

\end{center}

\tableofcontents

\date{}

\chapter{Introduction}

\qquad The symmetry approach to the 
 classification of integrable PDEs 
has been developed since 1979 by: A.~Shabat, A.~Zhiber,  N.~Ibragimov,  A.~Fokas,
V.~Sokolov, S.~Svinolupov, A.~Mikhailov,  R.~Yamilov, V.~Adler,
P.~Olver, J.~Sanders, J.P.~Wang, V.~Novikov, A.~Meshkov, D.~Demskoy, H.~Chen, Y.~Lee,
C.~Liu, I.~Khabibullin,   B.~Magadeev,  
 R.~Heredero,   V.~Marikhin,  M.~Foursov, S.~Startsev, M.~Balakhnev, and others. It is very efficient for PDEs with two independent variables and  under additional assumptions  can be applied to ODEs.
 
The basic definition of the symmetry approach is: 
 
\begin{definition} A differential equation is integrable if it possesses infinitely many
higher infinitesimal symmetries.
\end{definition}

The reader may ask: ``Why should equations integrable in one sense or another have higher symmetries?'' 

A not rigorous but instructive answer is the following. 
Linear differential equations have infinitely many higher
symmetries.  Integrable nonlinear differential equations known as of today are related to linear equations 
by some transformations. The same transformations produce higher
symmetries of  the nonlinear equation  from symmetries of the  linear one.

It turns out that the existence of higher symmetries allows one to find all integrable equations 
from a beforehand prescribed class of equations.
The first classification result in  frames of the symmetry approach was:
\begin{theorem}\label{ZS} {\rm  \cite{zibshab1}}  
The nonlinear hyperbolic equation
of the form
$$
u_{xy}=F(u)
$$
possesses higher symmetries iff (up to scalings and shifts)
$$
F(u)=e^{u}, \quad F(u)=e^u+e^{-u}, \quad \mbox{or} \quad F(u)=e^u+e^{-2u}.
$$
\end{theorem}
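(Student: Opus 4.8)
The plan is to translate the existence of a higher symmetry into an overdetermined differential constraint on $F$ and then integrate that constraint. A higher symmetry in the $x$-direction is an evolutionary flow $u_t=G(u,u_1,u_2,\dots,u_n)$, $u_k:=\partial_x^k u$, compatible with $u_{xy}=F(u)$. Differentiating the equation along the flow gives the determining relation $D_xD_yG=F'(u)\,G$, where on solutions $D_yu_k=D_x^{k-1}F(u)$. First I would observe that the coefficient of the transversal variable $u_y$ on the left forces $D_x(\partial G/\partial u)=0$, so $\partial G/\partial u$ is constant; discarding the trivial linear-in-$u$ part one may take $G=G(u_1,\dots,u_n)$ and rewrite the relation as the functional identity
$$D_x\Big(\sum_{k=1}^n (D_x^{k-1}F)\,\partial_{u_k}G\Big)=F'(u)\,G,$$
required to hold identically in the independent jet variables $u,u_1,\dots,u_n$.

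Next I would run the order-by-order analysis of this identity. Normalizing $G=u_n+(\text{lower order in }u_n)$, the coefficient of the highest jet $u_{n+1}$ vanishes automatically, while the successive lower coefficients recursively determine $G$ and, crucially, impose ordinary differential equations on $F$. At order $n=3$, with $G=u_3+G^{(0)}(u_1,u_2)$, the coefficient of $u_3$ already forces $G^{(0)}$ to be a cubic polynomial in $u_1$ independent of $u_2$, and the coefficient of $u_2$ then yields the single condition $F''=\lambda F$ for a constant $\lambda$. Solving this and removing scalings and shifts produces exactly the exponential family, with symmetry $u_t=u_3-\tfrac{\lambda}{2}u_1^3+c\,u_1$ (the modified-KdV flow); this recovers the Darboux-integrable case $F=e^u$ and the case $F=e^u+e^{-u}$. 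The third case is invisible at order three: $F=e^u+e^{-2u}$ satisfies no relation $F''=\lambda F$, and its lowest symmetry has order five, so I would repeat the computation for $n=5$, where the analogous obstruction is a higher-order ODE for $F$ whose admissible solutions single out $e^u+e^{-2u}$.

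To obtain the full classification without repeating this for every $n$, I would invoke the organizing principle behind these obstructions: the Laplace cascade of the linearized operator $v_{xy}=F'(u)\,v$. Its Laplace invariants satisfy $h_0=F'(u)$ together with a Toda-type recursion $(\ln h_i)_{xy}=2h_i-h_{i+1}-h_{i-1}$, and the existence of higher symmetries is equivalent to this cascade either terminating ($h_N=0$, the Darboux case) or closing periodically. Reducing the recursion by means of $u_{xy}=F$ turns closure into a finite system of ODEs for $F(u)$, and the admissible closures correspond to the finite and affine rank-$\le 2$ data $A_1,\ A_1^{(1)},\ A_2^{(2)}$, which integrate to $e^u,\ e^u+e^{-u},\ e^u+e^{-2u}$ respectively. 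Sufficiency is then settled in the forward direction by exhibiting the symmetries explicitly: for $F=e^u$ the equation is Darboux integrable and carries symmetries of every order, while $e^u+e^{-u}$ and $e^u+e^{-2u}$ generate the modified-KdV and the order-five Tzitzeica hierarchies.

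The main obstacle is the necessity statement uniformly in the order $n$: one must rule out an exotic $F$ possessing a symmetry only at some large order with no lower companion. This is exactly where the cascade viewpoint does the work, replacing the bare hypothesis ``a symmetry of some order exists'' by an infinite sequence of $F$-conditions (the canonical densities lying in the image of $D_x$, equivalently the vanishing or periodicity of the Laplace invariants), whose first few members already confine $F$ to the three listed profiles. The remaining effort is the bookkeeping that identifies the admissible resonant orders and verifies that no further closures of the cascade arise.
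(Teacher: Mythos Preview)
The paper does not prove this theorem; it is stated in the Introduction with a citation to \cite{zibshab1} and no argument is given. So there is no ``paper's proof'' to compare against, and your sketch stands or falls on its own.

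Your steps 1--4 are the right skeleton and match what the original Zhiber--Shabat computation does: write the determining equation $D_xD_yG=F'(u)G$, reduce to $G=G(u_1,\dots,u_n)$, and read off ODE obstructions on $F$ order by order. The order-three analysis yielding $F''=\lambda F$ is correct and accounts for $e^u$ and $e^u+e^{-u}$; the order-five computation for Tzitzeica is more laborious but straightforward.

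The gap is in your step 5. The equivalence you assert --- existence of higher symmetries $\Leftrightarrow$ the Laplace cascade of the linearization terminates or closes periodically --- is not true as stated. For $u_{xy}=e^u+e^{-u}$ and $u_{xy}=e^u+e^{-2u}$ the Laplace sequence of $v_{xy}=F'(u)v$ neither terminates nor becomes periodic; these equations are \emph{not} Darboux integrable (compare Chapter~\ref{Liutype} of the paper, where termination characterises exactly the Liouville-type, i.e.\ Darboux integrable, case). You seem to be conflating two different things: the Laplace cascade of the linearised operator, and the fact that the nonlinear equation itself fits into an open or affine Toda pattern $A_1,\,A_1^{(1)},\,A_2^{(2)}$. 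The latter is useful for \emph{sufficiency} (one can exhibit the hierarchies), but it does not supply the uniform necessity argument you want.

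To close necessity without fixing the order, the machinery actually used is the one the paper develops in Section~2.2 (and in Section~2.2.9 for non-evolution equations): from an infinite sequence of symmetries one extracts a formal recursion operator whose existence produces an infinite chain of canonical conserved densities; the first few of these already force the ODEs on $F$ that single out the three exponential profiles. Alternatively one argues, as in the original paper, that a symmetry of order $n$ forces the same finite list of obstructions regardless of $n$. Either route replaces your Laplace-cascade paragraph; the rest of your outline is sound.
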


Several reviews \cite{sokshab}--\cite{int2} are devoted to the symmetry approach (see also \cite{ibrag}--\cite{Olv93}). In this book I mostly concentrate on the results which are not 
 covered by these papers and books.  
I  consider only ODEs and PDEs with two independent variables.   For integrable multi-dimensional equations of KP type, most of the methods under consideration are
not applicable.
 
The statements are formulated  in the most simple form  but often possible ways for generalization are pointed out.  In the proofs only essential points are mentioned while for technical details references are given.   The text contains many carefully selected examples, which give a perception  on the subject. A number of open problems are suggested.

The author is not a scrabble in original references.  Instead, some references to reviews, where an information of pioneer works can be found, are given.

The book  addresses both experts in algebra and in classical integrable systems. It is suitable for PhD students and can serve as an introduction to classical integrability for scientists with algebraic inclinations.

The exposition is based on a series of lectures  delivered by the author in USP (Sao Paulo, 2015). 

The contribution of my collaborators A. Mikhailov, A. Meshkov, S. Svinolupov, and  A. Shabat  to results presented in  this book is difficult to  overestimate. 
 
I am thankful to the first readers of the book  Yu. Bagderina, S. Carpentier, I. Freire, S. Konstantinou-Rizos, S. Startsev and  A. Zobnin who made many suggestions and found a lot of misprints which contributed to the improvement of the text. 

I am grateful to I. Shestakov, V. Kac, I. Krasilchchik, A. Maltsev, E. Ferapontov and V. Futorny for their attention and to FAPEST for the financial support (grants 2014/00246-2 and 2016/07265-8) of his visits to Brazil, where the book was written.  

I thank  my wife Olga Sokolova, who provided excellent conditions for
writing this book.
 
\bigskip

\section{List of basic notation} 

\subsection{Constants, vectors and matrices}

Henceforth, the field of constants is $\C$; $\bf u$ stands for $N$-dimensional vector, namely ${\bf u}=(u^1,\dots, u^N).$ Moreover, the  standard scalar product $\sum_{i=1}^{N} u^i\, v^i$ is denoted by $\langle {\bf u},\, {\bf v}\rangle$.

The associative algebra of order ``m'' square matrices is denoted by ${\rm Mat}_m$; the matrix  $\{u_{ij}\}\in {\rm Mat}_m$ is denoted by $\bf U.$  The unity matrix is denoted by ${\bf 1}$ or  ${\bf 1}_m$. The notation ${\bf U}^t$ stands for the matrix transpose of ${\bf U}$.

For the set of $n\times m$ matrices we use the notation ${\rm Mat}_{n,m}.$

\subsection{Derivations and differential operators} 
For ODEs the independent variable is denoted by $t$, whereas for PDEs we have two independent variables $t$ and $x$. Notation $u_t$ stands for the partial derivative of $u$ with respect to $t$. For the $x$-partial derivatives of $u$ the notation $u_x=u_1$, $u_{xx}=u_2,$ etc, is used.

The operator $\ds \frac{d}{d x}$ is often denoted  by $D$. For the differential operator $L=\sum_{i=0}^{k} a_i \,D^i$ we define the operator $L^{+}$ as
$$
L^+=\sum_{i=0}^{k} (-1)^i \,D^i\circ\, a_{i},
$$
where $\circ$ means that, in this formula, $a_i$ is the operator of multiplication by $a_i$. By $L_t$ we denote 
$$
L_t = \sum_{i=0}^{k} (a_i)_{t} \,D^i.
$$

\subsection{Differential algebra}
We denote by ${\cal F}$  a differential field. 
For our main considerations one can assume that elements of $ {\cal F}$ are rational functions of finite number of independent variables $u_i.$ However, very often we find some functions solving overdetermined systems of PDEs. In such a case we have to extend the basic field $\cal F$. We will avoid any formal description of such extensions hoping that in any particular case it is clear what we really need from $ {\cal F}$.
 
The principle derivation 
\be \label{DD}
D \stackrel{def}{=} \frac{\partial}{\partial x}+\sum_{i=0}^\infty u_{i+1} \frac{\partial}{\partial u_i},
\ee generates all independent variables $u_i$ starting from $u_0=u$. 

When we speak of solutions (or common solutions) of ODEs and PDEs, we mean {\it local}  solutions with generic initial date. 

\subsection{Algebra} 
We denote by $A(\circ)$ an $N$-dimensional algebra $A$ over $\C$ with an operation $\circ$.  A basis of $A$ is denoted by ${\bf e}_1, \dots , {\bf e}_N$, and the corresponding structural constants  by $C^{i}_{jk}$:
$$
{\bf e}_j \circ {\bf e}_k =  C^i_{jk}\, {\bf e}_i.  
$$
We denote by $U$ the element  
\begin{equation}\label{UU}
U=\sum^N_{i=1} u_i {\bf e}_i.
\end{equation}
In what follows we assume that the summation is carried out over repeated indices. 
We will use the following notation: 
\begin{equation}\label{as}
{\rm As}(X, Y, Z) = (X \circ Y) \circ Z - X \circ (Y \circ Z),
\end{equation}
\begin{equation}\label{br}
[X, Y, Z] = {\rm As}(X, Y, Z) - {\rm As}(Y, X, Z).
\end{equation}

By ${\cal G}$ and ${\cal A}$ we usually denote a Lie and an associative algebra, respectively.

\bigskip

\section{Infinitesimal symmetries}

Consider a dynamical system of ODEs
\begin{equation}
\frac{dy^{i}}{dt} = F_{i}(y^1,\ldots ,y^n),\ \ \ \ i=1, \ldots ,n
\, .  \label{dynsys}
\end{equation}
 
\begin{definition} The dynamical system 
\begin{equation}
\frac{dy^{i}}{d\tau} = G_{i}(y^1,\ldots ,y^n),\ \ \ \ i=1, \ldots
,n  \,   \label{dynsym}
\end{equation}
is called an (infinitesimal) symmetry for~\eqref{dynsys} iff~\eqref{dynsys} and~\eqref{dynsym}
are compatible.
\end{definition}

Informally speaking,  compatibility means that for any initial data ${\bf y}_0$ there exists a common solution 
${\bf y}(t,\tau)$ of equations~\eqref{dynsys} and~\eqref{dynsym} such that ${\bf y}(0,0)={\bf y}_0$. 
\bigskip

More rigorously, it means that $$X Y-Y X=0,$$ where 
\begin{equation}\label{ODEfield} X=\sum F_i \frac{\partial}{\partial y^i},   \qquad Y=\sum G_i \frac{\partial}{\partial y^i}.
\end{equation}  

Consider now an evolution 
equation
\begin{equation}\label{eveq}
u_t=F(u, u_x,  u_{xx}, \dots , u_n), \qquad u_i=\frac{\partial^i
u}{\partial x^i}.
\end{equation}
A higher (or generalized) infinitesimal symmetry (or a commuting flow) is an evolution equation
\begin{equation}\label{evsym}
u_{\tau}=G(u, u_x,  u_{xx}, \dots , u_m), \qquad m > 1, 
\end{equation}
which is compatible with~\eqref{eveq}. Compatibility means that \begin{equation}\label{ttau}
\frac{\partial}{\partial t}\frac{\partial u}{\partial \tau}=\frac{\partial}{\partial \tau}\frac{\partial u}{\partial t},
\end{equation}
where the partial derivatives are calculated in virtue of~\eqref{eveq} and~\eqref{evsym}. In other words, for any initial value $u_0(x)$ 
there exists a common solution $u(x,t,\tau)$ of equations~\eqref{eveq} and~\eqref{evsym} such that $u(x,0,0)=u_0(x)$.
For a more rigorous definition in terms of evolution vector fields see~\cite{Olv93} and/or Section~\ref{SectionIntegrabilityConditions}. 

\begin{remark} Infinitesimal symmetries \eqref{evsym} with $m \le 1$ correspond to one-parametric groups of point or contact transformations~{\rm \cite{Olv93}}. They are called {\rm classical symmetries}. We do not consider them in this paper in spite of the fact that they are related to an important class of differential substitutions for evolution equations~{\rm \cite{SvSok26}}. 
\end{remark}

\begin{example} Any equation \eqref{eveq} has the classical symmetry $u_{\tau}=u_x,$ which corresponds to the one-parametric group $x \to x+\lambda$ of shifts. 
\end{example}
 
\begin{example} \label{Example1.4} 
For any $m$ and $n$ the equation $u_{\tau}=u_m$ is a symmetry for the linear equation $u_t=u_n$. The symmetries for different $m$ are compatible with each other. Thus, we have an infinite hierarchy of equations such that any of them is a symmetry for others. 
\end{example}

\begin{example} \label{Example1.5}
The Burgers equation
\be \label{burgers} 
u_t=u_{xx}+2 u u_x
\ee
has the following third order symmetry
\be \label{burgersym} 
u_{\tau}=u_{xxx}+3 u u_{xx}+3 u_x^2+3 u^2 u_x.
\ee
\end{example}

\begin{example} \label{Example1.6}
The simplest higher symmetry for the Korteweg--de Vries (KdV) equation
\begin{equation} \label{kdv} 
u_t=u_{xxx}+ 6\,u\,u_x
\end{equation}
has the following form
\begin{equation}\label{kdvsym}
u_{\tau}=u_{5}+10 u u_{3}+20 u_1 u_2+30 u^2 u_1.
\end{equation}
\end{example}

\begin{remark}\label{rem12} The existence of higher symmetries is a strong indication that the equation~\eqref{eveq} is integrable. 
One can propose the following ``explanation'' of this fact.
According to Example~\ref{Example1.4}, the linear equation has infinitely many higher
symmetries. As a rule, an integrable nonlinear equation is related to a linear
one by some transformation. The same transformation produces a hierarchy of 
higher symmetries for nonlinear equation starting from the symmetries
of the corresponding linear equation.
\end{remark}

For instance, the Burgers equation is integrable because of the
Cole--Hopf substitution
\begin{equation}\label{colehopf}
u=\frac{v_x}{v},
\end{equation}
which relates~\eqref{burgers} to the linear heat equation $v_t=v_{xx}$.
Moreover, the same substitution maps the third order symmetry~\eqref{burgersym} of the Burgers equation
to $$v_{\tau}=v_{xxx},$$ etc.

The transformation that reduces the KdV equation to the linear equation $v_t=v_{xxx}$ is a non-linear generalization of the Fourier 
transform~\cite{Z, AblSeg81}. It is much more nonlocal than the Cole-Hopf substitution. Nevertheless, the inverse transformation can be applied 
to all symmetries $v_{\tau}=v_{2 n+1}$ of the linear equation to produce an infinite hierarchy of symmetries of odd order for the KdV equation. 

\subsubsection{Naive symmetry test}
By an example of fifth order equations we demonstrate how to state and to solve a simple
classification problem for integrable polynomial homogeneous equations.

The differential equation \eqref{eveq}
 is said to be $\lambda $-{\it homogeneous} of {\it weight} $\mu $ if it 
admits the one-parameter group of scaling symmetries
$$(x, \ t, \ u)\longrightarrow (\tau^{-1}x, \ \tau^{-\mu} t, \ \tau^{\lambda} u).$$
For $N$-component systems with unknowns $u^1,...,u^N$ the corresponding 
scaling group has a similar form
\begin{equation}\label{homo}(x,t,u^1,...,u^N)\longrightarrow (\tau^{-1} x, \ \tau^{-\mu} t, \ 
\tau^{\lambda_1} u^1,...,
\tau^{\lambda_N} u^N).\end{equation}

\begin{theorem}\label{schomo} {\rm \cite{sw}} Scalar $\lambda$-homogeneous 
polynomial equation with $\lambda>0$ may possess a homogeneous polynomial 
higher symmetry only if 
\begin{itemize}
\item Case 1: \qquad $\lambda=2$;
\item Case 2: \qquad $\lambda=1$;
\item Case 3: \qquad $\lambda=\frac{1}{2}.$
\end{itemize}
\end{theorem}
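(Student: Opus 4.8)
The plan is to pass to the symbolic representation of differential polynomials, in which $\lambda$-homogeneity turns into a statement about degrees of ordinary polynomials and the symmetry condition turns into a divisibility relation between explicit bivariate polynomials. First I would fix the grading $\mathrm{wt}(u_i)=\lambda+i$, so that a monomial $u_{i_1}\cdots u_{i_k}$ of degree $k$ (number of factors) and total order $s=i_1+\cdots+i_k$ has weight $k\lambda+s$. Writing $F=\sum_k F^{(k)}$ for the decomposition into parts homogeneous of degree $k$, two monomials of degrees $k\ne k'$ sitting inside the single homogeneous $F$ force $(k-k')\lambda\in\Z$, so it suffices to control the lowest nonlinear part $F^{(p)}$, $p\ge 2$. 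To each $F^{(k)}$ I attach its symbol, a symmetric polynomial $\widehat F^{(k)}(\xi_1,\dots,\xi_k)$, under which $D$ acts as multiplication by $\xi_1+\cdots+\xi_k$ and the normalised linear top part becomes $\widehat F^{(1)}=\xi^n$, with $n$ the order. Weight bookkeeping then gives $\deg\widehat F^{(p)}=n-(p-1)\lambda$, which must be a non-negative integer; already this shows $\lambda\in\tfrac1{p-1}\Z$, so $\lambda$ is rational with a denominator governed by the degree of the leading nonlinearity.

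Next I would expand the symmetry condition $[F,G]=0$ for a higher symmetry $u_\tau=G$ of order $m>n$ degree by degree. At the first nonlinear level the Fr\'echet-derivative computation collapses to the bilinear identity
\[
\widehat F^{(2)}(\xi_1,\xi_2)\,\mathcal G_g(\xi_1,\xi_2)=\widehat G^{(2)}(\xi_1,\xi_2)\,\mathcal G_f(\xi_1,\xi_2),\qquad \mathcal G_\phi(\xi_1,\xi_2)=\phi(\xi_1+\xi_2)-\phi(\xi_1)-\phi(\xi_2),
\]
with $f=\xi^n$, $g=\xi^m$ (and an evident trilinear analogue governs the case $p=3$). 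Hence a symmetry can exist only if $\mathcal G_f\mid \widehat F^{(2)}\,\mathcal G_g$, and, solving for $\widehat G^{(2)}$, the part of $\mathcal G_f=(\xi_1+\xi_2)^n-\xi_1^n-\xi_2^n$ not absorbed by $\widehat F^{(2)}$ must divide $\mathcal G_g$. The forced factors $\xi_1,\xi_2$ (and $\xi_1+\xi_2$ when $n$ is odd) are always shared, but since $\deg\widehat F^{(2)}=n-\lambda$ is fixed and one must satisfy these relations compatibly as one pushes the expansion to higher degree (equivalently, for the whole hierarchy the symmetry generates), the remaining factors are strongly constrained.

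The main obstacle is precisely here, and it is number-theoretic: one must classify the common irreducible factors of the family $\{\mathcal G_{\xi^n}\}$ and show that, apart from an explicit short list, the only persistent common factors are the forced ones. A shared factor corresponds to a value $\zeta=\xi_1/\xi_2$ with $(1+\zeta)^m=1+\zeta^m$ for many $m$; a modulus/growth argument forces $|\zeta|=|1+\zeta|=1$, whence $\zeta$ is a primitive cube root of unity, so the only nontrivial persistent factor is $\xi_1^2+\xi_1\xi_2+\xi_2^2$ and the coincidence imposes a congruence linking $n$ and $m$. Proving this coprimality statement and pinning down the root-of-unity coincidences is the genuine content of the theorem; everything preceding it is linear bookkeeping. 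I expect this to be the hard step because generic $\mathcal G_{\xi^n}$ also carry non-cyclotomic factors, and ruling out their interference requires the sharp algebraic result rather than a soft degree count.

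Finally, I would feed the surviving orders back through the homogeneity relation $\deg\widehat F^{(p)}=n-(p-1)\lambda$. Treating the quadratic case $p=2$, which forces $\lambda\in\Z$ and, after the factor analysis, survives only for $\lambda\in\{1,2\}$, separately from the cubic case $p=3$, which forces $\lambda\in\tfrac12\Z$ and survives only for $\lambda=\tfrac12$, and excluding $p\ge4$ by the same mechanism, yields precisely $\lambda\in\{2,1,\tfrac12\}$. Equations whose leading term is itself nonlinear are folded in by applying the identical symbol analysis to the top-degree part rather than to a linear top term.
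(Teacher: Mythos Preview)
The paper does not supply its own proof of this theorem; it is quoted from Sanders and Wang \cite{sw} and used only as input for the subsequent naive symmetry test. Your outline is precisely the Sanders--Wang method: pass to the symbolic representation, reduce the commutation condition at the first nonlinear level to the divisibility relation $\widehat F^{(2)}\,\mathcal G_g=\widehat G^{(2)}\,\mathcal G_f$ with $\mathcal G_{\xi^k}=(\xi_1+\xi_2)^k-\xi_1^k-\xi_2^k$, and then invoke the Diophantine lemma (due to F.~Beukers in that same paper) that the only factors shared across the family $\{\mathcal G_{\xi^m}\}$ beyond the automatic $\xi_1\xi_2$ are $\xi_1+\xi_2$ and $\xi_1^2+\xi_1\xi_2+\xi_2^2$, governed by congruences on the orders. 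You have therefore reconstructed the argument of the cited reference, and there is nothing further in the present text to compare against.

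One small slip in your closing paragraph: the case of cubic leading nonlinearity $p=3$ does not survive \emph{only} for $\lambda=\tfrac12$; it also accommodates $\lambda=1$ (the mKdV equation $u_t=u_3+u^2u_1$ has no quadratic part). The correct dichotomy is that integer $\lambda>0$ is handled through the bilinear $G$-identity and the factor count leaves $\lambda\in\{1,2\}$, while a genuinely half-integer $\lambda$ (which forces $p\ge3$) is handled through the trilinear analogue and leaves only $\lambda=\tfrac12$. This does not affect the final list.
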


For example, the KdV equation \eqref{kdv} is 
homogeneous of weight $3$ for $\lambda =2$, its symmetry ~\eqref{kdvsym} has the same homogeneity. The mKdV equation 
 $u_t = u_{xxx} + u^2 u_x $ has the weight $3$ for $\lambda=1$ and for the 
Ibragimov-Shabat equation 
\begin{equation} \label{IbShab}
u_t = u_{xxx} + 3 u^2 u_{xx} + 9 u u_x^2 + 3 u^4 u_x 
\end{equation}
the weight is $3$ and $\lambda=\frac{1}{2}$.

 The general form of a fifth order polynomial
equation of  homogeneity $\lambda=2$ is given by
\begin{equation}\label{kdv5}
u_t=u_5+a_1 u u_3 + a_2 u_1 u_2+a_3 u^2 u_1,
\end{equation}
where $a_i$ are constants.
Let us find all equations \eqref{kdv5} having a polynomial homogeneous seventh order symmetry  
 of the form 
\begin{equation}\label{kdv7}
u_{\tau}=u_7+c_1 u u_5+ c_2 u_1 u_4+c_3 u_2 u_3+c_4 u^2 u_3+c_5 u u_1 u_2
+c_6 u_1^3+c_7 u^3 u_1.
\end{equation}

Compatibility condition \eqref{ttau} can be rewritten in the form $F_{\tau}-G_t=0$. When we eliminate $\tau$ and $t$-derivatives in virtue of~\eqref{kdv5} and~\eqref{kdv7} from this defining equation,
the left-hand side becomes
a polynomial of sixth degree in variables $u_1,\dots , u_{10}$. Linear terms are
absent. Equating coefficients of quadratic terms to zero, we find that
$$c_1= \frac{7}{5} a_1, \qquad c_2= \frac{7}{5}(a_1+a_2),
\qquad c_3= \frac{7}{5} (a_1+2 a_2).$$
The conditions arising from the cubic terms allow one to express
$c_4, c_5$ and $c_6$ in terms of
$a_1, a_2, a_3$. Moreover, it turns out that
$$a_3=-\frac{3}{10} a_1^2+\frac{7}{10} a_1 a_2-\frac{1}{5} a_2^2.$$
The fourth degree terms give us an explicit dependence of $c_7$ on
$a_1, a_2$ and also the main algebraic relation
$$(a_2-a_1)(a_2-2 a_1)(2 a_2-5 a_1)=0$$
for the coefficients $a_1$ and $a_2$.
Solving this equation we find that up to the scaling  $u \rightarrow
\lambda u$ only four different integrable cases are possible: the linear equation
$u_t=u_5,$ equations 
\begin{equation}\label{sk}
u_t=u_5+5 u u_3+ 5 u_1 u_2+5 u^2 u_1,
\end{equation}
\begin{equation}\label{kk}
u_t=u_5+10 u u_3+ 25 u_1 u_2+20 u^2 u_1,
\end{equation}
and \eqref{kdvsym}.
In each of these cases all terms of the fifth and the sixth degrees in the
defining equation are canceled automatically.
Equations \eqref{sk} and \eqref{kk} are well known   
\cite{SawKot74, Kau80}.  

\begin{question} The problem solved above looks artificial. Why do we consider the fifth order equations with the seventh order symmetry? 
\end{question}

Under the assumption that the right-hand side of equation \eqref{eveq}
is polynomial and homogeneous with $\lambda > 0$, it was proved in~\cite{sw} that it suffices to consider the following three cases:
\begin{itemize}
\item a second order equation with a third order symmetry; 
\item a third order equation with a fifth order symmetry; 
\item a fifth order equation with a seventh order symmetry.
\end{itemize} 
Other integrable equations belong to the hierarchies of such equations. This statement looks very credible and without any additional restrictions for the right-hand side of equation. A proof in the non-polynomial case is absent and this statement has a status of a conjecture well-known for experts. In fact, no counterexamples to this conjecture are known.

In Section~\ref{SectionIntegrabilityConditions} an advanced symmetry test will be considered.
No restrictions such as the polynomiality of the right-hand side of \eqref{eveq} or the fixation of symmetry  order are imposed there.

\section{First integrals and local conservation laws}\label{conslaw}

In the ODE case the concept of a first integral (or integral of motion) is one of the basic notions. A function $f(y^1,\dots,y^n)$ is called a {\it first integral} for the system \eqref{dynsys} if the value of this function does not depend on $t$ for any solution $\{y^1(t),\dots,y^n(t)\}$ of \eqref{dynsys}.  Since
$$
\frac{d}{dt} \Big( f(y^1(t),\dots,y^n(t))\Big) = X(f),
$$
where the vector field $X$ is defined by \eqref{ODEfield}, from the algebraic point of view a first integral is a solution of the first order PDE
$$
X f\Big(y^1,\dots,y^n\Big)=0.
$$

In the case of evolution equations of the form~\eqref{eveq} an integral of motion is not a function but a functional that does not depend on $t$ for any solution $u(x,t)$ of equation~\eqref{eveq}. 

More rigorously, a local conservation law for equation~\eqref{eveq} is a pair of functions
$\rho(u,u_x,...)$ and $\sigma(u,u_x,...)$ such that 
\be \label{conlaw}
\frac{\partial}{\partial t}\Big(\rho(u, u_x,\dots, u_p) \Big)=\frac{\partial}{\partial x}\Big(\sigma(u, u_x,\dots, u_q) \Big)
\ee
for any solution $u(x,t)$ of equation~\eqref{eveq}.
The functions $\rho$ and $\sigma$ are called a {\it density} and a {\it flux} of conservation law~\eqref{conlaw}. It is easy to see that $q=p+n-1$, where $n$ is the order of equation \eqref{eveq}.

For solutions of solitonic type, which are decreasing at $x \to \pm \infty$, we get 
$$
\frac{\partial}{\partial t} \int_{-\infty}^{+\infty} \rho\, dx = 0  
$$
for any polynomial conserved density with zero constant term. 
This justifies the name {\it conserved density} for the function $\rho$.
Analogously, if $u(x,t)$ is a function periodic in $x$  with period $L$, then the value of the functional 
$I_i=\int_0^L \rho\, dx$ does not depend on time and therefore it is a constant
of motion.

Suppose that functions $\rho$ and $\sigma$ satisfy \eqref{conlaw}. Then for any function $s(u,u_x,
\dots)$ the functions $\bar \rho=\rho+\frac{\partial s}{\partial x}$ and 
 $\bar \sigma=\sigma+\frac{\partial s}{\partial t}$ satisfy \eqref{conlaw} as well. We call the conserved densities $\rho$ and $\bar \rho$ {\it equivalent}. It is clear that  equivalent densities define the same functional.
 
\begin{example} \label{ex25} Any linear equation of the form $u_t=u_{2 n+1}$ has infinitely many conservation laws with densities $\rho_k=u_k^2.$ Developing the arguments of Remark \ref{rem12}, one can say that each conserved density is expected to be common to all equations of odd order from a commutative hierarchy. 
\end{example} 
\begin{exercise} Check that for any $n\ge 1$ the equation $u_t = u_{2 n}$ has only one conserved density $\rho = u.$

\end{exercise}

\begin{example} \label{Example1.7}
Functions
$$\rho_1=u,\qquad \rho_2=u^2,\qquad \rho_3=-u_x^2+2u^3 $$
are conserved densities of the Korteweg--de Vries equation~\eqref{kdv}.
Indeed, 
\begin{gather*}
\frac{\partial}{\partial t}\Big(\rho_1\Big)=\frac{\partial}{\partial x}\Big(u_2+3u^2\Big),\\[2mm]
\frac{\partial}{\partial t}\Big(\rho_2\Big)=\frac{\partial}{\partial x}\Big(2 u u_{xx}-u_x^2+4u^3\Big),\\[2mm]
\frac{\partial}{\partial t}\Big(\rho_3\Big)=\frac{\partial}{\partial x}\Big(9u^4+6u^2 u_{xx}+u_{xx}^2-12u u_x^2-2u_x u_3\Big).
\end{gather*}

Let us find all the conserved densities of the form $\rho(u)$ for equation~\eqref{kdv}. It is clear that the function $\sigma$ may depend on $u, u_x, u_{xx}$ only. Relation \eqref{conlaw} has the form 
\begin{equation}\label{med}
 \rho'(u) (u_3 + 6 u u_x) = \frac{\partial\sigma}{\partial u_{xx}}\, u_3+\frac{\partial\sigma}{\partial u_{x}}\, u_{xx}+\frac{\partial\sigma}{\partial u}\, u_x.
\end{equation}
Since $u$ is an arbitrary solution of the KdV equation, the latter expression should be an identity in variables $u,u_x,u_{xx}, u_3.$   Comparing the coefficients of $u_3,$ we find 
$\sigma=\rho'(u) u_{xx}+\sigma_1(u,u_x).$  Substituting it into \eqref{med} and equating the coefficients of $u_{xx}$, we obtain 
$$\sigma_1=-\frac{\rho''(u)}{2} u_x^2 +\sigma_2(u).$$
Taking it into account, we see that the coefficients of $u_x^3$ give rise to $\rho'''(u)=0$ i.e. $\rho=c_2 u^2 + c_1 u + c_0.$  So, up to the trivial term $c_0,$ the density is a linear combination of the above densities $\rho_1$ and  $\rho_2.$ 
 
\end{example}

\medskip

\section{Transformations}\label{Tra}

\subsection{Point and contact transformations} \label{pc}
Let $x_1,...,x_n$ be
independent variables and   $u$ is the dependent variable.
All symbols
\begin{equation}\label{jet}
x_1,...,x_n, \quad u, \quad \hbox{and}\quad u_{\alpha},
\end{equation}
where $\alpha=(\alpha_1,...,\alpha_n)$ and
$\ds u_{\alpha}=\frac{\partial^{\alpha_1+\cdots+\alpha_n}\,u}{\partial^{\alpha_{1}}
x_{1}\cdots \partial^{\alpha_{n}} x_{n} },$ are regarded as {\it
independent}. Let denote by ${\mathfrak F}$  the field of "all" functions depending on finite number of variables \eqref{jet}. 

The total $x_i$-derivatives are given by
$$
D_i=\sum_{\alpha} u_{(\alpha_1,...,\alpha_{i}+1, ...,\alpha_{n})}
\frac{\partial}{\partial\, u_{(\alpha_1,...,\alpha_i,
...,\alpha_{n})}}
$$
They are vector fields such that $[D_i, D_j]=0$.

 We consider transformations of the form
\begin{equation}\label{trans}
\bar x_i=\phi_i, \qquad \bar u=\psi, \qquad \phi_i, \psi \in
{\mathfrak F}
\end{equation}
The new derivatives are given by
$$
\bar u_{(\alpha_1,...,\alpha_{n})}=\bar D_1^{\alpha_1}\cdots \bar
D_n^{\alpha_n} \,(\bar u),
$$
where \begin{equation}\label{lincom} \bar D_i=\sum_{j=1}^{n}
p_{ij}\,D_j, \qquad p_{ij} \in {\mathfrak F}.
\end{equation}
To find $p_{ij},$ we apply \eqref{lincom} to $\bar x_k=\phi_k$ and
obtain
\begin{equation}\label{coef}
\sum_{j=1}^{n} p_{ij}\,D_j(\phi_k)=\delta_k^i.
\end{equation}
In other words, the matrix $P=\{ p_{ij}\}$ is inverse for the
generalized Jacobi matrix $J$ with the entries $J_{ij}=D_i(\phi_j).$

Transformation \eqref{trans} is called {\it point transformation} if the functions $\psi$ and $\phi_i$ depend 
only on $x_1,\dots,x_n,u$.   Such a transformation is (locally) invertible if $\psi$ and $\phi_i$ are functionally independent.

There are invertible transformations more general than point ones. 
\subsubsection{Contact transformation}
\begin{example}  Consider the Legendre transformation
\begin{equation}\label{leg1}
\bar x_i=\frac{\partial u}{\partial x_i}, \qquad \bar
u=u-\sum_{k=1}^{n} x_k \frac{\partial u}{\partial x_k}.
\end{equation}
Notice that the functions $\psi$ and $\phi_i$ depend on first detivatives of $u$. For such tranformations the functions $\frac{\partial \bar u}{\partial \bar x_{i}}$ usually depend on second detivatives of $u$ and so on. In this case the transformation extended to derivatives of degree not greater than $k$ is not invertible for any $k.$
However, for transformation \eqref{leg1},
according to \eqref{coef}, we have 
$$\sum_{j}
p_{ij}\,\frac{\partial^2 u}{\partial x_{j}\partial
x_{k}}=\delta_k^i.$$ 
One can verify that
\begin{equation}\label{leg2}
\frac{\partial \bar u}{\partial \bar x_{i}}=\sum_j p_{ij} D_j
(u-\sum_k x_k \frac{\partial u}{\partial x_{k}})=-x_i.
\end{equation}
Hence transformation \eqref{leg1} extended to the first deravatives by \eqref{leg2} is invertible. 
\end{example}

The Legendre transformation is an example of a contact
transformation.

\begin{theorem} {\rm   \cite{back}}. If a transformation extended to derivatives of degree not greater than $k$ 
is invertible for some $k,$ then this is either a point or a contact
transformation.
\end{theorem}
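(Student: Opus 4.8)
The plan is to prove the statement by a careful bookkeeping of \emph{orders}, where the order of an $f\in\mathfrak F$ is the largest $|\alpha|$ with $\partial f/\partial u_\alpha\neq 0$. Let $m$ be the order of the transformation \eqref{trans}, i.e. the maximum of the orders of $\phi_1,\dots,\phi_n$ and $\psi$; the goal is to show $m\le 1$ and to match the borderline case $m=1$ with contact transformations. First I would record the elementary fact that, since each $\bar D_i=\sum_j p_{ij}D_j$ of \eqref{lincom} raises the order by at most one and the entries $p_{ij}=(J^{-1})_{ij}$ obtained from \eqref{coef} have order $\le m+1$, one gets $\operatorname{ord}(\bar u_\alpha)\le m+|\alpha|$ for every multi-index $\alpha$. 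Next I would make precise what invertibility on the $k$-jet level means: the extended map must \emph{close} on the $k$-jet space, so that each new coordinate $\bar x_i,\bar u,\bar u_\alpha$ with $|\alpha|\le k$ is a function of the variables \eqref{jet} of order $\le k$ only, its Jacobian is nondegenerate, and (by the symmetry of the hypothesis) the inverse is again a transformation of the form \eqref{trans}.

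The heart of the argument is to show that finite-level invertibility forces $m\le 1$. I would compute the new first derivatives explicitly, $\bar u_i=\bar D_i\bar u=\sum_j p_{ij}\,D_j(\psi)$, writing $D_j(\psi)=\partial_{x_j}\psi+\sum_\beta(\partial\psi/\partial u_\beta)\,u_{\beta+e_j}$ and isolating the part of order $m+1$. Closing on the $k$-jet space immediately removes all contributions of order exceeding $k$; the sharper claim I am after is that the dependence of $\bar u_i$ actually drops all the way to order $\le 1$, which is equivalent to saying that the first-order contact form $\theta=du-\sum_i u_i\,dx_i$ pulls back to a multiple of itself, $d\bar u-\sum_i\bar u_i\,d\bar x_i=\lambda\,\theta$ with $\lambda\in\mathfrak F$. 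The mechanism forcing this — and the reason it holds at finite level even though genuinely higher order transformations do exist on the infinite jet — is that a diffeomorphism of the $k$-jet space must respect the order filtration of the contact ideal and therefore preserve its bottom layer, which is spanned by $\theta$ alone. Proving this preservation of the bottom contact form is, I expect, the main obstacle: it is the only place where the finiteness of $k$ enters essentially, and it must be arranged so as to exclude the order-raising Lie--B\"{a}cklund transformations of the infinite jet.

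Once $m\le 1$ is secured the statement follows quickly. If $m=0$ then $\phi_i,\psi$ depend only on $x_1,\dots,x_n,u$, which is by definition a point transformation. If $m=1$, the relation $d\bar u-\sum_i\bar u_i\,d\bar x_i=\lambda\,\theta$ is precisely the classical defining property of a contact transformation, so I would invoke (or reprove directly, solving \eqref{coef} for $p_{ij}$ exactly as in the Legendre example \eqref{leg1}--\eqref{leg2}) the fact that a first-order, contact-form preserving map is a contact transformation whose prolongation is invertible. A final consistency check is the symmetry of the two roles: one should verify that the inverse, constructed in the same way, is again of order $\le 1$, so that the point-or-contact class is genuinely closed under inversion; this is automatic, since $T^{\ast}\theta=\lambda\,\theta$ forces $(T^{-1})^{\ast}\theta$ to be proportional to $\theta$ as well. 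This completes the plan.
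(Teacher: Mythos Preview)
The paper does not prove this theorem at all: it is stated as a classical result with a bare citation to B\"acklund's 1875 paper \cite{back}, and the text immediately moves on to the one-variable case. So there is no ``paper's own proof'' to compare against; your plan is being measured only against the standard argument.

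Your outline identifies the right objects---the order filtration, the contact form $\theta=du-\sum_i u_i\,dx_i$, and the requirement that a finite-level invertible map preserve the contact ideal---and the reduction you propose (show $m\le 1$, then split into point vs.\ contact) is exactly the classical route. However, the proposal is a plan, not a proof: you yourself flag the crucial step (``Proving this preservation of the bottom contact form is, I expect, the main obstacle'') and do not carry it out. The honest content of the Lie--B\"acklund theorem lives entirely in that step, and the sentence ``a diffeomorphism of the $k$-jet space must respect the order filtration of the contact ideal and therefore preserve its bottom layer'' is an assertion, not an argument. Concretely, you would need to show that if the extended map is a diffeomorphism of the $k$-jet space that commutes with the projection to lower jets (equivalently, preserves the Cartan distribution), then the pullback of $\theta$ lies in the span of $\theta$ alone, not merely in the contact ideal generated by $\theta$ and its total derivatives. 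Without that, nothing prevents $m>1$ in your scheme.

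A second, smaller gap: you claim $\operatorname{ord}(\bar u_\alpha)\le m+|\alpha|$, but you also need the reverse inequality (or a dimension count) to get a contradiction from $m\ge 2$. The standard argument compares the dimensions of the $k$-jet space in old and new variables, or else shows directly that if $\psi$ genuinely depends on some $u_\beta$ with $|\beta|\ge 2$ then $\bar u_i$ depends on derivatives of order $|\beta|+1$ in a way that cannot be cancelled, breaking closure at any finite level. Filling either of these in would turn your plan into a proof.
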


 Let us consider contact transformations for the case $n=1$ in more
details. They have the following form
\begin{equation}\label{conttran}
\bar x=\phi(x,u,u_x), \qquad \bar u=\psi(x,u,u_x), \qquad \bar
u_1=\chi(x,u,u_x),
\end{equation}
where 
\begin{equation}\label{contcond}
\psi_{u_{1}} (\phi_{u} u_1+\phi_x)=\phi_{u_{1}} (\psi_{u}
u_1+\psi_x).
\end{equation}
Under the latter condition the function 
$$
\chi=\frac{D(\psi)}{D(\phi)}=\frac{\psi_{u_{1}} u_2+\psi_{u}
u_1+\psi_x}{\phi_{u_{1}} u_2+\phi_{u} u_1+\phi_x}
$$
does not depend on $u_2$ and the transformation is invertible if the functions $\phi,\psi$ and $\chi$ are functionally independent.
 
\begin{remark} Formally, the point transformations may be regarded as a particular case of contact transformations: if the functions 
$\phi$ and $\psi$ do not depend on $u_x$, the contact condition \eqref{contcond} disappears.  
\end{remark} 

For the Legendre transformation we have  
\begin{equation}\label{legend}
\phi=u_1, \qquad \psi=u- x\
u_1.
\end{equation}
The contact condition \eqref{contcond} is fulfilled and $\chi=-x$.

 The formulas \eqref{conttran} and \eqref{contcond} become simpler if we introduce a generating
function $F$: 
$$
\psi(x,u,u_1)=F(x,u, \phi(x,u,u_1)).
$$
It is possible if $\phi_{u_{1}}\ne 0$. We have
$$
\bar u_{1}=F_{\phi}, \qquad  F_{x}+u_1\, F_{u}=0
$$
or $$ \bar u=F(x,u,\bar x), \qquad \bar u_1=\frac{\partial
F}{\partial \bar x}, \qquad u_1=-\frac{\partial F}{\partial
x}\left(\frac{\partial F}{\partial u}\right)^{-1}.$$

\begin{exercise} Verify that for any function $F(x,u,\bar x)$
the above formulas define a contact transformation.
\end{exercise}

Point and contact transformations \eqref{conttran} are important for classification of integrable evolution equations of the form
$$
u_t = F(x, u, u_x, \dots, u_n)
$$
since they preserve the form of the equation and transform symmetries to symmetries, i.e., map integrable equations to integrable 
ones.  Indeed, consider a contact transformation of the form 
$$
\bar t=t, \qquad \bar x=\phi(x,u,u_x), \qquad \bar u=\psi(x,u,u_x), \qquad \bar
u_1=\chi(x,u,u_x),
$$
where the contact condition \eqref{contcond} holds. Then the coefficients in the relations 
$$\bar D_t=\alpha D_t+\beta D,     \qquad  \bar D=\gamma D_t+\delta D$$ can be found from the conditions 
$$\bar D_t(t)=1, \qquad \bar D_t(\phi)=0, \qquad \bar D(t)=0, \qquad \bar D(\phi)=1. $$
We obtain 
$$
\bar D=\frac{1}{D(\phi)}\, D, \qquad \bar D_t=D_t - \frac{D_t(\phi)}{D(\phi)}\, D
$$
and, therefore, 
\begin{equation}\label{neweq}
\bar u_t = \Big(\psi^{+} -  \frac{D(\psi)}{D(\phi)}\, \phi^{+}\Big) (F).
\end{equation}
The variables $x,u,u_x,...$ in the rigth-hand side of the latter equation are supposed to be replaced by the new variables $\bar x, \bar u,,...$. 
Due to condition \eqref{contcond} the differential operator 
\begin{equation}\label{subop}
\psi^{+} -  \frac{D(\psi)}{D(\phi)}\, \phi^{+}
\end{equation}
has zero order and the evolution equation has the form 
$$
\bar u=\bar F(\bar x,\bar u, \bar u_1,  \dots, \bar u_n)
$$
with the same $n$.
\begin{exercise} Verify that the heat equation $u_t=u_2$ transforms to 
$$
\bar u_t=-\frac{1}{\bar u_2}
$$
under the Legendre transformation \eqref{legend}.
\end{exercise}

\begin{remark} If we restrict ourselves with the evolution equations, where the function $F$ does not depend on $x$ explicitly, then we still can use a subgroup of contact transformations with $\phi=x + s(u, u_x).$
\end{remark}

\subsection{Differential substitutions of Miura type}\label{miur}

The famous Miura transformation \cite{miura}
$$
\bar u=u_{1}-u^{2}
$$
relates the mKdV equation $$ u_{t}=u_{3}-6 u^{2} u_{1}$$ and the KdV
equation $$\bar u_{t}=\bar u_{3}+6 \bar u \bar u_{1}.$$ 
The Miura substitution is not locally invertible: given a solution $\bar u$ of the KdV equation, one has to solve the  Riccati equation to find a solution of the mKdV equation. 

\begin{definition} A relation
\begin{equation}\label{genmiur}
\bar x = \phi(x, u, u_2,\dots, u_{k}), \qquad  \bar u = \psi(x, u, u_1,\dots, u_{k})
\end{equation}
is called a {\it differential substitution} from the
equation
\begin{equation} \label{up}
u_{t}= F(x, u, u_1, \dots, u_{n})
\end{equation}
to the equation
\begin{equation} \label{down}
\bar u_{t}= G(x, \bar u, \bar u_1, \dots ,\bar u_{n})
\end{equation}
if for any solution $u(x,t)$ of equation \eqref{up} the function
$\bar u(t, \bar x)$ satisfies \eqref{down}. The order of differential operator \eqref{subop} is called the {\it order of the differential substitution}.  
\end{definition}
\begin{remark}
By definition, point and contact transformations are differential substitutions of order 0, while the order of the Miura substitution is equal to one.   
\end{remark}

The derivatives of $\bar u$ are expressed through the functions $\phi$ and $\psi$ just as in Section \ref{pc}. 
However, for the generic function $F$ the right-hand side of \eqref{neweq} cannot be expressed through the variables $\bar x, \bar u, \dots .$ This requirement imposes strong restrictions of $F$, $\phi$ and $\psi.$
\begin{exercise} Verify  that in the case of the Miura substitution for the MkDv equation, the right-hand side of \eqref{neweq} is expressed in terms of $\bar u,\dots, \bar u_3$ as $\bar u_{3}+6 \bar u \bar u_{1}$.
\end{exercise}
\begin{proposition}  Let $\bar D_t(\bar \rho)=\bar D(\bar \sigma)$ be a conservation law for equation \eqref{down}. Then $D_t(\rho)=D(\sigma),$ where
$$
\rho=\bar \rho \, D(\phi), \qquad \sigma = \bar \sigma + \bar \rho \,D_t(\phi),
$$
is the conservation law for equation \eqref{up}.

\end{proposition}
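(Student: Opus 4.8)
The plan is to reduce the statement to two structural facts that are already at hand. The first is the pair of change-of-variables formulas for the total derivatives induced by the substitution \eqref{genmiur}, namely
\[
\bar D=\frac{1}{D(\phi)}\,D,\qquad \bar D_t=D_t-\frac{D_t(\phi)}{D(\phi)}\,D,
\]
derived exactly as in Section~\ref{pc}: since $\bar t=t$ and $\bar x=\phi$ force $\bar D(t)=0$, $\bar D(\phi)=1$, $\bar D_t(t)=1$, $\bar D_t(\phi)=0$, the coefficients are pinned down. Here $\bar\rho$ and $\bar\sigma$ are to be read as functions of the original jet variables $x,u,u_1,\dots$ via $\bar u=\psi$, $\bar u_i=\bar D^i(\bar u)$, so that after this re-expression $\bar D\bar\rho$, $\bar D_t\bar\rho$ make sense. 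The second fact is the commutativity of the total derivatives, $[D,D_t]=0$. Equivalently, one may phrase everything geometrically: a conservation law is the closedness of the $1$-form $\bar\rho\,d\bar x+\bar\sigma\,d\bar t$ on solutions, the map $(x,t)\mapsto(\bar x,\bar t)=(\phi,t)$ is a change of base coordinates, and the pullback of that form is $\rho\,dx+\sigma\,dt$ with precisely the stated $\rho,\sigma$; closed forms pull back to closed forms. I would present the direct computation, which is the shortest route.

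First I would differentiate the claimed $\rho$ and $\sigma$ by the Leibniz rule,
\[
D_t(\rho)=D_t(\bar\rho)\,D(\phi)+\bar\rho\,D_t\big(D(\phi)\big),\qquad
D(\sigma)=D(\bar\sigma)+D(\bar\rho)\,D_t(\phi)+\bar\rho\,D\big(D_t(\phi)\big).
\]
Subtracting and invoking $[D,D_t]=0$, so that $D_t(D(\phi))=D(D_t(\phi))$, the two terms carrying $\bar\rho$ cancel and there remains
\[
D_t(\rho)-D(\sigma)=D(\phi)\,D_t(\bar\rho)-D_t(\phi)\,D(\bar\rho)-D(\bar\sigma).
\]

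Second I would feed in the hypothesis. The conservation law for \eqref{down}, rewritten in the original variables through the operator identities above, reads
\[
\bar D_t(\bar\rho)-\bar D(\bar\sigma)
=D_t(\bar\rho)-\frac{D_t(\phi)}{D(\phi)}\,D(\bar\rho)-\frac{1}{D(\phi)}\,D(\bar\sigma)=0 .
\]
Multiplying by $D(\phi)$ gives $D(\phi)\,D_t(\bar\rho)-D_t(\phi)\,D(\bar\rho)-D(\bar\sigma)=0$, which is exactly the right-hand side obtained above. Hence $D_t(\rho)=D(\sigma)$, as required.

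The algebra is short; the genuine points requiring care are conceptual. The main obstacle is justifying the operator identities for $\bar D$ and $\bar D_t$ in the present, \emph{non-invertible} setting: unlike a contact transformation, \eqref{genmiur} need not be invertible, so these identities must be understood as equalities of derivations applied to functions pulled back from the barred jet space, valid wherever $D(\phi)\neq 0$, the nondegeneracy tacitly assumed for a substitution of the stated order. One must also confirm that $[D,D_t]=0$ persists with $D_t$ taken in virtue of \eqref{up}; this is the standard commutativity of the total $x$- and $t$-derivatives for an evolution equation, and it is exactly what makes the $\bar\rho$-terms cancel. Granting these two facts, no further input is needed, and the formulas for $\rho$ and $\sigma$ are recognized as nothing but the pullback of $\bar\rho\,d\bar x+\bar\sigma\,d\bar t$ under $\bar x=\phi$, $\bar t=t$.
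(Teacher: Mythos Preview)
Your argument is correct. The computation is clean: expand $D_t(\rho)-D(\sigma)$ by Leibniz, cancel the $\bar\rho$-terms using $[D,D_t]=0$, and recognize what remains as $D(\phi)$ times $\bar D_t(\bar\rho)-\bar D(\bar\sigma)$ via the operator identities $\bar D=\tfrac{1}{D(\phi)}D$ and $\bar D_t=D_t-\tfrac{D_t(\phi)}{D(\phi)}D$ derived in Section~\ref{pc}.

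The paper states this proposition without proof, so there is nothing to compare against; your direct verification (equivalently, the pullback-of-a-closed-form remark you make at the end) is exactly the intended one-line justification.
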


\subsubsection{Group differential substitutions}
In this section we consider differential substitutions associated with classical groups of symmetries \cite{SvSok26}. 
\begin{example} The Burgers equation 
\begin{equation}\label{bur}
\bar u_t=\bar u_{xx}+2 \bar u\, \bar u_x
\end{equation}
is related to the heat equation 
\begin{equation}\label{heat}
u_t = u_{xx}
\end{equation}
by the Cole-Hopf substitution 
\begin{equation}\label{CHopf}
\bar t=t, \qquad \bar x=x, \qquad  \bar u=\frac{u_x}{u}.
\end{equation}
This substitution admits the following group-theoretical interpretation.
The group $G$ of dilatations $u \to \tau u$ acts on the solutions of the heat equation. It is easily seen that any differential invariant of $G$ is a function of the variables
\begin{equation}\label{CHopfvar}
\bar t=t, \qquad \bar x=x, \qquad  \bar u=\frac{u_x}{u}, \qquad D\Big(\frac{u_x}{u} \Big),\quad  \dots , \quad D^i\Big(\frac{u_x}{u} \Big), \quad \dots .
\end{equation}
Since the group $G$ preserves
the equation, the (total) $t$-derivative $D_t$ sends invariants to invariants.
Consequently $\ds D_t\Big(\frac{u_x}{u}\Big)$ must be a function of the variables \eqref{CHopfvar}. Thus the
function $\ds \bar u = \frac{u_x}{u}$ must satisfy a certain evolution equation. A simple
calculation shows that this latter equation coincides with \eqref{bur}.
\end{example}
\begin{remark} The above reasoning shows that any evolution equation with the symmetry group $u \to \tau u$ admits the Cole-Hopf transformation \eqref{CHopf}.
\end{remark}
\begin{exercise} The heat equation \eqref{heat} admits the one-parameter group of shifts $x \to x+\tau$. The simplest invariants of the group are 
\begin{equation}\label{Blu}
\bar t=t, \qquad \bar x = u, \qquad \bar u = u_1.
\end{equation}
The derivation $\ds \bar D=\frac{1}{u_1}\,D$ generates a functional basis  $\bar u_i=\bar D^i(u_1)$ of all differential invariants. Therefore, 
the differential substitution \eqref{Blu} leads to an evolution equation for $\bar u$. Verify that 
$$
\bar u_t=\bar u^2 \,\bar u_{\bar x \bar x}.
$$
\end{exercise}

In the examples considered above the equation \eqref{down} is the restriction of equation \eqref{up} to the set of differential invariants of a symmetry group $G$. This equation \eqref{down} is called a {\it quotienting} of \eqref{up} by the group $G$. 

Consider evolution equations of the form 
\begin{equation}\label{u1}
u_t = F(u_1,u_2,\dots, u_n).
\end{equation}
 They admit the symmetry group $u\to u+\tau$. The invariants
$$
\bar t=t, \qquad \bar x=x, \qquad \bar u = u_1
$$
of this group define the differential substitution, which results the equation 
\begin{equation}\label{poten}
\bar u_t = D\Big(F(\bar u,\dots, \bar u_{n-1})\Big).
\end{equation}
\begin{definition}\label{pot1} Equation \eqref{poten} is said to be obtained  from the equation \eqref{u1} by the {\it potentiation}.
\end{definition}
\begin{remark} Since any one parametric group of point (contact) transformations is point (contact) equivalent to the group  $u\to u+\tau$, any quotienting by an one-parametric group is a composition of a point (contact) transformation and the potentiation. 
\end{remark}
\begin{remark} The transformation $u=\int \bar u\,dx$ from equation \eqref{poten} to equation \eqref{u1}, inverse to the potentiation, is related to the fact that equation \eqref{poten} has the conserved density $\rho=\bar u.$ If equation \eqref{eveq} posseses a conserved density of order not greater than 1, then it can be transformed to  $\rho=\bar u$ by a proper point or contact transformation. The resulting evolution equation has the form \eqref{poten} and we may apply the tranformation  $u=\int \bar u\,dx$ to obtain the corresponding equation of the form \eqref{u1}.
\end{remark}

Since one-parameter groups of contact transformations are in one-to-one correspondence \cite{Olv93} with the infinitesimal symmetries of the form 
\begin{equation}\label{symord1}
u_{\tau}=G(x, u, u_1),
\end{equation}
the existence of such a symmetry for equation \eqref{up} guarantees the presence of the quotienting equation of the form \eqref{down}, 
while the existence of a non-trivial conserved density of the form $\rho(\bar x,\bar u, \bar u_1)$ for an equation of the form \eqref{down} gives rise to the existence of the corresponding equation \eqref{up}. 

\begin{definition}\label{def15} Differential substitutions generated by infinitesimal symmetries or conservation laws of order not greater than 1 are called {\it quasi-local transformations}.
\end{definition}

\begin{question} When a quasi-local transformation preserves the higher symmetries?
\end{question}
The answer is evident: if \eqref{symord1} is a symmetry not only for equation \eqref{up} but also for the hierarchy of its symmetries, then the corresponding quotienting can be applied to the whole hierarchy.  In other words, if the group $G$ is a symmetry group for any equation from the hierarchy, then the corresponding substitution preserves the integrability in the sence of the symmetry approach. 

Similarly, if all equations from a hierarchy of symmetries for equation \eqref{down} have the same conserved density  $\rho(\bar x,\bar u, \bar u_1)$, then the corresponding equation \eqref{up} has infinitely many symmetries.

\medskip

\chapter{Symmetry approach to integrability}

The symmetry approach to the classification of integrable PDEs with two independent variables is based on the existence of the higher symmetries and/or local conservation laws (see Introduction).

\section{Description of some classification results}

\subsection{Hyperbolic equations}\label{hyphyp}

The first classification result obtained with the symmetry approach in 1979 
was formulated in Theorem \ref{ZS}. 
\begin{op} A complete classification of integrable hyperbolic equations of
the form
\begin{equation}\label{genhyp}
u_{xy}=\Psi(u, u_{x}, u_{y})
\end{equation}
is still an open problem. Some partial results were obtained in {\rm \cite{zib}}. 
\end{op}
\begin{example} The following equation \cite{BorZyk}
\begin{equation}\label{borzyk}
u_{xy}=S(u) \sqrt{\vphantom{u_y^2}1-u_x^2}\sqrt{1-u_y^2}, \qquad {\rm where} \qquad 
S''-2 S^3+c \, S=0,
\end{equation}
is integrable. 
\end{example}
\begin{example} The equation
 $$
u_{xy}=Q(u) \, b(u_x) \, \bar b(u_y), \qquad {\rm where} \qquad 
  Q''-2 Q Q'-4 Q^3=0,$$
and $b, \bar b$ are solutions of the cubic equations
$$
  (u_x-b)(b+2u_x)^2=1, \qquad
(u_y-\bar b)(\bar b+2u_y)^2=1,
 $$
is integrable. 
\end{example}
In \cite{meshsokHyp} the following problem was solved.  For hyperbolic equations \eqref{genhyp}
 the symmetry approach assumes the existence of both $x$-symmetries of the form
$$
u_{t}=A(u, u_{x}, u_{xx}, \dots, ),
$$
and  $y$-symmetries of the form
$$
u_{\tau}=B(u, u_{y}, u_{yy}, \dots, ).
$$
For example, the famous integrable sin-Gordon equation 
$$
u_{xy}=\sin{u}
$$
admits the symmetries
$$
u_{t}= u_{xxx}+\frac{1}{2} u_{x}^{3},\qquad u_{\tau}= u_{yyy}+\frac{1}{2}
u_{y}^{3}.
$$
It was assumed in \cite{meshsokHyp} that both $x$ and $y$-symmetries are third order {\it integrable} evolution equations (maybe different). 
The corresponding classification result is formulated in Appendix 1. 

Chapter \ref{Liutype} is devoted to a special class of integrable equations \eqref{genhyp} named {\it the Liouville type equations} \cite{zibsok}.

In \cite{zibshab2} integrable hyperbolic systems of the form
$$
u_x = p(u,\,v), \qquad v_y=q(u,\,v)
$$ 
were investigated.

\subsection{Evolution equations}

Some necessary   conditions for the existence of higher symmetries that  do not depend on symmetry orders were found in \cite{ibshab, sokshab} for evolution equations of the form \eqref{eveq} (see Section 2.2). These conditions lead to an overdetermined system of PDEs for the right-hand side of \eqref{eveq}. Solutions of this system are not always polynomial.

It was proved in \cite{soksvin1} that the same conditions hold if the equation \eqref{eveq} possesses infinitely many
local conservation laws. But the latter conditions are stronger than the conditions for symmetries. Moreover, there exist equations 
that have higher symmetries but have no higher conservation laws. 

\begin{definition}\label{def21} An equation is called {\it $S$-integrable} (in the terminology by F. Calogero) if it has infinitely many both 
higher symmetries and conservation laws.  An equation is called {\it $C$-integrable}  if it has infinitely many  
higher symmetries but finite number of higher conservation laws.
\end{definition}
The simplest example of $C$-integrabe equation is the Burgers equation \eqref{burgers}.
\begin{remark} Usually, the inverse scattering method can be applied to $S$-integrable equations  while $C$-integrable equations can be reduced to linear equations by differential substitutions. However, to eliminate  obvious exceptions, we need to refine Definition \ref{def21} (see Definition \ref{clar}). Otherwise, the linear equation $u_t=u_{xxx}$ will still be $S$-integrable. 

\end{remark}

\begin{proposition}\label{nocons1} {\rm (see Proposition \ref{nocon} and Theorem 29 in \cite{int2})} A scalar evolution equation \eqref{eveq} of even order $n =  2 k$ cannot 
possess  infinitely many  higher local conservation laws.
\end{proposition}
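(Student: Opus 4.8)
The plan is to translate the existence of a conservation law into the existence of a \emph{cosymmetry} (a conserved gradient), and then run a leading-order symbol analysis in which the parity of $n$ enters decisively.

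First I would recall that a density $\rho$ gives a conservation law of \eqref{eveq} precisely when $D_t(\rho)$ lies in the image of $D$, which by the standard characterization of total $x$-derivatives is equivalent to $\frac{\delta}{\delta u}\big(D_t\rho\big)=0$, where $\frac{\delta}{\delta u}=\sum_i(-D)^i\frac{\partial}{\partial u_i}$ is the variational derivative. Introduce the linearization $F_*=\sum_{i=0}^n a_i D^i$ with $a_i=\frac{\partial F}{\partial u_i}$ and $a_n\neq 0$, together with its formal adjoint $F_*^{+}$ in the sense of the notation section. A direct computation with the product rule for $\frac{\delta}{\delta u}$ then shows that $g:=\frac{\delta\rho}{\delta u}$ satisfies the adjoint linearized equation
\be\label{cosym}
D_t(g)+F_*^{+}(g)=0.
\ee
Thus every conservation law produces a solution $g$ of this linear equation, and if $\rho$ has order $m$ with $\frac{\partial^2\rho}{\partial u_m^2}\neq 0$, then $g$ has order exactly $2m$, with leading coefficient $\frac{\partial^2\rho}{\partial u_m^2}$.

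The crucial step is to examine \eqref{cosym} at top order. Suppose $g$ has order $\ell$, so $\partial g/\partial u_\ell\neq 0$. The highest jet variable occurring in both $D_t(g)=\sum_i\frac{\partial g}{\partial u_i}D^i(F)$ and in $F_*^{+}(g)=\sum_j(-1)^jD^j(a_j g)$ is $u_{\ell+n}$, and a short computation gives its coefficient as
$$\big(1+(-1)^n\big)\,a_n\,\frac{\partial g}{\partial u_\ell}.$$
For $n$ odd this factor vanishes identically, which is exactly why odd-order equations (KdV and the like) are free to carry conservation laws of arbitrarily high order. For $n=2k$ even, however, the factor equals $2$, so \eqref{cosym} forces $a_n\,\partial g/\partial u_\ell=0$; since $a_n\neq 0$ this contradicts $\partial g/\partial u_\ell\neq 0$. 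Hence no cosymmetry can have order exceeding $n$, and in particular the gradient of a conserved density, having order $2m$, must satisfy $2m\le n$, i.e. $m\le k$.

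Finally I would conclude that every conserved density is equivalent to one of order at most $k$: when $\frac{\partial^2\rho}{\partial u_m^2}=0$ the density is affine in its top variable $u_m$, and one integration by parts lowers its order, so after finitely many such reductions the bound $m\le k$ applies. Since a \emph{higher} conservation law is one whose order grows without bound, a fixed ceiling on the order of inequivalent conserved densities precludes an infinite supply of them, which is the assertion. The main obstacle is the bookkeeping at the borderline orders $\ell\approx n$, where $a_n$ itself depends on the top variables and contributes to the coefficient of $u_{\ell+n}$; this must be checked not to spoil the $(1+(-1)^n)$ factor, but it affects only lower-order terms and leaves the boundedness conclusion intact. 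A secondary subtlety, already reflected in the phrasing \emph{infinitely many higher} (hence unbounded-order) conservation laws, is that one should argue at the level of orders rather than attempt a literal dimension count of laws of each fixed order.
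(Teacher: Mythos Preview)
Your argument is correct and is essentially the paper's own (Proposition~\ref{nocon}): both bound the order of a conserved density by a leading-symbol analysis that produces the decisive factor $(1+(-1)^n)$. The only difference is packaging --- the paper applies the Fr\'echet derivative once more and works with the operator $T=\big(\frac{\delta\rho}{\delta u}\big)_*$ in the relation $D_t(T)+TF_*+F_*^+T=Q$, whose leading coefficient in $D$ reproduces exactly your coefficient of $u_{\ell+n}$.
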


There are two types of classification results obtained in the frame of the symmetry approach:  a ``weak'' version, where equations 
with conservation laws are listed and a ``strong'' version related to symmetries.  The ``weak'' list contains $S$-integrable equations while 
the ``strong'' list consists of both $S$-integrable and $C$-integrable equations. 

\subsubsection{Second order equations}
All nonlinear integrable equations of the form
$$
u_t=F(x,\,t,\, u,\,u_1,\,u_2)
$$
were listed in \cite{svin4} and \cite{soksvin}.
The answer is:
$$
\begin{array}{l}
u_t=u_2+2 u u_x+h(x), \\[3mm]
u_t=u^2 u_2-\lambda x u_1+\lambda u, \\[3mm]
u_t=u^2 u_2+\lambda u^2, \\[3mm]
u_t=u^2 u_2-\lambda x^2 u_1+3 \lambda x u.  \\[3mm]
\end{array}
$$
This list is complete up to contact transformations.  According to Proposition \ref{nocons1} all these equations are $C$-integrable. 
They are related  \cite{SvSok26} to the heat equation $v_t=v_{xx}$ 
by group differential substitutions discussed in Section 1.3.2.

The first three equations possess local higher symmetries and form a list of integrable equations 
of the form 
\begin{equation}\label{svinsecond}
u_t=F(x, u,\,u_1,\,u_2)
\end{equation}
obtained in \cite{svin4}. 
\begin{remark} It turns out that any integrable equation  \eqref{svinsecond} has the form
$$
u_t=\frac{a_1 u_2+a_2}{a_3 u_2+a_4}, \qquad a_i=a_i(x,u,u_1). 
$$
Any such equation can be reduced to a quasi-linear form 
$$
u_t=a_1(x,u,u_1)\, u_2+a_2(x,u,u_1)
$$
by a proper contact transformation \eqref{conttran}.
\end{remark}

In the paper \cite{soksvin} equations with weekly non-local
symmetries were considered. The weekly non-locality of symmetries looks very natural if we assume then the right-hand side of the equation depends on $t$. As a result, the list from the paper  \cite{svin4} was extended by the fourth equation. 

\subsubsection{Third order equations}

A first result of the ``weak'' type for equations \eqref{eveq} is the following:
\begin{theorem} {\rm \cite{soksvin1}} A complete list up to quasi-local transformations {\rm (see Definition \ref{def15})}
of equations of the form 
\begin{equation} \label{vvs}
u_{t}=u_{xxx}+f(u, u_x, u_{xx})
\end{equation}
with an infinite hierarchy of conservation laws can be
written as:
$$
u_t=u_{xxx}+u \,u_{x}, 
$$
$$
u_t=u_{xxx}+u^2\, u_{x},
$$
$$
u_t=u_{xxx}-\frac{1}{2}u_{x}^3+(\alpha e^{2u}+\beta
e^{-2u})u_{x}, 
$$
\begin{equation}\label{CD2}
u_{t}=u_{xxx}-\frac{1}{2} P'' \,u_x + \frac{3}{8} 
\frac{(P-u_x^2)_{x}^2}{u_x\,(P-u_x^2)}, 
\end{equation}
\begin{equation}\label{KN}
u_t=u_{xxx}-\frac{3}{2}\,\frac{u^{2}_{xx}+P}{u_{x}},
\end{equation}
where $P'''''(u)=0$.  
\end{theorem}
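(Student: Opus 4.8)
The plan is to turn the hypothesis ``infinitely many conservation laws'' into a finite system of differential constraints on $f$ and then integrate it. By the theorem of \cite{soksvin1} quoted above, an equation \eqref{vvs} with an infinite hierarchy of conservation laws admits a formal symmetry, and therefore a canonical sequence of densities $\rho_0,\rho_1,\rho_2,\dots$, each of which is an explicit differential polynomial in the coefficients of the linearization operator
\be
F_{*}=D^{3}+f_{u_2}\,D^{2}+f_{u_1}\,D+f_{u}
\ee
together with their $x$-derivatives, and each of which is forced to be a trivial conserved density, i.e. $\rho_k\in\mathrm{Im}\,D$ (equivalently $\frac{\delta\rho_k}{\delta u}=0$). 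Because the order $3$ is odd, Proposition \ref{nocons1} leaves room for such a hierarchy, so these conditions are consistent and the real work is to solve them. I would use the conservation-law conditions rather than the weaker symmetry conditions, since only the former cut the list down to the five equations stated.

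The conditions are imposed one at a time, each lowering the number of free functions. The first condition, $\rho_0\propto f_{u_2}\in\mathrm{Im}\,D$, already forces $f_{u_2}$ to be affine in $u_2$, hence $f$ to be at most quadratic in the top variable:
\be
f=\tfrac12\,a(u,u_1)\,u_2^{2}+b(u,u_1)\,u_2+c(u,u_1),
\ee
with one compatibility relation between $a$ and $b$. (Here $a$ is allowed to be rational in $u_1$, which is exactly what produces the non-polynomial equations \eqref{CD2} and \eqref{KN}; one must not assume polynomiality.) The next densities $\rho_1,\rho_2,\dots$ then become explicit differential expressions in $a,b,c$; requiring each to lie in $\mathrm{Im}\,D$ yields PDEs that first pin down the dependence of $a,b,c$ on $u_1$ and afterwards their dependence on $u$. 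At the stage where the $u$-dependence is integrated, a single function of $u$ --- call it $P(u)$ --- survives as an integration ``constant'', and a further canonical density supplies the closing relation $P'''''(u)=0$, i.e. $P$ is a polynomial of degree at most four.

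I would organize this integration by splitting into the cases $a\equiv 0$ and $a\not\equiv 0$: the former branch leads (after integrating $b,c$) to the KdV-type equation $u_t=u_3+uu_1$, the modified-KdV-type equation $u_t=u_3+u^2u_1$, and the exponential equation $u_t=u_3-\tfrac12u_1^3+(\alpha e^{2u}+\beta e^{-2u})u_1$, while the latter branch produces the quadratic-in-$u_2$ forms \eqref{CD2} and \eqref{KN}. Throughout, equations that differ by a point, contact or quasi-local transformation (Definition \ref{def15}) must be identified, so that only transformation-inequivalent representatives appear; this is where the normalizations (scalings, shifts, and the potentiation of Section~1.3.2) are used to bring each surviving family to its normal form. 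Finally, the converse direction is settled by exhibiting for each of the five equations a recursion operator (or a known Lax/substitution structure), which generates the promised infinite hierarchy of conservation laws.

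The main obstacle is the middle step: solving the overdetermined, nonlinear PDE system for $a,b,c$ without the simplifying crutch of polynomiality. The branch $a\not\equiv0$ is delicate because the densities become rational in $u_1$, the case analysis proliferates, and the integration constants are functions rather than numbers; keeping track of which combinations are removable by quasi-local transformations, and proving that no further families survive beyond those producing a quartic $P$, is the technically demanding part. The earliest conditions ($\rho_0,\rho_1$) are essentially mechanical, but each subsequent density is markedly heavier, so in practice only the first few are needed to bound the list, after which direct verification closes the argument.
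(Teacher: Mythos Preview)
Your overall plan matches the paper's method (which itself defers the full proof to \cite{soksvin1,meshsok}): derive the canonical densities from the formal symmetry/formal symplectic operator via Theorems~\ref{svsok}, \ref{prho}, \ref{contriv}, impose the resulting conditions to obtain an overdetermined PDE system for $f$, integrate with a case split on the $u_2$-coefficient, and normalize by quasi-local transformations. The first step --- $\rho_0=-\tfrac13 f_{u_2}$ trivial forces $f$ at most quadratic in $u_2$ --- is correct, and the $a\equiv0$ versus $a\not\equiv0$ organization is the right one. Section~2.2.7 carries out exactly this scheme for the sub-case $f=f(u,u_1)$.

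However, you misstate the conditions. You write that each $\rho_k$ is ``forced to be a trivial conserved density, i.e.\ $\rho_k\in\mathrm{Im}\,D$''. That is not what the theory gives. Theorem~\ref{prho} says each $\rho_k$ is a \emph{conserved density}, meaning $D_t(\rho_k)\in\mathrm{Im}\,D$; Theorem~\ref{contriv} adds only that the \emph{even} densities $\rho_{2j}$ are trivial. The odd canonical densities are in general nontrivial --- for KdV itself $\rho_1=2u$ (Example~\ref{kdvl}), which is certainly not a total $x$-derivative. If you literally imposed $\rho_1\in\mathrm{Im}\,D$ you would eliminate KdV, mKdV, and indeed every equation on the target list. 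The correct working set of constraints (see the Remark after Theorem~\ref{contriv} and the computation in Section~2.2.7) is: $D_t(\rho_k)=D(\sigma_k)$ for all $k$, together with $\rho_{2j}\in\mathrm{Im}\,D+\C$ for the even indices only. With this correction the rest of your outline is sound.
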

For the ``strong'' version see \cite{soksvin2} and Appendix 3. 
A proof of the corresponding statement can be found in the survey \cite{meshsok}. 

For integrable third order equations 
\begin{equation}\label{genthird}
u_t=F(u,u_x,u_{xx}, u_{xxx}),
\end{equation}
more general than \eqref{vvs}, there are three possible types of $u_{xxx}$-dependence  \cite{MikShaSok91}:
\begin{itemize}
\item[1)]
$$
u_t=a \,u_{xxx}+b,
$$
\item[2)] 
$$
u_t=\frac{a}{(u_{xxx}+b)^2},
$$
and 
\item[3)]
$$
u_t=\frac{2 a\, u_{xxx}+b}{\sqrt{a \,u_{xxx}^2+b\, u_{xxx}+c}}+d,
$$
\end{itemize}
where the functions $a,b,c$ and $d$ depend on $u,u_x,u_{xx}.$ 
A complete classification of integrable equations of such type is not finished yet  \cite{hss,her}. 
\begin{conjecture}
All integrable third order equations \eqref{genthird} are related to the KdV equation or to the Krichever-Novikov equation \eqref{KN} by differential substitutions  of Cole-Hopf and Miura type {\rm \cite{cvsokyam}}.
\end{conjecture}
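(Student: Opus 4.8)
The plan is to reduce the conjecture to a finite classification problem and then, branch by branch, to exhibit the asserted substitutions, treating the Krichever--Novikov equation \eqref{KN} as the unique obstruction to linearization beyond the KdV hierarchy. First I would invoke the necessary integrability conditions coming from the existence of a formal symmetry, namely that the canonical conserved densities $\rho_0,\rho_1,\dots$ all be total $x$-derivatives. Applied to the fully nonlinear equation \eqref{genthird}, these conditions already force the three structural forms of the $u_{xxx}$-dependence quoted above from \cite{MikShaSok91}. So the first real step is to split into the three branches: the quasi-linear branch $u_t=a\,u_{xxx}+b$, the branch $u_t=a(u_{xxx}+b)^{-2}$, and the square-root branch, with $a,b,c,d$ functions of $u,u_x,u_{xx}$.

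Second, within each branch I would push the canonical densities to sufficiently high order to obtain an overdetermined PDE system for the coefficient functions and solve it completely. For the quasi-linear branch this is essentially available: the output coincides, up to point and contact transformations, with the known lists for \eqref{vvs} and its generalizations, so here the task is purely to connect each listed normal form to a base equation. KdV \eqref{kdv} is one base; the equation $u_t=u_{xxx}+u^2 u_x$ maps to it by the Miura transformation of Section~\ref{miur}; the $e^{\pm 2u}$ equation is reached from a KdV-type equation by a Cole--Hopf-type potentiation of the kind discussed there; and \eqref{CD2} is linked to \eqref{KN}. The Krichever--Novikov equation itself is the second base and is expected to be genuinely \emph{irreducible}: no Miura- or Cole--Hopf-type substitution connects it to KdV.

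Third, for the two nonlinear branches I would carry out the analogous classification and then either reduce each integrable representative by a contact transformation, which preserves the order by \eqref{neweq}, to the quasi-linear branch, or else exhibit a direct Miura-type map onto KdV or \eqref{KN}. Assembling the three branches would complete the proof.

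The main obstacle is twofold. The classification of the two nonlinear branches is not finished in the literature (\cite{hss,her}), and the overdetermined systems there are markedly harder than in the quasi-linear case; producing an exhaustive, rigorously closed list is the first genuine difficulty. The second and deeper difficulty is that, even granted a complete list, there is no general mechanism guaranteeing that a given integrable equation admits a Miura- or Cole--Hopf-type substitution onto a prescribed target: such substitutions must be constructed, typically from shared structure such as a common modified equation or a factorization of a shared Lax operator. In particular, the claim that KdV and \eqref{KN} are the \emph{only} two bases rests on showing that \eqref{KN} carries a genuine invariant, an elliptic/modular parameter together with a cohomological obstruction to linearization, that no differential substitution of the admitted types can remove. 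Making this irreducibility statement precise and proving it is, I expect, the crux of the whole conjecture.
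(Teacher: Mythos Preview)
The statement is a \emph{conjecture}, not a theorem; the paper offers no proof, and indeed explicitly notes that the classification of the two nonlinear branches ``is not finished yet \cite{hss,her}''. So there is nothing in the paper to compare your argument against.

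Your proposal is a sensible outline of what a proof \emph{would} have to look like, and you correctly identify the two genuine obstructions: the incomplete classification for the branches $u_t=a(u_{xxx}+b)^{-2}$ and the square-root branch, and the absence of any general mechanism that would guarantee a Miura- or Cole--Hopf-type substitution onto KdV or \eqref{KN} once a list is in hand. Both are real and are precisely why this remains a conjecture. Your final paragraph about an ``elliptic/modular invariant'' of \eqref{KN} obstructing reduction to KdV is plausible heuristics but is not something you have made precise, and it is not needed for the conjecture as stated (which allows \eqref{KN} as a second base). In short: what you have written is an honest assessment of the difficulties, not a proof, and the paper does not claim otherwise.
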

 
\subsubsection{Fifth order equations}
All equations of the form
$$
u_t=u_5+F(u, u_x, u_2, u_3, u_4),
$$
possessing higher conservation laws were found in \cite{DSS}. 
\begin{lemma} A possible dependence of the function $F$ on $u_4,u_3$ and $u_2$ for such equations is 
described by the following formula:
$$
u_t=u_5+A_1\,u_2u_4+A_2\,u_4+A_3\,u_3^2+(A_4\,u_2^2+A_5\,u_2+A_6)\,u_3+A_7\,u_2^4+A_8\,u_2^3+A_9\,u_2^2+A_{10}\,u_2+A_{11},
$$
%%%%%%%%%%%%%%%%%%%%%%%%%%%%%%%%
where $A_i=A_i(u,\,u_x)$. 

\end{lemma}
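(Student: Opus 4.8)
The plan is to extract the stated normal form from the standard necessary conditions for an infinite hierarchy of conservation laws, namely the vanishing of the obstructions to a formal symmetry (the canonical densities), organized by the highest derivative each condition involves. First I would pass from $u_t=u_5+F$ to its linearization, the Fréchet derivative
\[
F_* = D^5 + f_4 D^4 + f_3 D^3 + f_2 D^2 + f_1 D + f_0, \qquad f_i = \frac{\partial F}{\partial u_i},
\]
and invoke the result quoted above, that an infinite family of local conservation laws imposes the same integrability conditions as an infinite family of symmetries \cite{soksvin1}. Hence there exists a formal symmetry, a pseudo-differential series $L = D + \ell_0 + \ell_{-1}D^{-1} + \cdots$ (the leading coefficient is $1$ because $f_5=1$) solving $D_t(L)=[F_*,L]$ to arbitrarily high order. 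Matching coefficients of $D^j$ from the top determines $\ell_0,\ell_{-1},\dots$ as local differential functions, and the solvability of this recursion at each step is precisely the requirement that a sequence of canonical densities $\rho_0,\rho_1,\rho_2,\dots$ be conserved, i.e. $D_t(\rho_k)\in\operatorname{Im}D$, equivalently the Euler (variational) operator $E$ annihilates $D_t(\rho_k)$.

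The key structural point is that the $\rho_k$ are explicit differential polynomials in the $f_i$ and their $x$-derivatives, and each resulting condition is graded by the highest derivative it contains; the leading part in that derivative is what controls the dependence of $F$ on $u_4$, then $u_3$, then $u_2$, in turn. Concretely I would proceed in three stages. \emph{(i)} From the first nontrivial condition, isolate the constraint on $f_4=\partial F/\partial u_4$, showing it cannot depend on $u_4$ or $u_3$ and is at most linear in $u_2$; this is exactly $F$ being linear in $u_4$ with coefficient $A_1 u_2+A_2$, $A_i=A_i(u,u_1)$. \emph{(ii)} Feeding this back, read off from the next condition that $F$ is quadratic in $u_3$, with $u_3^2$-coefficient $A_3(u,u_1)$ and with the coefficient of $u_3$ a polynomial of degree at most two in $u_2$, giving the block $A_3 u_3^2+(A_4 u_2^2+A_5 u_2+A_6)u_3$. \emph{(iii)} Likewise bound the remaining $u_2$-dependence to degree four, producing the tail $A_7 u_2^4+A_8 u_2^3+A_9 u_2^2+A_{10}u_2+A_{11}$.

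The main obstacle is stages \emph{(i)}--\emph{(iii)}: establishing the \emph{exact} polynomial-degree bounds $(1,2,4)$ in $(u_4,u_3,u_2)$, rather than merely boundedness. The mechanism is that a monomial in $F$ of too high a degree in a top derivative contributes, through the bracket $[F_*,L]$ and the ensuing residues, a term in some $\rho_k$ whose leading part in that derivative cannot be written as $D(\cdots)$; such a term survives the variational quotient and yields a nonzero obstruction. Thus for each $k$ I would apply $E$ to $D_t(\rho_k)$, isolate the part of highest order in $u_4$ (respectively $u_3$, $u_2$), and argue that the coefficient of the maximal power must vanish once the degree exceeds the claimed bound. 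The delicate bookkeeping is to verify that the potentially cancelling contributions coming from the lower canonical densities and from the total-derivative (gauge) ambiguity in the $\rho_k$ do not conspire to permit a higher degree; this is a leading-order, degree-by-degree elimination that I expect to be the technical heart of the argument.

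Finally I would emphasize that at this stage only the dependence on the three senior derivatives is pinned down: the coefficients $A_i(u,u_1)$ remain free, and determining them (the genuine classification) requires analyzing the remaining canonical densities and is deferred. This is precisely what the word ``possible'' in the statement signals: the lemma is the necessary structural normal form forced by the leading integrability conditions, not yet the complete list.
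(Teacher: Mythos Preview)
The paper does not prove this lemma; it is stated with a reference to \cite{DSS} and the text moves on directly to the resulting list. So there is no ``paper's own proof'' to compare against. Your outline --- pass to the formal symmetry, generate the canonical densities $\rho_k$, and read off constraints on $\partial F/\partial u_4$, then $\partial F/\partial u_3$, then $\partial F/\partial u_2$ by applying the Euler operator to $D_t(\rho_k)$ --- is exactly the machinery the paper sets up in \S2.2.6 and carries out in detail for third-order equations in \S2.2.7. In that sense your plan is the intended one.

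One correction. You write that an infinite family of conservation laws ``imposes the same integrability conditions'' as an infinite family of symmetries. The paper says the opposite: the conservation-law hypothesis is \emph{strictly stronger}. Both hypotheses yield a formal symmetry and hence the chain of canonical conservation laws $D_t(\rho_k)\in\operatorname{Im}D$, but under the conservation-law hypothesis one additionally has a formal symplectic operator and, by Theorem~\ref{contriv}, the \emph{even} canonical densities $\rho_{2j}$ are forced to be trivial (total $x$-derivatives). Since the lemma as stated sits in the conservation-law context of \cite{DSS}, you are entitled to use these extra conditions, and they can materially shorten the degree-by-degree elimination you flag as ``the technical heart''. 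In practice the sharp degree bounds $(1,2,4)$ in $(u_4,u_3,u_2)$ are obtained by combining both kinds of conditions, not from the $\rho_k$-conserved conditions alone.

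Beyond that, your proposal is an honest outline rather than a proof: the actual leading-order analysis (showing that a higher-degree monomial in $u_4$, $u_3$, or $u_2$ produces a nonvanishing obstruction that cannot be absorbed into $\operatorname{Im}D$) is not carried out. That is also where all the work lies; the paper simply asserts the result.
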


The list of integrable cases contains both  well-known equations  \eqref{sk}, \eqref{kk}  and 
$$
u_t=u_{5}+5 (u_1-u^2) u_{3}+5 u_{2}^2-20 u u_1 u_{2}  -5 u_1^3+5 u^4 u_1
$$
(see \cite{SokSha80})
as well as several new equations. One of them 
(missed in \cite{FujWat83}) is  given by
$$
\begin{array}{rcl}
\ds u_t&=&u_{5}+5 (u_{2}-u_1^2+\lambda_1 e^{2u}-
\lambda_2^2 e^{-4u})\,u_3 -5 u_1 u_{2}^2+15 (\lambda_1 e^{2u}\, u_{3}+ 
 4 \lambda_2^2 e^{-4u})\, u_1
u_{2} \\[3mm]
& &\ds + \,u_1^5 -90 \lambda_2^2 e^{-4u}\, u_1^3+ 5(\lambda_1 e^{2u}-\lambda_2^2
e^{-4u})^2\, u_1.
\end{array}
$$
The ``strong'' version of this classification result  (see Appendix 3) was published in  \cite{meshsok}.

At first glance, the problem of the classification of integrable equations \begin{equation}
u_t=u_n+F(u, \, u_x, \, u_{xx}, \dots, u_{n-1}), \qquad u_i=\frac{\partial^i u}{\partial x^i}. \label{scalar}
\end{equation} 
with arbitrary $n$ seems to be far from a conclusive solution. This is not quite so. Each integrable equation together with all its symmetries form a so-called hierarchy of integrable equations. For the equations integrable by the inverse scattering method all the equations of the hierarchy possess the same $L$-operator. This fact lies in the basis of the commutativity of the equations in hierarchies (each equation of the hierarchy is a higher symmetry for all others). For another explanation of the commutativity see Remark \ref{rem12}. A general rigorous statement on ``almost'' commutativity of the symmetries for   equation \eqref{scalar} can be found in \cite{sok1}.

Under the assumption that the right-hand side of   equation \eqref{scalar} is polynomial and homogeneous, it was proved in the works
\cite{sw,ow} that the hierarchy of any such  equation contains an equation of second, third, or fifth order.

More references for the classification of scalar evolution equations can be found in the reviews \cite{sokshab, MikShaYam87, MikShaSok91, int2, meshsok}.
Here, I would like also to mention the papers \cite{CheLeeLiu79}--\cite{AbeGal83}.

\subsection{Systems of two equations}

In \cite{MikSha85,MikSha86} the necessary  conditions of integrability were generalized to the case of systems of evolution equations. However, component-wise computations in this case are very tedious. The only one serious classification problem has been solved \cite{MikSha85,MikSha86,MikShaYam87}: all systems of the form
\begin{equation}\label{sys2}
u_t=u_{2}+F(u,\,v,\,u_1,\,v_1), \, \qquad v_t=-v_{2}+G(u,\,v,\,u_1,\,v_1)
\end{equation}
possessing higher conservation laws were listed.  In other words, the authors 
have found all $S$-integrable systems \eqref{sys2}.

Besides the well-known NLS equation written as a system of two equations
\begin{equation}\label{NLS}
u_t=-u_{xx}+2 u^2 v,\,  \qquad v_t=v_{xx}-2 v^2 u,
\end{equation}
basic integrable models from a long list of such integrable models are: 
\begin{itemize}
\item a version of the Boussinesq equation
$$
u_t=u_{2}+(u+v)^{2},\, \qquad v_t=-v_{2}+(u+v)^{2};
$$
\item and the two-component form of the Landau-Lifshitz equation
$$
\begin{cases}
\displaystyle u_t=u_{2}-\frac{2 u_1^2}{u+v}-\frac{4\,(p(u,v)\, u_1+r(u)\, v_1)}
{(u+v)^2},
\\[4mm]
\displaystyle v_t=-v_{2}+\frac{2 v_1^2}{u+v}-\frac{4\,(p(u,v)\, v_1+r(-v)\, u_1)}
{(u+v)^2},
\end{cases}
$$
where $r(y)=c_4 y^4+c_3 y^3+c_2 y^2+c_1 y+c_0$ and
$$
p(u,v)=2 c_4 u^2 v^2+c_3 (u v^2-v u^2)-2 c_2 u v+c_1(u-v)+2 c_0.
$$
\end{itemize}
A complete list of integrable systems \eqref{sys2} up to transformations 
$$
u \to \Phi(u), \qquad v\to \Psi(v)
$$
should contain more than 100 systems. Such a list has never been 
published. Instead, in \cite{MikShaYam87} was presented a list complete up to ``almost invertible'' transformations \cite{MSY}.

\begin{remark}
All these equations have a fourth order symmetry of the form 
\begin{equation}
\label{kvazsym}
 \begin{cases}
u_{\tau}=u_{xxxx}+f(u, v, u_x, v_x, u_{xx}, v_{xx}, u_{xxx}, v_{xxx}), \\[1.5mm]
v_{\tau}=-v_{xxxx}+g(u, v, u_x, v_x, u_{xx}, v_{xx}, u_{xxx}, v_{xxx}).
\end{cases}
\end{equation}
Some of them have also a third order symmetry.
\end{remark}

There are at least three reasons why the paper \cite{SokWol99} on integrable systems of the form 
\begin{equation}
\label{kvazgen}
 \begin{cases}
u_t = u_{xx} + A_{1}(u,v)\, u_x + A_{2}(u,v)\, v_x + A_{0}(u,v), \\[1.5mm]
v_t = - v_{xx} + B_{1}(u,v)\, v_x + B_{2}(u,v)\, u_x + B_{0}(u,v) 
 \end{cases}
\end{equation}
was written. 

{\it Reason 1}. There are $C$-integrable cases that are not in the Mikhailov-Shabat-Yamilov classification. One of the examples is the system
\begin{equation} \label{borov} \begin{cases}
u_t =   u_{xx} - 2 u u_x  - 2 v u_x  - 2 u v_x + 2 u^2  v +  2 u v^2,   \\[1.5mm]

v_t   =  - v_{xx} + 2 v u_x + 2 u v_x  + 2 v v_x - 2 u^2  v - 2 u v^2.
\end{cases}
\end{equation}
The system \eqref{borov} was first discussed 
in \cite{BPR}. 
It can be reduced to
$$
U_t =   U_{xx}, \qquad
V_t   =  - V_{xx}
$$
by the following substitution of Cole-Hopf type:
$$
\label{borsub}
u=\frac{U_x}{(U+V)},
\qquad v=\frac{V_x}{(U+V)}.
$$

{\it Reason 2}. Systems \eqref{kvazgen} can be easily classified without any equivalence relations. The right-hand sides of such systems turn out to be polynomial.  

{\it Reason 3}. Results of any serious classification problem should be verified independently. Only after that there is an assurance 
that nothing has been missed.

For classification of systems \eqref{kvazgen} the simplest naive version of the symmetry test was applied.

\begin{lemma} If the system \eqref{kvazgen} has a fourth order symmetry \eqref{kvazsym}, 
then the system is of the following form:
\begin{equation}
\nonumber
 \begin{cases}
u_t = u_{xx} + (a_{12} u v + a_1 u + a_2 v + a_0)\, u_x + (p_2 v +
p_{11} u^2 + p_1 u + p_0)\, v_x + A_{0}(u,v),  \\[1.5mm]
v_t=- v_{xx} + (b_{12} u v + b_1 v + b_2 u + b_0)\, v_x + (q_2 u +
q_{11} v^2 + q_1 v + q_0) \,u_x + B_{0}(u,v), 
\end{cases}
\end{equation}
where $A_0$ and $B_0$ are polynomials of at most fifth degree.
\end{lemma}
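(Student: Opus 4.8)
The plan is to run the ``naive'' symmetry test directly: impose compatibility of the two flows, expand the resulting identity into independent jet variables, and read off the coefficient equations. Write the system and its prospective symmetry as
\be
F^{(1)}=u_2+A_1 u_1+A_2 v_1+A_0,\qquad F^{(2)}=-v_2+B_1 v_1+B_2 u_1+B_0,
\ee
\be
G^{(1)}=u_4+f,\qquad G^{(2)}=-v_4+g,
\ee
where $A_i,B_i$ are functions of $(u,v)$ and, by the hypothesis \eqref{kvazsym}, $f,g$ are functions of the jets up to order $3$. Compatibility \eqref{ttau} is the pair $D_t\big(G^{(1)}\big)=D_\tau\big(F^{(1)}\big)$ and $D_t\big(G^{(2)}\big)=D_\tau\big(F^{(2)}\big)$, where $D_t,D_\tau$ are the total derivatives obtained by eliminating the $t$- and $\tau$-derivatives through the two flows. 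I would treat the arguments $u,v,u_1,v_1,u_2,v_2,u_3,v_3$ of $f,g$ together with the higher variables $u_4,\dots,u_6,v_4,\dots,v_6$ as independent, so that each relation becomes a polynomial in $u_4,\dots,u_6,v_4,\dots,v_6$ whose coefficients are functions of the lower jets; vanishing of every one of these coefficients is the full content of the symmetry condition. The coefficient of the order-$6$ monomials $u_6,v_6$ cancels automatically, so the first real information sits one step down.

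First I would extract the order-$5$ coefficients. In the $u$-equation the coefficient of $u_5$ cancels identically, but the cross coefficient of $v_5$ yields $f_{v_3}=A_2$, and symmetrically the $v$-equation gives $g_{u_3}=B_2$; as $A_2,B_2$ depend only on $(u,v)$ this already fixes the leading dependence of the symmetry. Descending to the order-$4$ coefficients (those linear and quadratic in $u_4,v_4$) then successively fixes the dependence of $f$ and $g$ on $u_3,v_3$ and on $u_2,v_2$. Eliminating $f$ and $g$ between the resulting relations leaves a closed, overdetermined system of PDEs for $A_1,A_2,B_1,B_2$ as functions of $(u,v)$ alone, obtained by matching the low-jet monomials that can no longer be absorbed into the symmetry.

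Solving that residual system is where the shape of the Lemma appears. The surviving equations reduce to the vanishing of specific second and third partials: $(A_1)_{uu}=(A_1)_{vv}=0$, which integrates to $A_1=a_{12}uv+a_1u+a_2v+a_0$; and $(A_2)_{vv}=(A_2)_{uv}=(A_2)_{uuu}=0$, which integrates to $A_2=p_2v+p_{11}u^2+p_1u+p_0$. The mirror conditions on $B_1,B_2$, read off from the $v$-equation under $u\leftrightarrow v$, produce the stated forms for $B_1,B_2$. The asymmetry between the monomials admitted in $A_1$ and in $A_2$ is precisely the imprint of the opposite signs of the diffusion terms $+u_2$ and $-v_2$: the relative sign decides which Leibniz contributions survive when the fourth-order operator is commuted past the second-order one. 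Finally I would isolate the part of the identity free of derivatives of order $\ge 2$; using the now-polynomial $A_i,B_i$, it expresses $\partial_u A_0,\partial_v A_0$ and $\partial_u B_0,\partial_v B_0$ as polynomials in $u,v$ built from products of the coefficient functions, so $A_0,B_0$ are polynomial, and a count of the degree of the heaviest surviving product caps $\deg A_0,\deg B_0\le 5$.

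The main obstacle is not any single deduction but the scale of the bookkeeping: dozens of monomials through orders $5$ down to $0$ in two coupled equations, each with a function-valued coefficient. The delicate point is organizing the elimination so that one correctly separates the partials forced to vanish from those left free --- certifying that $uv$ is allowed in $A_1$ while $u^2,v^2$ are not, and that $u^2$ is allowed in $A_2$ while $uv,v^2$ are not --- rather than collapsing the system to something stronger or weaker than claimed. I would guard against slips by first verifying that orders $6$ and the $u_5$-part of order $5$ cancel identically (which certifies that the leading symbols commute and that the extracted constraints are consistent), and by checking the final template against the known example \eqref{borov}, where $A_1=-2u-2v$, $A_2=-2u$, $A_0=2u^2v+2uv^2$ indeed fit the stated form.
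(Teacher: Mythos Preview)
Your proposal is correct and follows exactly the approach the paper indicates: the paper does not give a detailed proof of this lemma at all, merely stating that ``the simplest naive version of the symmetry test was applied'' and presenting the polynomial form as the outcome of that computation. Your outline --- expanding the compatibility condition, peeling off coefficients from order~$5$ downward to determine the symmetry, then reading off the residual PDEs $(A_1)_{uu}=(A_1)_{vv}=0$, $(A_2)_{vv}=(A_2)_{uv}=(A_2)_{uuu}=0$ and their mirrors, and finally bounding the degree of $A_0,B_0$ --- is precisely this naive test carried out with more explicit structure than the paper itself supplies, and your consistency check against the example~\eqref{borov} is a sound sanity test.
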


The coefficients of the latter system satisfy an overdetermined
system of algebraic equations. The most essential equations are
\begin{equation}
\nonumber
\begin{array}{l}
p_2 (b_{12}-q_{11})=0, \qquad p_2 (a_{12}-p_{11})=0,
\qquad p_2 (a_{12}+2 b_{12})=0, \\[2mm] q_2 (b_{12}-q_{11})=0,
\qquad q_2 (a_{12}-p_{11})=0,
\qquad q_2 (b_{12}+2 a_{12})=0,
\end{array}
\end{equation}
\begin{equation}
\nonumber
\begin{array}{l}
a_{12} (a_{12}-b_{12}+q_{11}-p_{11})=0, \qquad \ b_{12}
(a_{12}-b_{12}+q_{11}-p_{11})=0, \\[2mm]
(a_{12}-p_{11})(p_{11}-q_{11})=0, \qquad \qquad
(b_{12}-q_{11})(p_{11}-q_{11})=0, \\[2mm]
(a_{12}-p_{11})(a_{12}-b_{12})=0, \qquad
\qquad (b_{12}-q_{11}) (a_{12}-b_{12})=0.
\end{array}
\end{equation}
As usual, such factorized equations lead to a tree of variants.

Solving the overdetermined system, we do not consider the so called
triangular systems like the following:
$$
u_t=u_{xx}+2 u v_x, \qquad v_t=-v_{xx}-2 v v_x.
$$
Here, the second equation is separated
and the first is linear with the
variable coefficients defined by a given solution of the second equation.

The classification statement is formulated in Appendix 4. 
\begin{op} Find all systems \eqref{sys2} that have infinitely many symmetries.
\end{op}

\section{Integrability conditions} 
\label{SectionIntegrabilityConditions}

For our aims the language of differential algebra \cite{kapl} is the most adequate one.  

\subsection{Evolutionary vector fields,  Fr\'echet and Euler derivatives}

Consider evolution equations of the form~\eqref{eveq}.  
Suppose that the right-hand side of~\eqref{eveq} as well as other functions in $u,u_x,u_{xx}, \dots$ belong to a differential field 
$ {\cal F}$. For our considerations one can assume that elements of $ {\cal F}$ are rational functions in a finite number of independent variables $$u_i=\frac{\partial^i u}{\partial x^i}.$$ When we are going to integrate a function with respect to one of it's arguments or to take  a radical of it, we have to extend the basic field $\cal F$. 
 
As usual in differential algebra, we have a principle derivation \eqref{DD},
 which generates all independent variables $u_i$ starting from $u_0=u$. This derivation is a formalization of the partial $x$-derivative, 
which acts on functions of the form $\ds g\Big(u(x), \frac{\partial u}{\partial x}, \dots\Big)$.  The vector field $D$ defined by  \eqref{DD} is called the {\it total} $x$-{\it derivative}.
\begin{remark}\label{rem23}
We very often use the fact that $D(f)=0,\, f\in {\cal F}$ implies $f={\rm const}$. 
\end{remark}
\begin{remark}
The variable $t$ in the local algebraic theory of evolution equations is considered as a parameter. 
\end{remark}

Amongst the main concepts related to dynamical system of ODEs \eqref{dynsys}, the finite-dimensional vector field  \eqref{ODEfield} plays a key role. 
 Analogously, the infinite-dimensional vector field   
\begin{equation}
\label{Dt} D_F= \sum_{i=0}^\infty D^{i}(F) \frac{\partial}{\partial u_i}  
\end{equation}
is associated with evolution equation \eqref{eveq}. 
This vector field commutes with $D$. We call vector fields of the form \eqref{Dt} {\sl evolutionary}.
The function $F$ is called the {\it generator} of that 
evolutionary vector field. Sometimes we call \eqref{Dt} the {\it total} $t$-{\it derivative}
with respect to \eqref{eveq} and denote it by $D_t$. 

The set of all evolutionary vector fields forms a Lie algebra over $\C$: $[D_G,\, D_H]=D_K,$ where
\begin{equation}\label{comev}
K=H_*(G)-G_*(H).
\end{equation}
Henceforth, we use the following  
\begin{definition}\label{dfrechet} For any function $a\in {\cal F}$ the Fr\'echet
derivative is defined as the linear differential operator  
$$
a_*=\sum_k\frac{\partial a}{\partial u_k}D^k \, . 
$$
The order of the function $a$ is defined as the order of the differential
operator $a_*$. We denote by $a_*^+$ the formally
conjugate operator
\[ a_*^+=\sum_k (-1)^k D^k \circ \frac{\partial a}{\partial u_k} \, . \]
\end{definition}

In the Introduction we defined a generalized symmetry of equation \eqref{eveq} as an evolution equation~\eqref{evsym}\footnote{For the sake of brevity we often name generator $G$ of symmetry simply by symmetry $G$.} that is compatible with~\eqref{eveq}. By definition, the compatibility means that $[D_F,\, D_G]=0.$ It can be written also in the form $G_*(F)=F_*(G)$ or 
\begin{equation}\label{DFsym}
D_t(G)-F_*(G)=0.
\end{equation}

Formula \eqref{comev} defines a Lie bracket on our differential field ${\cal F}$. An integrable hierarchy is nothing else but an infinite-dimensional commutative subalgebra of this Lie algebra. 

\begin{proposition}\label{defrec1} Suppose an operator\footnote{Usually ${\cal R}$ is a differential operator or a ratio of differential operators.} ${\cal R}$ satisfies the equation 
\begin{equation}\label{defR}
D_t({\cal R})=F_*\, {\cal R}-{\cal R}\,F_*.
\end{equation}
Then, for any symmetry \eqref{evsym} of the equation \eqref{eveq}, the equation $u_{\tau}={\cal R}(G)$ is a symmetry of  \eqref{eveq}.
\end{proposition}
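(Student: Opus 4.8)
The plan is to verify the symmetry condition \eqref{DFsym} directly for the candidate generator $\tilde G = {\cal R}(G)$, reducing everything to the two hypotheses at hand: that $G$ is a symmetry, so $D_t(G) = F_*(G)$, and that ${\cal R}$ satisfies the operator equation \eqref{defR}. First I would recall that what must be shown is $D_t(\tilde G) - F_*(\tilde G) = 0$, i.e. $D_t({\cal R}(G)) = F_*({\cal R}(G))$.

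The central computational step is a Leibniz-type rule for the action of $D_t$ on the function ${\cal R}(G)$, namely $D_t({\cal R}(G)) = D_t({\cal R})(G) + {\cal R}(D_t(G))$, where $D_t({\cal R})$ denotes the operator obtained by applying $D_t$ to the coefficients of ${\cal R}$. I would justify this by writing ${\cal R} = \sum_i r_i D^i$ and using the fact, stated just after \eqref{Dt}, that the evolutionary field $D_t = D_F$ commutes with the total derivative $D$. Hence $D_t\big(r_i D^i(G)\big) = D_t(r_i)\, D^i(G) + r_i\, D^i\big(D_t(G)\big)$, and summing over $i$ yields the claimed identity.

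Then I substitute: the defining relation \eqref{defR} gives $D_t({\cal R})(G) = F_*({\cal R}(G)) - {\cal R}(F_*(G))$, while the symmetry hypothesis for $G$ gives ${\cal R}(D_t(G)) = {\cal R}(F_*(G))$. Adding these two, the terms $\mp\,{\cal R}(F_*(G))$ cancel and one is left with $D_t({\cal R}(G)) = F_*({\cal R}(G))$, which is precisely \eqref{DFsym} for $\tilde G$. This establishes that $u_\tau = {\cal R}(G)$ is a symmetry of \eqref{eveq}.

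The only point requiring care, and the step I expect to be the main obstacle, is the validity of the Leibniz rule together with \eqref{defR} when ${\cal R}$ is not a genuine differential operator but a ratio of differential operators, as the footnote to the proposition allows. In that case the identity $D_t({\cal R}(G)) = D_t({\cal R})(G) + {\cal R}(D_t(G))$ must be interpreted in the ring of formal pseudo-differential operators, and one should check that $D_t$ extends as a derivation to such ratios, via $D_t({\cal R}_1 {\cal R}_2^{-1})$ and $D_t({\cal R}_2^{-1}) = -{\cal R}_2^{-1}\, D_t({\cal R}_2)\, {\cal R}_2^{-1}$, so that all the formal manipulations above remain meaningful. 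For a genuine differential operator ${\cal R}$ the argument is immediate.
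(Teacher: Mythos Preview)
Your proof is correct and is essentially the same as the paper's, just written out more explicitly. The paper compresses your computation into the single observation that \eqref{defR} is equivalent to $[D_t - F_*,\,{\cal R}] = 0$, after which the symmetry condition $(D_t - F_*)(G)=0$ immediately yields $(D_t - F_*)({\cal R}(G))=0$; your Leibniz step $D_t({\cal R}(G)) = D_t({\cal R})(G) + {\cal R}(D_t(G))$ is precisely the identity $[D_t,{\cal R}] = D_t({\cal R})$ that underlies this rewriting.
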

Hereinafter,
$$
D_t\Big(\sum s_i D^i\Big) = \sum D_t(s_i)\, D^i.
$$
\begin{proof}
Let us rewrite \eqref{defR} as 
\begin{equation}\label{defR1}
[D_t - F_*, \, {\cal R}]=0.
\end{equation}
Now the statement follows from \eqref{DFsym}.  
\end{proof}
\begin{definition} An operator ${\cal R}: {\cal F}\to {\cal F}$ satisfying \eqref{defR} is called a {\it recursion operator} for the equation \eqref{eveq}.

\end{definition}

\subsubsection{Euler operator}
\begin{definition} The Euler operator or the variational derivative
of a function $\ a\in {\cal F}$ is defined as
\[
\frac{\delta a}{\delta u}=\sum_k (-1)^k D^k \left(\frac{\partial a}
{\partial u_k}\right)=a_*^+(1)\, .
\]
\end{definition}

If a function $a$ is a total derivative $a=D(b),\, b\in {\cal F}$ (we say that $a\in {\rm Im}\, D$), then
the variational derivative vanishes. Moreover, the vanishing of the variational
derivative is almost a criterion that the function belongs to $ {\rm Im}\, D$
\cite{GMSh}:

\begin{theorem}\label{tgms} For $a\in {\cal F}$ the variational
derivative vanishes
  \[\frac{\delta a}{\delta u}=0\, \]
  if and only if  $a\in  {\rm Im}\, D +  \C$.
\end{theorem}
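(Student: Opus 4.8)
The plan is to prove both directions of the equivalence $\frac{\delta a}{\delta u}=0 \iff a\in \mathrm{Im}\,D + \C$. The easy direction is to show that $a\in \mathrm{Im}\,D + \C$ implies $\frac{\delta a}{\delta u}=0$. For this I would first verify that the Euler operator annihilates constants (immediate, since $\frac{\partial c}{\partial u_k}=0$) and then that it annihilates $D(b)$ for any $b\in {\cal F}$. The latter reduces to a short computation using the key identity that the Euler operator and $D$ satisfy $\frac{\delta}{\delta u}\circ D = 0$; concretely, writing $a=D(b)$ and expanding $\frac{\partial (D b)}{\partial u_k} = \frac{\partial b}{\partial u_{k-1}} + D\!\left(\frac{\partial b}{\partial u_k}\right)$, the alternating sum $\sum_k(-1)^k D^k$ telescopes to zero. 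This is the routine half.

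The substantive direction is the converse: $\frac{\delta a}{\delta u}=0$ forces $a$ to be, up to an additive constant, a total derivative. The plan is to give a constructive homotopy-type argument. First I would reduce to the case $a(0,0,\dots)=0$ by subtracting the constant $a\big|_{u_i=0}$, which does not change $\frac{\delta a}{\delta u}$ and only modifies $a$ by an element of $\C$. Then I would introduce the scaling homotopy: consider $a(\lambda u,\lambda u_1,\dots)$ as a function of $\lambda\in[0,1]$ and write
$$
a = \int_0^1 \frac{d}{d\lambda}\, a(\lambda u,\lambda u_1,\dots)\, d\lambda = \int_0^1 \sum_k u_k\, \frac{\partial a}{\partial u_k}(\lambda u_j)\, d\lambda.
$$
The goal is to massage the integrand so that, using integration by parts in $x$ (i.e. moving powers of $D$ off the $u_k$ onto the partial derivatives, modulo terms in $\mathrm{Im}\,D$), the whole expression collapses onto the variational derivative $\frac{\delta a}{\delta u}$, which is assumed to vanish, leaving $a\in \mathrm{Im}\,D$.

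More precisely, the mechanism I expect to use is the standard integration-by-parts reduction: for each $k$, $u_k\,\frac{\partial a}{\partial u_k} = D^k(u_0)\,\frac{\partial a}{\partial u_k}$, and repeatedly applying $f\, D^k(g) \equiv (-1)^k g\, D^k(f) \pmod{\mathrm{Im}\,D}$ rewrites each summand, modulo a total derivative, in a form whose $\lambda$-integral of the accumulated coefficient of $u_0$ is exactly $u_0$ times $\sum_k(-1)^k D^k\!\left(\frac{\partial a}{\partial u_k}\right)=\frac{\delta a}{\delta u}$. Since that vanishes by hypothesis, $a$ is congruent to $0$ modulo $\mathrm{Im}\,D$, whence $a\in \mathrm{Im}\,D$.

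The main obstacle, and the step deserving the most care, is controlling the homotopy integral within the differential field ${\cal F}$: the integration over $\lambda$ takes us outside the field of rational functions, so I would either argue componentwise on monomials (where the $\lambda$-integrals produce explicit rational coefficients like $\frac{1}{\deg+1}$ and everything stays in ${\cal F}$) or invoke the field-extension convention stated earlier in the excerpt, under which one is permitted to enlarge ${\cal F}$ to accommodate such integrations. A cleaner alternative avoiding $\lambda$ altogether is a purely algebraic descent on the top-order variable: if $a$ has order $m$, vanishing of $\frac{\delta a}{\delta u}$ pins down the $u_m$-dependence so that $a$ differs from a total derivative $D(b)$ by a function of strictly lower order, and one finishes by induction on $m$ (with Remark~\ref{rem23} handling the base case). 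I would likely present this inductive version, since it keeps all manipulations inside ${\cal F}$ and isolates the genuine content—matching top-order coefficients—into a finite, transparent calculation.
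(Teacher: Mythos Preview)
The paper does not actually give a proof of this theorem; it is stated with a reference to \cite{GMSh} and used as a black box thereafter. So there is nothing to compare against at the level of argument.

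That said, your proposal is correct and outlines exactly the two standard routes to this classical result. The telescoping computation for the easy direction is right, and for the converse both the scaling-homotopy argument (which is essentially the original Gel'fand--Manin--Shubin/Olver proof) and the inductive descent on order are valid. Your sketch of the descent is slightly compressed: the first nontrivial step is to read off, from the coefficient of $u_{2m}$ in $\frac{\delta a}{\delta u}=0$, that $\frac{\partial^2 a}{\partial u_m^2}=0$, so $a$ is affine in $u_m$; only then can one choose $b$ of order $m-1$ with $\partial_{m-1}b=\partial_m a$ and subtract $D(b)$ to drop the order. You clearly have the right idea, but it would be worth making that linearity-in-$u_m$ step explicit.

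You are also right to flag the field issue. Both arguments require an antiderivative (either $\int_0^1\cdots\,d\lambda$ or $\int \partial_m a\, du_{m-1}$) that can leave the field of rational functions; the paper handles this informally by its standing convention that ${\cal F}$ may be extended as needed. If one wants the statement literally with $b\in{\cal F}$ for rational $a$, an extra argument is required; otherwise your resolution via the paper's convention is the intended one.
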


\begin{lemma}\label{Lemvar} The following identities
\begin{eqnarray*}
&&(a b)_*=a b_*+b a_*, \qquad \qquad (D(a))_*=D \, a_*=D(a_*)+a_*\, D, \\[1.5mm]
&&(D_t(a))_*=D_t(a_*)+a_*\, F_*,  \qquad \qquad(a_*(b))_*=D_b(a_*)+a_*\,  b_*, \\[1.5mm]
&&\Big(\frac{\delta a}{\delta u}\Big)_*=\Big(\frac{\delta a}{\delta u}\Big)^+_*,
\qquad \qquad
\frac{\delta }{\delta u}(D_t (a))=D_t\Big(\frac{\delta a}{\delta u}\Big)+
F_*^+\Big(\frac{\delta a}{\delta u}\Big)
\end{eqnarray*}
hold for any $a,b,F\in {\cal F}$.
\end{lemma}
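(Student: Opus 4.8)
The plan is to prove each of the six identities in Lemma~\ref{Lemvar} by direct computation at the level of differential operators acting on an arbitrary test function, treating every operator as an element of a formal ring of differential operators with coefficients in $\mathcal{F}$, subject to the single commutation rule $D\circ a = a\,D + D(a)$. Throughout I use $a_*=\sum_k \frac{\partial a}{\partial u_k}D^k$ and the fact that $D_t=D_F$ commutes with $D$, which is stated just after~\eqref{Dt}.

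\medskip

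First I would establish the two Leibniz-type formulas in the first row. For $(ab)_*$ I note $\frac{\partial(ab)}{\partial u_k}=a\frac{\partial b}{\partial u_k}+b\frac{\partial a}{\partial u_k}$, so $(ab)_*=\sum_k\bigl(a\frac{\partial b}{\partial u_k}+b\frac{\partial a}{\partial u_k}\bigr)D^k=a\,b_*+b\,a_*$, where $a,b$ on the right denote multiplication operators. For $(D(a))_*$ I use the chain rule: since $D(a)=\sum_k u_{k+1}\frac{\partial a}{\partial u_k}$ is itself a function, one computes $\frac{\partial D(a)}{\partial u_j}=\frac{\partial a}{\partial u_{j-1}}+D\bigl(\frac{\partial a}{\partial u_j}\bigr)$; summing against $D^j$ and reindexing gives $(D(a))_*=a_*\,D+D\circ a_*=D\circ a_*$, and expanding $D\circ a_*=D(a_*)+a_*\,D$ using the operator identity $D\circ c = D(c)+c\,D$ on each coefficient yields both displayed forms.

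\medskip

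The second row is where the evolutionary vector field enters. For $(D_t(a))_*$ I would apply the same reasoning as for $D$ but with $D_t=\sum_i D^i(F)\frac{\partial}{\partial u_i}$ in place of $D$; the analogue of the chain-rule shift produces the extra term $a_*\,F_*$ rather than $a_*\,D$, because differentiating $D_t(a)$ with respect to $u_j$ brings in $\frac{\partial D^i(F)}{\partial u_j}$, and assembling these is exactly the statement $(D_t(a))_*=D_t(a_*)+a_*\,F_*$; here $D_t(a_*)$ means $\sum_k D_t\bigl(\frac{\partial a}{\partial u_k}\bigr)D^k$ per the convention displayed before the proof of Proposition~\ref{defrec1}. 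The formula $(a_*(b))_*=D_b(a_*)+a_*\,b_*$ is the same identity with the general evolutionary field $D_b$ (generator $b$) replacing $D_t$ (generator $F$): applying $\frac{\partial}{\partial u_j}$ to $a_*(b)=\sum_k\frac{\partial a}{\partial u_k}D^k(b)$ and using that $\frac{\partial}{\partial u_j}D^k(b)$ reproduces the action recorded in $D_b$ on the coefficients of $a_*$ gives the claim.

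\medskip

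For the final row, the self-adjointness $\bigl(\frac{\delta a}{\delta u}\bigr)_*=\bigl(\frac{\delta a}{\delta u}\bigr)^+_*$ I would derive from the classical fact that the Fr\'echet derivative of a variational derivative is always a self-adjoint operator; concretely, writing $E(a)=\frac{\delta a}{\delta u}=a_*^+(1)$ and differentiating, the mixed partials $\frac{\partial^2 a}{\partial u_i\,\partial u_j}$ enter symmetrically, and the alternating signs from the $(-1)^k D^k$ in $E$ combine to force $E(a)_*=E(a)_*^+$. The last identity, $\frac{\delta}{\delta u}(D_t(a))=D_t\bigl(\frac{\delta a}{\delta u}\bigr)+F_*^+\bigl(\frac{\delta a}{\delta u}\bigr)$, I would obtain by applying $(\cdot)^+(1)$ to the already-proven formula $(D_t(a))_*=D_t(a_*)+a_*F_*$: taking formal adjoints turns the right side into $D_t(a_*^+)+F_*^+\,a_*^+$, and evaluating both sides on $1$ — using that $D_t$ commutes with taking adjoints on coefficients and that $a_*^+(1)=\frac{\delta a}{\delta u}$ — produces the stated result.

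\medskip

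I expect the main obstacle to be the bookkeeping in the second-row identities, specifically keeping the convention $D_t(a_*):=\sum_k D_t\bigl(\frac{\partial a}{\partial u_k}\bigr)D^k$ straight and correctly matching the index shifts so that the ``extra'' term comes out as $a_*\,F_*$ (and not, say, $F_*\,a_*$ or with a spurious sign). The cleanest way to avoid errors is to verify every identity by letting both sides act on a generic function $\varphi\in\mathcal{F}$ and comparing, rather than manipulating operators abstractly; this reduces each claim to the equality of two explicit sums, and the verifications are then routine though tedious, so I would present only the representative cases $(ab)_*$ and $(D_t(a))_*$ in detail and indicate that the remaining identities follow by the same mechanism together with formal adjunction.
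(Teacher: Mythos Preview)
The paper states Lemma~\ref{Lemvar} without proof, so there is nothing to compare against; your direct-computation strategy is the standard one and is correct. Your handling of identities (1)--(4) and (6) is accurate, including the key observation that $(D^i(F))_*=D^i\circ F_*$ (iterating the second identity) is what makes the ``extra'' term assemble into $a_*\,F_*$, and that $(D_t(L))^+=D_t(L^+)$ together with $a_*^+(1)=\frac{\delta a}{\delta u}$ yields the last identity from the third.

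The only place your sketch is thin is identity (5), the self-adjointness of $\bigl(\frac{\delta a}{\delta u}\bigr)_*$. ``Mixed partials enter symmetrically'' is the right intuition but not yet a proof: the honest computation writes $\frac{\delta a}{\delta u}=\sum_k(-D)^k\frac{\partial a}{\partial u_k}$, takes $\partial/\partial u_j$, uses $\bigl[\frac{\partial}{\partial u_j},D\bigr]=\frac{\partial}{\partial u_{j-1}}$, and then checks that the resulting double sum is invariant under the formal adjoint (equivalently, one shows $\bigl(\frac{\delta a}{\delta u}\bigr)_*-\bigl(\frac{\delta a}{\delta u}\bigr)_*^+$ has all coefficients in $\mathrm{Im}\,D$ and vanishes identically). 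This is routine but does require a genuine index manipulation beyond what you wrote; if you present the lemma in detail, that identity is the one worth spelling out.
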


\subsection{Pseudo-differential series}

Consider a skew field of (non-commutative) formal series of the form
\begin{equation}\label{serA}
 S=s_{m}D^m+s_{m-1}D^{m-1}+\cdots + s_0+s_{-1}D^{-1}+
 s_{-2}D^{-2}+\cdots , 
\qquad s_i\in {\cal F}\, .
\end{equation}
The number $m\in \Z$ is called the {\it order} of $S$ and is denoted by ${\rm ord}\, S$.  If $s_i=0$ for $i<0$ that $S$ is called a {\it differential operator}.
\medskip

The product of two formal series is defined by the formula 
$$
 D^k\circ s D^m =s \, D^{m+k}+C_k^1 D(s)\,D^{k+m-1} + 
 C_k^2 D^2 (s)\,D^{k+m-2}+
\cdots \, ,
$$
where $k,m\in \mathbb Z$ and $C^j_n$ is the binomial coefficient
\[
C^j_n=\frac{n(n-1)(n-2)\cdots(n-j+1)}{j!},\qquad n\in \Z.
\]
For the series this formula is extended by associativity.

\begin{remark} \label{remcom} For any series $S$ and $T$ we have ${\rm ord (S\circ T -T\circ S)}\le {\rm ord}\,S + {\rm ord}\,T -1.$
\end{remark}

The formally conjugated formal series $S^+$ is defined as
\[
 S^+=(-1)^m D^m\circ\, s_{m}+(-1)^{m-1}D^{m-1}\circ\, s_{m-1}+\cdots +
 s_0-D^{-1}\circ\,s_{-1}+D^{-2}\circ\, s_{-2}+\cdots\,.
\]

\begin{example}
Let
$$
R=u D^2+u_1 D,\qquad S=-u_1 D^3,\qquad T=u D^{-1};
$$
then
\begin{gather*}
R^+=D^2\circ u-D\circ u_1=R,\\[2mm]
S^+=D^3\circ u_1=u_1D^3+3u_2D^2+3u_3D+u_4,\\[2mm]
T^+=-D^{-1}u=-u D^{-1}+u_1D^{-2}-u_2D^{-3}+\cdots \,.
\end{gather*}
\end{example}

For any series \eqref{serA}
one can uniquely find the inverse series
$$
T=t_{-m}D^{-m}+t_{-m-1}D^{-m-1}+\cdots\, ,\qquad t_k\in  {\cal F}
$$
such that $S\circ T=T\circ S=1$.
Indeed, multiplying $S$ and $T$ and equating the result to 1, we find that $s_m t_{-m}=1$, i.~e., $\ds t_{-m}=\frac{1}{s_m}$. Comparing the coefficients of $D^{-1},$ we get
$$
m s_m \, D(t_{-m})+s_m \,t_{-m-1}+s_{m-1}\,t_{-m}=0
$$
and therefore
$$
t_{-m-1}=-\frac{s_{m-1}}{s_m^2}-m D\Big(\frac{1}{s_m}\Big) \, ,\quad \mbox{etc.}
$$

Furthermore, we can find the $m$-th root of the series $S$, i.~e., a series
$$
R=r_1 D+r_0+r_{-1}D^{-1}+r_{-2}D^{-2}+\cdots
$$
such that $R^m=S$.
This root is unique up to any number factor $\varepsilon$ such that  $\varepsilon^m=1$.

\begin{example}\label{ex10}
Let $S=D^2+u$. Assuming
$$
R=r_1 D+r_0+r_{-1}D^{-1}+r_{-2}D^{-2}+\cdots ,
$$
we compute
$$
R^2=R\circ R= r_1^2 D^2+(r_1 D(r_1)+r_1 r_0 +r_0 r_1)\,D+r_1D(r_0)+r_0^2+r_1r_{-1}+r_{-1}r_1+\cdots\, ,
$$
and compare the result with $S$. From the coefficients of $D^2$ we find $r_1^2=1$ or
$r_1=\pm 1$. Let  $r_1=1$. Comparing coefficients of $D,$ we get $2r_0=0$, i.~e., $r_0=0$.
From $D^{0}$ we obtain $2r_{-1}=u$, terms of $D^{-1}$
$\ds r_{-2}=-\frac{u_1}{4}$, etc., i.~e.
$$
R=S^{1/2}=D+\frac{u}{2}D^{-1}-\frac{u_1}{4}D^{-2}+\cdots\, .
$$
\end{example}

\begin{definition}
The {\it residue} of a formal series  \eqref{serA} by definition is
the coefficient of $D^{-1}$:
$$
 \hbox{res}\, (S) \stackrel{def}{=} s_{-1}\, .
$$
The {\it logarithmic residue} of $S$ is defined as
$$
 \hbox{res} \log S \stackrel{def}{=} \frac{s_{m-1}}{s_m}\, .
$$
\end{definition}

We will use the following important  
\begin{theorem} \label{adler} {\rm \cite{adler}}
For any two formal series $S$, $T$ the
residue of the commutator belongs to ${\rm Im}\,D$:
$$
{\rm res} [S,\,T]=D(\sigma (S,\,T)),
$$
where
$$
\sigma (S,\,T)=\sum_{i\le {\rm ord}(T),\ j\le
{\rm ord}(S)}^{i+j+1>0}C^{i+j+1}_{j}\, \times  
\sum_{k=0}^{i+j}(-1)^k D^k(s_j)D^{i+j-k}(t_j)\, .
$$
\end{theorem}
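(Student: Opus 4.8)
The plan is to use bilinearity to reduce everything to a commutator of two monomials, extract the residue with the product formula, and then recognize the answer as a total derivative by an integration-by-parts telescoping whose closure rests on a single binomial identity.

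First I would write $S=\sum_j s_jD^j$ and $T=\sum_i t_iD^i$ and observe that both the commutator and the operation ${\rm res}$ are linear, so it suffices to compute ${\rm res}[s\,D^j,\,t\,D^i]$ for $s=s_j,\ t=t_i$ and then sum. Finitely many monomial commutators contribute to the coefficient of $D^{-1}$, by Remark \ref{remcom}, so the rearrangement is legitimate. Using the product formula in the form $D^j\circ t=\sum_{l\ge 0}C_j^{\,l}\,D^l(t)\,D^{\,j-l}$ and reading off the coefficient of $D^{-1}$ (which forces $l=i+j+1$) I get
$$ {\rm res}\,(s\,D^j\circ t\,D^i)=C_j^{\,i+j+1}\,s\,D^{\,i+j+1}(t), $$
and hence
$$ {\rm res}[s\,D^j,\,t\,D^i]=C_j^{\,i+j+1}\,s\,D^{\,i+j+1}(t)-C_i^{\,i+j+1}\,t\,D^{\,i+j+1}(s). $$
Since $l=i+j+1$ must be $\ge 0$ for the term to occur, and since at $i+j+1=0$ the two contributions cancel, only the range $i+j+1>0$ survives, which is exactly the range in the assertion.

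Next I would look for an explicit primitive of the shape $\sigma_{ij}=\sum_{k=0}^{i+j}\alpha_k\,D^k(s)\,D^{\,i+j-k}(t)$. Expanding $D(\sigma_{ij})$ by the Leibniz rule, the interior terms cancel in pairs precisely when $\alpha_k+\alpha_{k-1}=0$, i.e. $\alpha_k=(-1)^k\alpha_0$, while the two boundary terms must reproduce the residue above; this pins down $\alpha_0=C_j^{\,i+j+1}$ and demands the compatibility relation $C_i^{\,i+j+1}=(-1)^{\,i+j+1}C_j^{\,i+j+1}$.

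The arithmetic heart of the proof --- and the one step that is not bookkeeping --- is verifying this last relation. It is the negation law for the generalized binomial coefficients $C_n^{\,k}=\binom{n}{k}$, $n\in\Z$, namely $\binom{j}{i+j+1}=(-1)^{\,i+j+1}\binom{(i+j+1)-j-1}{i+j+1}=(-1)^{\,i+j+1}\binom{i}{i+j+1}$. Granting it, the telescoping closes and
$$ \sigma_{ij}=C_j^{\,i+j+1}\sum_{k=0}^{i+j}(-1)^k\,D^k(s_j)\,D^{\,i+j-k}(t_i) $$
satisfies $D(\sigma_{ij})={\rm res}[s_jD^j,\,t_iD^i]$. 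Summing over $i\le{\rm ord}(T)$ and $j\le{\rm ord}(S)$: the constraint $i+j+1>0$ together with these upper bounds forces $-{\rm ord}(S)\le i\le{\rm ord}(T)$ and $-{\rm ord}(T)\le j\le{\rm ord}(S)$, so the sum is finite and produces a genuine element $\sigma(S,T)\in{\cal F}$ with ${\rm res}[S,T]=D(\sigma(S,T))$, which is the stated formula.
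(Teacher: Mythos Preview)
The paper does not actually give a proof of this theorem; it is stated with a citation to \cite{adler} and then used as a black box (for Corollary~\ref{cor21} and later in Theorem~\ref{prho}). Your argument is correct and is the standard one: reduce by bilinearity to monomial commutators, pick off the residue with the product rule, and telescope. The binomial negation law $\binom{i}{i+j+1}=(-1)^{i+j+1}\binom{j}{i+j+1}$ that you single out is indeed the only non-bookkeeping step. Incidentally, the displayed formula in the paper has a typo ($t_j$ in the inner sum should be $t_i$), which you have silently corrected.
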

 
 \begin{corollary}\label{cor21} For any series $S$ and $T$
 $$
 {\rm res}\,(S - T S T^{-1}) \in {\rm Im}\,D.
 $$
 \end{corollary}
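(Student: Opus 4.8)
The plan is to deduce Corollary~\ref{cor21} directly from Theorem~\ref{adler} by a clever choice of the two series. The key observation is that the quantity $S - TST^{-1}$ is almost a commutator: if I set $U = TST^{-1}$, then I would like to write $S - U$ in a form where Theorem~\ref{adler} applies. The natural move is to introduce the product $ST^{-1}$ and exploit the cyclic structure of the residue under commutators.

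First I would rewrite the expression using the commutator $[S,\,T^{-1}]\circ T$. Observe that
$$
[S,\,T^{-1}]\circ T = (S T^{-1} - T^{-1} S)\,T = S - T^{-1} S T.
$$
This is the ``wrong'' conjugation, so instead I would work with $[T^{-1},\,S]\circ T = (T^{-1}S - S T^{-1})T = T^{-1}STT - ST^{-1}T$; the cleaner route is to start from $T^{-1}\circ[T,\,S]$ or to conjugate appropriately. The most direct identity is
$$
S - T S T^{-1} = [S,\,T]\,T^{-1} = (S T - T S)\,T^{-1}.
$$
Indeed $(ST - TS)T^{-1} = S T T^{-1} - T S T^{-1} = S - TST^{-1}$, which is exactly the series in question. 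So the claim reduces to showing that $\operatorname{res}\big([S,\,T]\,T^{-1}\big) \in \operatorname{Im}\,D$.

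Now I would apply Theorem~\ref{adler} with the roles of the two series played by $S$ and $T$, but the subtlety is that I have $[S,\,T]$ multiplied on the right by $T^{-1}$ rather than a bare commutator. The key step is therefore to relate $\operatorname{res}\big([S,\,T]\,T^{-1}\big)$ to $\operatorname{res}[\,\widetilde S,\,\widetilde T\,]$ for some modified series. One natural approach: set $A = S$ and $B = T^{-1}$, and note that $\operatorname{res}\big([A,\,B^{-1}]B^{-1}\big)$ should again be expressible as a residue of a genuine commutator. More cleanly, I would use that for any series $P$ and invertible $T$, one has $\operatorname{res}(PT^{-1}) $ behaves well under the substitution $P \mapsto [S,T]$; since $[S,T] = ST - TS$, I compute $[S,T]T^{-1} = ST^{-1}\cdot T\cdot T^{-1}\cdots$ — the cleanest bookkeeping is to set $X = ST^{-1}$ and observe $S - TST^{-1} = X T - T X = [X,\,T]$, because $XT = ST^{-1}T = S$ and $TX = TST^{-1}$. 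Thus $S - TST^{-1} = [ST^{-1},\,T]$ is literally a commutator of the two series $ST^{-1}$ and $T$, both of which lie in the skew field of formal series.

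With this reduction the corollary is immediate: Theorem~\ref{adler} gives $\operatorname{res}[ST^{-1},\,T] = D\big(\sigma(ST^{-1},\,T)\big) \in \operatorname{Im}\,D$, which is exactly the assertion. The main obstacle, and the step I would check most carefully, is the algebraic manipulation $X = ST^{-1}$ yielding $XT = S$ and $TX = TST^{-1}$ — this requires that products and inverses in the skew field associate correctly and that $T^{-1}T = TT^{-1} = 1$, which is guaranteed by the uniqueness-of-inverse discussion preceding the statement. Everything else is a formal consequence, so there is no serious computational difficulty; the whole content is spotting that the displayed series is secretly the commutator $[ST^{-1},\,T]$.
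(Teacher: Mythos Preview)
Your proof is correct and uses exactly the same identity as the paper: $S - TST^{-1} = [ST^{-1},\,T]$, after which Theorem~\ref{adler} applies directly. The exploratory detours in the middle of your write-up are unnecessary, but the final argument matches the paper's one-line proof.
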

 \begin{proof} It follows from the identity
 $$
 S - T S T^{-1} = [S T^{-1},\, T].
 $$
 \end{proof}

\subsection{Formal symmetries} 

\begin{definition} A pseudo-differential series
$$
\Lambda=l_{1} D+l_{0}+l_{-1}D^{-1}+\cdots \, ,
$$
where $l_{k}=l_k(u, \ldots, u_{s_k})\in {\cal F},$ is called a {\it formal symmetry}  (or {\it formal recursion
operator}\footnote{According to Proposition \ref{defrec1} any genuine operator that satisfies \eqref{Lambdaeq}  maps higher symmetries of the equation \eqref{eveq} to higher symmetries. }{\rm )} for equation  \eqref{eveq} if $R=\Lambda$ satisfies the equation
\begin{equation}\label{Lambdaeq}
D_t(R)=[F_{*},\,R], \qquad \mbox{where} \qquad F_{*}=\sum_{i=0}^{n}
\frac{\partial F}{\partial u_{i}}D^{i}.
\end{equation}
\end{definition}
\begin{proposition}\label{Lambdaspace} {\rm \cite{sokshab}}
Suppose a pseudo-differential series $R$ of order $k$ satisfies the equation  \eqref{Lambdaeq}. Then

1) $R^{\frac{1}{k}}$ is a formal symmetry;

2) If $R_1$ and $R_2$ satisfy  \eqref{Lambdaeq}, then   $R_1\circ R_2$ satisfies \eqref{Lambdaeq};

3) $R^{\frac{i}{k}}$ 
satisfies this equation for any $i\in \Z$;

4) Let $\Lambda$ be a formal symmetry.  Then $R$ can be written in the form
$$
R=\sum_{-\infty}^k a_i \Lambda^i, \qquad  k={\rm ord}\,R, \quad a_i\in \C; 
$$ 

5) In particular, any formal symmetry $\bar \Lambda$ has the form 
\begin{equation}\label{genLam}
\bar \Lambda = \sum_{-\infty}^1 c_i \Lambda^i, \qquad c_i\in \C.
\end{equation}
\end{proposition}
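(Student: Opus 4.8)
The plan is to treat the right-hand side of \eqref{Lambdaeq} as a derivation. Introduce on the skew field of pseudo-differential series the linear map $\mathcal{D}(R)=D_t(R)-[F_*,R]$, so that \eqref{Lambdaeq}, written as in \eqref{defR1}, reads simply $\mathcal{D}(R)=0$. The first thing I would verify is that $\mathcal{D}$ is a derivation of the associative product: since $D_t$ commutes with $D$ it acts as a derivation coefficient-wise on products of series, and $[F_*,\,\cdot\,]$ is an inner derivation, so $\mathcal{D}(R_1R_2)=\mathcal{D}(R_1)R_2+R_1\mathcal{D}(R_2)$. This is exactly statement 2: the solution space of \eqref{Lambdaeq} is a subalgebra, hence closed under the product $R_1\circ R_2$.

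For statement 1, set $S=R^{1/k}$ (order $1$, leading coefficient $s_1$ with $s_1^{\,k}=r_k$) and $T=\mathcal{D}(S)$. Applying the derivation property to $R=S^k$ gives $0=\mathcal{D}(R)=\sum_{j=0}^{k-1}S^j\,T\,S^{k-1-j}$. If $T\neq0$ had order $p$, every summand would have order $p+k-1$ with the \emph{same} leading coefficient $s_1^{\,k-1}t_p$ (coefficients lie in the commutative field $\mathcal F$), so the whole sum would have leading coefficient $k\,s_1^{\,k-1}t_p\neq0$, contradicting $\mathcal{D}(R)=0$. Hence $T=0$ and $R^{1/k}$ is a formal symmetry. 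Statement 3 then follows formally: $S=R^{1/k}$ solves \eqref{Lambdaeq} by part 1, its positive powers by part 2, the identity trivially, and $S^{-1}$ because applying $\mathcal{D}$ to $S\circ S^{-1}=1$ gives $S\,\mathcal{D}(S^{-1})=0$; thus $R^{i/k}=S^i$ solves \eqref{Lambdaeq} for every $i\in\Z$.

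The heart of the matter is statement 4, and I expect the leading-coefficient computation to be the main obstacle. Let $\Lambda=\lambda_1 D+\cdots$ and $R=r_kD^k+\cdots$ both satisfy \eqref{Lambdaeq}, and assume $n={\rm ord}\,F_*\ge 2$. Comparing the top-order terms of $D_t(R)=[F_*,R]$: the coefficient of $D^{n+k}$ cancels automatically because $\mathcal F$ is commutative (Remark \ref{remcom}), while $D_t(R)$ only reaches order $k<n+k-1$, so the coefficient of $D^{n+k-1}$ in $[F_*,R]$ must vanish. A direct expansion gives
\begin{equation}\nonumber
n\,f_n\,D(r_k)-k\,r_k\,D(f_n)=0,\qquad f_n=\frac{\partial F}{\partial u_n},
\end{equation}
equivalently $D\big(\log(r_k/f_n^{\,k/n})\big)=0$, so by Remark \ref{rem23} $r_k=c\,f_n^{\,k/n}$ with $c\in\C$. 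The identical computation for $\Lambda$ yields $\lambda_1=c'f_n^{\,1/n}$, whence $a_k:=r_k/\lambda_1^{\,k}=c/c'^{\,k}\in\C$ is a \emph{constant}. Therefore $R-a_k\Lambda^k$ again solves \eqref{Lambdaeq} (by parts 2 and 3 together with linearity) and has order strictly less than $k$.

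Finally I would close by downward induction on the order: repeating the previous step produces constants $a_i\in\C$ such that $R-\sum_{i>m}a_i\Lambda^i$ has order $\le m$ for every $m$, which is precisely the asserted expansion $R=\sum_{-\infty}^{k}a_i\Lambda^i$ as a formal series (the coefficient of each fixed power of $D$ stabilizes after finitely many steps). Statement 5 is then the special case $k=1$: a formal symmetry $\bar\Lambda$ is by definition a solution of \eqref{Lambdaeq} of order $1$, so part 4 gives $\bar\Lambda=\sum_{-\infty}^{1}c_i\Lambda^i$ with $c_i\in\C$.
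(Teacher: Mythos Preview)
The paper itself does not supply a proof of this proposition; it simply cites \cite{sokshab}. Your argument is correct and is essentially the standard one: interpret $\mathcal{D}=D_t-[F_*,\cdot]$ as a derivation, use this for parts 1--3, and for part 4 read off the leading-coefficient relation $n f_n D(r_k)=k r_k D(f_n)$ from the $D^{n+k-1}$ term of $[F_*,R]$ (using $n\ge 2$ so that $D_t(R)$ does not contribute there).

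One cosmetic point: writing $r_k=c\,f_n^{k/n}$ is slightly loose, since $f_n^{k/n}$ need not lie in $\mathcal F$. What you actually use is the cleaner statement $D(r_k/\lambda_1^{\,k})=0$, which follows directly by combining the relation for $r_k$ with the $k=1$ relation $n f_n D(\lambda_1)=\lambda_1 D(f_n)$ for $\Lambda$; then Remark \ref{rem23} gives $a_k=r_k/\lambda_1^{\,k}\in\C$ without ever forming fractional powers of $f_n$. With that adjustment the argument is clean, and your handling of the downward induction (each coefficient of a fixed $D^m$ stabilizes after finitely many steps) is exactly what is needed to make the formal expansion $R=\sum_{-\infty}^{k}a_i\Lambda^i$ meaningful.
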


%%%%%%%%%%%%%%%%%%%%%%%%%%%%%%%%%%%%%%%%%%%%%%%%%%%%%%%%%%%%%%%%%%%
The coefficients of the formal symmetry can be found from equation
\eqref{Lambdaeq}.
\begin{example}
Let us consider equations of KdV type
\begin{equation}\label{kdvt}
u_{t}=u_{3}+f(u, u_1)\,
\end{equation}
and find a few coefficients $l_1,l_0,...$ of the
formal symmetry $\Lambda$.
We substitute
$$
F_{*}=D^3+\frac{\partial f}{\partial u_{1}} D+
\frac{\partial f}{\partial u},\qquad \quad
\Lambda =l_{1} D+l_{0}+l_{-1}D^{-1}+\cdots
$$
into \eqref{Lambdaeq} and collect the coefficients of $D^3, D^2, \dots$.

We obtain
$$
\begin{array}{l}
D^3:\quad 3 D(l_1)=0; \qquad D^2:\quad 3 D^2(l_1)+3 D(l_0)=0; \\[4mm]
\ds D: \quad D^3(l_1)+3 D^2(l_0)+3 D(l_{-1})+
\frac{\partial f}{\partial u_{1}}\, D(l_1)=
 D_t(l_1)+l_1\, D\left(\frac{\partial f}{\partial u_{1}}\right).
\end{array}
$$
From the first equation it follows (see Remark \ref{rem23}) that $l_1$ is a constant and we set $l_1=1$.
Now, from the second equation, it follows that $l_0$ is a constant and we choose
$l_0=0$
(any constant is a trivial solution of equation \eqref{Lambdaeq}).
It follows from the third equation that
\[ D(l_{-1})=D\Big(\frac{1}{3}\, \frac{\partial f}{\partial u_{1}}\Big)\, ,\]
and therefore
\[ l_{-1}=\frac{1}{3}\, \frac{\partial f}{\partial u_{1}}+c_{-1}\, ,
\qquad c_{-1}\in\C\, .\]
The constant of integration $c_{-1}$ can be set equal to zero without
loss of generality (see formula \eqref{genLam}). Therefore
$$
\Lambda=D+\frac{1}{3}\,
\frac{\partial f}{\partial u_{1}}D^{-1}+\cdots \, .
$$
Notice that a first obstacle for the existence of $\Lambda$ appears when we compare coefficients of $D^{-1}$. So  the formal symmetry does not exist for any arbitrary function $f(u,u_1)$ in \eqref{kdvt}.
\end{example}

\begin{remark}\label{rem24} In general, we define the coefficients of $\Lambda$ from \eqref{Lambdaeq} step by step solving equations of the form 
$D(l_k)=S_k$, where $S_k \in {\cal F}.$ This equation is resolvable only if $S_k \in {\rm Im}\, D$ {\rm (see Theorem \ref{tgms})}. So there 
are infinitely many obstacles for the existence of a formal symmetry. 

\end{remark}

\begin{theorem}\label{tlsym} {\rm  \cite{ibshab}} If equation \eqref{eveq}
possesses an infinite sequence of higher symmetries
$$
u_{\tau_{i}}=G_i(u, \dots ,u_{m_{i}}), \qquad m_i \rightarrow \infty,
$$
then the equation has a formal symmetry.
\end{theorem}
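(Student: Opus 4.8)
The plan is to manufacture, out of each high-order symmetry $G_i$, a pseudo-differential series of order $1$ that satisfies the formal symmetry equation \eqref{Lambdaeq} up to an error whose order tends to $-\infty$ as $i\to\infty$, and then to argue that such approximate formal symmetries of unbounded length force the existence of a genuine one. First I would extract the operator equation hidden in the symmetry condition. Applying the Fréchet derivative to \eqref{DFsym} and using the identities $(D_t(a))_*=D_t(a_*)+a_*\,F_*$ and $(a_*(b))_*=D_b(a_*)+a_*\,b_*$ from Lemma \ref{Lemvar}, one finds that $R=G_{i*}$ obeys
\begin{equation}\nonumber
D_t(R)-[F_*,\,R]=D_{G_i}(F_*).
\end{equation}
The right-hand side is a differential operator of order at most $n=\mathrm{ord}\,F$, whereas $R$ has order $m_i$. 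Thus each $G_{i*}$ satisfies \eqref{Lambdaeq} up to an error of fixed, low order.

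Next I would take the $m_i$-th root. By the root construction in the section on pseudo-differential series, $\Lambda^{(i)}:=R^{1/m_i}$ exists (after the harmless field extension needed to adjoin $(\partial G_i/\partial u_{m_i})^{1/m_i}$) and has order $1$. Writing $\mathcal{L}(T):=D_t(T)-[F_*,\,T]$, observe that $\mathcal{L}$ is a \emph{derivation} of the skew field of formal series, since $D_t$ is a derivation and $[F_*,\,\cdot\,]$ is an inner one. Hence from $R=(\Lambda^{(i)})^{m_i}$ we get $\mathcal{L}(R)=\sum_{j=0}^{m_i-1}(\Lambda^{(i)})^{j}\,\mathcal{L}(\Lambda^{(i)})\,(\Lambda^{(i)})^{m_i-1-j}$. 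If $\mathcal{L}(\Lambda^{(i)})$ had order $d$, then each summand would have order $d+m_i-1$ and their leading coefficients would add up without cancellation, so $\mathrm{ord}\,\mathcal{L}(R)=d+m_i-1$. Since $\mathrm{ord}\,\mathcal{L}(R)\le n$, this forces $d\le n-m_i+1$. Therefore $\Lambda^{(i)}$ solves \eqref{Lambdaeq} exactly in every coefficient down to level $D^{\,n-m_i+2}$, i.e. it is a formal symmetry of length growing linearly in $m_i$.

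Finally, recall (Remark \ref{rem24}) that a formal symmetry is built coefficient by coefficient from \eqref{Lambdaeq} by solving equations $D(l_k)=S_k$, each solvable precisely when $S_k\in\mathrm{Im}\,D$ (Theorem \ref{tgms}); these obstructions are intrinsic to \eqref{eveq}, and by the essential uniqueness of the coefficients in Proposition \ref{Lambdaspace} they are independent of the constants of integration chosen along the way. A formal symmetry of length $N$ certifies that the first $N$ of these obstructions vanish. Because $m_i\to\infty$ produces approximate formal symmetries of unbounded length, \emph{every} obstruction vanishes, so the recursive construction runs to all orders and yields the desired formal symmetry $\Lambda$. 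Equivalently, the coefficientwise limit of the $\Lambda^{(i)}$, taken after a fixed normalization of the leading coefficient, stabilizes to a genuine solution of \eqref{Lambdaeq}.

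The step I expect to be most delicate is this last one: making precise that the obstruction met at each level of the construction is exactly the same intrinsic quantity that an approximate formal symmetry resolves. This requires controlling the freedom described in Proposition \ref{Lambdaspace} — the additive constants and the ambiguity $\sum_{i\le 1}c_i\Lambda^i$ — and verifying that none of it can revive an obstruction once it has been killed at higher length. In tandem, the order bookkeeping in the root step must be matched carefully against the level at which each coefficient $l_k$ is determined, so that ``length $\sim m_i$'' translates faithfully into ``the first $\sim m_i$ obstructions vanish.''
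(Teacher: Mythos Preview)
Your proposal is correct and follows essentially the same route as the paper's proof sketch: linearize the symmetry condition via Lemma~\ref{Lemvar} to obtain $D_t(G_*)-[F_*,G_*]=D_G(F_*)$, observe that the right-hand side has bounded order while $G_*$ has order $m_i\to\infty$, and take the $m_i$-th root to produce an approximate formal symmetry of order~$1$ whose length grows without bound. The paper's proof is deliberately terse---it presents only the ``main idea'' and defers the full argument to \cite{sokshab}---so your added detail (the derivation property of $\mathcal L$, the explicit order estimate $d\le n-m_i+1$, and the obstruction/gluing discussion via Remark~\ref{rem24} and Proposition~\ref{Lambdaspace}) is exactly the content the paper leaves implicit; the point you flag as delicate is indeed the one that requires the most care and is not spelled out in the text.
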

\begin{proof}

The main idea of the proof of Theorem \ref{tlsym} and the relation between the
structure of the formal symmetry and higher symmetries can be illustrated by
the following consideration \cite{sokshab}. Suppose that the equation \eqref{eveq}  has one symmetry
with a generator $G$. The function $G$ satisfies the equation \eqref{DFsym}.
Let us compute the Fr\'echet derivative from the left-hand side of this equation.
Using identities of Lemma \ref{Lemvar},  
we retrieve the equation
\[
D_t(G_*)+G_* F_* = D_G (F_*)+F_* G_*,\]
which can be rearranged in the form
\begin{equation}\label{GF}
D_t(G_*)-[F_*,G_*]=D_G(F_*)\,.
\end{equation}
If the symmetry $G$ has a very large order $m$, then the order of left-hand side of \eqref{GF} is much greater than the order of right-hand side. Therefore, the relations for several first coefficients of $G_*$ are exactly the same as for the first coefficients of the formal symmetry $\Lambda^m.$ The first coefficients of a series of order $m$ that satisfies \eqref{Lambdaeq} coincide with the coefficients of the series $G_*^{\frac{1}{m}}$, which belong to ${\cal F}$. 
\end{proof}

\subsection{Conservation laws}

The notion of first integrals, in contrast with infinitesimal symmetries, cannot be generalized to
the case of PDEs. It is replaced by the concept of local conservation laws.

\begin{definition} A function $\rho\in {\cal F}$ is called a {\it density} of a
local conservation law of equation  \eqref{eveq}  if there exists a function
$\sigma\in {\cal F}$ such that
\begin{equation}\label{rho}
D_t(\rho)=D(\sigma).
\end{equation}
\end{definition}
Equation \eqref{rho} is evidently satisfied if $\rho=D(h)$ for any $h\in {\cal F}$.
In this case $\sigma=D_t(h)$. Such ``conservation laws'' are called {\it trivial}.

\begin{definition}\label{def28} Two conserved densities $\rho_1, \rho_2$ are called
equivalent $\rho_1\sim \rho_2$ if the difference $\rho_1-\rho_2$ is a trivial
density (i.e. $\rho_1-\rho_2\in {\rm Im}\,D$).
\end{definition}

\begin{remark}\label{equi} Actually, a conserved density can be regarded as an equivalence class 
with respect to $\sim$. The Euler operator is well-defined on the equivalence classes. 
\end{remark}

\begin{lemma} Any density $\rho$ is equivalent to a density $\bar \rho(u,\dots,u_k)$ such that 
$$
\frac{\partial^2 \bar\rho}{\partial u_k^2} \ne 0.
$$
\end{lemma}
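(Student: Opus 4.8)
The plan is to lower the order of $\rho$ by integration by parts, halting as soon as the density becomes genuinely nonlinear in its highest derivative. Write $\rho=\rho(u,\dots,u_m)$ where $m$ is the order, so $\partial\rho/\partial u_m\neq 0$. If $\partial^2\rho/\partial u_m^2\neq 0$, there is nothing to prove: take $\bar\rho=\rho$ and $k=m$. Otherwise $\rho$ is affine in its top variable, and I may write $\rho=A\,u_m+B$ with $A=A(u,\dots,u_{m-1})\neq 0$ and $B=B(u,\dots,u_{m-1})$; here necessarily $m\ge 1$.

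First I would produce a function $h$ whose total derivative absorbs the leading term $A\,u_m$. For any $h=h(u,\dots,u_{m-1})$ one has $D(h)=\frac{\partial h}{\partial u_{m-1}}\,u_m$ plus terms of order at most $m-1$, so it is enough to pick $h$ with $\partial h/\partial u_{m-1}=A$; that is, $h$ is the antiderivative of $A$ in the single variable $u_{m-1}$, the remaining arguments being treated as parameters. This is precisely the sort of integration for which the base field $\mathcal F$ may have to be enlarged. With this choice $\rho-D(h)$ contains no $u_m$, hence is a density of order at most $m-1$, and by Definition~\ref{def28} it is equivalent to $\rho$. Repeating the step, the order drops strictly each time the current density is affine in its highest derivative, so the procedure terminates after finitely many iterations.

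The process stops in one of two ways. Either we arrive at a density $\bar\rho(u,\dots,u_k)$ with $\partial^2\bar\rho/\partial u_k^2\neq 0$, which is the assertion; or the order falls to $0$, leaving $\bar\rho=g(u)$. In the latter case $g''(u)\neq 0$ again gives the conclusion with $k=0$, and the sole residual situation is $g=c_1 u+c_0$, the degenerate affine-in-$u$ density that admits no further reduction.

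I expect the only delicate point to be the reduction step itself, specifically the choice of $h$. The naive guess $h=A\,u_{m-1}$ does not work: its total derivative $D(A)\,u_{m-1}+A\,u_m$ still carries a $u_m$-contribution through $D(A)$, so the order is merely reshuffled rather than lowered. What guarantees a genuine decrease is that $h$ be the antiderivative of $A$ in $u_{m-1}$ alone, so that $D(h)$ reproduces exactly the coefficient $A$ of $u_m$ and nothing of order $m$; the accompanying extension of $\mathcal F$ is the price to pay for this. Granting the reduction, termination is immediate from the strict monotonicity of the order, and the only bookkeeping left is the low-order boundary case recorded above.
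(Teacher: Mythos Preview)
Your proof is correct and follows exactly the same idea as the paper's one-sentence argument: when $\rho$ is linear in its highest derivative, subtract a suitable total derivative to lower the order, and iterate. Your identification of the residual case $\bar\rho=c_1u+c_0$ is a valid observation about a technicality the paper's statement and proof simply gloss over; in practice the lemma is applied to densities whose variational derivative is nonconstant, where this boundary case does not arise.
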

\begin{proof} If $\rho$ is linear in the highest derivative, then we may reduce the differential order of $\rho$ by subtracting an element from ${\rm Im}\,D.$
\end{proof}

\begin{definition}\label{orderrho} The number $k$ is called the {\rm order} of the conserved density $\rho.$  We denote $k$ by  ${\rm ord}\, (\rho).$
\end{definition}
\begin{remark}\label{rem26} It can be easily verified that the order of the differential operator $\ds \Big(\frac{\delta \rho}{\delta u}\Big)_* $ is equal to $\,2\, {\rm ord}\, (\rho)$.
\end{remark}

\begin{question} Suppose that equation \eqref{eveq} and a function $\rho \in {\cal F} $ are given. How to verify whether a solution $\sigma \in {\cal F}$ for \eqref{rho} exists or not ?
\end{question}
The first way is to apply the Euler operator to relation \eqref{rho}. By Theorem \ref{tgms} we obtain 
\begin{equation}\label{delro}
\frac{\delta}{\delta u} \Big(D_t(\rho)\Big)=0.
\end{equation}
The left-hand side of the latter identity is known. 

Unfortunately, we cannot find the function $\sigma$ by this method.  However, there is the following straightforward  inductive algorithm to do that. The left-hand side of \eqref{rho} is given and the problem is 
\begin{question}
How to solve an equation of the form $D(X)=S$ for given $S(u,\cdots, u_{m})?$  
\end{question}
It follows from \eqref{DD} that $S$ must be linear in the highest derivative:
$$S=A(u,u_1,\dots, u_{m-1})\, u_{m}+B(u,u_1,\dots, u_{m-1}).$$ If this is not true, then the equation has no solution. If this is true,  we can reduce the order of $S$ subtracting from both sides of the equation a function of the form $D(r(u,u_1,\dots, u_{m-1}))$ such that $S-D(r)$ has order less than $m$. For the function $r$ we can take any solution of the equation $\ds \frac{\partial r}{\partial u_{m-1}}=A.$ Thus if $X$ exists, it can be found in quadratures.

\subsection{Formal symplectic operators}

According to Lemma \ref{Lemvar},   
$\ds X=\frac{\delta \rho}{\delta u}$ satisfies the conjugate equation of \eqref{DFsym}:
\begin{equation}\label{varR}
D_t\,(X)+F_*^+\,(X)=0\, .
\end{equation}
\begin{definition}\label{cosym}
Any solution $X \in {\cal F}$ of equation \eqref{varR} is called {\it cosymmetry}.
\end{definition}
\begin{definition} A pseudo-differential series
$$
S=s_1 D + s_0+s_{-1} D^{-1} +\cdots\, ,\qquad s_1\ne 0
$$
is called a {\it formal symplectic operator}\footnote{Relation \eqref{Req} can be rewritten as $(D_t+F_{*}^{+})\circ S=S (D_t-F_{*}).$ This means that a genuine operator $S: {\cal F}\to {\cal F}$ maps symmetries to cosymmetries. If the equation \eqref{eveq} is a Hamiltonian one, then the
symplectic operator, which is inverse to the Hamiltonian operator, satisfies the equation
\eqref{Req} \cite{Dorf}.}   for equation \eqref{eveq}  if it
satisfies the equation
\begin{equation}\label{Req}
D_t(S)+S\, F_*+F_*^+\, S=0\, .
\end{equation}
\end{definition}
\begin{remark} If $S$ is a formal symplectic operator, then $S^+$  is a formal symplectic operator as well. Therefore, we may additionaly assume that $S^+=-S.$
\end{remark}
\begin{lemma}
 The ratio
$R=S_{1}^{-1} S_{2}$ of any two series $S_1$ and $S_2$ that satisfy \eqref{Req} satisfies equation
\eqref{Lambdaeq}.
\end{lemma}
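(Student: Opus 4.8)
The plan is to prove the claim by a direct computation in the skew field of formal series \eqref{serA}, treating $D_t$ as a derivation and feeding in the two defining relations \eqref{Req} for $S_1$ and $S_2$. First I would record how $D_t$ interacts with multiplication and inversion. Since $D_t(AB)=D_t(A)\,B+A\,D_t(B)$ by the Leibniz rule, differentiating the identity $S_1\,S_1^{-1}=1$ yields
\[
D_t(S_1^{-1})=-S_1^{-1}\,D_t(S_1)\,S_1^{-1}.
\]
These two formulas are all that is needed to expand $D_t(R)$ for $R=S_1^{-1}S_2$.

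Next I would compute
\[
D_t(R)=D_t(S_1^{-1})\,S_2+S_1^{-1}\,D_t(S_2)=-S_1^{-1}\,D_t(S_1)\,S_1^{-1}S_2+S_1^{-1}\,D_t(S_2),
\]
and then eliminate $D_t(S_1)$ and $D_t(S_2)$ using \eqref{Req} in the form $D_t(S_i)=-S_i\,F_*-F_*^+\,S_i$. After substitution, the first term becomes $S_1^{-1}\big(S_1F_*+F_*^+S_1\big)S_1^{-1}S_2=F_*\,R+S_1^{-1}F_*^+\,S_2$, while the second becomes $S_1^{-1}\big(-S_2F_*-F_*^+S_2\big)=-R\,F_*-S_1^{-1}F_*^+S_2$. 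The two copies of $S_1^{-1}F_*^+S_2$ cancel, leaving $D_t(R)=F_*\,R-R\,F_*=[F_*,R]$, which is precisely \eqref{Lambdaeq}.

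There is no deep obstacle here; the argument is purely algebraic and uses only that $D_t$ is a derivation and that $S_1$ is invertible in the skew field of series. The single point requiring care is noncommutativity: one must keep $F_*$, $F_*^+$, the $S_i$ and $S_1^{-1}$ in their correct left/right positions, since it is exactly the placement of $S_1^{-1}$ that converts the awkward $F_*^+$ contributions into a cancelling pair. I would also remark that $R=S_1^{-1}S_2$ has order $0$ (as $S_1^{-1}$ has order $-1$ and $S_2$ order $1$), so it is a \emph{recursion-operator}-type solution of \eqref{Lambdaeq} rather than a formal symmetry of order $1$; this is consistent with the lemma, which only asserts that $R$ solves \eqref{Lambdaeq}, and it matches the interpretation, noted in the footnote to \eqref{Req}, that the ratio of two symplectic operators produces a recursion operator.
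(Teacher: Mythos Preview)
Your argument is correct: the Leibniz rule gives $D_t(S_1^{-1})=-S_1^{-1}D_t(S_1)S_1^{-1}$, and substituting $D_t(S_i)=-S_iF_*-F_*^+S_i$ into $D_t(S_1^{-1}S_2)$ produces the cancellation of the two $S_1^{-1}F_*^+S_2$ terms exactly as you wrote, leaving $[F_*,R]$. The paper itself states this lemma without proof, so there is nothing to compare against; your direct computation is the natural one and would be the expected proof. Your closing remark about the order of $R$ is also accurate and consistent with Proposition~\ref{Lambdaspace}, which treats solutions of \eqref{Lambdaeq} of arbitrary order.
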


\begin{theorem} \label{svsok} {\rm \cite{soksvin1}} If the equation $u_t=F$ possesses
an infinite sequence of local conservation laws
$$
D_t\Big(\rho_i(u, \dots, u_{m_{i}})\Big)=D(\sigma_i), \qquad \frac{\partial^2
\rho_i}{\partial u_{m_{i}}^2}\ne 0, \quad m_i \rightarrow \infty,
$$
then the equation has a formal symmetry $\Lambda$ and a formal symplectic operator $R$ such that $S^+=-S$ and
\begin{equation}\label{LS}
\Lambda^+=-S \Lambda S^{-1}.
\end{equation}
\end{theorem}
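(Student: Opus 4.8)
The plan is to read off both objects from the conservation laws by Fréchet-differentiating the cosymmetries they produce, and then to imitate the extraction scheme used in the proof of Theorem~\ref{tlsym}, with the symmetry equation \eqref{Lambdaeq} replaced by the symplectic equation \eqref{Req}. First I would pass from densities to cosymmetries: since $D_t(\rho_i)\in{\rm Im}\,D$, Theorem~\ref{tgms} together with the last identity of Lemma~\ref{Lemvar} gives $0=\frac{\delta}{\delta u}(D_t(\rho_i))=D_t(X_i)+F_*^+(X_i)$, so each $X_i:=\delta\rho_i/\delta u$ solves the cosymmetry equation \eqref{varR}, and ${\rm ord}\,X_i=2m_i$ because $\partial^2\rho_i/\partial u_{m_i}^2\neq 0$. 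Applying the Fréchet derivative to \eqref{varR} and using $(D_t(a))_*=D_t(a_*)+a_*F_*$ from Lemma~\ref{Lemvar}, the elementary rule $(L(a))_*=L\circ a_*+E$ (valid by the identities $(fg)_*=fg_*+gf_*$ and $(D(a))_*=D\circ a_*$, with $E$ a differential operator of order at most $2n$, $n={\rm ord}\,F$) for $L=F_*^+$, and the self-adjointness $(\delta\rho/\delta u)_*=(\delta\rho/\delta u)_*^+$, I find that the operators $S_i:=(\delta\rho_i/\delta u)_*$ are self-adjoint, have order $2m_i$, and satisfy $D_t(S_i)+S_iF_*+F_*^+S_i=\Theta_i$ with ${\rm ord}\,\Theta_i\le 2n$. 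This is the symplectic analogue of \eqref{GF} and is the crux of the argument: each $S_i$ solves \eqref{Req} in every coefficient of $D^\ell$ with $\ell>2n$.

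Next I would produce a genuine formal symmetry. For $m_i<m_j$ the preceding Lemma on ratios of solutions of \eqref{Req} applies approximately: a short computation gives $D_t(R)-[F_*,R]=S_i^{-1}(\Theta_j-\Theta_i R)$ for $R:=S_i^{-1}S_j$, whose order lies below the leading order $2(m_j-m_i)$ of $R$ by about $2m_i-2n$. Thus $R$ solves \eqref{Lambdaeq} in its top $\sim 2m_i-2n$ coefficients, and by Proposition~\ref{Lambdaspace} its root $R^{1/(2(m_j-m_i))}$ furnishes that many coefficients of an order-$1$ formal symmetry; letting $m_i\to\infty$ and invoking the essential uniqueness in Proposition~\ref{Lambdaspace}(5), all coefficients are determined in $\cal F$ and assemble into a genuine $\Lambda$.

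With $\Lambda$ in hand I would lower the order of the $S_j$: since $S_j\circ\Lambda^{-(2m_j-1)}$ is a product of a solution of \eqref{Req} with a solution of \eqref{Lambdaeq}, it again solves \eqref{Req} (up to order $\le 2n-2m_j+1$) and has order $1$, so as $m_j\to\infty$ its coefficients assemble into a genuine order-$1$ formal symplectic operator; replacing it by $\tfrac12(S-S^+)$, which is legitimate since $S^+$ also solves \eqref{Req} and the leading term survives, I arrange $S^+=-S$. Finally, to obtain $\Lambda^+=-S\Lambda S^{-1}$, I would symmetrize over the map $\Phi(M):=-S^{-1}M^+S$: a direct check, using $D_t(M^+)=-[F_*^+,M^+]$ and \eqref{Req}, shows that $\Phi$ sends solutions of \eqref{Lambdaeq} to solutions of \eqref{Lambdaeq}, is $\C$-linear, and (because $S^+=-S$) satisfies $\Phi^2={\rm id}$; hence $\Lambda:=\tfrac12(\Lambda_0+\Phi(\Lambda_0))$, for the formal symmetry $\Lambda_0$ built above, is again an order-$1$ formal symmetry with $\Phi(\Lambda)=\Lambda$, which is exactly $\Lambda^+=-S\Lambda S^{-1}$.

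The main obstacle I anticipate is the passage from ``$S_i$ solves \eqref{Req} only down to order $2n$'' to ``$\Lambda$ and $S$ solve \eqref{Lambdaeq} and \eqref{Req} with all coefficients in $\cal F$''. The delicate points are that the recursive solvability obstructions of type $\in{\rm Im}\,D$ met while building the order-$1$ series must be matched with those visible in the high-order $S_i$ (which rests on Proposition~\ref{Lambdaspace}, so that multiplication by $\Lambda^{\pm1}$ shifts order without creating new obstructions), and that the even/odd order and self/skew-adjoint parities be tracked consistently so that the construction lands on an order-$1$ operator with $S^+=-S$.
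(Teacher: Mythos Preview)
Your plan is correct and follows essentially the same route as the paper's sketch: the paper also passes to cosymmetries $X=\delta\rho/\delta u$, applies the Fr\'echet derivative to the cosymmetry equation \eqref{varR} using the identities of Lemma~\ref{Lemvar}, and obtains the approximate symplectic relation \eqref{RQ} for $T=(\delta\rho/\delta u)_*$ with a remainder $Q$ of bounded order, in direct analogy with the derivation of \eqref{GF} in the proof of Theorem~\ref{tlsym}. Your additional steps---forming ratios $S_i^{-1}S_j$ to extract $\Lambda$, lowering the order via $S_j\Lambda^{-(2m_j-1)}$, and symmetrizing with the involution $\Phi(M)=-S^{-1}M^+S$ to enforce $S^+=-S$ and $\Lambda^+=-S\Lambda S^{-1}$---are the natural completion of the argument that the paper defers to \cite{sokshab}.
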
 
\begin{proof} For the full proof see \cite{sokshab}. Here, we mention only that the relation between \eqref{Req} and \eqref{varR} can be derived as follows. Using identities from Lemma \ref{Lemvar}, let us
compute the Fr\'echet derivative from the left-hand side of equation \eqref{varR}, where  $\ds X=\frac{\delta \rho}{\delta u}$.
The result can be represented as
\begin{equation}\label{RQ}
D_t(T)+T\, F_*+F_*^+\, T=Q,
\end{equation}
where
\[ T=\Big(\frac{\delta \rho}{\delta u}\Big)_*\, ,
\qquad Q=-\sum_{k-1}^{2n}(-1)^kD^k\,
\Big(\frac{\delta \rho}{\delta u}(F_k)_*\Big)\]
and $F_k$ are the coefficients of the Fr\'echet derivative
$F_*$.  
\end{proof}
 
\begin{proposition}\label{nocon}
A scalar evolution equation of even order
$$
 u_t=F(u,u_1,...,u_{2n}) 
$$
cannot have a conserved density $\rho$ of order higher than $n$.
\end{proposition}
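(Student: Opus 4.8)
The plan is to translate the statement about the conserved density $\rho$ into a statement about its variational derivative, and then to read off a contradiction from the single highest-order coefficient of the cosymmetry equation. First I would set $X=\frac{\delta\rho}{\delta u}$ and let $k={\rm ord}(\rho)$. By Lemma \ref{Lemvar} together with Theorem \ref{tgms}, the fact that $D_t(\rho)=D(\sigma)\in{\rm Im}\,D$ gives $\frac{\delta}{\delta u}\bigl(D_t(\rho)\bigr)=0$, so $X$ is a cosymmetry and satisfies \eqref{varR}:
\[
D_t(X)+F_*^+(X)=0.
\]
For a nontrivial $\rho$ the trivial case $X=0$ is excluded at once, since then $\rho\in{\rm Im}\,D+\C$; otherwise the function $X$ has order exactly $2k$ by Remark \ref{rem26}, and writing $X_{2k}=\frac{\partial X}{\partial u_{2k}}$ one has $X_{2k}\ne0$ (concretely $X_{2k}=(-1)^k\frac{\partial^2\rho}{\partial u_k^2}$, nonzero in the canonical form of the density).

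The heart of the argument is an order count under the assumption, made for contradiction, that $k\ge n+1$, so that $2k>2n$. I would examine the coefficient of the single highest variable $u_{2k+2n}$ in the identity above. In $D_t(X)=X_*(F)=\sum_i\frac{\partial X}{\partial u_i}D^i(F)$ only the term $i=2k$ reaches that order, and since the coefficient of $u_{2n+2k}$ in $D^{2k}(F)$ is $\frac{\partial F}{\partial u_{2n}}=:a$, the $u_{2k+2n}$-coefficient of $D_t(X)$ equals $a\,X_{2k}$. In $F_*^+(X)=\sum_{j=0}^{2n}(-1)^jD^j(F_jX)$, with $F_j=\frac{\partial F}{\partial u_j}$, the crucial point is that $2k>2n\ge{\rm ord}(F_j)$ forces ${\rm ord}(F_jX)=2k$ for every $j$, so only $j=2n$ reaches order $2k+2n$; its contribution is $(-1)^{2n}D^{2n}(F_{2n}X)$, whose $u_{2k+2n}$-coefficient is $(-1)^{2n}F_{2n}X_{2k}=a\,X_{2k}$ (here $\frac{\partial F_{2n}}{\partial u_{2k}}=0$ because ${\rm ord}(F_{2n})\le2n<2k$).

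Adding the two contributions, the coefficient of $u_{2k+2n}$ in $D_t(X)+F_*^+(X)$ equals $\bigl(1+(-1)^{2n}\bigr)a\,X_{2k}=2\,a\,X_{2k}$. Since $a=\frac{\partial F}{\partial u_{2n}}\ne0$ (the equation has order $2n$) and $X_{2k}\ne0$, this coefficient is nonzero, contradicting $D_t(X)+F_*^+(X)=0$. Hence $k\le n$, as claimed. I regard the sign bookkeeping as the decisive and most delicate point: the factor $(-1)^{2n}=+1$ is exactly what makes the two leading terms reinforce rather than cancel; for an equation of \emph{odd} order the analogous factor would be $-1$, the two contributions would cancel, and no contradiction would ensue — which is precisely why odd-order equations may carry conserved densities of arbitrarily high order. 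The only remaining technical care is the verification that, in the regime $2k>2n$, no other summand of $F_*^+(X)$ contributes at order $2k+2n$, and that the leading coefficients of $D^{2k}(F)$ and $D^{2n}(F_{2n}X)$ are as stated; both are routine consequences of the definition \eqref{DD} of $D$, namely that the coefficient of the strictly highest variable $u_{M+r}$ in $D^r(g)$ is $\frac{\partial g}{\partial u_M}$ with no combinatorial factor.
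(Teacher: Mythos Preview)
Your proof is correct. The paper's own proof proceeds at the operator level rather than the scalar level: it takes the Fr\'echet derivative of the cosymmetry equation to obtain the operator identity \eqref{RQ}
\[
D_t(T)+T\,F_*+F_*^+\,T=Q,\qquad T=\Big(\frac{\delta\rho}{\delta u}\Big)_*,
\]
notes that ${\rm ord}\,T=2k>2n$ (Remark~\ref{rem26}) forces ${\rm ord}\,Q\le 4n<2k+2n$, and then reads off the coefficient of $D^{2k+2n}$ on the left, namely $\bigl(1+(-1)^{2n}\bigr)F_{2n}\,t_{2k}=2F_{2n}\,t_{2k}$, to conclude $F_{2n}=0$.

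Your argument runs the same parity observation directly on the scalar cosymmetry equation \eqref{varR} for $X=\frac{\delta\rho}{\delta u}$, comparing coefficients of the jet variable $u_{2k+2n}$ instead of powers of $D$. This is more elementary in that it avoids differentiating the cosymmetry equation and introducing the operator $T$; on the other hand, the paper's route is what naturally feeds into the formal symplectic operator framework used in Theorem~\ref{svsok}. Substantively the two computations are the same: both hinge on $(-1)^{2n}=+1$ so that the two leading contributions reinforce, and both use $2k>2n$ to ensure no lower-order summand of $F_*^+$ (resp.\ of $Q$) interferes at top order.
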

\begin{proof} According to Remark \ref{rem26}, we have ${\rm ord}\, T > 2 n$. Comparing the coefficients of the highest power of $D$ in \eqref{RQ}, we get $F_{2n}=0$. 
\end{proof}
\begin{remark}
This proposition is not true for systems of evolution equations.
\end{remark}

\subsection{Canonical densities and necessary integrability conditions}\label{sec226}

In this section we formulate the necessary conditions for the existence of higher
symmetries or conservation laws for equations of the form \eqref{eveq}. In accordance with Theorems \ref{tlsym} and \ref{svsok}, such equations possess a formal symmetry. In Remark \ref{rem24} we discussed obstructions to the existence of a formal symmetry, which are just integrability conditions. Below we show that these conditions can be written in the form of conservation laws.

For equations \eqref{eveq} possessing a formal symmetry $\Lambda$ we define a
sequence of {\sl canonical conserved densities}.

\begin{definition}\label{drho} The functions
\begin{equation}\label{rro}
\rho_i={\rm res}\,(\Lambda^i), \qquad i=-1,1,2,\dots, \, \quad {\rm and}
\qquad \rho_0={\rm res}\, \log (\Lambda)
\end{equation}
are called {\sl canonical densities} for the equation \eqref{eveq} .
\end{definition}

\begin{remark}\label{rem210} Despite the fact that the formal symmetry $\Lambda$ is not unique, the canonical densities are well defined. Namely, it follows from  \eqref{genLam} that the canonical density $\bar \rho_i$ corresponding to any formal symmetry $\bar \Lambda$ is a linear combination of the densities $\rho_j,\,j\le i$ defined by $\Lambda$.
\end{remark}

\begin{theorem}\label{prho} If the equation \eqref{eveq} has a formal
symmetry $\Lambda$, then the canonical densities \eqref{rro}
 define local conservation laws
\begin{equation}\label{canlaws}
D_{t}(\rho_i)=D(\sigma_{i}), \qquad \sigma_{i} \in {\cal F}, \qquad
i=-1,0,1,2,\dots,
\end{equation}
for the equation \eqref{eveq}.
\end{theorem}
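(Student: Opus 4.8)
The plan is to reduce every canonical density to the single principle, supplied by Adler's Theorem~\ref{adler}, that the residue of a commutator of two formal series lies in ${\rm Im}\,D$. The two structural facts I will lean on are, first, that $D_t$ acts coefficientwise on pseudo-differential series, so that ${\rm res}$ and $D_t$ commute, ${\rm res}(D_t R)=D_t({\rm res}\,R)$, and, second, that by Proposition~\ref{Lambdaspace}(3) every integer power $\Lambda^i$ again satisfies the defining equation \eqref{Lambdaeq}, namely $D_t(\Lambda^i)=[F_*,\Lambda^i]$.

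For the densities $\rho_i={\rm res}(\Lambda^i)$ with $i\ne 0$ (i.e.\ $i=-1,1,2,\dots$) this is immediate. Applying $D_t$ and using the two facts above,
\[
D_t(\rho_i)=D_t\big({\rm res}\,\Lambda^i\big)={\rm res}\big(D_t\Lambda^i\big)={\rm res}\,[F_*,\Lambda^i],
\]
and Theorem~\ref{adler} turns the right-hand side into $D(\sigma_i)$ for an explicit $\sigma_i=\sigma(F_*,\Lambda^i)\in{\cal F}$. This already yields \eqref{canlaws} for all $i\ne 0$.

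The remaining density $\rho_0={\rm res}\log\Lambda$ is the one requiring real work, since $\log\Lambda$ is not itself an admissible series when ${\rm ord}\,\Lambda=1$. The key step I would isolate is the logarithmic-derivative identity for the noncommutative residue,
\[
D_t\big({\rm res}\log\Lambda\big)\equiv{\rm res}\big(\Lambda^{-1}D_t\Lambda\big)\pmod{{\rm Im}\,D},
\]
the formal analogue of $\frac{d}{dt}\mathrm{tr}\log\Lambda=\mathrm{tr}(\Lambda^{-1}\dot\Lambda)$. Once this is in hand the conclusion is painless: substituting \eqref{Lambdaeq} gives
\[
{\rm res}\big(\Lambda^{-1}D_t\Lambda\big)={\rm res}\big(\Lambda^{-1}F_*\Lambda-F_*\big),
\]
which lies in ${\rm Im}\,D$ by Corollary~\ref{cor21} applied with $S=F_*$ and $T=\Lambda^{-1}$, whence $D_t(\rho_0)\in{\rm Im}\,D$ as well.

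I expect the displayed identity to be the main obstacle, and I would prove it in one of two ways. The quicker, formal route introduces a parameter $s$ and the series $\Lambda^s=\exp(s\log\Lambda)$: since $D_t-\mathrm{ad}_{F_*}$ is a derivation of the series algebra that annihilates $\Lambda$, it annihilates every formal function of $\Lambda$, so $\Lambda^s$ satisfies \eqref{Lambdaeq} order by order in $s$; hence ${\rm res}(\Lambda^s)$ is $D_t$-conserved modulo ${\rm Im}\,D$ for each power of $s$ by Adler's theorem, and differentiating the relation ${\rm res}(\Lambda^s)=s\,\rho_0+O(s^2)$ at $s=0$ recovers the claim, with the flux depending polynomially on $s$. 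The more pedestrian but fully rigorous route is a direct coefficient computation: writing $\Lambda=l_1D+l_0+\cdots$, one expands $\Lambda^{-1}D_t\Lambda$ down to order $D^{-1}$ and checks term by term that its residue differs from $D_t(l_0/l_1)$ by an element of ${\rm Im}\,D$. Either way the only genuinely new input beyond Adler's theorem and Corollary~\ref{cor21} is this bookkeeping for the logarithmic term.
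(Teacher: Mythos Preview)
Your argument is correct and, for $i\ne 0$, identical to the paper's. For $\rho_0$ the paper is more direct: with the ordering $D_t(\Lambda)\,\Lambda^{-1}$ (rather than your $\Lambda^{-1}D_t\Lambda$) the identity $D_t(\rho_0)={\rm res}\big(D_t(\Lambda)\,\Lambda^{-1}\big)$ holds \emph{exactly}, not merely modulo ${\rm Im}\,D$ (a one-line coefficient check), and then $[F_*,\Lambda]\,\Lambda^{-1}=F_*-\Lambda F_*\Lambda^{-1}=[F_*\Lambda^{-1},\Lambda]$ is already a commutator, so Adler's theorem applies immediately---no need for the $\Lambda^s$ device, Corollary~\ref{cor21}, or the modular logarithmic identity you flag as the main obstacle.
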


{\bf Proof.} If the formal symmetry $\Lambda$ satisfies the equation \eqref{Lambdaeq}, then it follows from Proposition \ref{Lambdaspace}
that  $\Lambda ^k,\ k=-1,1,2,3...$ also satisfy  \eqref{Lambdaeq}. Using Adler's theorem \ref{adler},
we get
\[ D_t(\rho_k)=D_t({\rm res}\,\Lambda^k)={\rm res}\, ([F_*,\Lambda^k])\in {\rm Im} \,D ,
\quad k=-1,1,2,3...
\]
Moreover,
\[ D_t(\rho_0)={\rm res}\,(D_t(\Lambda)\,\Lambda^{-1})={\rm res}\, ([F_*,\Lambda]\,\Lambda^{-1})=
{\rm res}\,([F_*\Lambda^{-1},\,\Lambda]) \in {\rm Im} \,D .\]

\begin{theorem}\label{contriv}
Under the assumptions of Theorem \ref{svsok} there exists a formal symmetry $\Lambda$ such that all even canonical densities $\rho_{2j}$
are trivial.
\end{theorem}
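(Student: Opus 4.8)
The plan is to show that the formal symmetry $\Lambda$ \emph{already furnished} by Theorem \ref{svsok} is itself a witness for the statement; no modification of $\Lambda$ is required. By that theorem I may fix a formal symmetry $\Lambda$ (of order $1$) together with a formal symplectic operator $S$ (of order $1$) satisfying $S^+=-S$ and, decisively, the relation \eqref{LS}, namely $\Lambda^+=-S\,\Lambda\,S^{-1}$. The entire argument then amounts to comparing the residue of $\Lambda^i$ with the residues of its formal conjugate and of its $S$-conjugate, and reading off the resulting constraint on $\rho_i$ for even $i$.

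First I would raise \eqref{LS} to the $i$-th power. Since formal conjugation is an anti-automorphism of the skew field of series, $(\Lambda^i)^+=(\Lambda^+)^i$, and the telescoping cancellation of the factors $S^{-1}S$ gives
$$(\Lambda^i)^+=(-S\,\Lambda\,S^{-1})^i=(-1)^i\,S\,\Lambda^i\,S^{-1}.$$
Next I need two residue identities. The first is that $\mathrm{res}(P^+)=-\mathrm{res}(P)$ for every series $P$: expanding $P^+=\sum_i(-1)^iD^i\circ p_i$ by the product formula, the coefficient of $D^{-1}$ receives a contribution from the $i$-th term only through the factor $C_i^{\,i+1}$, which vanishes for all $i\ge 0$, so that only the $i=-1$ term survives and yields exactly $-p_{-1}$. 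The second is Corollary \ref{cor21}, which gives $\mathrm{res}\big(S\,\Lambda^i\,S^{-1}\big)\equiv\mathrm{res}(\Lambda^i)=\rho_i \pmod{{\rm Im}\,D}$. Combining the displayed identity with these two facts,
$$-\rho_i=\mathrm{res}\big((\Lambda^i)^+\big)=(-1)^i\,\mathrm{res}\big(S\,\Lambda^i\,S^{-1}\big)\equiv(-1)^i\rho_i \pmod{{\rm Im}\,D}.$$
For even $i=2j$ (with $j\ge 1$) this reads $2\rho_{2j}\in{\rm Im}\,D$, and since the field of constants is $\C$ I may divide by $2$ to conclude $\rho_{2j}\in{\rm Im}\,D$, i.e. $\rho_{2j}$ is trivial in the sense of Definition \ref{def28}. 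For odd $i$ the relation is vacuous, as one expects.

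It remains to treat the logarithmic density $\rho_0=\mathrm{res}\log\Lambda$, and this is where the hard part lies. Here I would combine a conjugation-invariance statement for the logarithmic residue, $\mathrm{res}\log(S\Lambda S^{-1})\equiv\mathrm{res}\log\Lambda=\rho_0 \pmod{{\rm Im}\,D}$ (provable from the additivity of $\mathrm{res}\log$ modulo ${\rm Im}\,D$, which belongs to the same circle of ideas as Adler's Theorem \ref{adler} and Corollary \ref{cor21}), with a direct computation of $\mathrm{res}\log(\Lambda^+)$ for $\Lambda=l_1D+l_0+\cdots$. Since $\Lambda^+=-l_1D+(l_0-D(l_1))+\cdots$, this computation gives $\mathrm{res}\log(\Lambda^+)=-l_0/l_1+D(l_1)/l_1$. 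The obstacle is precisely the extra term $D(l_1)/l_1$: unlike in the cases $i\ge 1$, the leading coefficient $l_1$ of a formal symmetry is \emph{not} constant, because the top-order balance in \eqref{Lambdaeq} forces only $l_1^{\,n}=c\,F_n$, whence $D(l_1)/l_1=\tfrac1n\,D(\log F_n)$. Thus $\mathrm{res}\log(\Lambda^+)\equiv-\rho_0$ holds only after absorbing the total derivative $D(\log l_1)$ into ${\rm Im}\,D$, which is legitimate once one permits the same field extensions already underlying the construction of $\Lambda$. With that term absorbed, the comparison $-\rho_0\equiv\rho_0$ again yields $2\rho_0\in{\rm Im}\,D$ and hence $\rho_0\in{\rm Im}\,D$. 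In short, the residue algebra for the powers $\Lambda^i$ is immediate; the genuinely delicate point is the careful bookkeeping of the logarithmic residue together with the non-constant leading coefficient.
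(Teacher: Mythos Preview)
Your argument for the powers $\Lambda^{2j}$ with $j\ge 1$ is correct and coincides with the paper's proof line for line: raise \eqref{LS} to an even power to obtain $(\Lambda^{2j})^{+}=S\,\Lambda^{2j}\,S^{-1}$, use ${\rm res}(P^{+})=-{\rm res}(P)$, and finish with Corollary~\ref{cor21}. The paper's proof stops there and does not single out the logarithmic density $\rho_0$; your careful treatment of $\rho_0$ (via additivity of ${\rm res}\log$ modulo ${\rm Im}\,D$ and the bookkeeping of $D(\log l_1)$) is extra work beyond what the paper supplies, but it does not conflict with anything in the paper.
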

\begin{proof} It follows from \eqref{LS} that 
$$
(\Lambda^{2j})^{+} = S\, \Lambda^{2j}\, S^{-1}.
$$
Since ${\rm res}\, (\Lambda^{2j}) = -{\rm res}\, \Big((\Lambda^{2j})^{+}\Big),$ the theorem follows from  Corollary \ref{cor21}.
\end{proof}

\begin{example} The differential operator $\Lambda = D$ is a formal symmetry for any linear equation of the form $u_t=u_n.$ Therefore,
all canonical densities are equal to zero.
\end{example}
\begin{example} \label{kdvl} The KdV equation \eqref{kdv} 
has a recursion operator 
\begin{equation}\label{recop}
\hat{\Lambda}=D^2+4 u+2 u_1 D^{-1}\, ,
\end{equation}
which satisfies equation \eqref{Lambdaeq}.
A formal symmetry for the KdV equation can be obtained 
as $\Lambda =\hat{\Lambda}^{1/2}$. The infinite commutative hierarchy of  symmetries
for the KdV equation is generated by the recursion operator:
$$
G_{2k+1}=\hat{\Lambda}^k(u_1)\, .
$$
The first five canonical densities for the KdV equation (cf. Example \ref{kdvl}) are
$$
\rho_{-1}=1,\qquad\rho_0=0,\qquad \rho_1=2 u, \qquad \rho_2=2 u_1, \qquad
\rho_3=2 u_2+ u^2.
$$
We see that even canonical densities are trivial. 
\end{example}

\begin{example}
The Burgers equation \eqref{burgers} has the recursion operator
\[ \Lambda=D+u+u_1 D^{-1}\, .\]
Functions $G_n=\Lambda^n(u_1)$ are generators of symmetries for the
Burgers equation. The canonical densities for the Burgers equation are
$$
\rho_{-1}=1,\qquad\rho_0=u,\qquad \rho_1=u_{1}, \qquad
\rho_2=u_2+ 2 u u_{1}, \dots\, .
$$
Although $\rho_{0}$ is not trivial, all other canonical densities are trivial. This is in accordance with Proposition \ref{nocon}.
\end{example}

Now we can refine Definition \ref{def21} such that linear equations become $C$-integrable. 
\begin{definition}\label{clar}
Equation \eqref{eveq} is called {\it $S$-integrable}   if it has a formal symmetry that provides infinitely many linearly independent non-trivial canonical densities.  An equation is called {\it $C$-integrable}  if it has a formal symmetry such that only finite number of canonical densities are non-trivial and linearly independent.
\end{definition}
The notions of $S$ and $C$-integrability are well-defined due to Remark \ref{rem210}.

Suppose that equation  \eqref{eveq} possesses a formal
symmetry $\Lambda$. 
\begin{question}
How are the functions $\rho_i, \sigma_i$ in \eqref{canlaws} related to the right-hand side of equation \eqref{eveq}?
\end{question}

The coefficients of the formal symmetry $\Lambda$ can be
found directly from the
linear equation \eqref{Lambdaeq}. The first $n-1$ coefficients $l_1,l_0,...,l_{3-n}$
of $\Lambda$ coincide with the first $n-1$ coefficients of the formal series
$(F_*)^{1/n}.$ Indeed, $u_{\tau}=F$ is a symmetry of
equation \eqref{eveq} and we can use the main idea of the proof of Theorem \ref{tlsym}. Carefully calculating the number of correct coefficients, we arrive at the ansatz
\[
\Lambda=(F_*)^{1/n}+\tilde l_{2-n} D^{2-n}+\tilde l_{1-n}D^{1-n}+\cdots \,.
\]
Having the first $n-1$ coefficients of $\Lambda$, we can find $n-1$ canonical
densities $\rho_{-1},\rho_0,...,\rho_{n-3}$
explicitly in terms of the coefficients
$$F_{i}=\frac{\partial F}{\partial u_{i}}$$ of the Fr\'echet derivative
$$F_*=F_n D^n+F_{n-1}D^{n-1}+\cdots +F_0.$$

Equating coefficients of $D$
in equation \eqref{Lambdaeq}, we find  that the first unknown coefficient $\tilde l_{2-n}$ of $\Lambda$
can be found if and only if the first canonical density
\begin{equation}\label{rho-1}
\ds \rho_{-1}=F_n^{-\frac{1}{n}}
\end{equation}
is a density of a local conservation law for equation \eqref{eveq}, i.e., there
exists such a function $\sigma_{-1}\in {\cal F}$ that $D_t(\rho_{-1})=D(\sigma_{-1})$.

If $D_t(\rho_{-1})\not\in {\rm Im}\,D$,  then the formal
symmetry does not exist and consequently the equation \eqref{eveq}  cannot
have an infinite sequence of higher symmetries or conservation laws.

If $\sigma_{-1}\in {\cal F}$ exists, then the coefficient $\tilde l_{2-n}$ can be expressed explicitly in terms of the coefficients $F_n,...,F_0$ and $\sigma_{-1}$.  Similarly,
the next coefficient $\tilde l_{1-n}$ can be found (as an element of ${\cal F}$) if and only if
 $D_t(\rho_0)=D(\sigma_0)$ for some $\sigma_0\in {\cal F}$. In this case $\tilde l_{1-n}$ can be explicitly expressed
in terms of $F_n,...,F_0, \sigma_{-1}, \sigma_0$,  etc. 

\begin{example}
Consider evolution equations of second order
$$
u_{t}=F(u, u_{1}, u_{2}).
$$
Computations described above show that the  three first canonical densities can be written in the form
$$
\begin{array}{c}
\displaystyle \rho_{-1}= F_2^{-1/2}, \,\, \qquad 
\displaystyle \rho_{0}= F_2^{-1/2}\,\sigma_{-1} - F_2^{-1} F_1, \\[3mm]
\displaystyle \rho_{1}=\rho_{-1}\,F_0 - \frac{\rho_{0}^{2}}{4 \rho_{-1}} +
\frac{\rho_{0}\sigma_{-1}}{2}-\frac{\rho_{-1} \sigma_{0} }{2}.
\end{array}
$$
\end{example}
\begin{remark}
It follows from Theorems \ref{tlsym}, \ref{svsok}, \ref{prho} and \ref{contriv} that if we are going  to find equations \eqref{eveq} with higher symmetries, we have to use conditions \eqref{canlaws} only, while for equations with higher conservation laws we may additionally assume that $\rho_{2 j}=D(\theta_{j})+c_j,$ where $\theta_{j}\in {\cal F}$ and $c_j \in\C$. 
Thus the necessary conditions, which we employ for conservation laws are stronger than the ones for symmetries.  
\end{remark}

Using the ideas of \cite{CheLeeLiu79, mesh}, one can derive a recursive formula for the whole  infinite chain of the canonical conserved densities $\rho_i$.  
For equations of the form \eqref{vvs} such a formula was obtained in \cite{meshsok}. It has the following form:
\begin{align}
\rho_{n+2}&=\frac{1}{3}\bigg[\sigma_n-\delta_{n,0}f_{0} -f_{1}\rho_{n}-
f_{2}\Big(D(\rho_{n}) + 2\rho_{n+1}+\sum_{s=0}^{n} \rho_{s}\,\rho_{n-s}\Big)\bigg]
-\sum_{s=0}^{n+1} \rho_{s}\,\rho_{n+1-s}
 \nonumber\\[2mm]
&-\frac{1}{3}\sum_{0\le s+k\le n}\rho_{s}\,\rho_{k}\,\rho_{n-s-k}
-D\biggl[\rho_{n+1}+\frac{1}{2}\sum_{s=0}^{n}\rho_{s}\,\rho_{n-s}+
\frac{1}{3} D(\rho_{n}) \biggr], \qquad n\ge0, \label{rekkur_sc}
\end{align}
where the first two elements of the sequence $\rho_i$ read as
\begin{equation} \label{first2}
 \rho_0=-\frac{1}{3}f_{2},\qquad \rho_1=\frac{1}{9}f_{2}^2-\frac{1}{3}f_{1}+\frac{1}{3} D(f_{2}).
\end{equation}
Here, $\delta_{i,j}$ is the Kronecker delta, $\ds f_{i}=\frac{\p f}{\p u_i}$, where $i=0,1,2.$ The density $\rho_{-1}=1$ disappears in the above recursive formula. 

Given an equation of the form \eqref{vvs},  one can verify on a computer as many integrability conditions \eqref{canlaws}, \eqref{rekkur_sc} as the computer allows. 

Moreover, using several first integrability conditions (see, for instance, the next section), one can solve the following classification problem: find all equations of the prescribed type, which have a formal symmetry.  A full classification result includes:
\begin{itemize}
\item[1)] A complete\footnote{Very often complete up to a class of admissible transformations.} list of integrable equations that satify the necessary integrability conditions;
\item[2)] A confirmation of integrability for each equation from the list;
\item[3)] A contructive description of transformations that bring a given integrable equation to one from the list;
\item[4)] The number of necessary conditions, which should be verified for a given equation to establish its integrability. 
\end{itemize}
A proof of a classification result contains Items 3) and 4). For Item 2) one can find a Lax representation or a transformation that links the equation with an equation known to be integrable. The existence of an auto-B\"acklund tranformation with an arbitrary parameter is also a proper justification of integrability.

\subsection{Classification of integrable KdV-type equations}
To demonstrate how the necessary conditions are efficient, we solve in this section a simple classification problem \cite{ibshab0,fokas}.

 Consider evolution equations of the form \eqref{kdvt}.
 From \eqref{first2} it follows that for such equations $\rho_0=\sigma_0=0$ and
\begin{equation} \label{con1}
D_t\left(\frac{\partial f}{\partial u_{1}}\right)=D(\sigma_1),
\end{equation}
where $\sigma_1$ is a function depending on $u$, $u_1$, \ldots, $u_3$.

\begin{example} For the mKdV equation 
\begin{equation} \label{mkdv} 
u_t=u_3+6\,u^2 u_1
\end{equation}   
the conservation law \eqref{con1} reads as follows:
$$
D_t (u^2)=D (2 u u_2-u_1^2+3 u^4). 
$$
\end{example}

Applying the Euler operator to both sides of \eqref{con1} and using \eqref{delro}, we obtain
\begin{equation} \label{conn1}
0=\frac{\delta}{\delta u} D_t\left(\frac{\partial f}{\partial u_{1}} \right) =
3 u_4 \left(u_2 \, \frac{\partial^4 f}{\partial u_{1}^4}+u_1 \, \frac{\partial^4 f}{\partial u_{1}^3 \partial u}\right)+O(3),
\end{equation}
where $O(3)$ denotes terms of order not greather than 3.
It should  be identity in the variables $u,u_1,\dots,u_4$. Equating the coefficient of $u_4$ to zero and employing the fact that $f$ is independent of $u_2$, we get
$$
f(u, u_1)=\mu u_1^3+A(u) u_1^2+B(u) u_1+C(u)
$$
with some constant $\mu$. It can be readily verified that for such a function $f$  the condition \eqref{conn1} is equivalent to the following system of ODEs:
$$
\mu A'=0, \quad \qquad B'''+8 \mu B'=0, \quad \qquad (B'C)'=0, \qquad \quad A B'+6 \mu C'=0.
$$

The next necessary integrability condition \eqref{rekkur_sc} reads 
$$
D_t\left(\frac{\partial f}{\partial u}\right)=D(\sigma_2),
$$
which implies
$$
\frac{\delta}{\delta u} D_t\left(\frac{\partial f}{\partial u} \right)=0.
$$
The latter condition leads to the following additional equations:
$$
A'=0, \qquad \quad AC''=0, \qquad \quad (C'''+2\mu C')'=0, \qquad \quad (CC'')'=0.
$$

In the case $\mu\ne 0$ solving ODEs obtained above, we determine the functions $A$, $B$ and $C$. As a result, up to a scaling $u\to {\rm const} \,u,$ we arrive at the equations
\begin{equation}
u_t=u_{xxx}-\frac{1}{2} u_{x}^3+(c_1 e^{2u}+c_2 e^{-2u}+c_3)\,u_{x}\, \label{e1}
\end{equation}
and
\begin{equation}
u_t=u_{xxx}+c_1 u_x^3+c_2 u_x^2+c_3 u_x+c_4, \label{e2}
\end{equation}
where $c_i$ are arbitrary constants.

If $\mu=0$, then solving the above system of ODEs for the functions $A,B,C$, we find that the equation has the form
$$
u_t=u_{xxx}+c_0 u_x^2 +(c_1 u^2+c_2 u +c_3) u_x+c_4u+c_5,
$$
where
$$
c_0 c_1=0,\qquad c_0 c_2=0,\qquad c_4 c_1=0,\qquad c_4 c_2=0,\qquad c_1 c_5=0.
$$
By the third integrability condition (see formula \eqref{rekkur_sc}  for $\rho_3$), we find the additional relations
$$
c_0 c_4=0,\qquad c_2 c_5=0.
$$
In the case $c_0\ne0$ we arrive at a particular case of equation \eqref{e2}. If $c_0=0$, two cases are possible: a) $c_4=c_5=0$ and b) $c_1=c_2=0.$ In the case a) we get 
\begin{align}
u_t&=u_{xxx}+(c_1 u^2+c_2 u +c_3) \, u_x. \label{e3}
\end{align}
 In the case b) we have a linear equation, which has a formal symmetry $\Lambda=D.$ Integrability 
 of equations \eqref{e1}--\eqref{e3} has to be determined by different means. Equation \eqref{e3} can be reduced to the KdV or to the mKdV equation by scalings and a shift of $u$. Equation \eqref{e2} is related to \eqref{e3} by the potentiation (see Definition \ref{pot1}).   Equation \eqref{e1} was found in \cite{caldeg}. It is interconnected with the KdV equation by a differential substitution of Miura type \cite{cvsokyam}.

\subsection{Equations of Harry-Dym type}
The Harry-Dym equation 
\begin{equation}\label{HDeq}
u_t=u^3\, u_{xxx}
\end{equation}
is one of well-known integrable evolution equations. Equations of the form 
\begin{equation}\label{HDtype}
u_t=f(u)\, u_3+Q(u,u_1,u_2), \qquad f'(u)\ne 0
\end{equation}
are called {\it equations of Harry-Dym type.} 

The simplest integrablity condition for the equation \eqref{eveq} (see formula \eqref{rho-1} has the form 
$
D_t(\rho_{-1})=D(\sigma),
$
where
$$
\rho_{-1}=\Big(\frac{\partial F}{\partial u_{n}} \Big)^{-\frac{1}{n}}
$$
In this section we show that this condition  allows one to reduce the function $f$ for any integrable equation of the form \eqref{HDtype} to 1 by the quasi-local transformations (see Definition \ref{def15}).

\begin{theorem} {\rm \cite{sokshab}}. Any integrable equation of the
form
can be reduced by a potentiation and point transformations to the form
\begin{equation}\label{Geq}
v_t=v_3+G(v_1, v_2).
\end{equation}
\end{theorem}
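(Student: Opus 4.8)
The plan is to read off the required normalization directly from the very first integrability condition \eqref{rho-1}, and then to realize the reduction as a short chain of quasi-local transformations in the sense of Definition \ref{def15}. The conceptual backbone is the observation that the leading coefficient of an equation of the form \eqref{eveq} is essentially the canonical density $\rho_{-1}$, and that neither a point transformation keeping $x,t$ fixed nor a potentiation relating \eqref{u1} and \eqref{poten} can change it (both leave $\rho_{-1}$ invariant, up to a shift of arguments). Hence normalizing the leading coefficient to $1$ must involve a transformation acting on the independent variable, and the role of the conservation law is precisely to supply such a transformation.

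First I would specialize \eqref{rho-1} to the class \eqref{HDtype}. Here $n=3$ and $\partial F/\partial u_3=f(u)$, so $\rho_{-1}=f(u)^{-1/3}$ is a function of $u$ alone, and integrability provides a flux $\sigma_{-1}\in{\cal F}$ with $D_t(\rho_{-1})=D(\sigma_{-1})$, making $\rho_{-1}$ a genuine conserved density of order $0$; since $f'\neq0$ it is non-constant, which is the only feature of integrability the argument uses. Being of order at most $1$, this density feeds the quasi-local machinery: the point transformation $w=f(u)^{-1/3}$ turns $\rho_{-1}$ into the new dependent variable $w$, and a direct computation rewrites \eqref{HDtype} as $w_t=w^{-3}w_3+\cdots$, whose right-hand side is a total $x$-derivative, i.e.\ an equation of the form \eqref{poten}. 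Inverting the potentiation of Definition \ref{pot1} by $v=\int w\,dx$ then yields an equation of the form \eqref{u1}, namely $v_t=v_1^{-3}v_3+\tau(v_1,v_2)$, already free of explicit dependence on $v$ but still carrying the leading coefficient $v_1^{-3}$.

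The residual factor is removed by the reciprocal (hodograph) point transformation $\bar x=v,\ \bar v=x,\ \bar t=t$, which interchanges $v$ and $x$. Expressing $v_1,v_2,v_3$ through the $\bar x$-derivatives of $\bar v$ and using $\bar v_{\bar t}=-\bar v_{\bar x}\,v_t$, the factor $v_1^{-3}=\bar v_{\bar x}^{\,3}$ cancels the Jacobian factors produced by $v_3$, the leading term becomes $\bar v_{\bar x\bar x\bar x}$ with coefficient $1$, and the remaining terms assemble into a function of $\bar v_{\bar x},\bar v_{\bar x\bar x}$ only. This is exactly the target form \eqref{Geq}, obtained by a potentiation sandwiched between two point transformations, as claimed.

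The main obstacle is the interplay that forces this particular ordering. A hodograph applied directly to $w_t=w^{-3}w_3+\cdots$ would send the order-$0$ conserved density $w$ into the new independent variable and thereby reintroduce an explicit dependence on it, leaving the target class; the intermediate potentiation is what lowers the density to the first-order quantity $v_1=D(v)$, so that after the hodograph it reappears as $1/\bar v_{\bar x}$ rather than as $\bar x$. Thus the technical heart of the proof is to perform the three substitutions in the order point transformation, potentiation, hodograph, and to verify that the final right-hand side is genuinely of the form $\bar v_{\bar x\bar x\bar x}+G(\bar v_{\bar x},\bar v_{\bar x\bar x})$: leading coefficient exactly $1$, no explicit dependence on $\bar v$ or on $\bar x$, and $G$ depending on at most the first two derivatives. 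These verifications are routine but indispensable, and the hypothesis $f'\neq0$ enters precisely in guaranteeing that the initial point transformation $w=f(u)^{-1/3}$ is invertible.
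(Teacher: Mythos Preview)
Your proof is correct and follows essentially the same route as the paper: the point change $w=f(u)^{-1/3}$, then use of the conservation law $D_t(\rho_{-1})=D(\sigma_{-1})$ to see the equation is in divergence form, then the potentiation $v_x=w$, and finally the hodograph $\bar x=v,\ \bar v=x$. The paper's argument is identical in content and in the ordering of these three steps; your additional paragraph explaining why the hodograph must come after the potentiation is helpful motivation but not a different strategy.
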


\begin{proof} First of all we make the point transformation $\tilde u=f(u)^{-1/3}$ to bring the equation \eqref{HDtype} to the form
$$
\tilde u_t=\frac{\tilde u_{3}}{\tilde u^3}+\tilde Q(\tilde u, \tilde u_{1},\tilde u_{2}).
$$
For such an equation we have $\rho_{-1}=\tilde u$, and therefore for any integrable equation of this form
the function $\tilde u$ is a conserved density:
$$
\tilde u_t=D\left(\frac{\tilde u_{2}}{\tilde u^3}+\Psi(\tilde u, \tilde u_{1})\right).
$$
 
The second step is the potentiation $D \hat u=\tilde u$. As the result, we obtain
$$
\hat u_t=\frac{\hat u_3}{\hat u_1^3}+\Psi(\hat u_1, \hat u_2).
$$

The last step is the point transformation
\begin{equation}\label{trgad}
\hat t=t, \qquad \hat x = v, \qquad \hat u=x.
\end{equation}
For this transformation (see Section \ref{Tra}) we have
 $$
\hat u_1=\frac{1}{v_1}, \qquad \hat u_2= -\frac{v_2}{v_1^3}, \qquad
\hat u_3=-\frac{v_3}{v_1^4}+\frac{3 v_2^2}{v_1^5}, \qquad \hat u_t= -\frac{v_t}{v_1}.
$$
Using this formulas, one can check that any equation of the form
$$
\hat u_t=\frac{\hat u_3}{\hat u_1^3}+\Psi(\hat u_1, \hat u_2)
$$
transforms to an equation of the form \eqref{Geq}.
\end{proof}

\begin{example} For the Harry-Dim equation \eqref{HDeq} we take $\ds \tilde u=\frac{1}{u}$ to get the equation
$$
\tilde u_t=D_x\left(\frac{\tilde u_{2}}{\tilde u^3}-\frac{3\, \tilde u_1^2}{2\, \tilde u^4}\right).
$$
Applying the transformation \eqref{trgad}, we obtain the Swartz-KdV equation
$$
v_t=v_{3}-\frac{3\, v_2^2}{2\, v_1},
$$
which is a particular case of the Krichever-Novikov equation \eqref{KN}.
\end{example}

\subsection{Integrability conditions for non-evolution equations}

In this section we generalize \cite{HSS} the main concepts of the symmetry approach such as
the formal recursion operator and the canonical conserved densities to the
case of non-evolutionary equations of the form
\be\label{qtt}
q_{tt}=F(q,q_1,q_2,\dots , q_n,\, q_{t},q_{t1}, q_{t2},\dots,q_{tm}), \qquad n > m.
\ee
Such type equations were excluded from consideration in \cite{sokshab, MikShaSok91},
where only evolution equations have been investigated. 
\begin{remark} Equation 
\eqref{qtt} can be rewritten as a system of two evolution equations
\[q_t=p, \qquad p_t=F(q,q_1,q_2,\dots,q_n, p,p_1,\dots,p_m).   \]
However,  the matrix coefficient of leading derivatives in the linearization operator  
is a Jordan block whereas in the paper {\rm \cite[Section 3.2.1]{MikShaSok91}}  this coefficient was supposed to be diagonalizable.
\end{remark}

For equation \eqref{qtt} all mixed derivatives of $q$ containing at
least two time differentiations can be expressed in terms of
\begin{equation} \label{dyntvar}
q, \quad q_{x},\quad q_{xx},\quad \dots, q_{i}, \, \dots,  \qquad q_{t}, \quad q_{1t}=q_{xt},\quad
q_{2t}=q_{xxt},\,\,\dots, \quad q_{i\,t}, \dots
\end{equation}
in virtue of \eqref{qtt}.
The derivatives \eqref{dyntvar} are regarded as {\it independent} variables.

Equation of the form
\be\label{rs}
q_\tau=G(q, q_1, q_2, \dots , q_{r},\,\, q_{t}, q_{t1}, q_{t2}, \dots, q_{ts})
\ee
compatible with \eqref{qtt} is called infinitesimal (local) symmetry of equation 
\eqref{qtt}. Compatibility implies that the function $G$ satisfies (cf. with \eqref{DFsym}) the
equation ${\cal L}(G)=0$, where
$$
{\cal L}=D_{t}^{2}-\sum_{i=0}^{n} \frac{\partial F}{\partial q_{i}}\,D_{x}^{i}-
\left(\sum_{i=0}^{m} \frac{\partial F}{\partial q_{ti}}\,D_{x}^{i}\right)
D_{t} = D_{t}^{2}-(M+ND_t)
$$
is the linearization operator for equation \eqref{qtt}.

In order to rewrite consistency conditions of \eqref{qtt} and \eqref{rs}
in terms of conservation laws of \eqref{qtt} one can use a formal
Lax representation of the problem. The linearization of equations \eqref{qtt},
\eqref{rs} gives rise to the compatibility problem for linear equations
\[ \phi_{tt}=(M+ND_t)\,\phi,\qquad  \phi_\tau=(A+BD_t)\,\phi     \]
or, equivalently,
\begin{equation}\label{couplin} \Phi_t=F_*\Phi,\qquad  \Phi_\tau=G_*\Phi,\qquad \Phi=\left(\begin{array}{c}\phi\\
                            \phi_t\end{array}\right),
\quad F_*=\left(\begin{array}{cc}   0 & 1\\
                      M & N\end{array}\right)
\end{equation}
where
\be\label{G*}  G_*=
   \left(\begin{array}{cc}  A & B\\
           \hat A & \hat B\end{array}\right),\qquad \hat A = A_t+B M ,
           \qquad \hat B = B_t+B N+A.
\ee
The cross differentiation of equations \eqref{couplin} yields
$$ D_t(G_*)=[F_*,G_*]+D_\tau(F_*)  $$
where $F_*,\, G_*$ are matrix differential operators. The crucial
step in the symmetry approach (see proof of Theorem \ref{tlsym})
is to consider instead of above equation one as follows
\be\label{RF}
  D_t(R)=[F_*, \,R],
\ee
where $R$ is a pseudo-differential series with matrix coefficients. We call $R$
{\it matrix formal symmetry}.

Denoting as before $R_{11}=A,\, R_{12}=B,$ we can rewrite \eqref{RF} as
follows
\be\label{NM1}
A_{tt}-N A_t+[A,\,M]+(2B_t+[B,\,N])\,M+B M_t=0,
\ee
\be\label{NM2}
B_{tt}+2A_t+[B,\,M]+[A,\,N]+([B,\,N]+2B_t)\,N+B N_t-N B_t=0.
\ee

Identities \eqref{NM1}, \eqref{NM2} mean that the scalar pseudo-differential series
${\cal R}=A+B\,D_{t}$ is related to the linearization ${\cal L}$
of equation \eqref{qtt} by
\begin{equation} \label{defrec}
{\cal L}\, (A+B\,D_{t})=(\bar A+B\,D_{t})\, {\cal L},
\end{equation}
where $\bar A=A+2 B_{t}+[B,N]$.
A pseudo-differential series ${\cal R}=A+B\,D_{t}$, whose
components
 $$
 A= \sum_{-\infty}^{n} a_{i} D^{i}, \qquad
 B= \sum_{-\infty}^{m} b_{i} D^{i}
 $$
satisfy \eqref{NM1} and \eqref{NM2}, is called
{\it scalar formal symmetry} for
equation \eqref{qtt}. If $A$ and $B$ are differential operators
(or ratios of differential operators), condition \eqref{defrec}
implies the fact that the recursion operator ${\cal R}$ maps symmetries of
equation \eqref{qtt} to symmetries.  

\begin{remark} Let ${\cal R}_{1}=A_{1}+B_{1}\,D_{t}$ and ${\cal R}_{2}=A_{2}+B_{2}\,D_{t}$
be two scalar formal recursion symmetries. Then the product
${\cal R}_{3}={\cal R}_{1} {\cal R}_{2}$, in which $D_{t}^{2}$ is replaced by
$(M+ND_t)$ is also a scalar formal symmetry.
\end{remark}

An operator $S=P+Q D_{t}$ is said to be symplectic (see \eqref{Req} ) if
$$
{\cal L}_{*} S+\bar S {\cal L}=0, \qquad {\rm where} \qquad \bar S=\bar P+\bar Q D_{t}.
$$
If $S$
can be applied to symmetries, then it maps symmetries to cosymmetries.
In the symmetry approach  $P$ and $Q$ are supposed to be
formal pseudo-differential symbols.

The operator equations for the components $P$ and $S$ of formal symplectic operator $S=P+Q D_{t}$
have the following form
$$
P_{tt}+N^{+} P_{t}+2 Q_{t} M +Q M_{t} = M^{+} P-P M-(Q N + N^{+}Q)\,M-N_{t}^{+} P,
$$
$$
Q_{tt}+2 P_{t}+2 Q_{t} N + N^{+} Q_{t} = M^{+} Q-Q M-(Q N + N^{+} Q)\,N-(P N + N^{+} P) -
(N_{t}^{+} Q + Q N_{t}).
$$
The linearization $P+Q D_{t}$ of the variational derivative of any conserved
density for equation \eqref{qtt} satisfies these equations up to a "small" rest (cf. \cite{soksvin1}).

\begin{theorem} \begin{itemize}
\item[i)] If equation \eqref{qtt} possesses an infinite sequence of
higher symmetries of the form
$$
q_{\tau_{i}}=G_{i}(q,q_1,q_2,\dots , q_{r_{i}},\, q_{t},q_{t1}, q_{t2},
\dots,q_{ts_{i}}),
$$
then there exists a scalar formal symmetry of the form
\begin{equation} \label{fro}
R=(a_{0}+a_{-1} D^{-1}+\dots)+(b_{-1} D^{-1}+b_{-2}D^{-2}+\dots)\,D_{t},
\end{equation}
where $a_{i}, b_{i}$ are some functions of variables \eqref{dyntvar}.
\item[ii)]
If equation \eqref{qtt} possesses an infinite sequence of local
higher conservation laws $D_{t} \rho_{i}=D \sigma_{i}$, where
$\rho_{i}, \sigma_{i}$ are functions of variables \eqref{dyntvar}, then
there exists a formal symmetry \eqref{fro} and a formal symplectic
operator of the form
$$
S=(p_{0}+p_{-1} D^{-1}+\dots)+(q_{-1} D^{-1}+q_{-2}D^{-2}+\dots)\,D_{t}.
$$
\end{itemize}
\end{theorem}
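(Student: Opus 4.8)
The plan is to transfer the scheme of Theorems \ref{tlsym} and \ref{svsok} from scalar pseudo-differential operators to the $2\times2$ matrix Lax formulation \eqref{couplin}. As already recorded, cross-differentiation of \eqref{couplin} gives $D_t(G_*)=[F_*,G_*]+D_\tau(F_*)$, and the whole argument rests on the observation --- exactly parallel to the passage through \eqref{GF} in the proof of Theorem \ref{tlsym} --- that for a symmetry $G$ of sufficiently high order the inhomogeneous term $D_\tau(F_*)$ is of strictly lower order than the top part of the commutator $[F_*,G_*]$. Consequently the leading coefficients of $G_*$ satisfy the homogeneous equation \eqref{RF}, which is precisely the defining equation of a matrix formal symmetry.

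To prove part (i) I would first fix a grading on matrix pseudo-differential series $A+BD_t$ adapted to $F_*$, in which the block $M$ (of order $n$) dominates and, because $n>m$, governs the leading behaviour; relative to this grading one checks that $D_\tau(F_*)$ is subleading once ${\rm ord}\,G$ is large. I would then construct a solution $R$ of \eqref{RF} recursively, determining its coefficients from the component equations \eqref{NM1}--\eqref{NM2} one order at a time. As in Remark \ref{rem24}, each step reduces to solving an equation $D(\text{coefficient})=S$ with $S$ already known; this is solvable exactly when $S\in{\rm Im}\,D$ (Theorem \ref{tgms}), and the coincidence of the leading coefficients of $R$ with those of the high-order symmetry $G$ supplies this solvability at every order up to ${\rm ord}\,G$. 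Since the hierarchy contains symmetries of unbounded order, all coefficients are determined and the matrix formal symmetry exists. Reading off $A=R_{11}$ and $B=R_{12}$ and normalizing the leading coefficient of $R$ then yields the scalar formal symmetry ${\cal R}=A+BD_t$; the orders asserted in \eqref{fro}, namely ${\rm ord}\,A\le0$ and ${\rm ord}\,B\le-1$, follow from the companion (Jordan-block) structure of $F_*$, which forces its commutant into this normalized shape.

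For part (ii) I would run the same machinery on the symplectic side. By the computation indicated after \eqref{defrec}, the operator $P+QD_t$ built from the variational derivative $\delta\rho/\delta u$ of a conserved density satisfies the two displayed operator equations for $P$ and $Q$ up to a low-order remainder --- just as $T=(\delta\rho/\delta u)_*$ satisfies \eqref{RQ} up to a low-order term in the evolution case. The same order-counting shows that a conservation law of high order fixes the top coefficients of a genuine formal symplectic operator $S=P+QD_t$, and infinitely many such laws of unbounded order produce all of them. The accompanying formal symmetry is then obtained either directly from the argument of part (i) or, as in the proof of Theorem \ref{svsok}, from ratios $S_1^{-1}S_2$ of two symplectic operators, which satisfy \eqref{RF} by the matrix analogue of the lemma preceding Theorem \ref{svsok}.

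The hard part will be the bookkeeping of orders in the matrix setting. Because $F_*$ is not diagonalizable --- its leading symbol is a single Jordan block, the very situation excluded in \cite{MikShaSok91} --- one cannot split the problem into two independent scalar ones; the grading must instead be arranged so that the mismatched orders of $M$ and $N$ (differing by $n-m>0$) combine consistently. Confirming that the neglected term $D_\tau(F_*)$ (respectively the symplectic remainder) is genuinely subleading for this grading, and that at each recursive step the relevant right-hand side really lands in ${\rm Im}\,D$, is where the technical weight of the proof concentrates.
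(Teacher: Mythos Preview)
The paper does not actually supply a proof of this theorem; it is stated without proof, with the surrounding discussion (the matrix Lax pair \eqref{couplin}, the equation \eqref{RF}, the component relations \eqref{NM1}--\eqref{NM2}, and the symplectic analogues) serving as the setup, and the details deferred to \cite{HSS}. Your outline is exactly the intended argument: it transplants the schemes of Theorems~\ref{tlsym} and~\ref{svsok} to the $2\times2$ matrix setting, using the order discrepancy between $D_\tau(F_*)$ and $[F_*,G_*]$ (respectively the symplectic remainder) to show that high-order symmetries or conservation laws determine successively many coefficients of a genuine solution of \eqref{RF} (respectively of the symplectic equations).

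One point worth sharpening concerns how you pass from the matrix formal symmetry to the scalar one of the specific shape \eqref{fro}. It is not quite enough to ``normalize the leading coefficient'' of a single $R$: what one actually does is take an appropriate root of the matrix series (or equivalently of the scalar object $A+BD_t$ under the product rule in the Remark preceding the theorem, with $D_t^2$ eliminated via ${\cal L}=0$), so that the resulting first-order object has the asserted orders $\mathrm{ord}\,A\le 0$, $\mathrm{ord}\,B\le -1$. The existence of such a root, and the fact that its coefficients remain local functions of the variables \eqref{dyntvar}, is where the condition $n>m$ and the Jordan-block structure of $F_*$ enter in an essential way; this is the step that has no direct scalar analogue and deserves an explicit computation rather than a one-line appeal to the companion form.
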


Let $R$ be a formal symmetry of the form \eqref{fro}. It is easy
to find an operator
$$
R^{-1}=(\alpha_{-1} D^{-1}+\alpha_{-2} D^{-2}\dots)+
(\beta_{-2} D^{-2}+\beta_{-3}D^{-3}+\dots)\,D_{t}
$$
such that $R\,R^{-1}=R^{-1}\,R=1.$ Recall that we eliminate the term $D_{t}^{2}$  in the product of scalar formal symmetries in
virtue of the relation ${\cal L}=0$.
Operator $R^{-1}$ is uniquely defined, the coefficient $\beta_{-2}$ is equal to
$(b_{-1})^{-1}$.

In Section \ref{sec226} we have used the residues of the powers of a formal symmetry to define the canonical conserved densities. Here, we apply a different construction.

\begin{theorem} Let $R$ be a formal symmetry of the form \eqref{fro}. Then there exists a unique representation of total derivative
operators $D$ and $D_{t}$ in the form
\begin{equation} \label{decomp}
D_{x}=\sum_{-\infty}^{2} \rho_{i} R^{i}, \qquad
D_{t}=\sum_{-\infty}^{3} \sigma_{i} R^{i}.
\end{equation}
Functions $\rho_{i}$ and $\sigma_{i}$ are densities and fluxes of some
(maybe trivial) conservation laws
\begin{equation}\label{cond}
(\rho_{i})_{t}=(\sigma_{i})_{x}
\end{equation}
for equation \eqref{qtt}.
\end{theorem}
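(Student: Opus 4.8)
The plan is to read the statement as a change of generators in the skew field of pseudodifferential series: once a single invertible series $R$ is available, I want to re-expand the two privileged operators $D=D_x$ and $D_t$ as series in the powers $R^i$ with coefficients in ${\cal F}$. First I would fix the grading. Writing $R=A+B\,D_t$ as in \eqref{fro}, with ${\rm ord}\,A=0$ and ${\rm ord}\,B=-1$, and using $D_t^2=M+N\,D_t$, the natural grading is ${\rm ord}\,R=1$, ${\rm ord}\,D_x=2$, ${\rm ord}\,D_t=3$. With this grading the leading term of $R$ is the contribution $b_{-1}D^{-1}D_t$, of order one and with \emph{invertible} leading coefficient $b_{-1}$; this is precisely the normalization behind the formula $\beta_{-2}=(b_{-1})^{-1}$ for $R^{-1}$ recalled before the theorem. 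Hence $R$ is invertible, every power $R^i$ ($i\in\Z$) is defined with ${\rm ord}\,R^i=i$ and invertible leading coefficient, and the bounds $2$ and $3$ in \eqref{decomp} are forced: they are nothing but ${\rm ord}\,D_x$ and ${\rm ord}\,D_t$ expressed in units of ${\rm ord}\,R=1$.

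Existence and uniqueness of the two expansions I would obtain by the descending recursion already used in the excerpt for inverses and $m$-th roots of series. To expand $D_x$, I determine $\rho_2$ by matching leading symbols at order $2$, that is $\rho_2\cdot(\text{leading symbol of }R^2)=\text{leading symbol of }D_x$; invertibility of the leading coefficient of $R^2$ gives a unique $\rho_2\in{\cal F}$. Subtracting $\rho_2R^2$ strictly lowers the order, and iterating produces $\rho_1,\rho_0,\rho_{-1},\dots$ successively, each in ${\cal F}$. The same scheme starting at order $3$ yields $\sigma_3,\sigma_2,\dots$ for $D_t$. Uniqueness is immediate from the same leading-symbol argument: a relation $\sum c_iR^i=0$ bounded above forces its top coefficient to vanish, hence all of them by induction. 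The one point needing care is that the coefficients really stay scalar: since $R$ has a genuine $D_t$-part, its powers occupy both the component along $1$ and the component along $D_t$ at each order, so the single scalar unknown produced at every step is enough to annihilate the whole remainder.

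For the conservation laws I would invoke that $R$ is a formal symmetry, i.e. $D_t(R)=[F_*,R]$ as in \eqref{RF}, which extends to $D_t(R^i)=[F_*,R^i]$ for all $i$ by the Leibniz rule. Applying the derivation $D_t(\,\cdot\,)$ to the operator identity $D_x=\sum_i\rho_iR^i$, and using that the operator $D_x$ has constant coefficients so that $D_t(D_x)=0$, produces the master relation $\sum_iD_t(\rho_i)\,R^i=-\sum_i\rho_i\,[F_*,R^i]$. I would then expand the $t$-flow generator $F_*$ (the very operator decomposed as $D_t=\sum_j\sigma_jR^j$) in powers of $R$ and use $R^iR^j=R^jR^i$ to rewrite $[F_*,R^i]=-\sum_j[R^i,\sigma_j]\,R^j$, so that the right-hand side becomes a double sum of terms $\rho_i\,[R^i,\sigma_j]\,R^j$. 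Because each $[R^i,\sigma_j]$ has order strictly below that of $R^i$, it can itself be re-expanded in powers of $R$; collecting the result and matching the coefficient of each $R^m$ against $D_t(\rho_m)$ by uniqueness of the $R$-expansion should deliver exactly the relations \eqref{cond}, the subleading commutator corrections being absorbed as total $x$-derivatives via Adler's Theorem~\ref{adler} (residues of commutators lie in ${\rm Im}\,D$).

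The grading bookkeeping and the recursion are routine; the real obstacle is this last step. The danger is that the noncommutative corrections $\rho_i\,[R^i,\sigma_j]\,R^j$ reorganize into a convolution of the $\rho$'s and $\sigma$'s rather than cleanly into $D(\sigma_m)$ at each order. Controlling them demands the precise leading-symbol calculus for the commutators $[R^i,\sigma_j]$ together with an inductive use of the uniqueness of the expansion, and it is exactly here that the formal-symmetry property of $R$ (and not merely its invertibility) is indispensable: for a general invertible $R$ of order one the coefficients $\rho_i,\sigma_i$ exist but need not be densities and fluxes of conservation laws.
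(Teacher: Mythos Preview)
The paper states this theorem without proof, so there is nothing to compare against; let me evaluate your argument on its own.

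Your treatment of existence and uniqueness of the expansions \eqref{decomp} is sound: with the grading ${\rm ord}\,R=1$, ${\rm ord}\,D_x=2$, ${\rm ord}\,D_t=3$, the leading symbol of each $R^k$ is invertible and the descending recursion produces unique scalar coefficients. Note that the one--dimensionality of the leading symbol at each order (so that a single scalar suffices) follows from the grading alone: writing any operator as $P+Q\,D_t$ with $P,Q$ pseudodifferential in $D$, the top order is carried by $P$ for even order and by $Q\,D_t$ for odd order, and never by both.

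The conservation--law step, however, is muddled. You invoke $D_t(R)=[F_*,R]$ from \eqref{RF}, but that is the \emph{matrix} identity, and your identification of $F_*$ with the operator $D_t$ is non-standard. What you actually need is the trivial operator identity $D_t(\cdot)\text{ on coefficients}=[D_t,\cdot]$, valid for \emph{any} $R$ of this shape; the formal--symmetry property plays no role here. More importantly, you miss the decisive simplification. Since $D_x=\sum_i\rho_iR^i$ is an equality of \emph{operators}, commuting with any scalar $f$ yields
\[
\sum_i\rho_i\,[R^i,f]=[D_x,f]=D_x(f),
\]
already a scalar. Hence your double sum collapses at once:
\[
\sum_{i,j}\rho_i\,[R^i,\sigma_j]\,R^j=\sum_j D_x(\sigma_j)\,R^j,
\]
and the master relation becomes $\sum_k\big((\rho_k)_t-(\sigma_k)_x\big)R^k=0$, which by uniqueness of the $R$--expansion gives $(\rho_k)_t=(\sigma_k)_x$ directly. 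No re-expansion of individual commutators $[R^i,\sigma_j]$, no Adler--type residue argument, and---contrary to your closing remark---no use of the formal--symmetry equation \eqref{RF} is needed in this step. The only input is $[D_t,D_x]=0$ together with the two expansions.
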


\begin{example}  
The following formulas define the simplest integrability conditions \eqref{cond}
for equations of the form 
\be \label{qF} q_{tt}=q_{xxx}+F(q,q_x,q_t,q_{xx},q_{xt}):    \ee  
$$
\rho_{1}=u_{2}, \qquad 
 \rho_{2}= v_{2}+\frac{2}{3} \sigma_{1}, \qquad 
 \rho_{3}=6 \sigma_{2}-u_{2}\sigma_{1}+9 u_{1}-3 u_{2} v_{2}-\frac{1}{3}
u_{2}^{3},
$$
$$
\rho_{4}=6 \sigma_{3}-9 u_{2} \sigma_{2}+3 \sigma_{1}^{2}+27 u_{1} u_{2}-u_{2}^{4}+81 v_{1}-
9 u_{2}^{2}v_{2}-27 v_{2}^{2},
$$
$$
\rho_{5}=-2 \sigma_{4}-18 \sigma_{1} \sigma_{2}+27 (\sigma_{1})_{t}+
3 \sigma_{1}^{2} u_{2}+3 \sigma_{3} u_{2}+9 \sigma_{1} u_{2} v_{2}+
\sigma_{1} u_{2}^{3}-27 \sigma _{1} u_{1},
$$
where 
$$
u_1=\frac{\partial F }{\partial q_t}, \qquad v_1=\frac{\partial F }{\partial q_x},  \qquad u_2=\frac{\partial F }{\partial q_{xt}},
 \qquad v_2=\frac{\partial F }{\partial q_{xx}}.
$$
The conditions mean that $\rho_{i}$ are densities of {\it local} conservation
laws for equation \eqref{qF}. In other words, for any $\rho_{i}$ there
exists a function $\sigma_{i}$ depending on variables \eqref{dyntvar}.
\end{example}

\begin{op} Find all integrable Boussinesq type equations of the form
$$
q_{tt}=q_{xxxx}+F(q, q_x, q_{xx}).
$$
\end{op}
 
 \section{Recursion and Hamiltonian quasi--local operators}
 
Recursion and Hamiltonian operators establish additional relations between higher symmetries and conserved densities. 

A recursion operator is an operator ${\cal R}$ that satisfies \eqref{defR} and therefore maps symmetries to symmetries. The simplest symmetry for any equation \eqref{eveq}
is $u_{\tau}=u_x$. The
usual way to get generators of all the other symmetries is to act on $u_x$ by a recursion operator. 
\begin{lemma}\label{lem25}
Let ${\cal R}$ be a recursion operator. Then the operator  ${\cal R}^{+}$ maps cosymmetries to cosymmetries.
\end{lemma}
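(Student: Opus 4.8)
The plan is to transpose the defining equation \eqref{defR} of a recursion operator and read off the equation satisfied by ${\cal R}^+$, which will turn out to be exactly the relation governing the transport of cosymmetries. First I would record two elementary properties of the conjugation $L\mapsto L^+$ and of the total derivative $D_t$ on pseudo-differential series: conjugation is an anti-homomorphism, $(L_1 L_2)^+ = L_2^+ L_1^+$, and it commutes with $D_t$, that is $(D_t(L))^+ = D_t(L^+)$. The latter holds because $D_t$ commutes with $D$ and acts only on the coefficients, so that $D_t(D^i\circ s)=D^i\circ D_t(s)$; both facts are immediate from the definitions of Section~\ref{SectionIntegrabilityConditions}.

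Applying $+$ to \eqref{defR}, namely $D_t({\cal R})=F_*{\cal R}-{\cal R}F_*$, and using these two properties, I obtain
$$
D_t({\cal R}^+)={\cal R}^+ F_*^+ - F_*^+{\cal R}^+ .
$$
Now let $X\in{\cal F}$ be a cosymmetry, so that $D_t(X)=-F_*^+(X)$ by \eqref{varR}. Using the Leibniz rule $D_t(L(a))=D_t(L)(a)+L(D_t(a))$ for the action of an operator $L$ on a function $a$ (again a consequence of $[D_t,D]=0$), I compute
\begin{align*}
D_t\big({\cal R}^+(X)\big)&=D_t({\cal R}^+)(X)+{\cal R}^+\big(D_t(X)\big)\\
&={\cal R}^+ F_*^+(X)-F_*^+{\cal R}^+(X)-{\cal R}^+ F_*^+(X)\\
&=-F_*^+\big({\cal R}^+(X)\big).
\end{align*}
Hence $D_t({\cal R}^+(X))+F_*^+({\cal R}^+(X))=0$, which is precisely equation \eqref{varR} for ${\cal R}^+(X)$; thus ${\cal R}^+$ sends cosymmetries to cosymmetries.

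The computation is short once the bookkeeping is in place, so there is no genuinely hard step; the only point requiring care is that ${\cal R}$ is in general a pseudo-differential series (or a ratio of differential operators), so the conjugation, the anti-homomorphism property, and the commutation of $D_t$ with $+$ must be invoked in the skew field of formal series rather than merely for differential operators. Granting that framework, in which $L^+$ and $D_t(L)$ are defined for arbitrary series, every manipulation above is the termwise analogue of the differential-operator case, and the cancellation of the two ${\cal R}^+ F_*^+(X)$ terms is exactly the mechanism that makes the conjugate relation for ${\cal R}^+$ dual to the recursion relation \eqref{defR}.
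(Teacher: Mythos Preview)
Your argument is correct and is essentially the paper's own proof spelled out in detail: the paper just observes that conjugating \eqref{defR1} yields $[D_t+F_*^+,\,{\cal R}^+]=0$ and then invokes Definition~\ref{cosym}, which is exactly the computation you carry out explicitly.
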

\begin{proof} It follows from \eqref{defR1} that 
$[D_t+F_{*}^{+},\, {\cal R}^{+}]=0$. Using Definition \ref{cosym}, we arrive at the statement of the lemma.
\end{proof}

The main problem is that for almost all integrable models the recursion operator is non-local (see, for instance, 
\eqref{recop}) and we can apply it only to very special functions from $\cal F$ to get a function from $\cal F$.

 \subsection{Quasi-local recursion operators}
 
 In this section we consider non-formal recursion operators, which generate hierarchies of symmetries for integrable evolution equations (see Proposition \ref{defrec1}).

Most of known recursion operators have the following special non-local structure:
\begin{equation}\label{anz}
{\cal R}=R+\sum_{i=1}^k G_i\, D^{-1}\circ g_i,\qquad g_i, G_i\in {\cal F}, 
\end{equation}
where $R$ is a differential operator.  Without loss of generality we assume that the functions $G_1,\dots, G_k$ (as well as $g_1,\dots, g_k$) are linearly independent over $\C$. Operators of the form \eqref{anz} are called {\it quasi-local} or {\it weakly nonlocal}.

\begin{remark} It is easy to see that any expression of the form $f D^{-1}\circ g$ can be written as 
$$f D^{-1}\circ g = (a D-b)^{-1}, \quad {\rm where} \qquad a = \frac{1}{f g}, \quad b=\frac{D(f)}{g f^2}.$$ 
Thus, the quasi-local ansatz is a non-commutative analog of the  partial fraction decomposition of a rational function. Any ratio $A B^{-1}$ of differential operators can be written in a quasi-local form \eqref{anz}. However, the functions $g_i$ and $G_i$ belong to a differential extension of the basic differential field ${\cal F}$ such that $B$ admits a factorization into first order multipliers. The assumption $g_i, G_i\in {\cal F}$ is important for further considerations. 
\end{remark}

\begin{lemma}\label{lem26} Let 
\begin{equation}\label{anz1}
\bar{\cal R}=\bar R+\sum_{i=1}^{\bar k} \bar G_i \, D^{-1} \circ \bar g_i 
\end{equation}
be a quasi--local operator. Then the product ${\cal R}\circ \bar {\cal R}$ is quasi-local if $\,\,g_i \bar G_j=D(a_{ij}), \,\, a_{ij}\in {\cal F}$ for any $i,j.$
\end{lemma}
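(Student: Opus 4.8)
The plan is to expand the composition ${\cal R}\circ\bar{\cal R}$ fully, using the ansatz \eqref{anz} for ${\cal R}$ and \eqref{anz1} for $\bar{\cal R}$, and to sort the resulting terms into four groups: the product $R\bar R$ of the two differential parts; the two ``mixed'' groups in which one factor is a differential operator and the other an integral tail, namely $R\circ\sum_j\bar G_j D^{-1}\circ\bar g_j$ and $\sum_i G_i D^{-1}\circ g_i\,\bar R$; and the group $\sum_{i,j} G_i D^{-1}\circ g_i\,\bar G_j\,D^{-1}\circ\bar g_j$ of products of two integral tails. The group $R\bar R$ is manifestly a differential operator, so the whole problem reduces to showing that each of the remaining three groups can be rewritten in the quasi-local form \eqref{anz}, that is, as a differential operator plus a finite sum of terms $H\,D^{-1}\circ h$ with $H,h\in{\cal F}$.

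For the mixed groups I would rely on two elementary ``integration by parts'' identities for pseudo-differential operators. First, for a differential operator $R$ and a function $H$ one has $R\circ H D^{-1}=(\text{differential operator})+R(H)\,D^{-1}$, where $R(H)$ denotes the result of applying $R$ to $H$ as a function; composing on the right by $\bar g_j$ then shows $R\circ\sum_j\bar G_j D^{-1}\circ\bar g_j$ is quasi-local with new left coefficients $R(\bar G_j)$. Second, for a differential operator $P$ one has $D^{-1}\circ P=(\text{differential operator})+D^{-1}\circ P^{+}(1)$, obtained by repeatedly moving a $D$ through $D^{-1}$ via the identity $D^{-1}\circ f\,D=f-D^{-1}\circ D(f)$ (with $D(f)$ the multiplication operator by that function); applying this with $P=g_i\bar R$ shows $\sum_i G_i D^{-1}\circ g_i\,\bar R$ is quasi-local with new right coefficients $(g_i\bar R)^{+}(1)$. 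Neither reduction requires any hypothesis, and both leave only differential-operator remainders.

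The crux, and the only place where the assumption enters, is the group of double integral terms. Here the essential object is $D^{-1}\circ(g_i\bar G_j)\circ D^{-1}$, and for a generic function in place of $g_i\bar G_j$ this is a genuinely two-sided nonlocal expression admitting no quasi-local representation. The hypothesis $g_i\bar G_j=D(a_{ij})$ with $a_{ij}\in{\cal F}$ is exactly what removes this obstruction, through the identity
\begin{equation}\nonumber
D^{-1}\circ D(a_{ij})\circ D^{-1}=a_{ij}\,D^{-1}-D^{-1}\circ a_{ij},
\end{equation}
which follows from $D^{-1}\circ D=D\circ D^{-1}=1$ after writing the multiplication operator $D(a_{ij})$ as $D\circ a_{ij}-a_{ij}\circ D$. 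Substituting it into each double term yields
\begin{equation}\nonumber
G_i\,D^{-1}\circ g_i\bar G_j\,D^{-1}\circ\bar g_j
=G_i a_{ij}\,D^{-1}\circ\bar g_j-G_i\,D^{-1}\circ a_{ij}\bar g_j,
\end{equation}
a difference of two quasi-local terms. Collecting the contributions of all four groups then exhibits ${\cal R}\circ\bar{\cal R}$ in the form \eqref{anz}, completing the proof. The main obstacle is precisely this collapse of the two-sided integral into one-sided tails; everything else is routine bookkeeping of the differential-operator remainders produced by the two mixed groups.
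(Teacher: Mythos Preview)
Your proof is correct and follows the same approach as the paper: the crux is the identity $D^{-1}\circ D(a_{ij})\circ D^{-1}=a_{ij}D^{-1}-D^{-1}\circ a_{ij}$ applied to the double integral terms, yielding exactly the decomposition $G_i a_{ij}D^{-1}\circ\bar g_j - G_i D^{-1}\circ a_{ij}\bar g_j$ that the paper derives. The paper simply writes ``It suffices to prove that $G_i D^{-1}\circ g_i\bar G_j D^{-1}\circ\bar g_j$ is quasi-local'' and omits the mixed-term reductions you spelled out; your integration-by-parts treatment of those is correct and fills in what the paper leaves implicit.
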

\begin{proof} It suffices to prove that $G_i D^{-1} \circ g_i \, \bar G_j D^{-1} \circ \bar g_j $ is quasi--local. It follows from identities
$$
G_i D^{-1} \circ g_i \, \bar G_j D^{-1} \circ \bar g_j =G_i D^{-1} (D \circ a_{ij} - a_{ij} D) D^{-1} \circ \bar g_j = G_i a_{ij} D^{-1} \circ \bar g_j 
- G_i D^{-1} \circ a_{ij} \bar g_j.
$$
\end{proof}
\begin{definition} \label{def214}An operator ${\cal R}$ of the form \eqref{anz} is called a {\it quasi-local recursion operator} for eqiation \eqref{eveq} if 
\begin{itemize}
\item[1)]  ${\cal R}$, considered as a pseudo-differential series, satisfies \eqref{defR};
\item[2)]  the functions $G_i$ are generators of some symmetries for \eqref{eveq};
\item[3)]  the functions $g_i$ are variational derivatives of conserved
densities.
\end{itemize}
\end{definition}
\begin{example}
It is easy to see that the recursion operator \eqref{recop} for the KdV equation is quasi-local with $\ds k = 1,\, G_1 = \frac{u_x}{2}$ and $\ds g_1 =\frac{\delta u}{\delta u}= 1$.
\end{example}
\begin{remark}
Maybe it is reasonable {\rm \cite{sylv}} to add  the hereditary property {\rm \cite{Fuch}} of the operator ${\cal R}$ to  1--3.
\end{remark}

\begin{remark} Substituting \eqref{anz} into  \eqref{defR} and equating the non-local terms, we obtain that
$$
\sum_{i=1}^k (D_t-F_*)(G_i)\, D^{-1}\circ g_i +\sum_{i=1}^k G_i\, D^{-1}\circ (D_t+F_*^+)(g_i)=0.
$$
This ``almost'' implies that the functions $G_i$ are  symmetries and the functions $g_i$ are cosymmetries.  

\end{remark}

\begin{question} Why applying a quasi-local recursion operator to local symmetries, we obtain local symmetries?
\end{question}
\begin{question} Why the product of two  quasi-local recursion operators is a quasi-local recursion operator?
\end{question}

Suppose that equation \eqref{eveq} is a member of a commutative hierarchy  (see Remark \ref{rem12} and Example \ref{ex25}). 
This means that
\begin{itemize}
\item[i)] Any equation of the hierarchy is a higher symmetry for all the other equations;
\item[ii)] Each conserved density of any equation from the hierarchy is a conserved density for  all the other equations.
\end{itemize}
In this case the expression ${\cal R}(g)$, where $g$ is any symmetry of equation \eqref{eveq}, belongs to ${\cal F}$. It follows from Item ii) and  
\begin{lemma} {\rm \cite{Olv93}} The product of the right-hand side $F$ of equation \eqref{eveq} 
and the variational derivative of any conserved density for equation \eqref{eveq}
belongs to ${\rm Im}\, D$.
\end{lemma}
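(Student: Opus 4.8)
The plan is to reduce the statement to two facts: the conservation-law hypothesis rewritten through the Fr\'echet derivative, and the integration-by-parts (Lagrange) identity that relates $\rho_*(F)$ to $F\cdot\frac{\delta\rho}{\delta u}$ modulo ${\rm Im}\,D$. Write $\rho$ for the conserved density in question.

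First I would observe that, directly from the definition \eqref{Dt} of the evolutionary field $D_t$ and of the Fr\'echet derivative in Definition \ref{dfrechet},
$$
D_t(\rho)=\sum_i D^i(F)\,\frac{\partial\rho}{\partial u_i}=\rho_*(F).
$$
Since $\rho$ is a density of a conservation law for \eqref{eveq}, relation \eqref{rho} supplies a $\sigma\in{\cal F}$ with $D_t(\rho)=D(\sigma)$; hence $\rho_*(F)\in{\rm Im}\,D$.

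The second step, which carries the real content, is the Lagrange identity: for any differential operator $L$ and any $f,g\in{\cal F}$ one has $g\,L(f)-f\,L^+(g)\in{\rm Im}\,D$, where $L^+$ is the formal conjugate. Applying this with $L=\rho_*$, $f=F$, $g=1$, and recalling from the definition of the Euler operator that $\frac{\delta\rho}{\delta u}=\rho_*^+(1)$, I obtain
$$
\rho_*(F)-F\,\frac{\delta\rho}{\delta u}\in{\rm Im}\,D.
$$
Combining this with the first step gives $F\,\frac{\delta\rho}{\delta u}\equiv\rho_*(F)=D(\sigma)\equiv 0\pmod{{\rm Im}\,D}$, which is precisely the assertion of the lemma.

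The only point requiring care is the Lagrange identity itself, and I would establish it by the termwise computation $\frac{\partial\rho}{\partial u_k}\,D^k(F)\equiv(-1)^k\,F\,D^k\!\left(\frac{\partial\rho}{\partial u_k}\right)\pmod{{\rm Im}\,D}$, obtained by transferring each $D$ across the product $k$ times (each transfer contributes a total derivative $D(ab)$ and a sign), and then summing over $k$, which reassembles $F\cdot\rho_*^+(1)=F\,\frac{\delta\rho}{\delta u}$. This is routine, so no genuine obstacle arises; the substance of the proof is simply the repackaging of the conservation-law condition $\rho_*(F)\in{\rm Im}\,D$ through integration by parts, in the spirit of Theorem \ref{tgms}.
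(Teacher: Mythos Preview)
Your proof is correct: the two-step argument via $D_t(\rho)=\rho_*(F)\in{\rm Im}\,D$ followed by the Lagrange identity $\rho_*(F)-F\cdot\rho_*^+(1)\in{\rm Im}\,D$ is exactly the standard route. The paper itself does not supply a proof of this lemma; it merely states it with a citation to \cite{Olv93}, so there is no ``paper's own proof'' to compare against beyond noting that your argument is the classical one found in that reference.
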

A different choice of integration
constants gives rise to an additional linear combination of the
symmetries $G_1,\dots, G_k$.  Probably a quasi-local ansatz for finding a recursion operator was used for the first time in \cite{sokkn}.

Let ${\cal R}$ and $\bar{\cal R}$ given by \eqref{anz} and \eqref{anz1} be quasi-local recursion operators. Then the non-local terms in the product ${\cal R}\circ \bar{\cal R}$ has the form
$$
\sum_{i=1}^{\bar k} {\cal R} (\bar G_i) \, D^{-1} \circ \bar g_i + \sum_{i=1}^{k}  G_i \, D^{-1} \circ \bar{\cal R}^{+}(g_i). 
$$
To obtain this formula one can use Lemma \ref{lem26} and the fact that for any differential operator $S$ and $p,q \in {\cal F}$ the function 
$p\, S(q)+q\, S^{+}(p)$ belongs to ${\rm Im}\,D.$

We see that the operator ${\cal R}\circ \bar{\cal R}$ satisfies the requirements 1) (see Proposition \ref{Lambdaspace}) and 2) of Definition \ref{def214}. It follows from Lemma \ref{lem25} that the functions $g_i$ for the product are cosymmetries (but not necessarily variational derivatives of some conserved densities). However, for a particular equation one can try to prove that any cosymmetry is a variational derivative of a conserved density. Briefly outline
how to prove it for the KdV equation. 
\begin{itemize}
\item[1.] Using that for the KdV equation the operator $\ds \frac{\partial}{\partial u}$ maps any cosymmetry to a cosymmetry reducing its order by 2, prove that any cosymmetry has an even order;
\item[2.] Derive from \eqref{varR} that any cosymmetry $S$  has the form $S=c\,u_{2 k}+O(2k-1);$ 
\item[3.] Employ the fact that for any $k$ the KdV equation  has a conserved density of the form $\rho_k=u_k^2+O(k-1)$ to reduce the order of cosymmetry: $\ds S_1=S+(-1)^{k+1} \frac{c}{2} \frac{\delta \rho_k}{\delta u}$ has a lower order;
\item[4.] Use the induction over $k$.
\begin{op}
Prove the same statement for the Krichever--Novikov equation.
\end{op}
\end{itemize} 
\begin{remark} The function $u_1$ is a cosymmetry for the linear equation $u_t=u_{xxx}$, which is not a variational derivative of a conserved density.  
\end{remark}

Thus  the set of all quasi-local recursion operators for the KdV equation form a commutative (see Proposition \ref{Lambdaspace}) associative algebra $A_{rec}$ over $\C.$ It is possible to prove that this algebra is generated by the operator \eqref{recop}. In other words, $A_{rec}$ 
is isomorphic to the algebra of all polynomials in one variable.

It turns out \cite{demsok} that it is not true for integrable models such as the Krichever--Novikov and the Landau--Lifshitz equations.  

\subsection{Recursion operators for Krichever-Novikov equation} 
 
Consider  the Krichever-Novikov equation \eqref{KN}.  This equation is the most interesting among integrable equations of the form  \eqref{vvs} because of two reasons. 

{\it Reason 1}. All integrable equations \eqref{vvs} can be obtained by some transformations and limit procedures from \eqref{KN}. 

{\it Reason 2}. All algebraic structures related to equation \eqref{KN} are the most generic and sophisticated.  
 
Denote by $G_1$ the right-hand side of \eqref{KN}.
 The fifth order symmetry of \eqref{KN} is given by
$$
G_2=u_5-5\frac{u_4 u_2}{u_1}-\frac{5}{2}
\frac{u_3^2}{u_1}+\frac{25}{2} \frac{u_3
u_2^2}{u_1^2}-\frac{45}{8} \frac{u_2^4}{u_1^3}-\frac{5}{3} P
\frac{u_3}{u_1^2}+\frac{25}{6} P \frac{u_2^2}{u_1^3}-\frac{5}{3}
P' \frac{u_2}{u_1}-\frac{5}{18} \frac{P^2}{u_1^3}+\frac{5}{9} u_1
P''. \label{kn5}
$$
The three simplest conserved densities of \eqref{KN} are
$$
\begin{array}{l}
\ds \rho_1=-\frac{1}{2} \frac{u_2^2}{u_1^2}-\frac{1}{3}
\frac{P}{u_1^2},\qquad \qquad 
\rho_2=\frac{1}{2} \frac{u_3^2}{u_1^2}-\frac{3}{8} \frac{u_2^4}{u_1^4}+\frac{5}{6} P \frac{u_2^2}{u_1^4}+
\frac{1}{18}\frac{P^2}{u_1^4}-\frac{5}{9} P'', \\[9mm]
\ds \rho_3=\frac{u_4^2}{u_1^2}+3 \frac{u_3^3}{u_1^3}-\frac{19}{2} \frac{u_3^2 u_2^2}{u_1^4}+
\frac{7}{3}P \frac{u_3^2}{u_1^4}+\frac{35}{9} P' \frac{u_2^3}{u_1^4}+\frac{45}{8} \frac{u_2^6}{u_1^6}-
\frac{259}{36} \frac{u_2^4 P}{u_1^6}+\frac{35}{18} P^2 \frac{u_2^2}{u_1^6}\\[5mm]
\ds\qquad -\frac{14}{9} P'' \frac{u_2^2}{u_1^2}+\frac{1}{27} \frac{P^3}{u_1^6}-
\frac{14}{27} \frac{P'' P}{u_1^2}-\frac{7}{27} \frac{P'^2}{u_1^2}-\frac{14}{9} P^{(IV)} u_1^2.
\end{array}
$$
In the paper \cite{sokkn} the fourth order quasi--local recursion
operator 
$$
{\cal R}_1=D^4+a_1 D^3+a_2 D^2+a_3 D+a_4+G_1 D^{-1} \circ \frac{\delta
\rho_1}{\delta u}+u_x D^{-1} \circ \frac{\delta \rho_2}{\delta u},
\label{KNR1}
$$
was found. Here, the coefficients $a_i$ are given by
$$
\begin{array}{l}
\ds a_1=-4 \frac{u_2}{u_1},\qquad a_2=6 \frac{u_2^2}{u_1^2}-2 \frac{u_3}{u_1}-\frac{4}{3} \frac{P}{u_1^2}, \\[5mm]
\ds a_3=-2 \frac{u_4}{u_1}+8 \frac{u_3 u_2}{u_1^2}-6 \frac{u_2^3}{u_1^3}+4 P \frac{u_2}{u_1^3}-
\frac{2}{3}\frac{P'}{u_1}, \\[5mm]
\ds a_4=\frac{u_5}{u_1}-2 \frac{u_3^2}{u_1^2}+8 \frac{u_3
u_2^2}{u_1^3}-4 \frac{u_4 u_2}{u_1^2}- 3
\frac{u_2^4}{u_1^4}+\frac{4}{9} \frac{P^2}{u_1^4}+\frac{4}{3}P
\frac{u_2^2}{u_1^4}+\frac{10}{9}P''- \frac{8}{3}
P'\frac{u_2}{u_1^2}.
\end{array}
$$
The following statement can be verified directly.

\begin{theorem} There exists a quasi--local recursion
operator for \eqref{KN} of the form
\begin{equation}\label{KNR2}
\begin{array}{l}
\ds {\cal R}_2=D^6+b_1 D^5+b_2 D^4+b_3 D^3+b_4 D^2+b_5 D+b_6-\\[3mm]
\qquad \,\, \ds \frac{1}{2} u_x D^{-1}\circ  \frac{\delta \rho_3}{\delta u}
  +G_1 D^{-1}\circ \frac{\delta \rho_2}{\delta u}+G_2 D^{-1}\circ
\frac{\delta \rho_1}{\delta u},
\end{array}
\end{equation}
where
$$
\begin{array}{l}
\ds b_1=-6\frac{u_2}{u_1},\qquad \qquad b_2=-9 \frac{u_3}{u_1}-2 \frac{P}{u_1^2}+21 \frac{u_2^2}{u_1^2},\\[6mm]
\ds b_3=-11 \frac{u_4}{u_1}+60 \frac{u_3 u_2}{u_1^2}+14 P \frac{u_2}{u_1^3}-57 \frac{u_2^3}{u_1^3}-3\frac{P'}{u_1},\\[6mm]
\ds b_4=-4 \frac{u_5}{u_1}+38 \frac{u_4 u_2}{u_1^2}+22 \frac{u_3^2}{u_1^2}+99 \frac{u_2^4}{u_1^4}-155 \frac{u_3 u_2^2}{u_1^3}+\frac{34}{3} P \frac{u_3}{u_1^3}-44 P \frac{u_2^2}{u_1^4} \\[3mm]
\ds \qquad +\frac{4}{3} \frac{P^2}{u_1^4}+12 P' \frac{u_2}{u_1^2}-P'', 
\end{array}
$$
$$
\begin{array}{c}
\ds b_5=-2 \frac{u_6}{u_1}+29\frac{u_4 u_3}{u_1^2}+80 P\frac{u_2^3}{u_1^5} +
\frac{23}{3} P' \frac{u_3}{u_1^2} -104 \frac{u_2 u_3^2}{u_1^3} -70\frac{u_4 u_2^2}{u_1^3} +241 \frac{u_2^3 u_3}{u_1^4} +14 \frac{u_5 u_2}{u_1^2}\\[4mm]
\ds \qquad +\frac{20}{3}P \frac{u_4}{u_1^3}-\frac{170}{3} P\frac{u_2u_3}{u_1^4} +
\frac{4}{3} \frac{P' P}{u_1^3}-22P' \frac{u_2^2}{u_1^3} +2P'' \frac{u_2}{u_1} -
\frac{16}{3}P^2 \frac{u_2}{u_1^5} -108\frac{u_2^5}{u_1^5}, \\[8mm]
\ds b_6=\frac{u_7}{u_1}-6\frac{u_2 u_6}{u_1^2} +\frac{8}{9} P^2 \frac{u_2^2}{u_1^6}-
195 \frac{u_3^2 u_2^2}{u_1^4}+6 P\frac{u_3^2}{u_1^4}+\frac{142}{3}P \frac{u_2^4}{u_1^6}+\frac{28}{9}P' P\frac{u_2}{u_1^4} +101 \frac{u_4 u_3 u_2}{u_1^3} \\[4mm]
\ds \qquad +\frac{34}{3} P \frac{u_4 u_2}{u_1^4}-72 \frac{u_2^6}{u_1^6}-\frac{28}{9} P''' u_2+\frac{38}{3} P'' \frac{u_2^2}{u_1^2}-\frac{19}{3} P'\frac{u_4}{u_1^2}-\frac{122}{3} P' \frac{u_2^3}{u_1^4}-10 \frac{u_4^2}{u_1^2}+22 \frac{u_3^3}{u_1^3}\\[4mm]
\ds\qquad -\frac{178}{3} P \frac{u_3 u_2^2}{u_1^5}+\frac{14}{9} P^{(IV)} u_1^2+\frac{113}{3} P' \frac{u_3 u_2}{u_1^3}-\frac{2}{3} P \frac{u_5}{u_1^3}
-\frac{17}{3} P'' \frac{u_3}{u_1}-\frac{4}{3} P^2\frac{u_3}{u_1^5} -89 \frac{u_4 u_2^3}{u_1^4}\\[6mm]
\ds\qquad+236 \frac{u_3 u_2^4}{u_1^5} -13 \frac{u_5 u_3}{u_1^2}+25
\frac{u_5 u_2^2}{u_1^3}-\frac{7}{9}
\frac{P'^2}{u_1^2}-\frac{8}{27} \frac{P^3}{u_1^ 6}-\frac{4}{9}
\frac{P'' P}{u_1^2}.
\end{array}
$$
The operators ${\cal R}_1$ and ${\cal R}_2$ are
related by the elliptic curve
$$
{\cal R}_2^2={\cal R}_1^3-g_2 {\cal R}_1-g_3, \label{rel}
$$
where
$$
\begin{array}{l}
\displaystyle g_2 = \frac{16}{27}\Big((P'')^2-2 P''' P'+2 P^{(IV)} P\Big), \\[4mm]
\displaystyle g_3 = \frac{128}{243}\Big(-\frac{1}{3}(P'')^3-\frac{3}{2}(P')^2P^{(IV)}+P'P''P'''
+2P^{(IV)}P''P-P(P''')^2\Big).
\end{array}
$$\end{theorem}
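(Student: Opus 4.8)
The plan is to deduce the recursion property of ${\cal R}_2$ from the elliptic relation, so that the whole theorem rests on two checks: that $g_2,g_3$ are constants, and that the single operator identity ${\cal R}_2^2={\cal R}_1^3-g_2{\cal R}_1-g_3$ holds. First I would observe that, since $P'''''(u)=0$, the function $P$ is a quartic in $u$, and the two combinations $(P'')^2-2P'''P'+2P^{(IV)}P$ and $-\frac13(P'')^3-\frac32(P')^2P^{(IV)}+P'P''P'''+2P^{(IV)}P''P-P(P''')^2$ defining $g_2$ and $g_3$ are, up to the indicated scalar factors, the classical ${\rm SL}_2$-invariants $I$ and $J$ of the binary quartic $P$. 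A short direct substitution of $P=\alpha u^4+\cdots$ shows that every power of $u$ cancels, so $g_2,g_3\in\C$.

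Granting this, the recursion property becomes automatic. By hypothesis (from \cite{sokkn}, see \eqref{KNR1}) ${\cal R}_1$ satisfies \eqref{defR}, i.e.\ equivalently \eqref{Lambdaeq}. Equation \eqref{defR} is linear in the unknown operator and is trivially solved by the identity; by Proposition \ref{Lambdaspace}(2) it is also solved by ${\cal R}_1^3={\cal R}_1\circ{\cal R}_1\circ{\cal R}_1$. Since $g_2,g_3$ are constants, the combination ${\cal T}:={\cal R}_1^3-g_2{\cal R}_1-g_3$ again satisfies \eqref{defR}. Its leading symbol is $D^{12}$, so its formal square root ${\cal T}^{1/2}$ is a well-defined pseudo-differential series of order $6$, unique up to sign, and by Proposition \ref{Lambdaspace}(1) it too satisfies \eqref{defR}. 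Consequently, once the operator identity ${\cal R}_2^2={\cal T}$ is established, ${\cal R}_2=\pm{\cal T}^{1/2}$ automatically fulfils requirement 1) of Definition \ref{def214}; requirements 2) and 3) are then read off the explicit tail, since $G_1$ and $G_2$ are the displayed third- and fifth-order symmetries of \eqref{KN} and $\frac{\delta\rho_1}{\delta u},\ \frac{\delta\rho_2}{\delta u},\ \frac{\delta\rho_3}{\delta u}$ are by construction variational derivatives of the listed conserved densities.

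It therefore remains to verify ${\cal R}_2^2={\cal R}_1^3-g_2{\cal R}_1-g_3$, and I would organise this by separating the differential part (non-negative powers of $D$) from the non-local part. For the differential part one matches symbols order by order: the top coefficient fixes ${\cal R}_2=D^6+\cdots$, and comparing the coefficients of $D^5,D^4,\dots,D^0$ pins down $b_1,\dots,b_6$ and simultaneously confirms that they coincide with the stated expressions. For the non-local part I would use Lemma \ref{lem26} together with the product formula for quasi-local recursion operators recorded just before this subsection, by which the non-local terms of a product ${\cal R}\circ\bar{\cal R}$ are $\sum {\cal R}(\bar G_i)\,D^{-1}\circ\bar g_i+\sum G_i\,D^{-1}\circ\bar{\cal R}^{+}(g_i)$; tracking these through the cube ${\cal R}_1^3$ and through the square ${\cal R}_2^2$ shows how the tails built from $G_1,G_2$ and $\frac{\delta\rho_i}{\delta u}$ propagate and recombine.

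The main obstacle is precisely this non-local bookkeeping: one must confirm that the a priori long chain of $D^{-1}$-tails generated on both sides collapses to exactly the three terms $-\frac12 u_x D^{-1}\circ\frac{\delta\rho_3}{\delta u}+G_1 D^{-1}\circ\frac{\delta\rho_2}{\delta u}+G_2 D^{-1}\circ\frac{\delta\rho_1}{\delta u}$, with no residual non-local remainder. This closure relies on the ${\rm Im}\,D$ identities (that $p\,S(q)+q\,S^{+}(p)\in{\rm Im}\,D$ for a differential operator $S$) invoked in Lemma \ref{lem26}, on the commutativity of the Krichever--Novikov hierarchy (so that the products ${\cal R}_i(G_j)$ stay in ${\cal F}$ and the relevant $g_i\bar G_j$ lie in ${\rm Im}\,D$), and ultimately on the same algebraic identities among $P,P',\dots,P^{(IV)}$ that force $g_2,g_3$ to be constant. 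Once the differential coefficients $b_1,\dots,b_6$ and the three non-local tails are matched, the elliptic relation --- and with it the theorem --- follows.
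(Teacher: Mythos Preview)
Your proposal is correct and is in fact more detailed than what the paper offers: immediately before the theorem the text simply says ``The following statement can be verified directly,'' and the subsequent Remark confirms that $g_2,g_3$ are constants for $\deg P\le 4$. Your organisation---reducing the recursion property of ${\cal R}_2$ to the single identity ${\cal R}_2^2={\cal R}_1^3-g_2{\cal R}_1-g_3$ via Proposition~\ref{Lambdaspace} (strictly, parts (1) and (3) combined, since ${\cal T}^{1/2}=({\cal T}^{1/12})^6$ has order $6$, not $1$)---is a clean way to structure that direct verification, and the separation into the differential part (fixing $b_1,\dots,b_6$) and the non-local tails (handled through Lemma~\ref{lem26} and the product formula for quasi-local operators) is exactly the bookkeeping such a computation requires.
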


\begin{remark} It is easy to verify that $g_2$ and $g_3$ are
constants for any polynomial $P(u)$ such that $\hbox{deg}\,P\le 4$.
Under M\"obius transformations of the form
$$
u\rightarrow \frac{\alpha u+\beta}{\gamma u+\delta}
$$
in equation \eqref{KN} the polynomial $P(u)$ changes according
to the same rule as in the differential $\displaystyle \omega =
\frac{d u}{\sqrt{P(u)}}.$ The expressions  $g_2$ and $g_3$ are
invariants with respect to the M\"obius group action.
\end{remark}

\begin{remark} The ratio ${\cal R}_3={\cal R}_2 {\cal
R}_1^{-1}$ satisfies equation \eqref{Lambdaeq}. It belongs to the
skew field of differential operator fractions {\rm \cite{ore}}.
However, this operator is not quasi-local and it is unclear how to
apply it even to the simplest symmetry generator $u_x$.
\end{remark}

\subsection{Hamiltonian operators and bi-Hamiltonian structure for the Krichever--Novikov equation} 

Most of known integrable equations \eqref{eveq} can be written in a
Hamiltonian form
$$
u_t={\cal H}\left(\frac{\delta \rho}{\delta u}  \right),
$$
where $\rho$ is a conserved density and ${\cal H}$ is a Hamiltonian
operator.  The analog of the operator identity \eqref{defR} for Hamiltonian operators is given by
\begin{equation}\label{Heq}
(D_t-F_*) \, {\cal H}= {\cal H} (D_{t}+F_{*}^+),
\end{equation}
which means that $\cal H$ maps cosymmetries to symmetries.

The Poisson bracket  corresponding to the Hamiltonian operator ${\cal H}$ is defined by 
\begin{equation}\label{PDEbrop}
\{f,\,g\}= \frac{\delta f}{\delta u}  \, {\cal H} \Big(\frac{\delta g}{\delta u} \Big).
\end{equation}
\begin{remark}\label{rem216} Actually, the Poisson bracket has to be defined between functionals $\int f\,dx$ and   $\int g\,dx,$ i.e., between equivalence classes of $f$ and $g$ {\rm (see Remark \ref{equi})}.  
\end{remark}
Since the Poisson bracket is defined on the vector space of equivalence classes ${\cal F}/{\rm Im} D$, the Leibniz rule has no sense for 
\eqref{PDEbrop}. The skew-symmetricity and the Jacobi identity for \eqref{PDEbrop} are required. In terms of elements of ${\cal F}$ this means that 
\be \label{HHcond1}
\{f,g\}+\{g,f\} \in {\rm Im}\,D , 
\ee
\be \label{HHcond2}
\{\{f,g\},h\}+\{\{g,h\},f\}+\{\{h,f\},g\} \in {\rm Im}\,D.
\ee

Therefore, besides \eqref{Heq} the Hamiltonian operator for equation \eqref{eveq} should satisfy some 
 identities (see for example, \cite{Olv93}) equivalent to \eqref{HHcond1}, \eqref{HHcond2}.  
\begin{lemma} If operators ${\cal H}_1$ and ${\cal H}_2$ satisfy \eqref{Heq}, then ${\cal R}={\cal H}_2
{\cal H}_1^{-1}$ satisfies \eqref{Lambdaeq}.
\end{lemma}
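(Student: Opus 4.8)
The plan is to follow the pattern already used for the preceding lemma on formal symplectic operators, where the ratio $S_1^{-1}S_2$ of two solutions of \eqref{Req} was shown to satisfy \eqref{Lambdaeq}. The essential preliminary observation is that $D_t$ may be regarded as a genuine (pseudo-differential) operator, the total $t$-derivative, which commutes with $D$; consequently $[D_t,\,S]=D_t(S)$ for every operator $S$, where $D_t(S)$ is the coefficient-wise $t$-derivative fixed in the convention after Proposition~\ref{defrec1}. With this reading \eqref{Lambdaeq} and \eqref{defR} are nothing but the commutator identity \eqref{defR1}, namely $[D_t-F_*,\,{\cal R}]=0$. Hence it suffices to prove that ${\cal R}={\cal H}_2{\cal H}_1^{-1}$ commutes with the operator $D_t-F_*$.

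First I would rewrite the Hamiltonian identity \eqref{Heq} for each ${\cal H}_i$, $i=1,2$, as a conjugation relation. Assuming ${\cal H}_1$ is invertible in the relevant skew field of pseudo-differential series, \eqref{Heq} gives
\[
{\cal H}_i^{-1}\,(D_t-F_*)\,{\cal H}_i = D_t+F_*^{+},\qquad i=1,2.
\]
Since the right-hand side does not depend on $i$, the two left-hand sides coincide, so that ${\cal H}_1^{-1}(D_t-F_*){\cal H}_1={\cal H}_2^{-1}(D_t-F_*){\cal H}_2$. This is the exact Hamiltonian analogue of the manipulation relating \eqref{varR} to \eqref{DFsym} in the symplectic case.

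Next I would clear the inverses. Multiplying the last equality on the left by ${\cal H}_2$ and on the right by ${\cal H}_1^{-1}$ yields
\[
{\cal H}_2{\cal H}_1^{-1}(D_t-F_*)=(D_t-F_*){\cal H}_2{\cal H}_1^{-1},
\]
that is $[D_t-F_*,\,{\cal R}]=0$ with ${\cal R}={\cal H}_2{\cal H}_1^{-1}$. By the observation of the first paragraph this is precisely \eqref{defR1}, hence \eqref{Lambdaeq}, which proves the claim. The only point demanding care is not the algebra, which is purely formal and mirrors the symplectic lemma, but the standing assumptions that make it meaningful: the existence of ${\cal H}_1^{-1}$ and therefore of the product ${\cal R}={\cal H}_2{\cal H}_1^{-1}$ in a suitable algebra of ratios of differential operators, together with the commutation $[D_t,\,S]=D_t(S)$ which rests on $D_t$ commuting with $D$. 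Once these are granted, no computation with the actual coefficients of ${\cal H}_1,{\cal H}_2$ is required.
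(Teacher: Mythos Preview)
Your argument is correct. The paper actually states this lemma without proof, so there is nothing to compare against directly; your proof is the natural one and mirrors, as you note, the preceding symplectic lemma (which the paper also leaves unproved). The only mild imprecision is calling $D_t$ a ``genuine pseudo-differential operator'': it is rather a derivation of the coefficient field that commutes with $D$, and it is this commutation that yields $[D_t,S]=D_t(S)$ for any pseudo-differential series $S$ in $D$. Once that is granted, your conjugation argument---rewriting \eqref{Heq} as ${\cal H}_i^{-1}(D_t-F_*){\cal H}_i=D_t+F_*^+$, equating for $i=1,2$, and clearing to obtain \eqref{defR1}---is exactly what is intended.
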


As a rule, the Hamiltonian operators are local (i.e. differential)
or quasi--local operators. In the latter case
$$
{\cal H}=H+\sum_{i=1}^m G_i D^{-1} \bar G_i,
$$
where $H$ is a differential operator and $G_i, \bar G_i$ are fixed generators of 
symmetries \cite{mokfer, malnov}.

The KdV equation possesses two local Hamiltonian operators  
$$
{\cal H}_1=D, \qquad {\cal H}_2=D^3+4 u D+2 u_x.
$$
The first example 
$$
{\cal H}_0=u_x D^{-1} u_x
$$
of  a quasi--local Hamiltonian operator for the Krichever-Novikov equation \eqref{KN} was found in \cite{sokkn} (see also \cite{demsok}).

The recursion operators ${\cal R}_i$ for \eqref{KN} presented above appear to be the ratios
$$
{\cal R}_1={\cal H}_1 {\cal H}_0^{-1}, \qquad  {\cal R}_2={\cal H}_2 {\cal H}_0^{-1}
$$
of the following quasi--local Hamiltonian operators
$$
\begin{array}{l}
\ds {\cal H}_1= \frac{1}{2}(u_x^2 D^3+D^3\circ u_x^2)+(2 u_{xxx} u_x-\frac{9}{2} u_{xx}^2-\frac{2}{3} P)D+D\circ(2 u_{xxx} u_x-\frac{9}{2} u_{xx}^2-\frac{2}{3} P) \\[4mm]
\ds \qquad +G_1 D^{-1}\circ G_1+  u_x D^{-1}\circ G_2 +G_2 D^{-1}\circ u_x, \\[7mm]
\ds {\cal H}_2= \frac{1}{2}(u_x^2 D^5+D^5\circ u_x^2)+(3u_{xxx} u_x-\frac{19}{2} u_{xx}^2-P)D^3+D^3\circ(3 u_{xxx} u_x-\frac{19}{2} u_{xx}^2-P)\\[4mm]
\ds \qquad +h D+D\circ h+G_1 D^{-1}\circ G_2+G_2 D^{-1}\circ G_1+u_x D^{-1}\circ G_3 +G_3 D^{-1}\circ u_x,
\label{KNH2}
\end{array}
$$
where
$$
h=u_{xxxxx} u_x-9 u_{xxxx} u_{xx}+\frac{19}{2} u_{xxx}^2-\frac{2}{3} \frac{u_{xxx}}{u_x} (5 P-39 u_{xx}^2)+\frac{u_{xx}^2}{u_x^2} (5 P-9 u_{xx}^2)+\frac{2}{3} \frac{P^2}{u_x^2}+u_x^2 P'',
$$
and $G_3={\cal R}_1(G_1)={\cal R}_2 (u_x)$ is the generator of the seventh order
symmetry for \eqref{KN}:
$$
\label{KNG7}
\begin{array}{l}
\dsize G_3=u_7-7 \frac{u_2 u_6}{u_1}-\frac{7}{6} \frac{u_5}{u_1^2}(2 P+12 u_3 u_1-27 u_2^2)-\frac{21}{2} \frac{u_4^2}{u_1}+\frac{21}{2} \frac{u_4}{u_1^3} u_2(2 P-11 u_2^2)\\[4mm]
\dsize\qquad -\frac{7}{3} \frac{u_4}{u_1^2}(2 P' u_1-51 u_2 u_3)  +\frac{49}{2} \frac{u_3^3}{u_1^2}+\frac{7}{12} \frac{u_3^2}{u_1^3} (22 P-417 u_2^2)+\frac{2499}{8} \frac{u_2^4}{u_1^4} u_3\\[4mm]
\dsize\qquad  +\frac{91}{3} P'\frac{u_2}{u_1^2} u_3-\frac{595}{6} P \frac{u_2^2}{u_1^4} u_3-\frac{35}{18} \frac{u_3}{u_1^4} (2 P'' u_1^4-P^2)-\frac{1575}{16} \frac{u_2^6}{u_1^5}+\frac{1813}{24} \frac{u_2^4}{u_1^5} P\\[4mm]
\dsize\qquad -\frac{203}{6}\frac{u_2^3}{u_1^3} P' +\frac{49}{36} \frac{u_2^2}{u_1^5} (6 P'' u_1^4-5 P^2)-\frac{7}{9} \frac{u_2}{u_1^3} (2 P''' u_1^4-5 P P')+\frac{7}{54} \frac{P^3}{u_1^5}\\[4mm]
\dsize\qquad -\frac{7}{9} P'' \frac{P}{u_1}+\frac{7}{9} P'''' u_1^3-\frac{7}{18} \frac{P'^2}{u_1}.
\end{array}
$$
The operators ${\cal H}_i,\, i=1,2,$  were found in \cite{demsok}.
It was verified that they satisfy \eqref{Heq}. It is easy to verify that the Poisson brackets \eqref{PDEbrop} corresponding to ${\cal H}_i$ satisfy identity \eqref{HHcond1}. 

\begin{op} Prove that the Poisson brackets \eqref{PDEbrop} corresponding to ${\cal H}_i$ are compatible and satisfy the Jacobi identity \eqref{HHcond2}.
\end{op}

\begin{remark} Very recently in {\rm \cite{sylv}} it was proved that ${\cal H}_1$ satisfies the Jacobi identity and compatible with  ${\cal H}_0$ .
\end{remark}

\chapter{Integrable hyperbolic equations of Liouville type}\label{Liutype}

    The (open) Toda
  lattices
\begin{equation}\label{toda}
  (u_i)_{xy}=\sum_j A_i^j \exp{(u_j)},
\end{equation}
where $A_i^j$ is the Cartan matrix of a simple Lie algebra \cite{LezSav},  provide examples
of Liouville type systems.

For the Lie algebra of type $A_1$ the system coincides with  the famous Liouville equation
\begin{equation}\label{liou}
u_{xy}=\exp{u}.
\end{equation}
The Liouville equation possesses the following remarkable properties:
\begin{itemize}
\item  {\bf 1.} It has a local
  formula for the general solution
  $$
  u(x,y)=\log{\left( \frac{2 f'(x) g'(y)}{(f(x)+g(y))^2}\right)};
  $$
\item  {\bf 2.} It admits a group of classical symmetries
  $$
  x \rightarrow \phi(x),  \qquad  y \rightarrow \psi(y), \qquad u
  \rightarrow u-\log{\phi'(x)}-\log{\psi'(y)}
  $$
  depending on two arbitrary functions of one variable;
\item {\bf 3.} It possesses the generalized first integrals
 $$
 w=u_{xx}-\frac{1}{2} u_x^2, \qquad \bar w=u_{yy}-\frac{1}{2}
 u_y^2;
 $$
\item {\bf 4.} It has a non-commutative hierarchy of higher infinitesimal symmetries of the form
  $$
  u_{\tau}=(D_x+u_x) \, P(w,w_x,\dots, w_n)
  + (D_y+u_y) \, Q(\bar w,\bar w_y,\dots, \bar w_m),
  $$
where $P$ and $Q$ are arbitrary functions, $n$ and $m$ are arbitrary integers;
\item {\bf 5.} It  has a terminated sequence of the Laplace invariants.
\end{itemize}
These are typical features of the so called  {\it equations of Liouville type} (or,  which is the same, Darboux integrable equations) \cite{darbu1}--\cite{gosse}. 

\section{Generalized integrals}

Consider 
hyperbolic equations of the form
\begin{equation}
\label{hyper} u_{xy}=F(x,y,u,u_x,u_y).
\end{equation}
The corresponding total $x$ and $y$-derivatives are given by the recursive formulas $$
D=\frac{\partial}{\partial x}+\sum_{i=0}^{\infty} u_{i+1}
\frac{\partial}{\partial u_i}+ \sum_{i=1}^{\infty} \bar D^{i-1}(F)
\frac{\partial}
{\partial \bar u_i} $$ and $$ \bar D=\frac{\partial}{
\partial y}+\sum_{i=0}^{\infty} \bar u_{i+1} \frac{\partial}{\partial
\bar u_i}+ \sum_{i=1}^{\infty} D^{i-1}(F) \frac{\partial}{
\partial u_i}, $$ where
\begin{equation}\label{hypdyn}
u_0=\bar u_0=u, \qquad u_1=u_x,  \qquad \bar u_1=u_y, \qquad u_2=u_{xx},  \qquad \bar
u_2=u_{yy}, \dots \,.
\end{equation}
Although at first glance it seems that $D$ is defined in terms of $\bar D$ and vice versa, the vector fields are actually well defined by these formulas. One can verify that $[D,\, \bar D]=0$ in virtue of \eqref{hyper}.

\begin{example} For the Liouville equation \eqref{liou}
we have $$ \bar D=\frac{\partial}{\partial y}+\sum_{i=0}^{\infty} \bar
u_{i+1} \frac{\partial}{\partial \bar u_i} +
 \exp(u)\left( \frac{\partial}{
\partial u_1}+ u_1 \frac{\partial}{
\partial u_2}+ (u_2+ u_1^2) \frac{\partial}{
\partial u_3}+\cdots\right). $$
It is easy to verify that $\ds \bar D\Big (u_2-\frac{1}{2} u_1^2\Big)=0.$
\end{example}

\begin{definition} A function $\,W(x,y,u,u_1,\dots,u_p)\,$ is
called $y$-{\it integral} for equation \eqref{hyper} if $\bar
D(W)=0$. The number $p$ is called the {\it order} of $W$. Analogously,  a function $\bar W(x,y,u,\bar u_1,\dots,\bar u_{\bar
p})$ such that $D (\bar W)=0$ is called $x$-{\it integral}.
\end{definition}

 It is obvious that for any $y$-integral $w$ and any function $S$
 the expression
 \begin{equation}\label{genint}
 W=S\left( x, w, D (w), \cdots, D ^k(w)\right)
 \end{equation}
 is a $y$-integral as well.

 \begin{proposition}\label{prop31} {\rm \cite{zib}}\quad i) Any $y$-integral is a function of the variables $x,y,u,u_1,u_2,\ldots,u_k,\ldots$.
 \newline ii)  Any $y$-integral $W$ has the form
 \eqref{genint}, where $w$ is an integral of minimal possible
 order. The minimal integral is defined uniquely up to any transform $w
 \rightarrow \phi (x,w)$.  \newline iii) If the order $n$ of minimal integral is greater than 1, then there exists a minimal integral $w$ such that  
$\ds \frac{\partial^2 w}{\partial u_n^2}=0$.\footnote{The similar statements are true for $x$-integrals.}
\end{proposition}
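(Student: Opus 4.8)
The plan is to work throughout in the differential algebra $\mathcal{A}$ of functions of the ``$x$-jet'' variables $x,y,u,u_1,u_2,\dots$. On $\mathcal{A}$ the operator $D$ acts as the ordinary total $x$-derivative $\partial_x+\sum_i u_{i+1}\partial_{u_i}$ (its $\bar u$-terms annihilate $\mathcal{A}$), while $\bar D$ acts as $\partial_y+\bar u_1\partial_u+\sum_{i\ge 1}D^{i-1}(F)\partial_{u_i}$. A preliminary computation I would record first, by induction on $i$: $D^{i-1}(F)$ depends only on $x,y,u,u_1,\dots,u_i$ and $\bar u_1$, and is affine in its top variable $u_i$ with $\partial D^{i-1}(F)/\partial u_i=F_{u_1}$. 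Part (i) then follows from a highest-derivative argument: if a $y$-integral $W$ genuinely depended on $\bar u_q$ with $q\ge 1$, then $\bar D(W)$ would contain $\bar u_{q+1}\,\partial W/\partial\bar u_q$, and $\bar u_{q+1}$ occurs in no other term of $\bar D(W)$ (the $D^{i-1}(F)$-terms involve only $\bar u_1$). Hence $\partial W/\partial\bar u_q=0$, a contradiction, so $W\in\mathcal{A}$.

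For (ii) the starting point is that $\ker\bar D$ is closed under $D$, since $[D,\bar D]=0$; thus, with $w$ a minimal integral of order $n$ and $W$ a $y$-integral of order $p\ge n$, the functions $v_j=D^j w$ for $j=0,\dots,p-n$ are $y$-integrals of strictly increasing orders $n,n+1,\dots,p$ (each application of $D$ raises the order by one with top coefficient $w_{u_n}\ne 0$). The Jacobian of $(v_0,\dots,v_{p-n})$ with respect to $(u_n,\dots,u_p)$ is triangular with nonzero diagonal $w_{u_n}$, so the $v_j$ may replace $u_n,\dots,u_p$ as coordinates. Writing $W=\tilde W(x,y,u,\dots,u_{n-1},v_0,\dots,v_{p-n})$ and using $\bar D v_j=0$, the equation $\bar D W=0$ collapses to $\tilde W_y+\bar u_1\tilde W_u+\sum_{i=1}^{n-1}D^{i-1}(F)\tilde W_{u_i}=0$. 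For frozen values of the $v_j$ this is precisely the assertion that $\tilde W$ is a $y$-integral of order $\le n-1$; by minimality of $n$ it must reduce to a function of $x$ alone, whence $W=S(x,w,Dw,\dots,D^{p-n}w)$. Uniqueness up to $w\mapsto\phi(x,w)$ is immediate: a second minimal integral has order $n$, so in this formula it cannot involve $Dw,D^2w,\dots$, leaving $W=\phi(x,w)$ with $\phi_w\ne 0$.

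For (iii), I would differentiate the identity $\bar D(w)=0$ in $u_n$. The commutator relation $[\partial_{u_n},\bar D]\,w=F_{u_1}\,w_{u_n}$ (a consequence of $\partial D^{n-1}(F)/\partial u_n=F_{u_1}$ together with $w$ having order $n$) yields $\bar D(w_{u_n})=-F_{u_1}\,w_{u_n}$, and differentiating once more gives $\bar D(w_{u_nu_n})=-2F_{u_1}\,w_{u_nu_n}$. Consequently the ratio $r=w_{u_nu_n}/w_{u_n}^2$ satisfies $\bar D(r)=0$, so $r$ is itself a $y$-integral of order $\le n$, and by (ii) it is a function $\psi(x,w)$. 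Under $w\mapsto\tilde w=\phi(x,w)$ the ratio transforms as $\tilde r=\phi_{ww}/\phi_w^2+\psi/\phi_w$, so choosing $\phi(x,\cdot)$ to solve the ordinary differential equation $\partial_w\log\phi_w=-\psi(x,w)$ forces $\tilde r\equiv 0$, i.e. $\tilde w_{u_nu_n}=0$, while $\phi_w\ne 0$ keeps $\tilde w$ a minimal integral.

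The main obstacle is the reduction step in (ii): one must verify that after the triangular change of variables the surviving equation is genuinely the integral equation in the fewer variables $y,u,\dots,u_{n-1}$, so that minimality of $n$ can legitimately be invoked. This rests entirely on the preliminary fact that for $i\le n-1$ the coefficients $D^{i-1}(F)$ have order $\le n-1$ and involve no bar-variable beyond $\bar u_1$, hence do not reintroduce the suppressed jet variables $u_n,\dots,u_p$; the same preliminary fact supplies the constant $F_{u_1}$ governing the two key relations in (iii).
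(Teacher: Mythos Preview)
Your proof is correct and follows essentially the same route as the paper: the highest-$\bar u$-derivative elimination for (i), the triangular change of variables $u_n,\dots,u_p\mapsto w,Dw,\dots,D^{p-n}w$ followed by the minimality argument for (ii), and the double $u_n$-differentiation of $\bar D w=0$ yielding $(\bar D+F_{u_1})w_{u_n}=0$, $(\bar D+2F_{u_1})w_{u_nu_n}=0$ and hence that $w_{u_nu_n}/w_{u_n}^2$ is a $y$-integral $\psi(x,w)$ for (iii). Your exposition is actually more explicit than the paper's in places (the Jacobian check in (ii), the ODE $\partial_w\log\phi_w=-\psi$ for the normalizing transformation in (iii)), but the ideas coincide.
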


\begin{proof} Let a function $W=W(x,y,u,u_1,\ldots,u_n,\bar u_1,
\ldots,\bar u_m)$ be a $y$-integral of equation \eqref{hyper}. Then
$$
  \left(\frac\partial{\partial y}+\bar u_1\frac\partial{\partial u}+
  \bar u_2\frac\partial{\partial\bar u_1}+\cdots+
  \bar u_{m+1}\frac\partial{\partial\bar u_m}\right) W
   +\left(F\frac\partial{\partial u_1}+D(F)\frac\partial{\partial u_2}+\cdots+
  D^{n-1}(F)\frac\partial{\partial u_n}\right)W=0.
$$
Since $F,D(F),\ldots,D^{n-1}(F)$ depend on $x,y,u,\bar u_1,u_1,u_2,\ldots,u_n$, we get 
$$
  \frac{\partial W}{\partial\bar u_m}=0, \qquad
  \frac{\partial W}{\partial\bar u_{m-1}}=0, \quad \ldots \quad 
  \frac{\partial W}{\partial\bar u_1}=0.
$$
Then any $y$-integral is a function that depends on the variables
$x,y,u,u_1,u_2,\ldots,u_k,\ldots$ only.   

Denote  by $w$ a 
$y$-integral of the smallest order $n$. Let 
$W=W(x,y,u,u_1,\ldots,u_m), \,m\ge n$ be another $y$-integral. It can be rewritten in the form
 $$
  W=W(x, y, u, u_1, \ldots, u_{n-1}, w, w_1, \ldots, w_{m-n}).
$$
For any values of  $x, w, w_1,\ldots, w_{m-n}$ this function is a $y$-integral of order less than $n$. Therefore, it is a constant and W does not depend on $y, u, u_1, \ldots,u_{n-1}$. 
 
To prove iii) let us differentiate the relation 
$$
  \bar D w=\left(\frac\partial{\partial y}+\bar u_1\frac\partial{\partial u}+
  F\frac\partial{\partial u_1}+D(F)\frac\partial{\partial u_2}+\cdots+
  D^{n-1}(F)\frac\partial{\partial u_n}\right) w=0
$$
by $u_n$ twice to obtain
$$
  (\bar D+F_{u_1})\frac{\partial w}{\partial u_n}=0, \qquad
  (\bar D+2 F_{u_1})\frac{\partial^2 w}{\partial u_n^2}=0.
$$
This implies that 
$$
  \frac{\partial^2 w}{\partial u_n^2} 
   \, \Big(\frac{\partial w}{\partial u_n}\Big)^{-2}
$$
is a $y$-integral and therefore it is a function of $x$ and $w$. Hence
 $$
  \frac{\partial w}{\partial u_n} H(x,w) = g(x,y,u_1,\ldots,u_{n-1})
$$
for some functions $H$ and $g$ and 
$$
  W=\int H\,d w 
$$
is an integral linear in  $u_n$. 
\end{proof}

\begin{definition} \label{Darb} An equation of the form \eqref{hyper} is called  {\it Darboux integrable} if it has both $x$ and $y$ integrals.
\end{definition}

Some of linear hyperbolic equations are Darboux integrable. 

\begin{example}\label{li1} The first order integrals for the wave equation
$$
  u_{xy}=0
$$
are given by
$$
 w=u_1, \qquad \bar w=\bar u_1.
$$
\end{example}
\begin{example}\label{li2} The Euler-Poisson equation 
$$
  u_{xy}=\frac{u_y-u_x}{x-y}
$$
possesses minimal integrals of second order
$$
 w=\frac{u_2}{x-y}, \qquad \bar w=\frac{\bar u_2}{x-y}.
$$
\end{example}

\section{Laplace invariants for hyperbolic operators}\label{laplin}

Consider a linear
hyperbolic operator
\begin{equation}\label{hypop}
L_0= \frac{\partial^2}{\partial x\partial y}+
  a_0(x,y) \frac\partial{\partial x}
  +b_0(x,y) \frac\partial{\partial y}+c_0(x,y).
\end{equation}
It is easy to verify that $$L_0= \left(\frac {\partial}{\partial x}
+b_0\right)\left(\frac {\partial}{\partial y} +a_0\right)-h_1=
\left(\frac {\partial}{\partial y} +a_0\right)\left(\frac
{\partial}{\partial x} +b_0\right)-k_0,
$$
where
\begin{equation}\label{maininv}
  h_1=\frac{\partial a_0}{\partial x} + b_0 \,a_0-c_0, \qquad
  k_0=\frac{\partial b_0}{\partial y} +a_0 \, b_0-c_0.
\end{equation}
\begin{definition} The functions \eqref{maininv} are called {\it main Laplace invariants} of the operator $L_0.$
\end{definition}
\begin{lemma} Operators $L_0$ and $\bar L$ of the form \eqref{hypop} are related by a gauge transform  
$$\bar L =  
\alpha (x,y)\,L_0 \,\alpha (x,y)^{-1}$$ for some function $\alpha$ iff their main Laplace invariants coincide.
\end{lemma}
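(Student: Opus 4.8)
The plan is to make the conjugation $\alpha L_0\alpha^{-1}$ fully explicit and then simply read off the invariants. Writing $\alpha = e^{\phi}$ and applying the operator to a test function, I would first expand $\partial_x\partial_y(e^{-\phi}v)$ and collect, after multiplying by $e^{\phi}$, the coefficients of $v_{xy},v_x,v_y,v$. This is the routine part, and it shows that conjugation shifts the first-order coefficients by a pure gradient,
$$\bar a = a_0 - \phi_y, \qquad \bar b = b_0 - \phi_x,$$
while the zeroth-order coefficient becomes
$$\bar c = c_0 - \phi_{xy} + \phi_x\phi_y - a_0\phi_x - b_0\phi_y.$$

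For the \emph{only if} direction I would substitute these three formulas into the definitions \eqref{maininv} and verify that every $\phi$-dependent term cancels, so that $\bar h_1 = h_1$ and $\bar k_0 = k_0$. The cancellation is structural: the $-\phi_{xy}$ produced by $\partial_x\bar a$ is cancelled by the $+\phi_{xy}$ coming from $-\bar c$, the bilinear term $\phi_x\phi_y$ in $\bar b\,\bar a$ is cancelled by the one in $-\bar c$, and the cross terms $-a_0\phi_x,\ -b_0\phi_y$ cancel likewise; the analogous cancellation handles $\bar k_0$.

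For the \emph{if} direction, suppose $h_1=\bar h_1$ and $k_0=\bar k_0$. The key observation is that the quadratic and $c_0$ terms drop out of the difference, giving $h_1-k_0 = \partial_x a_0 - \partial_y b_0$ and likewise for the barred operator; hence equality of invariants yields $\partial_x(a_0-\bar a)=\partial_y(b_0-\bar b)$. This is exactly the integrability (closedness) condition ensuring a solution $\phi$ of the overdetermined first-order system $\phi_y = a_0-\bar a$, $\phi_x = b_0-\bar b$. Setting $\alpha=e^{\phi}$, the conjugated operator $\alpha L_0\alpha^{-1}$ then shares its first-order coefficients with $\bar L$ by construction. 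To match the zeroth-order coefficient I would avoid recomputing $\bar c$: since $c$ is determined by $a,b$ and one invariant via $c=\partial_x a + b\,a - h_1$, and the conjugated operator and $\bar L$ now agree on $a$, $b$ and (using the already-proved necessity together with the hypothesis) on $h_1$, their zeroth-order coefficients coincide automatically. Therefore $\alpha L_0\alpha^{-1}=\bar L$.

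The main obstacle is the sufficiency direction, specifically producing the gauge function $\phi$: the system is overdetermined, and its solvability rests entirely on the compatibility condition, which is precisely the content of $h_1-k_0=\bar h_1-\bar k_0$. Once $\phi$ is in hand the remainder is bookkeeping, and it is streamlined by the remark that fixing $a$, $b$ and a single invariant pins down $c$, so that the troublesome direct verification of the $\bar c$-identity can be bypassed entirely.
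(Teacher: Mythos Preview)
Your proof is correct. The paper states this lemma without proof, treating it as a standard elementary fact about the Laplace invariants; your argument supplies exactly the routine computation one would expect, and all the formulas check out: the transformation laws $\bar a=a_0-\phi_y$, $\bar b=b_0-\phi_x$, $\bar c=c_0-\phi_{xy}+\phi_x\phi_y-a_0\phi_x-b_0\phi_y$ are right, the cancellations in $\bar h_1$ and $\bar k_0$ go through as you describe, and the integrability condition $\partial_x(a_0-\bar a)=\partial_y(b_0-\bar b)$ is precisely $h_1-k_0=\bar h_1-\bar k_0$. Your shortcut for matching the zeroth-order coefficient---using that $c$ is determined by $a$, $b$, and $h_1$---is clean and avoids an unnecessary direct check.
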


As demonstrated below, the functions \eqref{maininv} can be considered as subsequent terms of a sequence. That is why the first of these functions is denoted by $h_1$ (not by $h_0$).
\begin{definition}
The functions $h_1$ and $k_0$ are called  the {\it main Laplace invariants} of the operator \eqref{hypop}.
\end{definition}

 The equation $L_0(V)=0$ is equivalent to the system
 $$
 \left(\frac {\partial}{\partial y} +a_0\right) V = V_1, \qquad
 \left(\frac {\partial}{\partial x} +b_0\right) V_1=h_1 V.
 $$
If  $h_1 \ne 0,$ we can find $V$ from the second equation and substitute into the first one. Therefore, $V_1$ satisfies a hyperbolic equation $L_1(V_1)=0,$ where 
$$
 L_1=\frac{\partial^2}{\partial x\partial y}+
   a_1(x,y) \frac\partial{\partial x}
   +b_1(x,y) \frac\partial{\partial y}+c_1(x,y).
 $$
  The coefficients and the Laplace invariants of the new hyperbolic operator $L_1$ 
 are given by 
$$
 \begin{array}{c}
 a_1=a_0-(\log h_1)_y,  \qquad b_1=b_0, \qquad  c_1=a_1 b_0+(b_0)_y-h_1, \\[3mm]
 h_2 =(a_1)_x - (b_0)_y + h_1, \qquad k_1 = h_1.
 \end{array}
$$
 \begin{remark}
 The invariant $h_2$ can be rewritten in terms of the main invariants of the operator \eqref{hypop} only:  $$\, h_2 = 2 h_1 - k_0 - (\log h_1)_{xy}.$$
 \end{remark}

 If $h_2 \ne 0,$ we may continue. As a result, we get a
 chain of hyperbolic operators
 $$
   L_i = \frac{\partial^2}{\partial x\partial y}+
   a_i \frac\partial{\partial x}
   +b_i \frac\partial{\partial y}+c_i, 
 $$
where
 $  i\in \N$ and
 $$
 a_i=a_{i-1}-(\log h_i)_y, \qquad b_i=b_0, \qquad c_i=a_i
 b_0+(b_0)_y-h_i,  $$
 \begin{equation}\label{hinv}
 k_i = h_i, \qquad h_{i+1}=2 h_i-h_{i-1} - (\log h_i)_{xy}.
\end{equation}

The initial linear hyperbolic equation $L_0(V)=0$ can be also  rewritten as
$$
 \left(\frac {\partial}{\partial x} +b_0\right) V=V_{-1}, \qquad
 \left(\frac {\partial}{\partial y} +a_0\right) V_{-1}=k_0 V.
 $$
 If $k_0\ne 0,$
 then $V_{-1}$ satisfies the following equation
$$
   \left(\frac{\partial^2}{\partial x\partial y}+
   a_{-1} \frac\partial{\partial x}
   +b_{-1} \frac\partial{\partial y}+c_{-1}\right) \, V_{-1}=0,
 $$
 etc. We get the chain of operators
$$
   L_{-i} = \frac{\partial^2}{\partial x\partial y}+
   a_{-i} \frac\partial{\partial x}
   +b_{-i} \frac\partial{\partial y}+c_{-i},
   \qquad i\in \N,
 $$
 where
 $$
 a_{-i}=a_0, \qquad b_{-i}=b_{-(i-1)}-(\log k_{-(i-1)})_x, \qquad 
 c_{-i}=a_0 b_{-i}+(a_0)_x-k_{-(i-1)},$$
 \begin{equation}
 \label{kinv}
h_{1-i}=k_{-(i-1)}, \qquad k_{-i}=h_{-i}=2 h_{-(i-1)}-h_{-(i-2)}-(\log h_{-(i-1)})_{xy}. 
 \end{equation}
 \begin{definition}
 The functions $h_i, \,\, i\in \Z $ defined by \eqref{hinv}, \eqref{kinv} are called {\it the Laplace invariants of the operator \eqref{hypop}}.
 \end{definition}
 
The sequence of the Laplace invariants is uniquely defined by
the following recursive formula  
\begin{equation}\label{HHH}
h_i=2h_{i-1}-h_{i-2}-(\log h_{i-1})_{xy}, \qquad i\in \Z
\end{equation}
and by the initial data
\begin{equation}\label{INI}
  h_1=\frac{\partial a_0}{\partial x} +a_0 b_0-c_0, \qquad
  h_{0}=\frac{\partial b_0}{\partial y} +a_0 b_0-c_0.
\end{equation}
Remarkably, \eqref{HHH} is nothing else but the integrable infinite $A$-Toda lattice.

If one of the Laplace invariants vanishes, then the next invariant is not defined and the sequence of invariants terminates. If it terminates in both directions:  $h_r=h_{-s}=0$ for $r\ge 1, s\ge 0,$ then the sequence becomes finite and \eqref{HHH} turns out to be the open Toda lattice \cite{LezSav}.  In this case the equation  $L_0(V)=0$ can be solved explicitly (see, for example, \cite{zibsok}).

\section{Nonlinear hyperbolic equations of Liouville type}

The linearization operator for \eqref{hyper} is given by the formula
\begin{equation}\label{linop}
L= D \bar D - \frac{\partial
F}{\partial u_1}\, D - \frac{\partial F}{\partial \bar u_1}\, \bar D -\frac{\partial F}{\partial u}. 
\end{equation}
This is a linear hyperbolic operator of the form \eqref{hypop}, where the partial derivatives are replaced by the total derivative operators $D$ and $\bar D$ and the coefficients become functions of $x, y$ and finite number of variables \eqref{hypdyn}.

In accordance with \eqref{INI},  we define
the main Laplace
invariants for equation \eqref{hyper}
as $$ H_1\stackrel{def}{=}-D \left(\frac{\partial
F}{\partial u_1}\right) +\frac{\partial F}{\partial u_1}
\frac{\partial F}{\partial \bar u_1}+\frac{\partial F}{\partial
u}, \qquad 
H_0\stackrel{def}{=}-\bar D \left(\frac{\partial F}{\partial \bar
u_1}\right) +\frac{\partial F}{\partial u_1} \frac{\partial
F}{\partial \bar u_1}+\frac{\partial F}{\partial u}.
$$
The invariants $H_i$ for $i>1$ and for $i<0$ are determined from
$$
D  \bar D(\log H_i)=-H_{i+1}-H_{i-1}+2 H_i, \qquad i\in \Z.
$$
For the Liouville equation we have $H_0=H_1=\exp{u}$. It is easy
to verify that
$
H_2=H_{-1}=0.
$

\begin{definition}\label{LioInt} We call \eqref{hyper} an {\it equation of Liouville
type} if there exists $r\ge 1$ and $s\ge 0$ such that
$$
H_r=H_{-s}\equiv 0.
$$
\end{definition}
Notice that this is a much more constraining property than the existence of the generalized integrals used by Darboux since Definition \ref{LioInt} involves only the right-hand side of the equation.

\begin{remark} For linear hyperbolic equations  the Laplace invariants coincide with the Laplace invariants of the corresponding linear operator {\rm (see Section \ref{laplin})}.  In particular equations form Examples \ref{li1}, \ref{li2} are Liouville integrable. 
\end{remark}

\begin{example} For the equation
\begin{equation}\label{pryamoy}
u_{xy}=\frac{1}{u} \sqrt{1-u_{\mathstrut x}^2}
\sqrt{1-u_{\mathstrut y}^2}
\end{equation}
we have $H_2=H_{-1}=0.$
\end{example} 
\begin{example} For the equation
\begin{equation}\label{kosliu}
u_{xy}= u\,u_y
\end{equation}
the invariants $H_3$ and $H_{0}$ are equal to zero. 
\end{example}
\begin{example} For the equation
\begin{equation}\label{dliny}
u_{xy}=-\frac{2 k}{x+y} \sqrt{u_{\mathstrut y}}
\sqrt{u_{\mathstrut y}}, \qquad k\in \N
\end{equation}
we have $H_{k+1}=H_{-k}=0.$
\end{example}

\begin{theorem} \label{th61} {\rm \cite{zibsok1, andkam1}} The equation \eqref{hyper} has
non-trivial both $x$ and $y$-integrals iff it is an equation of
Liouville type.
\end{theorem}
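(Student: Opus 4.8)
The plan is to reduce the whole statement to the Laplace cascade of the linearization \eqref{linop} and to read off the two kinds of integrals as the two directions in which that cascade can terminate. The operator $L=D\bar D-\frac{\partial F}{\partial u_1}D-\frac{\partial F}{\partial \bar u_1}\bar D-\frac{\partial F}{\partial u}$ is hyperbolic of the form \eqref{hypop}, with the commuting total derivatives $D,\bar D$ in place of $\partial_x,\partial_y$; its main Laplace invariants computed from \eqref{INI} are exactly $H_1,H_0$, and the recursion \eqref{HHH} reproduces all the $H_i$. Thus the hypothesis that \eqref{hyper} is of Liouville type (Definition \ref{LioInt}) is literally the statement that the cascade \eqref{HHH} terminates on both sides, $H_r=H_{-s}=0$. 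The theorem then follows from the two one-sided equivalences I intend to prove: equation \eqref{hyper} has a non-trivial $x$-integral iff $H_r=0$ for some $r\ge 1$, and it has a non-trivial $y$-integral iff $H_{-s}=0$ for some $s\ge 0$.

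The heart is a bridge between integrals and annihilating operators. If $W$ is a $y$-integral of order $n$, then $\bar D W=0$ holds on all solutions; differentiating this along a one-parameter family of solutions whose velocity $\phi$ satisfies $L\phi=0$ gives $\bar D\big(W_*\phi\big)=0$ for every $\phi\in\ker L$, where $W_*=\sum_k\frac{\partial W}{\partial u_k}D^k$ is a differential operator of order $n$ in $D$ alone. Equivalently there is an operator $\mathcal N=\sum_{k=0}^{n-1}\nu_k D^k$ with the operator identity $\bar D\circ W_*=\mathcal N\circ L$ (a degree count in $(D,\bar D)$ pins the order of $\mathcal N$). The key point is then purely linear: an order-$n$ operator in $D$ annihilating $\ker L$ modulo $L$ exists, with $n$ minimal, iff the backward Laplace cascade of $L$ terminates at $H_{-(n-1)}=0$. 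This is the classical Laplace argument run on $L$ in the $D$-direction: the substitutions $\phi\mapsto(D+b_0)\phi$ governed by \eqref{kinv} lower the problem to $L_{-1},L_{-2},\dots$, and the annihilating $D$-operator closes up precisely at the first index $-s$ with $H_{-s}=0$, giving $n=s+1$. For the converse I would normalize $W$ by Proposition \ref{prop31} to its minimal order and to one linear in the top derivative, so that comparing leading coefficients in $\bar D\circ W_*=\mathcal N\circ L$ forces $H_{-(n-1)}=0$; and to produce an honest function $W$ from the operator I would use Theorem \ref{tgms} together with the exactness (self-adjointness) conditions that make a $D$-operator a Fréchet derivative.

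The $x$-integral equivalence is the mirror image: interchanging the roles of $D$ and $\bar D$ replaces the backward cascade by the forward one \eqref{hinv}, the $D$-operators by $\bar D$-operators built from substitutions $\phi\mapsto(\bar D+a_0)\phi$, and the intertwining becomes $D\circ\bar W_*=\bar{\mathcal N}\circ L$ with $\bar W_*=\sum_k\frac{\partial \bar W}{\partial \bar u_k}\bar D^k$; such an order-$m$ annihilator exists iff $H_m=0$, giving an $x$-integral of order $m$. Combining the two equivalences with Definition \ref{LioInt} yields the theorem: both a non-trivial $x$-integral and a non-trivial $y$-integral are present iff $H_r=H_{-s}=0$ for some $r\ge 1,\ s\ge 0$, i.e. iff \eqref{hyper} is of Liouville type; in that case \eqref{HHH} becomes the open $A$-Toda lattice, matching the remark after \eqref{INI}. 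The asymmetry $r\ge1,\ s\ge0$ is explained automatically, since a $y$-integral may already have order one (the case $H_0=0$), whereas the first nontrivial $x$-integral corresponds to $H_1=0$.

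The main obstacle I anticipate is the linear-algebraic core of the second paragraph: proving rigorously that an annihilating operator in one characteristic direction exists iff the cascade terminates in that direction, with the correct minimal order, and then converting the termination $H_{-s}=0$ into an honest function $W\in\mathcal F$ rather than an expression requiring a genuine integration that escapes $\mathcal F$. Running the cascade produces the coefficients $a_i,b_i$ and the successive invariants through \eqref{hinv}, \eqref{kinv}, but assembling these (in particular the $(\log H_i)_{xy}$ contributions) into an exact Fréchet derivative, and controlling the integration constants so that the resulting operator really is a gradient, is the delicate step. The degenerate situations, where some intermediate $H_i$ vanishes prematurely and the cascade is shorter than expected, must also be tracked so that the correspondence between minimal integral order and termination index is not broken.
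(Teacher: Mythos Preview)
The paper does not prove Theorem \ref{th61}; it states the result and cites \cite{zibsok1, andkam1}. So there is no in-paper argument to compare against. Your outline follows the route taken in those references: linearize, run the Laplace cascade on \eqref{linop}, and identify termination on each side with the existence of a characteristic integral via an intertwining relation $\bar D\circ W_*=\mathcal N\circ L$ (respectively $D\circ\bar W_*=\bar{\mathcal N}\circ L$). Your direction assignments are correct and match the paper's own later formulas: a minimal $y$-integral of order $n$ forces $H_{-(n-1)}=0$, a minimal $x$-integral of order $m$ forces $H_m=0$; compare the expression for $w_*$ in Remark \ref{rem63} and its footnote.

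The gap you flag yourself is the real one, and the tools you name do not close it. Going from an integral to termination is the easy direction: linearizing $\bar D W=0$ yields the intertwining identity, and tracking leading coefficients down the cascade forces the first vanishing $H_{-s}$. The converse---producing a genuine function $W$ once $H_{-s}=0$---requires showing that the factorized $D$-operator manufactured by the cascade is actually a Fr\'echet derivative. Theorem \ref{tgms} is irrelevant here (it characterizes ${\rm Im}\,D$, not the image of $W\mapsto W_*$), and ``self-adjointness'' is not the right condition either: the obstruction is the closedness condition $\partial p_j/\partial u_k=\partial p_k/\partial u_j$ on the coefficients $p_k$ of the would-be $W_*$. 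In the cited proofs this is handled either by an explicit induction along the cascade in which one checks at each step that the top coefficient integrates within the field of differential functions (the Zhiber--Sokolov argument, which also underlies Proposition \ref{prop31}(iii)), or by the variational bicomplex machinery (Anderson--Kamran). Until you supply one of these, your sketch is a correct plan but not yet a proof.
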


\begin{proposition}\label{prop62} For any Liouville type equation \eqref{hyper} there exist
functions $\psi(x,y,u,u_1,\dots,u_{p})$  and  $\,\, \bar\psi(x,y,u,\bar
u_1,\dots,\bar u_{\bar p})$ such that
$$
\frac{\partial F}{\partial u_1}=\bar D \, \log
\psi(x,y,u,u_1,\dots,u_{p}), \qquad \frac{\partial F}{\partial \bar
u_1}= D \, \log \bar \psi(x,y,u,\bar u_1,\dots,\bar u_{\bar p}).
$$
\end{proposition}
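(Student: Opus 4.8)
The plan is to build $\psi$ and $\bar\psi$ directly from the generalized integrals whose existence is guaranteed by the Liouville type hypothesis. By Theorem~\ref{th61} the equation \eqref{hyper} has a nontrivial $y$-integral $w$, and by Proposition~\ref{prop31}(i) it depends only on the $x$-jet, $w=w(x,y,u,u_1,\dots,u_n)$, where $n\ge 1$ is its order, so that $\partial w/\partial u_n\not\equiv 0$. I claim one may take $\psi=(\partial w/\partial u_n)^{-1}$; this is of the required form $\psi(x,y,u,u_1,\dots,u_p)$ with $p=n$ (indeed $p=n-1$ if one first invokes Proposition~\ref{prop31}(iii) to make $w$ affine in $u_n$). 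Symmetrically, $\bar\psi=(\partial\bar w/\partial\bar u_{\bar p})^{-1}$, built from a nontrivial $x$-integral $\bar w(x,y,u,\bar u_1,\dots,\bar u_{\bar p})$ with $D\bar w=0$, will furnish the second identity.

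The key computation is to differentiate the defining relation $\bar D\, w=0$ with respect to the top variable $u_n$. First I would record the leading-coefficient identity $\frac{\partial}{\partial u_n}D^{\,n-1}(F)=\frac{\partial F}{\partial u_1}$: this is immediate for $n=1$, and for $n\ge 2$ it follows by an easy induction, since $D^{\,k}(F)$ is affine in its top variable $u_{k+1}$ with coefficient $\partial F/\partial u_1$. Next, from the formula for $\bar D$ recalled at the start of this section, the only coefficients of $\bar D$ depending on $u_n$ (for $n\ge 1$) are the coefficients $D^{\,i-1}(F)$ of $\partial/\partial u_i$, so $[\,\partial/\partial u_n,\ \bar D\,]=\sum_{i\ge 1}\big(\partial_{u_n}D^{\,i-1}(F)\big)\,\partial/\partial u_i$. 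Applying this to $w$, which has order exactly $n$, every term with $i<n$ drops out (the coefficient $D^{\,i-1}(F)$ does not reach $u_n$) and every term with $i>n$ dies since $\partial w/\partial u_i=0$, leaving $[\,\partial/\partial u_n,\ \bar D\,]\,w=\frac{\partial F}{\partial u_1}\,\frac{\partial w}{\partial u_n}$.

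Consequently $0=\dfrac{\partial}{\partial u_n}(\bar D\, w)=\bar D\!\left(\dfrac{\partial w}{\partial u_n}\right)+\dfrac{\partial F}{\partial u_1}\,\dfrac{\partial w}{\partial u_n}$, and hence, on the open set where $\partial w/\partial u_n\ne 0$, $\frac{\partial F}{\partial u_1}=-\bar D\log\!\big(\partial w/\partial u_n\big)=\bar D\log\psi$ with $\psi=(\partial w/\partial u_n)^{-1}$, which is exactly the asserted formula. The same argument with the roles of $x$ and $y$ interchanged — using $\partial_{\bar u_{\bar p}}\bar D^{\,\bar p-1}(F)=\partial F/\partial\bar u_1$ and $D\bar w=0$ — yields $\frac{\partial F}{\partial\bar u_1}=D\log\bar\psi$ with $\bar\psi=(\partial\bar w/\partial\bar u_{\bar p})^{-1}$. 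The substantive input, namely the existence of the two integrals, is already supplied by Theorem~\ref{th61}; the only points that require care (rather than constituting a real obstacle) are the verification of the leading-coefficient identity, which makes the commutator collapse to a single term, and the locality caveat that $\psi$, $\bar\psi$ are defined where the respective top coefficients are nonzero.
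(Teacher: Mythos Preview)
Your argument is correct. The paper does not give a separate proof of this proposition, but the key identity you derive, namely $(\bar D + F_{u_1})\,\partial w/\partial u_n = 0$, already appears verbatim in the paper's proof of Proposition~\ref{prop31}(iii); your construction $\psi=(\partial w/\partial u_n)^{-1}$ is the natural way to read off the statement from that identity, so your approach coincides with what the paper implicitly relies on.
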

\begin{remark} \label{zib} Another statement of this kind is the following:
$$
\frac{\partial F}{\partial u_1}
\frac{\partial F}{\partial \bar u_1}+\frac{\partial F}{\partial
u} = \bar D \, 
\phi(x,y,u,u_1,\dots,u_{s}) =  D \,   \bar \phi(x,y,u,\bar u_1,\dots,\bar u_{\bar s})
$$ 
for some functions $\phi$ and $\bar \phi$.
\end{remark}

\begin{theorem}\label{th62} For any Liouville type equation \eqref{hyper} the
evolution equation
\begin{equation}\label{hypsym}
u_{\tau}={\cal M}\,\left[ Q\left( x, w, D (w), \cdots, D ^k(w)\right) \right], \qquad k\ge 0,
\end{equation}
where
\begin{equation}\label{MM}
{\cal M}=\bar\psi \displaystyle \frac{1}{H_1}D \circ\frac{1}{H_2}\cdots
D \circ \frac{1}{H_{r-1}}D \circ \frac{\psi H_1\cdots H_{r-1}}{\bar\psi},
\end{equation}
$w$ is the minimal $y$-integral and $Q$ is an arbitrary function, is an infinitesimal symmetry.
\end{theorem}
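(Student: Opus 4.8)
The plan is to verify directly that $G := {\cal M}[P]$, with $P = Q(x,w,D(w),\dots,D^k(w))$, lies in the kernel of the linearization operator $L$ of \eqref{linop}; indeed an evolution equation $u_\tau = G$ is compatible with \eqref{hyper} exactly when $L(G)=0$. The starting observation is that $L$ is a hyperbolic operator of the form \eqref{hypop} with $\partial_x,\partial_y$ replaced by the commuting total derivatives $D,\bar D$, and coefficients $a_0 = -\partial F/\partial u_1$, $b_0 = -\partial F/\partial\bar u_1$, $c_0 = -\partial F/\partial u$; a short computation identifies its main Laplace invariants with the $H_1,H_0$ of the chapter. Using Proposition \ref{prop62} I would first rewrite the two factors of $L$ as $B := D + b_0 = \bar\psi\,D\circ\bar\psi^{-1}$ and $A_0 := \bar D + a_0 = \psi\,\bar D\circ\psi^{-1}$, so that $L = B A_0 - H_1$.

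Next I would build the nonlinear Laplace cascade: set $A_i := \bar D + a_i$ with $a_i := a_{i-1} - \bar D\log H_i$, keep $B$ fixed, and put $L_i := B A_i - H_{i+1}$. The key algebraic input is that, because $[D,\bar D]=0$ and the invariants obey the Toda recursion $D\bar D\log H_i = 2H_i - H_{i+1} - H_{i-1}$, one gets the commutator identity $[B,A_i] = H_i - H_{i+1}$, whence the second factorization $L_i = A_i B - H_i$ also holds. Combined with $A_i = H_i\,A_{i-1}\,H_i^{-1}$, this yields the operator identity $A_{i-1}\,H_i^{-1}B = 1 + H_i^{-1}L_i$, and therefore the descent intertwiner $L_{i-1}\,H_i^{-1}B = B\,H_i^{-1}L_i$. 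This last relation says precisely that the inverse Laplace map $V \mapsto H_i^{-1}B\,V$ carries $\ker L_i$ into $\ker L_{i-1}$.

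With these tools in hand the argument is a telescoping descent. Writing $c := \psi H_1\cdots H_{r-1}$, the telescoped formula for $a_{r-1}$ together with Proposition \ref{prop62} gives $A_{r-1} = c\,\bar D\circ c^{-1}$; since $H_r \equiv 0$ forces $L_{r-1} = B A_{r-1}$, and since $P$ is a $y$-integral ($\bar D P = 0$ by \eqref{genint}), the seed $V^{(r-1)} := cP$ satisfies $A_{r-1}(cP) = c\,\bar D(P) = 0$, hence $V^{(r-1)} \in \ker L_{r-1}$. Applying the descent intertwiner $r-1$ times produces $V^{(0)} := H_1^{-1}B\,H_2^{-1}B\cdots H_{r-1}^{-1}B\,(cP) \in \ker L_0 = \ker L$. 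Finally, substituting $B = \bar\psi\,D\circ\bar\psi^{-1}$ and cancelling the adjacent $\bar\psi^{-1}\!\cdot\bar\psi$ pairs around each $H_j^{-1}$ (all coefficients being multiplication operators) collapses $V^{(0)}$ exactly into the expression ${\cal M}[P]$ of \eqref{MM}; thus $L({\cal M}[P]) = 0$.

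I expect the main obstacle to be the careful bookkeeping of the nonlinear cascade rather than any single deep step: one must check that the classical formulas of Section \ref{laplin} survive verbatim when $\partial_x,\partial_y$ are replaced by $D,\bar D$ on the enlarged differential field, and in particular that $B$ is genuinely $i$-independent, so that every factor in the descent is the same operator $\bar\psi\,D\,\bar\psi^{-1}$ (this is what makes the telescoping cancellation against $c$ and $\bar\psi$ work). The one place where the Liouville-type hypothesis is essential is the seed step, where $H_r \equiv 0$ degenerates $L_{r-1}$ into the pure product $B A_{r-1}$ and lets the $y$-integral $P$ annihilate it; everything else is the general Laplace machinery driven by the Toda recursion \eqref{HHH}.
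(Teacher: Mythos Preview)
Your argument is correct and complete. The paper itself does not supply a proof of this theorem---it merely states the result and refers to the survey \cite{zibsok} (and to \cite{SokSt} for the multi-component generalization)---so there is no in-text proof to compare against. Your approach via the nonlinear Laplace cascade is precisely the one underlying those references: factor $L=L_0$ using Proposition~\ref{prop62}, propagate the cascade $L_{i-1}\circ(H_i^{-1}B)=B\circ H_i^{-1}L_i$, and seed at level $r-1$ where $H_r=0$ degenerates $L_{r-1}$ to $BA_{r-1}$, so that any $y$-integral multiplied by $c=\psi H_1\cdots H_{r-1}$ is annihilated.

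One cosmetic correction: with your conventions $[B,A_i]=BA_i-A_iB=D(a_i)-\bar D(b_0)=H_{i+1}-H_i$, not $H_i-H_{i+1}$; this sign is irrelevant since you use only the (correct) consequence $L_i=A_iB-H_i$. Everything else---the constancy of $B$, the telescoping $a_{r-1}=-\bar D\log c$ giving $A_{r-1}=c\,\bar D\,c^{-1}$, and the cancellation of the adjacent $\bar\psi^{-1}\bar\psi$ factors to recover \eqref{MM}---is exactly right.
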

\begin{remark}
For a generic function $Q$ the evolution equation \eqref{hypsym} is not integrable. 
\end{remark}

For a generalization of Theorem \ref{th62} to the case of the Darboux integrable multi-component systems see \cite{SokSt}.

\begin{theorem}\label{th63} For any Liouville type equation all coefficients
of the differential operator \begin{equation}\label{LLLop} {\cal L}=\frac{\bar\psi}\psi
H_{0}H_{-1}\cdots H_{1-s}D\circ \frac 1{H_{1-s}}\circ D \cdots
\frac 1{H_0}\circ
D\circ \frac 1{H_1}\cdots D \circ \frac 1{H_{r-1}}D\circ \frac{\psi H_1\cdots
H_{r-1}}{\bar\psi}
\end{equation}
are $y$-integrals.
\end{theorem}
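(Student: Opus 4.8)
The plan is to factor $\cal L$ through the Laplace ladder of the linearization operator \eqref{linop} and reduce everything to Theorem \ref{th62}. Write $L_0 = L$ for the linearization, with coefficients $a_0 = -\partial F/\partial u_1$ and $b_0 = -\partial F/\partial\bar u_1$, and let $L_i = D\bar D + a_i D + b_i\bar D + c_i$ be the operators of the associated chain, so that (Section \ref{laplin}, with $\partial_x,\partial_y$ replaced by $D,\bar D$) one has $L_i = (D + b_i)(\bar D + a_i) - H_{i+1} = (\bar D + a_i)(D + b_i) - H_i$, with $a_{-i} = a_0$ and $b_{-i} = b_{-(i-1)} - D\log H_{1-i}$ for $i\ge 1$. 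By Proposition \ref{prop62}, $a_0 = -\bar D\log\psi$ and $b_0 = -D\log\bar\psi$, whence $b_{-i} = -D\log(\bar\psi H_0 H_{-1}\cdots H_{1-i})$.

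First I would split $\cal L = {\cal N}\circ{\cal M}$, where $\cal M$ is the operator \eqref{MM} and
\[
{\cal N} = \frac{\bar\psi}{\psi}\,H_0 H_{-1}\cdots H_{1-s}\;D\circ\frac{1}{H_{1-s}}\circ D\circ\cdots\circ\frac{1}{H_0}\circ D\circ\frac{1}{\bar\psi};
\]
this is the algebraic observation that the trailing block of $\cal L$ is exactly $\bar\psi^{-1}{\cal M}$. Since each downward factor is a logarithmic conjugate of $D$, namely $(D + b_{-i}) = (\bar\psi H_0\cdots H_{1-i})\circ D\circ(\bar\psi H_0\cdots H_{1-i})^{-1}$ for $i\ge 1$ (and $(D + b_0) = \bar\psi\circ D\circ\bar\psi^{-1}$), composing them telescopes the gauge factors into the reciprocals $1/H_j$ and yields
\[
{\cal N} = \frac{1}{\psi}\,(D + b_{-s})(D + b_{-s+1})\cdots(D + b_{-1})(D + b_0).
\]

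Next I would read off the action. By Theorem \ref{th62}, for any $y$-integral $Q$ the function ${\cal M}(Q)$ is an infinitesimal symmetry, hence lies in $\ker L_0$. The factors $(D + b_0),(D + b_{-1}),\dots,(D + b_{-(s-1)})$ are precisely the downward Laplace transformations, carrying $\ker L_0$ successively into $\ker L_{-1},\dots,\ker L_{-s}$; call the image $\Phi\in\ker L_{-s}$. Because the equation is of Liouville type, $H_{-s}\equiv 0$ (Definition \ref{LioInt}), so $L_{-s} = (\bar D + a_{-s})(D + b_{-s})$ and therefore $(D + b_{-s})\Phi\in\ker(\bar D + a_{-s})$. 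As $a_{-s} = a_0 = -\bar D\log\psi$, this says $\bar D\bigl(\tfrac{1}{\psi}(D + b_{-s})\Phi\bigr) = 0$, i.e. ${\cal L}(Q) = {\cal N}({\cal M}(Q))$ is a $y$-integral. The opposite termination $H_r\equiv 0$ is what places the image of $\cal M$ in $\ker L_0$ in the first place, through $L_{r-1} = (D + b_{r-1})(\bar D + a_{r-1})$.

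Finally I would promote ``$\cal L$ maps $y$-integrals to $y$-integrals'' to ``the coefficients of $\cal L$ are $y$-integrals.'' Writing ${\cal L} = \sum_{j=0}^{m}\ell_j D^j$ and using $[\bar D, D] = 0$ together with $\bar D W = 0$, every $y$-integral $W$ gives $0 = \bar D({\cal L} W) = \sum_j\bar D(\ell_j)\,D^j W$. Feeding in $W = \chi(w)$ for the minimal $y$-integral $w$ of Proposition \ref{prop31} and an arbitrary one-variable function $\chi$, and using that $D^j(\chi(w))$ has leading term $\chi^{(j)}(w)(Dw)^j$ with $Dw\neq 0$, a downward induction on $j$ (comparing the coefficients of $\chi^{(m)},\chi^{(m-1)},\dots$) forces $\bar D(\ell_j) = 0$ for all $j$; by Proposition \ref{prop31}(i) each $\ell_j$ is then a $y$-integral. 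I expect the only real difficulty to be the bookkeeping in the telescoping for $\cal N$ — keeping straight the asymmetry of the chain ($a_i$ shifting by $\bar D\log H_i$ on the positive side, $b_{-i}$ by $D\log H_{1-i}$ on the negative side) and the matching of the gauge factors $\psi,\bar\psi$ at the two ends; the conceptual weight rests entirely on the two terminations $H_r = H_{-s} = 0$, which respectively land $\cal M$ in $\ker L_0$ and let the last downward factor extract a $\bar D$-flat function.
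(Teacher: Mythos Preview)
The paper states Theorem~\ref{th63} without proof (it is drawn from the survey \cite{zibsok}), so there is no argument here to compare against directly. Your proof is correct and is the natural one: the splitting ${\cal L}={\cal N}\circ{\cal M}$ together with the telescoping of ${\cal N}$ into $\frac{1}{\psi}(D+b_{-s})\cdots(D+b_0)$ is exactly the downward Laplace ladder applied to the linearization, and the two terminations $H_r=0$, $H_{-s}=0$ are invoked precisely where they must be --- the first to land ${\cal M}(Q)$ in $\ker L_0$ via Theorem~\ref{th62}, the second to make the last factor $(D+b_{-s})$ produce an element of $\ker(\bar D+a_0)$. Your final step, extracting $\bar D(\ell_j)=0$ from $\bar D\bigl({\cal L}(\chi(w))\bigr)=0$ for arbitrary $\chi$ via the lower-triangular Fa\`a di Bruno structure of $D^j(\chi(w))$ with nonvanishing diagonal $(Dw)^j$, is clean and complete.

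The paper does hint at a parallel route in Proposition~\ref{prop63} and the remark following it: the operator $w_*\circ{\cal M}$ manifestly has $y$-integral coefficients (since $w_*$ carries any symmetry to $D_\tau(w)$, which is a $y$-integral), and one would then want to identify $w_*\circ{\cal M}$ with ${\cal L}$. But as Remark~\ref{rem63} itself notes, that identification is only verified for known examples and requires an extra hypothesis on the order of the minimal integral. Your direct Laplace-chain argument bypasses this issue entirely and proves the theorem as stated.
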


\begin{example}\label{ex63} For the Liouville equation
\eqref{liou} we have
  $$
  {\cal L}= \exp(u) D \circ \exp(-u) D \circ \exp(-u) D \circ \exp(u)=D ^3+2 w D +w_x,
$$ where $\,\, w=u_{xx}-\frac{1}{2} u_x^2.
  $
  The operator ${\cal M}$ is given by
  $${\cal M}=\exp(-u) D \circ \exp(u)=D+u_x.
  $$
  If $Q(x,w,\dots)=w,$ then the corresponding symmetry is the integrable evolution equation
$$
  u_{\tau}=u_{xxx}-\frac{1}{2} u_x^3.
$$
\end{example}

An attack to the problem of  a complete classification of the Darboux integrable equations has been made in \cite{zibsok}. The proof of the classification statement consisted of more than 150 pages and has been not published. However, O. Kaptsov pointed out to an omission in the classification.  In Appendix 2  Darboux integrable equations known to the author are collected.

\section{Integrable multi-component systems of Liouville type}

Consider multi-component systems of the form
\begin{equation}\label{hhh}
\vec {\bf u}_{xy}=\vec {\bf F}(x,y,\vec {\bf u},\vec {\bf u}_{x},\vec {\bf
u}_{y})\, \qquad \vec{\bf u}=(u^1,\dots,u^N). 
\end{equation}
 For the system \eqref{hhh} the coefficients of the linearization operator \eqref{linop} 
are $N\times N$-matrices.

Evidently, most part of the definitions, constructions and statements about
the Liouville type equations presented above can be generalized to the case
of systems \eqref{hhh}. However a serious problem arises in
the definition of the Laplace invariants.

The linearization operator \eqref{linop} becomes an operator of the form
\begin{equation}\label{LMat}
L = D \bar D+a D+b \bar D+c
\end{equation}
with matrix coefficients. A straightforward
generalization of all definitions to the matrix case looks as
follows. The main Laplace invariants are defined by the formulas
$$
H_1= D(a)+b\,a-c, \qquad H_0=\bar D(b)+a\, b-c.
$$
Now they are $N\times N$-matrices.
The matrices $H_i$ for $i>1$ are recurrently determined from the following
system of equations
\begin{equation}\label{matlap1}
\bar D H_i-H_i\, a_{i-1}+a_{i}\,H_i=0,
\end{equation}
\begin{equation}\label{matlap2}
H_{i+1}=2 H_i+D(a_{i}-a_{i-1})+[b, \, a_{i}-a_{i-1}]-H_{i-1},
\end{equation}
where $a_0=a$. Obviously, in the scalar case these formulas coincide with
the corresponding formulas from \eqref{hinv}.

Suppose that the matrices $H_i$ for $i\le k$ and $a_i$ for $i\le k-1$ are
already given. Then we derive $a_{k}$ from \eqref{matlap1} and after that
find $H_{k+1}$ from \eqref{matlap2}. However if $\det H_k = 0$, then
$a_{k}$ does not exist at all or it is defined not uniquely but
up a matrix $\alpha$ such that $\alpha \, H_k=0$.
In the last case, the existence and properties of next Laplace invariants could
essentially depend on the choice of $\alpha$.

The degeneration  $\det H_k = 0$ for some $k$ is typical for
the open Toda lattices.

\begin{example} Consider the $A_2$-Toda lattice:
$$
u_{xy}=-2 \exp u+ \exp v, \qquad v_{xy}=\exp u-2 \exp v.
$$
The linearization operator has the form
$$
\label{defns}
D \bar D + \begin{pmatrix} 2 \exp u & -\exp v \cr -\exp u & 2 \exp v\cr\end{pmatrix}.
$$
In this case
$$
 h= \begin{pmatrix}-2 \exp u & \exp v \cr \exp u & -2 \exp v \end{pmatrix}
$$
is a non-degenerate matrix. Using \eqref{matlap1}, we get
$$
a_1=\frac 13 \, \begin{pmatrix}-4 u_y+v_y& -2 u_y+2 v_y \cr 2 u_y-2 v_y
& u_y-4 v_y\cr\end{pmatrix},
$$
and
$$
h_1=
\begin{pmatrix}\exp u-2 \exp v &2 \exp u -\exp v \cr -\exp u+2 \exp v &-
2 \exp u+\exp v\cr\end{pmatrix}.
$$
We see that $\det h_1=0$. 
\end{example}

It is not difficult to prove (\cite{star}) the following general
statement (compare with Theorem \ref{th61}):
\begin{theorem}
Suppose that \eqref{hhh} has non-degenerate $y$ and $x$-integrals
$$
W(x,y, \vec u, \vec u_x,\dots, \vec u_p), \qquad
\bar W(x,y, \vec u,  \vec{u}_y,\dots, \vec{u}_{\bar p});
$$
then  $\det H_r=\det H_{-s} = 0$ for some
$r \le \bar p$ and $s \le p -1.$
\end{theorem}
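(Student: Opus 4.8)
The plan is to reduce the statement to a linearized (operator) assertion and then run the matrix Laplace cascade \eqref{matlap1}--\eqref{matlap2} as an order-lowering procedure, exactly in the spirit of the scalar Theorem \ref{th61}. First I would linearize the two integral conditions. Writing $W_*=\sum_{j=0}^{p}\frac{\partial W}{\partial \vec u_j}\,D^{j}$ and $\bar W_*=\sum_{j=0}^{\bar p}\frac{\partial \bar W}{\partial \bar u_j}\,\bar D^{j}$ for the Fr\'echet derivatives (matrix differential operators which, by Proposition \ref{prop31}, involve only $x$-jets, resp.\ $y$-jets), the non-degeneracy hypothesis is read as invertibility of the leading coefficients $\partial W/\partial\vec u_p$ and $\partial\bar W/\partial\bar u_{\bar p}$. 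Differentiating the identities $\bar D(W)=0$ and $D(\bar W)=0$, and using $[D,\bar D]=0$ together with $Lv=0$ on variations $v$, I would obtain that $W_*$ maps $\ker L$ into $\ker\bar D$ and $\bar W_*$ maps $\ker L$ into $\ker D$; equivalently there are operators $\mathcal B,\bar{\mathcal B}$ with $\bar D\circ W_*=\mathcal B\circ L$ and $D\circ\bar W_*=\bar{\mathcal B}\circ L$. This converts ``non-degenerate integral of order $\bar p$'' into ``an intertwining operator of order $\bar p$ carrying solutions of $L$ to $D$-constants.''

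Next I would exploit the factorizations $L=(D+b)(\bar D+a)-H_1=(\bar D+a)(D+b)-H_0$ to perform one Laplace step. When $\det H_1\neq0$ the forward transform $v\mapsto v_{1}=(\bar D+a)v$ is an isomorphism of the (local) solution space of $L=L_0$ onto that of $L_1$, its inverse $v=H_1^{-1}(D+b)v_1$ existing precisely because $H_1$ is invertible. Transporting $\bar W_*$ through this isomorphism produces an intertwining operator $\bar W_*^{(1)}$ for $L_1$ whose $\bar D$-order drops by exactly one and whose leading coefficient remains invertible. Iterating, as long as $\det H_1,\dots,\det H_m\neq0$ I obtain after $m$ steps an intertwiner of order $\bar p-m$ for $L_m$. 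Taking $m=\bar p$ would yield an order-zero intertwiner, i.e.\ an invertible matrix $\omega$ with $D\omega=0$; tracing back through the (invertible) cascade this forces $\bar W$ to depend on $y$ alone, contradicting non-degeneracy. Hence some $\det H_r=0$ with $1\le r\le\bar p$. The dual argument, using instead $L=(\bar D+a)(D+b)-H_0$ and the backward invariants $H_0,H_{-1},\dots$ to lower the order of the $y$-integral $W$, forces $\det H_{-s}=0$ for some $0\le s\le p-1$. Combining the two gives the theorem.

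The hard part will be the single Laplace step in the matrix setting: showing that the order of the intertwining operator decreases by exactly one and that its leading coefficient stays invertible, with the sole obstruction being $\det H_i=0$. In the scalar case this is the classical cascade bookkeeping, but here \eqref{matlap1} determines $a_i$ only through $a_i=H_i a_{i-1}H_i^{-1}-(\bar D H_i)H_i^{-1}$, which requires $H_i$ to be invertible; when $H_i$ is merely singular (possibly nonzero) the transform breaks down, and it is exactly this breakdown that I must identify with the conclusion $\det H_r=0$ rather than with the stronger $H_r=0$ of the scalar theory. Care is also needed because the matrix invariants do not commute, so the order count has to be carried out at the level of operators acting on the solution module of $L$ rather than by naive degree counting; controlling the leading coefficient through each step is what guarantees that a non-degenerate integral of order $\bar p$ cannot survive more than $\bar p$ nonsingular cascade steps.
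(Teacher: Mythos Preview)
The paper does not actually prove this theorem: it merely introduces the statement with ``It is not difficult to prove (\cite{star}) the following general statement (compare with Theorem~\ref{th61}):'' and refers the reader to Startsev's thesis. So there is no in-paper proof to compare against. That said, your outline is exactly the argument one expects from the cited source and from the scalar prototype Theorem~\ref{th61}: linearize the integral conditions to obtain matrix intertwining operators $W_*$, $\bar W_*$ of orders $p$, $\bar p$, then push them through the Laplace cascade \eqref{matlap1}--\eqref{matlap2}, each invertible $H_i$ allowing a one-step reduction of the differential order, so that survival through $\bar p$ forward steps (resp.\ $p$ backward steps beginning at $H_0$) would force a trivial integral, a contradiction. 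Your bounds $1\le r\le\bar p$ and $0\le s\le p-1$ match the asymmetry in the indexing of the forward and backward chains.

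Two small points are worth tightening. First, when you ``transport $\bar W_*$ through the isomorphism'' $v\mapsto v_1=(\bar D+a)v$, the composition $\bar W_*\circ H_1^{-1}(D+b)$ a priori mixes $D$- and $\bar D$-derivatives; the clean way to see the order drop is to reduce $\bar W_*$ modulo the relation $(\bar D+a)v=v_1$ on solutions and then use $L_1v_1=0$ to kill residual $D$-terms, leaving a genuine $\bar D$-operator of order $\bar p-1$ with invertible top coefficient. Second, in the matrix case the conclusion really is only $\det H_r=0$ (not $H_r=0$), and once that happens the cascade \eqref{matlap1} is no longer well defined in the naive sense---you correctly flag this, and it is precisely why the paper next introduces the products $Z_k=H_k\cdots H_1$ and the kernel conditions \eqref{main1}--\eqref{main2} to continue past a singular invariant.
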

According to the theorem, for systems of Toda type some of Laplace invariants must be
degene\-rate and we are confronted with the question of how to well define
a chain of Laplace invariants.

Let us set
$$
Z_k=H_k \, H_{k-1} \cdots H_1.
$$
It follows from \eqref{matlap1} that
\begin{equation}\label{ZZ1}
Z_k \, (\bar D +a)=(\bar D +a_k) \, Z_k.
\end{equation}
If the matrices $H_i$ (and hence $Z_k$) for $i\le k$
and $a_i$ for $i\le k-1$ are already given, then we define $a_{k}$ from
\eqref{ZZ1} and after that find $H_{k+1}$
\begin{equation}\label{ZZ2}
H_{k+1}= (D+b) \, a_k-(\bar D +a_k)\, b+H_k.
\end{equation}
The latter relations are equivalent to \eqref{matlap2}. The matrix $a_k$
is defined up to arbitrary matrix $\alpha$ such that $\alpha Z_k = 0.$

\begin{theorem} {\rm \cite{zibsok,ZibSt}}
Suppose that $H_i$ for $i\le k$ are already known and for $i<k$ the
following conditions
\begin{equation}\label{main1}
(\bar D+a) \, \Big( {\rm ker} Z_i\Big) \subset {\rm ker} Z_i
\end{equation}
and
\begin{equation}\label{main2}
 (D-b^t) \, \Big({\rm ker}\, Z_i^t\Big) \subset {\rm ker}\, Z_i^t
\end{equation}
are fulfilled. Then $a_k$ exists iff condition \eqref{main1} is
fulfilled for $i=k$. Further, $Z_{k+1}$ does not depend on the arbitrary
matrix $\alpha$, which appears in the general formula for $a_k$,
iff condition \eqref{main2} holds for $i=k$.
\end{theorem}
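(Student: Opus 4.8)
The plan is to convert the operator identity \eqref{ZZ1} into a purely algebraic matrix equation for $a_k$ and to treat the two assertions separately. Expanding $Z_k(\bar D+a)v=(\bar D+a_k)Z_k v$ on a test vector $v$ and cancelling the common term $Z_k\bar D v$ gives $Z_k a=\bar D(Z_k)+a_k Z_k$, i.e. the linear equation
$$ a_k\,Z_k=Z_k\,a-\bar D(Z_k)=:R. $$
For a fixed matrix $R$ this is solvable for $a_k$ precisely when the rows of $R$ lie in the row space of $Z_k$, which is the inclusion $\,{\rm ker}\,Z_k\subset{\rm ker}\,R\,$ for the right kernels, taken pointwise. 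So the existence part reduces to showing that this inclusion is equivalent to \eqref{main1} at level $k$.

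The bridge is the Leibniz identity $\bar D(Z_k v)=\bar D(Z_k)v+Z_k\bar D v$. For any $v$ with $Z_k v=0$ it yields $R v=Z_k(\bar D+a)v$, so $R v=0$ is the same as $(\bar D+a)v\in{\rm ker}\,Z_k$. Hence, working with local sections of ${\rm ker}\,Z_k$, the solvability condition ${\rm ker}\,Z_k\subset{\rm ker}\,R$ is exactly the invariance \eqref{main1}. First I would observe that the hypotheses \eqref{main1}, \eqref{main2} for $i<k$ guarantee that the $a_i$ with $i<k$ exist and that $Z_k$ is unambiguously defined, so that $R$ is a well-defined matrix; the equivalence above then settles the existence claim.

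For the uniqueness claim I would track the dependence of $Z_{k+1}=H_{k+1}Z_k$ on the gauge freedom $a_k\to a_k+\alpha$ with $\alpha Z_k=0$. From \eqref{ZZ2} one has $H_{k+1}=D(a_k)+b\,a_k-\bar D(b)-a_k\,b+H_k$, so the variation is $\delta H_{k+1}=D(\alpha)+[b,\alpha]$ and therefore $\delta Z_{k+1}=(D(\alpha)+[b,\alpha])Z_k$, since $Z_k$ is already fixed at the previous levels. Using $\alpha Z_k=0$, hence $D(\alpha)Z_k=-\alpha D(Z_k)$ and $b\,\alpha Z_k=0$, this collapses to $\delta Z_{k+1}=-\alpha\,(D+b)(Z_k)$. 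Independence of $Z_{k+1}$ from $\alpha$ thus means $\alpha\bigl(D(Z_k)+bZ_k\bigr)=0$ for every $\alpha$ annihilating $Z_k$ on the right.

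Transposing turns this into a statement about $w\in{\rm ker}\,Z_k^t$: since $\alpha Z_k=0\Leftrightarrow Z_k^t\alpha^t=0$, the columns $w$ of $\alpha^t$ range over ${\rm ker}\,Z_k^t$, and the requirement becomes $(D(Z_k^t)+Z_k^t b^t)w=0$. Applying Leibniz once more, $D(Z_k^t)w=-Z_k^t D(w)$ for $w\in{\rm ker}\,Z_k^t$, so the condition reads $Z_k^t(D-b^t)w=0$, i.e. $(D-b^t)w\in{\rm ker}\,Z_k^t$, which is exactly \eqref{main2} at level $k$. I expect the main obstacle to be the passage between the pointwise matrix-solvability criterion and the differential invariance conditions \eqref{main1}, \eqref{main2}: this step is clean only when ${\rm ker}\,Z_k$ has locally constant rank, so that every pointwise null vector extends to a local section of the kernel. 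Under the genericity assumptions in force this holds, but it is the point that must be handled with care.
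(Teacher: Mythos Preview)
Your argument is correct and follows the same route as the paper, which gives only a three-sentence proof: necessity of \eqref{main1} from \eqref{ZZ1}, sufficiency by the Kronecker--Capelli theorem, and the second claim from formula \eqref{matlap2} (equivalently \eqref{ZZ2}). You have simply unpacked these steps---rewriting \eqref{ZZ1} as the matrix equation $a_kZ_k=Z_k a-\bar D(Z_k)$, identifying the Kronecker--Capelli solvability criterion with the invariance \eqref{main1} via the Leibniz rule, and tracking the $\alpha$-variation of $Z_{k+1}$ through \eqref{ZZ2} to arrive at \eqref{main2} after transposition. Your remark about the constant-rank hypothesis needed to pass between the pointwise and section-wise kernel conditions is a genuine technical point that the paper leaves implicit.
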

\begin{proof}
It follows from \eqref{ZZ1} that \eqref{main1} with $i=k$ is a necessary
condition for the existence of the $a_k$. The sufficiency follows from the
Kronecker-Capelli theorem. The formula \eqref{matlap2} implies the condition
\eqref{main2}.
\end{proof}

\begin{remark} In the case $a=b=0$ in \eqref{LMat}, conditions \eqref{main1}, \eqref{main2}
are fulfilled iff {\rm (see {\rm \cite{ZibSt}})} the vector spaces ${\rm ker}\, Z_k$ and ${\rm ker}\, Z_k^t$
admit bases,  consisting of vectors from $\ker
\bar{D}$ and $\ker D$, respectively.
\end{remark}

\begin{example} For the open $A_3$-Toda lattice
$$
\begin{cases}
(u_1)_{xy}=2 \exp{u_1}-\exp{u_2}, \\[3mm]
(u_2)_{xy}=-\exp{u_1}+ 2 \exp{u_2}-\exp{u_3}, \\[3mm]
(u_3)_{xy}=-\exp{u_2}+2 \exp{u_3},
\end{cases}
$$
all matrices $Z_k$ are uniquely defined and ${\rm rank}\, Z_k=4-k$.
In particular, $Z_4=0.$ The vector ${\bf e}_1=(1,1,1)^t$ forms a basis
of ${\rm ker}\, Z_2$. A basis of ${\rm ker}\, Z_3$ can be chosen as follows:
${\bf e}_1$ and ${\bf e}_2=(1,0,-1)^t$. For bases of ${\rm ker}\, Z_2^t$ and ${\rm ker} Z_3^t$
can be taken ${\bf f}_1=(3,4,3)^t$ and ${\bf f}_1$, ${\bf f}_2=(1,0,-1)^t$. Thus for
this Toda lattice, the vector spaces ${\rm ker} Z_k$ and ${\rm ker}\, Z_k^t$
admit constant bases and therefore conditions \eqref{main1}, \eqref{main2}
hold.
\end{example}

\begin{example} For the $C_3$-Toda lattice
$$
\begin{cases}
(u_1)_{xy}=2 \exp{u_1}-\exp{u_2}, \\[3mm]
(u_2)_{xy}=-\exp{u_1}+ 2 \exp{u_2}-\exp{u_3}, \\[3mm]
(u_3)_{xy}=-2 \exp{u_2}+2 \exp{u_3},
\end{cases}
$$
the matrices $Z_k$ are uniquely defined and ${\rm rank}\, Z_1=3,$ \,
${\rm rank}\, Z_2=2,$\, ${\rm rank}\, Z_3=2,$\, ${\rm rank}\, Z_4=1,$ \,
${\rm rank}\, Z_5=1,$ and $Z_6=0.$ All vector spaces ${\rm ker}\, Z_k$
and ${\rm ker} Z_k^t$ admit constant bases.
\end{example}
\begin{example} In the case of $D_3$-Toda lattice
$$
\begin{cases}
(u_1)_{xy}=2 \exp{u_1}-\exp{u_2}-\exp{u_3}, \\[3mm]
(u_2)_{xy}=-\exp{u_1}+ 2 \exp{u_2}, \\[3mm]
(u_3)_{xy}=- \exp{u_1}+2 \exp{u_3},
\end{cases}
$$
we have ${\rm rank}\, Z_k=4-k.$ All vector spaces ${\rm ker} Z_k$
and ${\rm ker} Z_k^t$ admit constant bases.
\end{example}

\begin{conjecture} {\em \cite{zibsok}} For any open Toda lattice \eqref{toda}, the indices
$i$ such that the rank of $Z_i$ is decreased, coincide with the exponents
of the corresponding simple Lie algebra and the index $h$ such that $Z_h=0$
is equal to the Coxeter number.
\end{conjecture}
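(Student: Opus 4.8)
The plan is to reduce the differential-algebraic Laplace cascade \eqref{ZZ1}--\eqref{ZZ2} to a constant computation inside the simple Lie algebra $\mathfrak{g}$ whose Cartan matrix enters \eqref{toda}. For the Toda lattice the linearization \eqref{LMat} has no first-order terms, so $a=b=0$ and the main invariant is $H_1=H_0=K\,E$, where $K=\{A_i^j\}$ is the Cartan matrix and $E=\mathrm{diag}(e^{u_1},\dots,e^{u_N})$. First I would prove that the cascade is \emph{rigid}: conjugating the $H_i$ and $Z_i$ by a suitable diagonal matrix of exponentials removes every $e^{u_j}$ factor, so that $\mathrm{rank}\,Z_k$ equals the rank of a constant integer matrix $\widetilde Z_k$ built only from $K$ and the root-height combinatorics of $\mathfrak{g}$. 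In particular the ranks are independent of $x$, $y$ and of the chosen point $u$, and the only mechanism that can lower them is the Lie-algebraic structure of $\mathfrak{g}$.

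Next I would bring in the principal $\Z$-grading $\mathfrak{g}=\bigoplus_k\mathfrak{g}_k$, where $\mathfrak{g}_k$ is spanned by the root vectors of height $k$ and $\mathfrak{g}_0=\mathfrak{h}$, together with the principal nilpotent $e=\sum_i e_i$ and the principal triple $(e,2\rho^\vee,f)$. The core of the argument is to show, by induction on $k$ using \eqref{ZZ1}--\eqref{ZZ2}, that the constant skeleton $\widetilde Z_k$ reproduces the action of $\mathrm{ad}(e)$ one height at a time, and consequently that
\[
\mathrm{rank}\,Z_k=\dim\mathfrak{g}_k=\#\{\text{positive roots of height }k\}.
\]
This is exactly what the worked examples display: for $A_3$ the heights carry $3,2,1$ roots and $\mathrm{rank}\,Z_k=3,2,1$, while for $C_3$ they carry $3,2,2,1,1$ roots and $\mathrm{rank}\,Z_k=3,2,2,1,1$. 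Within this induction the degeneration $\det H_k=0$ occurs precisely when $\dim\mathfrak{g}_{k}<\dim\mathfrak{g}_{k-1}$, and the compatibility conditions \eqref{main1}--\eqref{main2} that make $Z_{k+1}$ independent of the free matrix $\alpha$ hold because $\ker Z_k$ and $\ker Z_k^t$ are spanned by the constant coweight vectors supplied by the grading.

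The conclusion is then immediate from the classical relation between root heights and exponents (Kostant, Macdonald): if $m_1\le\cdots\le m_\ell$ are the exponents of $\mathfrak{g}$, then $\dim\mathfrak{g}_k=\#\{j:m_j\ge k\}$. Hence
\[
\mathrm{rank}\,Z_{k}-\mathrm{rank}\,Z_{k+1}=\#\{j:m_j=k\},
\]
so the rank of $Z_i$ drops exactly at the indices $i=m_j+1$, i.e. at the exponents read off with the shift forced by the cascade's indexing. Since the highest root has height $h-1$ with $h$ the Coxeter number, $\mathfrak{g}_k=0$ for $k\ge h$ and $m_\ell=h-1$; therefore $Z_k$ first vanishes at $k=m_\ell+1=h$, which is the Coxeter number, as claimed.

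I expect the genuine difficulty to be concentrated in the rigidity and identification of the first two paragraphs, not in the final count. The recursion \eqref{ZZ1}--\eqref{ZZ2} intertwines the total derivatives $D,\bar D$ with $u$-dependent matrix products, and one must verify that these derivative terms neither create nor cancel rank beyond what the constant skeleton $\widetilde Z_k$ predicts --- equivalently, that the cascade has no accidental degeneracies and that the free matrices $\alpha$ entering through $\ker Z_k$ leave the ranks untouched. Establishing this rigidity uniformly for every Cartan type, so that $\mathrm{rank}\,Z_k$ is pinned to the purely combinatorial quantity $\dim\mathfrak{g}_k$, is where the real work lies; once it is in place, Kostant's description of the principal $\mathfrak{sl}_2$ finishes the proof.
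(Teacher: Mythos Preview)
The paper does not contain a proof of this statement: it is presented as a conjecture, with the remark that the classical series were handled in \cite{gurzib} and the exceptional types checked afterwards by Guryeva. Judging from Example~\ref{An}, the method of \cite{gurzib} is not the uniform Lie-theoretic argument you outline but a type-by-type construction of explicit closed formulas for $Z_k$ (e.g.\ $Z_k=AJ^{1-k}S_k(J^t)^{1-k}$ in type $A_n$), from which the ranks are read off directly. Your proposal --- strip the $u$-dependence by a diagonal conjugation, identify $\mathrm{rank}\,Z_k$ with $\dim\mathfrak g_k$ in the principal grading, then invoke Kostant's height--exponent count --- is a genuinely different and more conceptual route; if it goes through it would explain \emph{why} the exponents appear, which the case-by-case verification does not.

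That said, what you have written is a plan, not a proof, and the entire content of the conjecture sits in the step you yourself flag as the hard one. You assert that a diagonal conjugation turns the cascade into a constant ``skeleton'' $\widetilde Z_k$ and that this skeleton reproduces $\mathrm{ad}(e)$ on the height filtration, but you give no mechanism for either claim. Concretely: (i) the conjugation must be made explicit and shown to intertwine the full recursion \eqref{ZZ1}--\eqref{ZZ2}, not just $H_1$ --- the $\bar D$-term in \eqref{ZZ1} and the $D$-term in \eqref{ZZ2} act nontrivially on the exponentials and on the ambiguity $a_k\mapsto a_k+\alpha$ with $\alpha Z_k=0$; (ii) you need an actual map from the $N\times N$ matrix picture to operators on $\mathfrak g_1$ that carries $\widetilde Z_k$ to (something with the rank of) $\mathrm{ad}(e)^{\,k-1}$. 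The nonsimply-laced cases are a good test: in your $C_3$ example the ranks plateau at $3,2,2,1,1$, so $H_k$ is singular while $\mathrm{rank}\,Z_{k+1}=\mathrm{rank}\,Z_k$, and you must show the cascade neither drops further nor depends on the choice of $\alpha$ at that step. Until these points are pinned down, the Kostant step at the end, while correct, has nothing to act on.

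One small slip: with the paper's convention (Example~\ref{An}) the exponents are the indices $k$ for which $\mathrm{rank}\,Z_{k+1}<\mathrm{rank}\,Z_k$, and your formula $\mathrm{rank}\,Z_k-\mathrm{rank}\,Z_{k+1}=\#\{j:m_j=k\}$ already gives this directly; the subsequent ``$i=m_j+1$'' is an off-by-one in the reading, not in the mathematics.
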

For the classical simple Lie algebras the conjecture was proved in \cite{gurzib}. Later A.M. Guryeva verified it for the exceptional Lie algebras. In the following example we present formulas for $Z_k$ from \cite{gurzib} in the $A_n$-case.
\begin{example} \label{An} We rewrite the $A_n$-Toda lattice with the Cartan matrix
$$A=\left( \begin{array}{ccccccccc}
\;\;\;2  & -1      & \;\;\;0 & \;\;\;0 & \ldots & \;\;\;0 & \;\;\;0 & \;\;\;0 \\
-1       & \;\;\;2 & -1      & \;\;\;0 & \ldots   & \;\;\;0 & \;\;\;0 & \;\;\;0 \\
\;\;\;0  & -1      & \;\;\;2 & -1      & \ldots   & \;\;\;0 & \;\;\;0 & \;\;\;0 \\
\;\;\;.  & \;\;\;. & \;\;\;. & \;\;\;. & \ldots & \;\;\; .& \;\;\;. & \;\;\;. \\
\;\;\;0  & \;\;\;0 & \;\;\;0 & \;\;\;0 & \ldots & -1      & \;\;\;0 & \;\;\;0  \\
\;\;\;0  & \;\;\;0 & \;\;\;0 & \;\;\;0 & \ldots & \;\;\;2 & -1      & \;\;\;0  \\
\;\;\;0  & \;\;\;0 & \;\;\;0 & \;\;\;0 & \ldots & -1      & \;\;\;2 & -1    \\
\;\;\;0  & \;\;\;0 & \;\;\;0 & \;\;\;0 & \ldots & \;\;\;0 & -1 &
\;\;\;2
\end{array} \right ) $$
in the form
$$
 D \bar D\,{\bf u}=A\,U\,{\bf c},
$$
where $$\; {\bf u}=(u_{1},u_{2},u_{3}, \ldots,u_{n-1} ,u_{n})^{t}, \qquad 
 {\bf c}=(1,1,1, \ldots, 1,1)^{t},  $$
$$U={\rm diag}\Big(\exp(u_{1}),\, \exp(u_{2}), \ldots ,\,\exp(u_{n})\Big). $$ 
 
 The linearization operator is given by 
$$
L=D\overline{D}-A\,U.  
$$
For this operator we have
$$  Z_{k}=AJ^{1-k}S_{k}\left(J^{t}\right)^{1-k},
\quad \quad k=1,2, \ldots ,n. 
$$
Here,
$$ S_{k}={\rm diag} \left\{0,0, \ldots , 0, \, \exp \Big(\sum_{i=1}^{k}u^{i}\Big), \, \exp \Big(\sum_{i=2}^{k+1}u^{i}\Big), \ldots, \,
\exp \Big(\sum_{i=n-k}^{n-1}u^{i}\Big),\,  \exp \Big(\sum_{i=n-k+1}^{n}u^{i}\Big)\right\}
$$
and
$$J=\left( \begin{array}{cccccc}
1&1&1&\ldots&1&1 \\
0&1&1&\ldots&1&1\\
 \;\;\;.  & \;\;\;. & \;\;\;. & \;\;\;. & \ldots & \;\;\; .  \\
  0  &  0 & 0 & \ldots & 1 &1 \\
 0  &  0 & 0 & \ldots & 0 &1 \\
 \end{array} \right ). $$
It is clear that $S_{n+1}=0$.

Notice that ${\rm rank}\, Z_{k}=n-k+1\,\,$, $\, k=1,2,\ldots.$ \newline Thus
the numbers $k$ such that ${\rm rank}\, Z_{k+1}<{\rm rank}\, Z_{k}$
coincide with the exponents $1,2, \ldots, n \,$ for the Lie algebra of $\,
A_{n}, \,$ and the number $\, k=n+1 \,$ such that $\,Z_{k}=0 \,$ is
equal to the Coxeter number.
\end{example}
\begin{remark} It would be interesting to understand an algebraic meaning of the matrices, which appear in 
Example \ref{An} and generalize them to any simple Lie algebra. 
\end{remark}

For $i > 0$ the invariants $H_{-i}$ and the matrices $Z_{-i}=H_{-i} H_{-(i-1)} \dots H_0$
  are given by the formulas
%$$ D(H_{-(i-1)}) - H_{-(i-1)} b_{-(i-1)} + b_{-i} H_{-(i-1)} = 0, \qquad b_0=b$$
$$
D(Z_{1-i}) - Z_{1-i} b + b_{-i} Z_{1-i} = 0,
$$
$$
 H_{-i} = 2 H_{1-i} + \bar{D} (b_{-i} - b_{1-i}) + [a,b_{-i} - b_{1-i}] - H_{2-i}.
$$
The conditions \eqref{main1} and \eqref{main2} have to be replaced by 
$$
( D+b) \, \Big( {\rm ker} Z_{-i}\Big) \subset {\rm ker} Z_{-i}, \qquad 
 (\bar{D}-a^t) \, \Big({\rm ker}\, Z_{-i}^t\Big) \subset {\rm ker}\, Z_{-i}^t .$$

%Suppose that $H_{-i}$ for $i\le k$ are already known and for $i<k$ the following conditions
%\begin{equation}\label{mmain1}
%( D+b) \, \Big( {\rm ker} Z_{-i}\Big) \subset {\rm ker} Z_{-i}
%\end{equation}
%and
%\begin{equation}\label{mmain2}
% (\bar{D}-a^t) \, \Big({\rm ker}\, Z_{-i}^t\Big) \subset {\rm ker}\, Z_{-i}^t
%\end{equation}
%are fulfilled. Then $b_{-(k+1)}$ exists iff condition \eqref{mmain1}) is
%fulfilled for $i=k$. Further, $Z_{-(k+1)}$ does not depend on the arbitrary
%matrix $\beta$, which appears in the general formula for $b_{-(k+1)}$,
%iff condition \eqref{mmain2}) holds for $i=k$.

The termination of the sequence $Z_i$ can be taken as a definition of
Liouville type systems \eqref{hhh}.
\begin{definition}\label{Lio}
Suppose that for a system of hyperbolic equations of the form \eqref{hhh}
all conditions \eqref{main1} and \eqref{main2} are fulfilled and
there exist $r\geq 1$ and $s \geq 0$ such that $Z_r=Z_{-s}\equiv 0$; then
\eqref{hhh} is called a {\it system of Liouville type}.
\end{definition}

In the scalar case Theorem \ref{th61} shows that the equation \eqref{hyper} is Darboux integrable (see Definition \ref{Darb}) iff it is an equation of Liouville type. For the multi-component systems this is not true. In \cite{ZibSt} an example of a system, which is Darboux integrable but is not a system of Liouville type  has been constructed.

\begin{op} Prove that any system of Liouville type is Darboux integrable.
\end{op}

Using Definition \ref{Lio}, let us classify all Liouville type systems of the form
\begin{equation}\label{togagen2}
\begin{cases}
(u_1)_{xy}=2 \exp{u_1}+k_1 \exp{u_2}, \\[3mm]
(u_2)_{xy}=k_2 \exp{u_1}+ 2 \exp{u_2}
\end{cases}
\end{equation}
with non-degenerate ($k_1 k_2 \ne 4$) and non-diagonal Cartan matrix.

It is easy to verify that for any $k_1,\, k_2$ we have
${\rm rank}\, Z_1=2,$ \,${\rm rank}\, Z_2=1.$ Further, $Z_3=0$ iff
$k_1=k_2=-1$ (the open $A_2$-Toda lattice).

Let $Z_3 \ne 0$, then condition \eqref{main1} with $i=3$ is fulfilled
iff $k_1=-1$ or $k_2=-1.$ Without loss of generality, we set $k_1=-1$.
Then $Z_4=0$ iff $k_2=-2$ ($C_2$-Toda lattice).

If $k_2 \ne -2$, then ${\rm rank}\, Z_5=1$ and condition \eqref{main1}
with $i=5$ holds iff $k_2=-3$. In this case $Z_6=0$ ($G_2$-Toda lattice).

Thus we proved that all Liouville type systems \eqref{togagen2} are exhausted
by the Toda lattices corresponding to the simple Lie algebras of rank 2.

\begin{exercise} Prove a similar statement for the case of rank 3. 
\end{exercise}

\section{Differential substitutions and Liouville type equations }

In this section we consider differential substitutions (see Section \ref{miur}) of the form 
$$
\hat u = P(x, u, u_1,\dots, u_{k}),
$$
which connect two evolution equations 
$$
u_{t}= f(x, u, u_1, \dots, u_{n})
$$
and
$$
\hat u_{t}= g(x, \hat u, \hat u_1, \dots ,\hat u_{n}).
$$
 
  Suppose we have a Liouville type equation \eqref{hyper} with
 minimal $y$-integral
 $$\hat u=w(x, u, u_1, \dots, u_{p}).$$
 Let $$u_{\tau}=f(x, u, u_1, \dots, u_{n})$$ be a symmetry for
 \eqref{hyper}.

 Since the total derivatives $D_y$ and $D_{\tau}$ commute, we have
 $$
 D_y D_{\tau}(\hat u)=D_{\tau} D_y (\hat u)=0
 $$
 i.e. $D_{\tau}(\hat u)$ is a $y$-integral as well. According to Proposition \ref{prop31}, we have
 $$
 \hat u_{\tau}=Q(x, \hat u, \hat u_1, \dots, \hat u_n)
 $$
 for some function $Q$. Thus, the minimal $y$-integral of any hyperbolic equation \eqref{hyper} defines a differential substitution from $x$-symmetries of \eqref{hyper} to some evolution equations. Similarly, the minimal $x$-integral defines a differential substitution from
 $y$-symmetries.

\begin{example} The Liouville equation has a symmetry
\begin{equation}\label{LiuSym}
u_{\tau}=u_{xxx}-\frac{1}{2} u_x^3.
\end{equation}
The minimal $y$-integral $$\hat u=u_{xx}-\frac{1}{2} u_x^2$$
defines the differential substitution from \eqref{LiuSym} to the KdV-eqution
$$
\hat u_{\tau}=\hat u_{xxx}+3\, \hat u \hat u_x.
$$
\end{example}
\begin{example} Consider the Liouville type equation \eqref{kosliu}. 
Its minimal $y$ and $x$-integrals are given by
$$
w=u_x-\frac12 u^2, \qquad \bar w=\frac{u_{yyy}}{u_y}-\frac32
\frac{u_{yy}^2}{u_y^2}.
$$
The first of these substitutions is just the Miura transformation. 

General formulas from Theorems \ref{th62} and \ref{th63} give us
$$
{\cal M}=D^2+u D+u_x, \qquad \bar {\cal M}=u_y, \qquad 
{\cal L}=D^3+2 w D+w_x.
$$
The simplest $x$ and $y$-symmetries are given by 
$$
u_{\tau}={\cal M} w = u_{xxx}-\frac{3}{2} u^2 u_x
$$
and
$$
u_{\tau}=\bar {\cal M} \bar w=u_{yyy}-\frac{3}{2}
\frac{u_{yy}^2}{u_y}
$$
are well-known integrable evolution equations. It is easy to
verify that $$w_{\tau}=w_{xxx}+3\, w w_x$$ and $\bar w_{\tau}=\bar
w_{yyy}+3\, \bar w \bar w_y.$

 Moreover, for any $x$-symmetry of equation \eqref{kosliu} given by 
\begin{equation}\label{utau}
 u_{\tau}=\Big( D^2+u D+u_x\Big) \, G(x, w, w_x, \dots, w_n)
 \end{equation}
 the minimal $y$-integral $$w=u_x-\frac12 u^2$$ defines the
 differential substitution from \eqref{utau} to an equation of the form   
 $w_{\tau}=Q(x, w, w_x, \dots)$.

 Let us find  the function $Q$ explicitly. We have
$$
\begin{array}{c}
w_{\tau}=\Big( D-u\Big) \, u_{\tau}= \Big( D-u\Big) \,
\Big( D^2+u D+u_x\Big) \, G(x, w, w_x, \dots, w_n)= \\[3mm]
\qquad \Big( D^3+2 w D+w_x\Big) \, G(x, w, w_x, \dots, w_n)=
{\cal L} \, G(x, w, w_x, \dots, w_n).
\end{array}
$$
\end{example}

Arguing as above, we prove the following
\begin{proposition}\label{prop63}
Let $w$ be a minimal $y$-integral of an equation of Liouville type and
let
\begin{equation}\label{msym}
u_{\tau}={\cal M} \, G(x,w,w_x,\dots),
\end{equation}
be a symmetry of this equation. Then equation \eqref{msym} is related to
an evolution equation
\begin{equation} \label{weq}
\hat u_{\tau}= w_* \, {\cal M} \,\Big( G(x, \hat u, \hat u_x,\dots)\Big)
\end{equation}
by the differential substitution
\begin{equation}\label{miu}
\hat u=w(x,u,u_x,\dots, u_k).
\end{equation}
Here, $w_*$ is the Fr\'echet derivative of the minimal $y$-integral $w$.
The possible freedom $w\rightarrow f(x,w)$ in the choice of the minimal
integral corresponds to a point transformation in equation \eqref{weq}.
\end{proposition}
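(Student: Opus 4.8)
The plan is to promote the computation of the preceding example to a general statement; the two essential ingredients are the chain rule for the $\tau$-flow expressed through the Fr\'echet derivative, and the commutativity of the total derivations $D_\tau$ and $\bar D$. First I would fix the substitution \eqref{miu} and note that $w$, being a $y$-integral, depends only on the $x$-jet $x,u,u_1,\dots,u_p$ (Proposition~\ref{prop31}) and carries no explicit $\tau$-dependence. Since $u_\tau={\cal M}\,G$ is a symmetry (Theorem~\ref{th62}), the evolutionary derivation $D_\tau$ commutes with $D$, so $D_\tau(u_i)=D^i(u_\tau)$ and
\[
\hat u_\tau = D_\tau(w) = \sum_i \frac{\partial w}{\partial u_i}\,D^i(u_\tau) = w_*(u_\tau) = w_*\,{\cal M}\big(G(x,w,w_x,\dots)\big),
\]
which is the right-hand side of \eqref{weq} written in the $u$-variables.

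The second step is to show that this expression defines a genuine evolution equation in $\hat u$, i.e. that it depends on $u,u_1,\dots$ only through the jet of $\hat u=w$. Here I would use that $w$ is a $y$-integral and that $D_\tau$ commutes with $\bar D$ as well (this is exactly the compatibility of the symmetry \eqref{msym} with the hyperbolic equation). Consequently
\[
\bar D(\hat u_\tau) = \bar D\, D_\tau(w) = D_\tau\,\bar D(w) = D_\tau(0) = 0,
\]
so $\hat u_\tau$ is itself a $y$-integral. By Proposition~\ref{prop31} every $y$-integral is a function of $x,w,D(w),D^2(w),\dots$, that is, of $x,\hat u,\hat u_1,\hat u_2,\dots$; hence $\hat u_\tau=Q(x,\hat u,\hat u_1,\dots)$ for a well-defined $Q$, and the substitution \eqref{miu} does carry \eqref{msym} to an evolution equation.

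It remains to identify $Q$ with the operator $w_*\,{\cal M}$ acting on $G(x,\hat u,\hat u_x,\dots)$. The point is that, exactly as in the computation $w_\tau=(D-u)(D^2+uD+u_x)G={\cal L}G$ of the preceding example, the composite $w_*\,{\cal M}$ coincides with the operator ${\cal L}$ of \eqref{LLLop}, all of whose coefficients are $y$-integrals by Theorem~\ref{th63}. Applying an operator with $y$-integral coefficients to $G(x,w,w_x,\dots)$ yields a $y$-integral, and replacing each $w_j=D^j(w)$ by $\hat u_j$ turns $w_*\,{\cal M}$ into an honest differential operator in the $\hat u$-variables, giving precisely \eqref{weq}. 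I expect the verification of the operator identity $w_*\,{\cal M}={\cal L}$, equivalently that the coefficients of $w_*\,{\cal M}$ are $y$-integrals, to be the main technical obstacle, since it requires matching the Fr\'echet derivative of the minimal integral against the Laplace-invariant factorisations \eqref{MM} and \eqref{LLLop}; for the differential-substitution claim itself, however, the commutation argument of the second paragraph already suffices.

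Finally, the remark on the freedom $w\to f(x,w)$ is immediate: by Proposition~\ref{prop31} the minimal integral is determined only up to such a transformation, and replacing $w$ by $f(x,w)$ replaces $\hat u$ by $f(x,\hat u)$, i.e. performs the point transformation $\hat u\mapsto f(x,\hat u)$ on \eqref{weq}, which preserves the form of an evolution equation.
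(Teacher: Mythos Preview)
Your argument is essentially the same as the paper's: the chain rule gives $\hat u_\tau=w_*(u_\tau)=w_*\,{\cal M}\,G$, and commutativity of $D_\tau$ with $\bar D$ shows this is a $y$-integral, hence a function of $x,\hat u,\hat u_1,\dots$ by Proposition~\ref{prop31}. Together with the remark on the freedom $w\to f(x,w)$, this is exactly what the paper means by ``arguing as above''.

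The one misstep is in your third paragraph. You try to identify $Q$ with $w_*\,{\cal M}\,(G(x,\hat u,\hat u_x,\dots))$ by invoking the operator identity $w_*\,{\cal M}={\cal L}$ and then Theorem~\ref{th63}. But that identity is \emph{not} established in general; the paper's own Remark immediately following the proposition says only that it holds ``for all known examples'' and is proved under an extra assumption on the order of the minimal integral. So you cannot use it here. Fortunately you do not need it: your commutation argument already proves, once you let $G$ vary, that every coefficient of $w_*\,{\cal M}$ is a $y$-integral. Indeed, by Theorem~\ref{th62} the function $G$ is arbitrary; taking $G=x^k/k!$ for $k=0,1,2,\dots$ in $\bar D\big(w_*\,{\cal M}\,G\big)=0$ recursively forces each coefficient of $D^k$ in $w_*\,{\cal M}$ to be a $y$-integral. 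This is precisely the content of the Remark preceding Remark~\ref{rem63}, which the paper presents as a \emph{consequence} of the proposition rather than an input to its proof. So drop the appeal to ${\cal L}$; your second paragraph, applied for all $G$, already closes the loop.
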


\begin{remark} In particular, this proposition states that the coefficients of the differential operator $ w_* \, {\cal M} $
are $y$-integrals.
\end{remark}
 
\begin{remark}\label{rem63} For all known examples the operator $w_* \, {\cal M}$
coincides  with the operator
${\cal L}$  given by \eqref{LLLop}. If this true, it follows from \eqref{MM}, \eqref{LLLop}
 that the formula \footnote{If $H_0=0$, then $w_* = \frac{\bar{\psi}}{\psi} \, D (\frac{1}{\bar\psi}).$}

$$
w_*=\frac{\bar\psi}\psi H_{0}H_{-1}\cdots H_{1-s}D \frac 1{H_{1-s}}D
\cdots \frac 1{H_0}D \frac 1{\bar \psi }
$$
must be valid. This formula was proved in {\rm \cite{zibsok}} under the
assumption that the order $k$ of minimal integral is equal to 
${\rm ord}\, {\cal L}-{\rm ord}\, {\cal M}.$
\end{remark}

 \subsection{Differential substitutions of first order }\label{sec351}
 
We associate the hyperbolic equation
\begin{equation}\label{hypmiur1}
u_{xy}=-\frac{P_{u}}{P_{u_{x}}}\, u_y
\end{equation}
with any differential substitution of the form
$$
v = P(x,u,u_x).
$$
It is clear that $P$ is the minimal $y$-integral for  
\eqref{hypmiur1}. It is easy to verify that  $H_0=0$ for this equation.

\begin{observation} For all known differential
substitutions of the form \eqref{hypmiur1} is an equation of the Liouville
type {\rm (}i.e. $H_r=0$ for some $r>0${\rm )}.
\end{observation}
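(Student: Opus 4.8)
The plan is to split the Observation into the part that is a genuine computation and the part that is empirical. First I would confirm the two facts asserted just before it, since they carry most of the verifiable content. Writing $a\stackrel{def}{=}-P_u/P_{u_x}$, equation \eqref{hypmiur1} reads $u_{xy}=a\,u_y$ with $a=a(x,u,u_1)$ entering linearly through $u_y=\bar u_1$, so that $F=a\,\bar u_1$. Using the recursive formula for $\bar D$ one gets $\bar D(P)=P_u\,\bar u_1+P_{u_1}\,F=(P_u+a\,P_{u_1})\,\bar u_1=0$, so $P$ is a $y$-integral; it is of order $1$, and provided $P_{u_1}\neq 0$ it is minimal. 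Substituting $\partial F/\partial\bar u_1=a$, $\partial F/\partial u_1=a_{u_1}\bar u_1$, $\partial F/\partial u=a_u\bar u_1$ into the definition of $H_0$ and using $\bar D(a)=(a_u+a\,a_{u_1})\,\bar u_1$ gives $H_0=0$ after cancellation. This already secures the lower half of the Liouville-type condition with $s=0$, since $H_{-s}=H_0\equiv 0$.

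What remains is to produce $r\ge 1$ with $H_r\equiv 0$. Here I would invoke Theorem \ref{th61}: since a non-trivial $y$-integral is already in hand, the equation is of Liouville type exactly when it also possesses a non-trivial $x$-integral, equivalently when the forward sequence $H_1,H_2,\dots$ terminates. A direct computation of the first forward invariant gives the compact expression $H_1=\big(a_u-D(a_{u_1})\big)\,\bar u_1$, and the remaining invariants are generated by the Toda recursion $D\bar D(\log H_i)=2H_i-H_{i+1}-H_{i-1}$, i.e.\ $H_{i+1}=2H_i-H_{i-1}-D\bar D(\log H_i)$, starting from $H_0=0$ and this $H_1$. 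The task is to show that this chain reaches zero after finitely many steps.

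The main obstacle is that this last step is simply false for a generic coefficient $a$ (equivalently, for a generic $P$). Solving $P_u+a\,P_{u_1}=0$ by characteristics shows that every $a(x,u,u_1)$ arises from some $P$, so the class \eqref{hypmiur1} coincides with all equations $u_{xy}=a(x,u,u_1)\,u_y$; and for arbitrary $a$ there is in general no $x$-integral and the forward Laplace chain does not terminate. This is exactly why the statement is phrased as an \emph{observation} over the \emph{known} substitutions rather than as a theorem: the relevant $P$ are those arising from genuine integrable systems, and no intrinsic characterisation of that subclass is available.

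Accordingly, I would present the Observation in one of two honest forms. The finite, fully rigorous route is verification: for each $P$ in the catalogue of known first-order substitutions, compute $H_1$ from the formula above, iterate the recursion, and exhibit the index $r$ at which $H_r$ vanishes (as for \eqref{kosliu}, where $r=3$). The more ambitious route, which I expect to be genuinely hard and to lie close to the open classification problems mentioned earlier in the text, is to impose as a hypothesis that $v=P(x,u,u_1)$ actually relates two evolution equations both admitting higher symmetries, and to deduce forward termination from that integrability structure; converting this into a proof would demand both a precise formulation of the hypothesis and a termination argument for the chain $H_i$ that the present first-order analysis does not yet supply.
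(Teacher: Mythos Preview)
The paper offers no proof of this statement: it is labelled an \emph{Observation} precisely because it is an empirical claim about the catalogue of known first-order substitutions, not a theorem. Your analysis correctly identifies this status, and your verification of the two preparatory facts (that $P$ is the minimal $y$-integral and that $H_0=0$) matches the paper's ``It is clear\dots'' and ``It is easy to verify\dots'' remarks; your formula $H_1=(a_u-D(a_{u_1}))\,\bar u_1$ is also correct.

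Where your proposal diverges slightly from the paper is in the suggested route to case-by-case verification. You propose iterating the Toda recursion $H_{i+1}=2H_i-H_{i-1}-D\bar D(\log H_i)$ explicitly for each known $P$ until a zero appears. The paper instead invokes the theorem immediately following the Observation (due to Zhiber and Startsev): equation~\eqref{hypmiur1} is of Liouville type iff, up to $u\to f(x,u)$, the function $P$ is defined implicitly by $u_x=\alpha(x,P)u^2+\beta(x,P)u+\gamma(x,P)$, and in that case the $x$-integral is always the Schwarzian $\bar W=u_{yyy}/u_y-\tfrac32\,u_{yy}^2/u_y^2$. Verifying the Observation then reduces to checking that each known substitution fits this Riccati form, which the paper does for the Miura, Cole--Hopf, and $v=u_x+e^u+e^{-u}$ examples. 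This is cleaner than chasing the Laplace chain directly, and it yields the $x$-integral uniformly rather than the termination index $r$ case by case. Your ``ambitious route'' --- deducing termination from the assumption that $v=P$ links two genuinely integrable hierarchies --- is not pursued in the paper and remains, as you suspected, an open question.
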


 \begin{theorem} {\rm \cite[Lemma 4.1]{zib}, \cite{star1}} Equation \eqref{hypmiur1} is a
 Liouville type equation iff {\rm (}up to a transformation of the form
 $u\rightarrow f(x,u)$\,{\rm )}  the function $P(x,u,u_x)$ is given by
 \begin{equation}\label{miur3}
 u_x=\alpha(x,P)\, u^2+\beta(x,P)\, u+\gamma(x,P)
 \end{equation}
 for some functions $\alpha, \beta$ and $\gamma$. Equation
 \eqref{hypmiur1} with $P$ defined by \eqref{miur3} has the  $x$-integral
 $$
 \bar W=\frac{u_{yyy}}{u_y}-\frac32 \frac{u_{yy}^2}{u_y^2}.
 $$
 \end{theorem}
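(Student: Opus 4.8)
The plan is to reduce the whole statement to a single scalar condition on the coefficient
$$G \stackrel{def}{=} -\frac{P_u}{P_{u_1}},$$
so that \eqref{hypmiur1} reads $u_{xy}=G(x,u,u_1)\,u_y$, i.e. $F=G\,\bar u_1$ in the notation $u_1=u_x$, $\bar u_i=\partial^i_y u$. The relation $\bar D(P)=(P_u+G\,P_{u_1})\bar u_1=0$ merely records that $P$ is the minimal $y$-integral, and a one-line computation gives $H_0=0$, so the negative Laplace chain already terminates at $s=0$. The key object is the \emph{characteristic field}
$$V=\frac{\partial}{\partial u}+G\,\frac{\partial}{\partial u_1},$$
which annihilates $P$ and satisfies $\bar D(h)=V(h)\,\bar u_1$ for every $h=h(x,u,u_1)$. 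In the characteristic coordinates $(u,P)$ (legitimate since $P_{u_1}\neq0$) one checks $V=\partial/\partial u|_{x,P}$, so $V^2(G)=0$ is exactly the statement that $G$ is affine in $u$ at fixed $(x,P)$, i.e. $G=2\alpha(x,P)\,u+\beta(x,P)$; integrating the characteristic ODE $P_u+G\,P_{u_1}=0$ then turns this into the Riccati relation \eqref{miur3}, with $\gamma(x,P)$ the integration constant. Thus \textbf{Riccati form $\Longleftrightarrow V^2(G)=0$}, and it remains to prove that \eqref{hypmiur1} is of Liouville type iff $V^2(G)=0$.

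First I would settle the ``if'' direction, which also yields the explicit $x$-integral. Writing $a=\bar u_2/\bar u_1=\bar D(\log u_y)$, the Schwarzian is $\bar W=\bar D(a)-\tfrac12 a^2=\frac{u_{yyy}}{u_y}-\frac32\frac{u_{yy}^2}{u_y^2}$. Since $D(\log u_y)=u_{xy}/u_y=G$, commuting $D$ and $\bar D$ gives $D(a)=\bar D(G)=V(G)\,\bar u_1$, and a direct cancellation produces the clean identity
$$D(\bar W)=\bar D\big(V(G)\big)\,\bar u_1=V^2(G)\,\bar u_1^{\,2}=V^2(G)\,u_y^2.$$
Hence if $V^2(G)=0$ then $\bar W$ is a genuine $x$-integral; together with the $y$-integral $P$ this makes \eqref{hypmiur1} Darboux integrable, and Theorem \ref{th61} upgrades this to Liouville type. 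Combined with the equivalence of the previous paragraph, this shows the Riccati form is sufficient and that $\bar W$ is the advertised $x$-integral.

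For the ``only if'' direction I would run Theorem \ref{th61} backwards: a Liouville type equation is Darboux integrable, so it possesses a nontrivial $x$-integral $\bar W$, necessarily (Proposition \ref{prop31}) a function of $x,u,\bar u_1,\dots,\bar u_{\bar p}$ alone. The task reduces to the lemma \emph{the existence of any $x$-integral forces $V^2(G)=0$}. The equation $D(\bar W)=0$ expands as $\bar W_x+u_1\bar W_u+\sum_{i\ge1}\bar D^{i-1}(F)\,\bar W_{\bar u_i}=0$; a short induction shows that each $\bar D^{i-1}(F)$ is a differential polynomial in $\bar u_1,\dots,\bar u_i$ whose coefficients depend only on $(x,u,u_1)$, the top term being $V^{i-1}(G)\,\bar u_1^{\,i}$. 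Because $\bar W$ is free of $u_1$, the identity is polynomial in the independent variable $u_1$, and differentiating in $u_1$ and matching the highest $\bar u_1$-weight should force $V(G)$ to be a $y$-integral, i.e. $\bar D\big(V(G)\big)=V^2(G)\,\bar u_1=0$; minimality of $P$ then gives $V(G)=\Phi(x,P)$ and hence $V^2(G)=0$. Equivalently, one may phrase this via the positive Laplace chain: since $H_0=0$, being Liouville type means $H_r=0$ for some $r\ge1$, and one shows the chain $H_1,H_2,\dots$ generated by the Toda recursion $D\bar D(\log H_i)=2H_i-H_{i-1}-H_{i+1}$ from $H_1=(G_u-D(G_{u_1}))\bar u_1$ can terminate only when $V^2(G)=0$.

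I expect the main obstacle to be precisely this last lemma. The ``if'' half is an honest but short computation, whereas extracting $V^2(G)=0$ from the mere existence of an $x$-integral (without knowing its order $\bar p$ in advance) requires controlling the leading behaviour of $D(\bar W)=0$ uniformly in $\bar p$: one must argue that the highest-weight piece of the $x$-integral equation is governed by the single scalar $V^2(G)$ and cannot be cancelled by the lower-weight terms, which is where the bookkeeping of the differential polynomials $\bar D^{i-1}(F)$ — or, alternatively, a careful non-termination argument for the Laplace/Toda chain when $V^2(G)\neq0$ — does the real work.
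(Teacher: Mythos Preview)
The paper does not supply its own proof here; it cites \cite[Lemma~4.1]{zib} and \cite{star1}. So your proposal must stand on its own.

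Your ``if'' half is correct. The identity $D\bigl(\{u;y\}\bigr)=V^{2}(G)\,u_y^{2}$ holds, and together with $H_0=0$ and Theorem~\ref{th61} it shows that the Riccati form implies Liouville type and produces the announced $x$-integral. The equivalence ``Riccati form $\Leftrightarrow V^2(G)=0$'' via the characteristic coordinates $(u,P)$ is also right.

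Your ``only if'' half, however, targets a lemma that is \emph{false}. You want to deduce $V^2(G)=0$ from the mere existence of an $x$-integral, but $V^2(G)=0$ is \emph{not} invariant under $u\mapsto f(x,u)$, whereas the Liouville property is --- this is exactly why the theorem carries the clause ``up to a transformation''. Under $\tilde u=f(x,u)$ one has the transformation law
\[
f_u^{\,2}\,\tilde V^{2}(\tilde G)\;=\;V^{2}(G)+(D+2G)\,\{f;u\},
\]
so any non-M\"obius $f$ can destroy the condition. Concretely, take $u_{xy}=(u_x/u)\,u_y$: here $V(G)=0$, the equation is Liouville type with $H_1=0$, and $\bar u_1/u$ is already an $x$-integral. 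After $\tilde u=e^{u}$ one finds
\[
\tilde G=\frac{\tilde u_1}{\tilde u}\Bigl(1+\frac{1}{\log\tilde u}\Bigr),\qquad
\tilde V^{2}(\tilde G)=-\frac{\tilde u_1}{\tilde u^{3}\log\tilde u}\neq 0,
\]
yet the transformed equation is still Liouville type (one checks $\tilde H_1=0$) and still has an $x$-integral (namely $\{\tilde u;y\}+\tilde u_y^{2}/(2\tilde u^{2})$, which is \emph{not} the Schwarzian of $\tilde u$). Both of your proposed routes --- the weight analysis of $D(\bar W)=0$ and the Laplace-chain termination argument --- therefore cannot force $V^2(G)=0$ in arbitrary coordinates, because in this example the chain terminates and an $x$-integral exists while $\tilde V^2(\tilde G)\neq0$.

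The genuine ``only if'' argument must spend the gauge freedom. One first shows that for this class the minimal $x$-integral has order at most~$3$ (this is where the cited references do the real work), and then uses the freedom $u\mapsto f(x,u)$ to normalise that integral to the Schwarzian; your identity then gives $V^{2}(G)=0$ in the chosen coordinates, i.e.\ the Riccati form.
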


\begin{example} For the Miura substitution we have
 $\ds u_x=P+\frac{1}{2} u^2.$
\end{example}

 \begin{example} For well-known differential substitution \cite{cvsokyam} $$v=u_x+\exp
 {(u)}+\exp {(-u)}$$ the corresponding hyperbolic equation is
 given by
 \begin{equation}\label{miur2}
 u_{xy}=\Big( \exp {(-u)}-\exp {(u)}\Big) \, u_y.
 \end{equation}
The function $P$ for \eqref{miur2}, after the
 transformation
  $u\rightarrow \ln u,$ satisfies the following relation of the form \eqref{miur3}: $u_x=-u^2+P u-1.$
\end{example}
\begin{example}
 For the Cole-Hopf substitution \eqref{colehopf}
 we have $u_x=P u$.
\end{example}

\subsection{Integrable operators}

Consider the set ${\cal F}$ of all functions depending on
$$
u_0=u, \quad u_1=u_x, \quad u_2=u_{xx}, \quad \dots.
$$
For any $f(u,u_1,\dots,u_n)$ we denote
$$
f_*\stackrel{def}{=} \sum_{i=0}^{n} \frac{\pa f}{\pa u_i}\, D^i.
$$
The Lie bracket
\begin{equation} \label{Lie}
\Big[f, \,\, g\Big] \stackrel{def}{=} g_* \, (f)-f_* \, (g)
\end{equation}
equips ${\cal F}$ with a structure of Lie algebra (see \eqref{comev}). Bracket \eqref{Lie}
corresponds to the commutator of the flows for evolution equations
$u_{t_{1}}=f$ and $u_{t_{2}}=g$. Therefore, given a Liouville type equation,
the set of all symmetries of the form
\begin{equation} \label{symmm}
u_{\tau}=S(x,u,u_1, u_2, \dots)
\end{equation}
is a Lie subalgebra in ${\cal F}$ (as usual, we identify symmetries and their
right-hand sides).

The operator  ${\cal L}=D^3+2 u D+u_1$, corresponding to the Liouville
equation (see Example \ref{ex63}), possesses the following remarkable property.
Its image is a Lie subalgebra in ${\cal F}$. Namely, it is easy to verify
that for any $f,g \in {\cal F}$
$$
\Big[{\cal L}(f), \, {\cal L}(g)\Big]={\cal L} \Big(D(f)\, g-D(g)\, f
+g_* {\cal L}(f)-f_* {\cal L}(g) \Big).
$$
Since ${\cal L}$ is injective, the last formula defines a new Lie bracket
$$
\Big[f, \, g\Big]_1 \stackrel{def}{=} D(f)\, g-D(g)\, f
+g_* {\cal L}(f)-f_* {\cal L}(g)
$$
on ${\cal F}.$

A differential operator ${\cal L}$ is called {\it integrable} \cite{sylv} if ${\rm Im}\, {\cal L}$ is a Lie subalgebra in ${\cal F}$ .
\begin{remark} It can be shown that for any Hamiltonian operator ${\cal H}:
{\cal F} \rightarrow {\cal F}$ {\rm (see (\cite{Olv93})\,)} its image is a
subalgebra in ${\cal F}$. Therefore, integrable operators can be regarded as
a non-skewsymmetric generalization of the Hamiltonian operators.
\end{remark}

For all known Liouville type equations, the operator
${\cal L}$ given by \eqref{LLLop} is integrable. We show that, in
connection with the above results, it looks rather natural.

Let us denote by ${\cal F}_w$ a Lie algebra of functions depending
on variables $x, w, w_1, \dots$ with respect to bracket \eqref{Lie}.
It follows from Theorem \ref{th63} that all coefficients of
${\cal L}$ are functions of $x, w, w_1, \dots$, where $w$ is the minimal
$Y$-integral. Thus we have a differential operator
${\cal L}: {\cal G}_w \rightarrow {\cal G}_w$. Let us show that under
some assumptions  ${\rm Im}\,{\cal L}$ is a Lie subalgebra in ${\cal G}_w$.

Indeed, consider the set ${\cal S}$ of all symmetries of the form
\eqref{msym} of a given Liouville type equation. For all known examples,
${\cal S}$ coincides with the set of all symmetries of the form \eqref{symmm}.
In this case, ${\cal S}\in {\cal F}$ is a Lie subalgebra with respect
to bracket \eqref{Lie}. It was shown in Proposition \ref{prop63}
that the image of the symmetry \eqref{msym} under substitution
\eqref{miu} is given by $w_{\tau}=Q$, where
\begin{equation} \label{symmQ}
Q= w_* \, {\cal M} \,\Big( G(x,w,w_1,\dots)\Big).
\end{equation}
Hence the set of all functions of the form \eqref{symmQ} is expected to be
a subalgebra with respect to \eqref{Lie}. To conclude the speculation,
we recall that, according to Remark \ref{rem63}, $w_* \, {\cal M}={\cal L}$ for all
known equations of Liouville type.

\begin{example} Consider equation \cite{zibsok2}
\begin{equation}\label{Eq1}
  u_{xy}=\frac 1{6u+y}B^2(B-1)\bar B (\bar B -1)^2+\frac 1{6u+x}\bar B ^2(\bar B
-1)B(B-1)^2,
\end{equation}
where $B=B(u_x)$ and $\bar B =\bar B (u_y)$ are solutions of cubic equations
\begin{equation}\label{AB}
  \frac 13B^3-\frac 12B^2=u_x, \qquad \frac 13\bar B ^3-\frac 12\bar B ^2=u_y.
\end{equation}
For this equation we have  $H_3=H_{-2}=0$.
The minimal $y$ and $x$-integrals $w$ and $\bar w$ are given by
$$
\begin{array}{c}
\displaystyle
w=D\left\{\ln\left[u_2-\frac{B^4(B-1)^2}{6u+y}-
\frac{B^2(B-1)^4}{6u+x}\right]-\ln B(B-1)\right\}-\\[2ex]
\displaystyle -\left[\left(\frac 1{6u+y}+\frac 1{6u+x}\right)B-\frac
1{6u+x}\right]B(B-1),
\end{array}
$$
\medskip
$$
\begin{array}{c}
\displaystyle \bar w=\bar D\left\{\ln\left[\bar u_2-\frac{\bar B^4(\bar
B-1)^2} {6u+x}-\frac{\bar B^2(\bar B-1)^4}{6u+y}\right]-\ln\bar B(\bar
B-1)\right\}-\\[2ex]
\displaystyle -\left[\left(\frac 1{6u+y}+\frac
1{6u+x}\right)B-\frac 1{6u+x}\right]B(B-1).
\end{array}
$$
The functions $\psi$ and $\bar \psi$ (see Proposition \ref{prop62}) are defined by
$$
 \psi=u_2-\displaystyle \frac{B^4(B-1)^2}{6u+y}-\frac{B^2(B-1)^4}{6u+x},
  \qquad 
\bar \psi=\bar u_2-\displaystyle
  \frac{\bar B ^4(\bar B -1)^2}{6u+x}-\frac{\bar B ^2(\bar B -1)^4}{6u+y}.
$$
The Laplace invariants for equation \eqref{Eq1} have the form
$$
\begin{array}{c}
\displaystyle
H_1=\frac{\bar B (\bar B -1)}{B(B-1)}\left[\frac{\bar B -1}{(6u+y)(B-1)^2}-\frac
{\bar B} {(6u+x)B^2}\right]\, \psi,\\[6mm]
\displaystyle
H_{0}=\frac{B(B-1)}{\bar B (\bar B -1)}\left[\frac{B-1}{(6u+x)(\bar B -1)^2}-\frac
{B}{(6u+y)\bar B ^2}\right]\, \bar \psi,\\[6mm]
\displaystyle
H_2=\frac{2(6u+y)(6u+x)}{\bar B (\bar B -1)\,[(6u+x)B^2(\bar B -1)-(6u+y)\bar B
(B-1)^2]^2}\, \psi \,
\bar \psi,
\\[6mm]
\displaystyle
H_{-1}=\frac{2(6u+y)(6u+x)}{B(B-1)\, [(6u+y)\bar B ^2(B-1)-(6u+x)B(\bar B
-1)^2]^2}\,
\psi \, \bar \psi.
\end{array}
$$
The corresponding operator ${\cal L}$   can be factorized:
\begin{equation} \label{Eq1LL}
{\cal L}=D(D+ w)(D+ w)(D+2 w)(D+3 w). 
\end{equation}
One can verify that the operator  $\,{\cal L}$ is integrable. 
\end{example}

Let ${\cal L}$ be a differential operator with coefficients from ${\cal F}_w$.
It is not very difficult to check whether its image is a Lie subalgebra. This
condition turns out to be rather rigid.

For example, let us consider operators "similar" to \eqref{Eq1LL}. Namely, the
coefficients of \eqref{Eq1LL} are polynomials in  $w, w_1, w_2, \dots$.
Operator \eqref{Eq1LL} is homoge\-neous if we assign a weight 1 to $D$ and
$i+1$ to $w_i$. We present here \cite{zibsok} a complete list of integrable operators ${\cal L}$ of
orders 2-6 with the above homogeneity property. It turns out that all of them
can be decomposed into factors of first order.
\begin{itemize}
\item Operator of second order:
$$
{\cal L}_1^{(2)}=D\,(D+ w);
$$
\item Operator of third order:
$$
{\cal L}_1^{(3)}=D\,(D+ w)\,(D+ w);
$$
\item Operators of fourth order:
$$
\begin{array}{l}
{\cal L}_1^{(4)}=D\,(D+ w)\,(D+ w)\,(D+w),\\[2mm]
{\cal L}_2^{(4)}=D\,(D+ w)\,(D+ w)\,(D+2 w);
\end{array}
$$
\item Operators of fifth order:
$$
\begin{array}{l}
{\cal L}_1^{(5)}=D\,(D+ w)\,(D+ w)\,(D+w)\,(D+w),\\[2mm]
{\cal L}_2^{(5)}=D\,(D+ w)\,(D+ w)\,(D+2 w)\,(D+3 w);
\end{array}
$$
\item Operators of sixth order:
$$
\begin{array}{l}
{\cal L}_1^{(6)}=D\,(D+ w)\,(D+ w)\,(D+w)\,(D+w)\,(D+w),\\[2mm]
{\cal L}_2^{(6)}=D\,(D+ w)\,(D+ w)\,(D+w)\,(D+w)\,(D+2 w),\\[2mm]
{\cal L}_3^{(6)}=D\,(D+ w)\,(D+ w)\,(D+2 w)\,(D+3 w)\,(D+3 w),\\[2mm]
{\cal L}_4^{(6)}=D\,(D+ w)\,(D+ w)\,(D+2 w)\,(D+3 w)\,(D+4 w).
\end{array}
$$
\end{itemize}
Note that the operator ${\cal L}_2^{(5)}$ coincides with \eqref{Eq1LL}.

In the paper \cite{SanWan} these examples were prolonged by infinite sequences of integrable operators of arbitrary order. 
In \cite{kis} some examples of integrable operators with matrix coefficients were constructed.

\chapter{Integrable non-abelian equations}

\section{ODEs on free associative algebra}

 We consider ODE systems of the form
\begin{equation}\label{geneq}
\frac{d x_{\alpha}}{d t}=F_{\alpha}({\bf x}), \qquad {\bf x}=(x_1,...,x_N),
\end{equation}
where $x_i(t)$ are $m\times m$ matrices, $F_{\alpha}$ are (non-commutative) polynomials with constant scalar coefficients. As usual, a symmetry is defined as an equation
\begin{equation}\label{gensymmat}
\frac{d x_{\alpha}}{d \tau}=G_{\alpha}({\bf x}), 
\end{equation}
compatible with \eqref{geneq}. 

\subsubsection{Manakov top}

Let $N=2$, $\,x_1=u,\, x_2=v$. The following  system 
\begin{equation}\label{man}
u_{t}=u^2 \, v-v \, u^2, \qquad v_{t}=0
\end{equation}
has infinitely many symmetries for arbitrary size of matrices $u$ and $v$.

Many important {\it multi--component} integrable systems
  can be obtained as
reductions of \eqref{man}. 

For instance, if $u$ is $m\times m$ matrix such that $u^t=-u$, and $v$ is a constant diagonal matrix, then 
\eqref{man} is equivalent to the $m$-dimensional Euler top. The
integrability of this model by the inverse scattering method was established by S.V. Manakov
in \cite{man}. 

Consider the cyclic reduction
$$
u=\left( \begin{array}{cccccc}
0&u_1&0&0&\cdot&0\\
0&0&u_2&0&\cdot&0\\
\cdot&\cdot&\cdot&\cdot&\cdot&\cdot\\
0&0&0&0&\cdot&u_{m-1}\\
u_{m}&0&0&0&\cdot&0
\end{array}\right)\, , \qquad 
v=\left( \begin{array}{cccccc}
0&0&0&\cdot&0&J_m\\
J_1&0&0&\cdot&0&0\\
0&J_2&0&\cdot&0&0\\
\cdot&\cdot&\cdot&\cdot&\cdot&\cdot\\
0&0&0&\cdot&J_{m-1}&0
\end{array}\right)\, ,
$$
where $u_k$ and $J_k$ are matrices of lower size.
Then \eqref{man} is equivalent to   the non-abelian Volterra chain
$$
\frac{d}{dt}u_k=u_k  u_{k+1}  J_{k+1}-J_{k-1}  u_{k-1}  u_k, \qquad k\in \Z_k.
$$
If we assume $m=3,\, J_1=J_2=J_3= \,{\rm Id}$ and
$u_3=-u_1 -u_2$ then the latter system yields
$$
u_t=u^2+u  v+v  u\, ,\qquad v_t=-v^2-u  v-v  u\, .
$$

It turns out that for any $i,j\in \N$ system \eqref{man} has first integrals of the form ${\rm tr}\,P_{i,j},$ where $P_{i,j}$ is a non-commutative homogeneous polynomial of degree $i$ in $v$ and degree $j$ in $u$. For example, 
$$
P_{2,2}=2 v^2 u^2 + v u v u, \qquad P_{3,2}=v^3 u^2+v^2 u v u, \qquad
P_{2,3}=v^2  u^3+v  u  v u^2. 
$$
It follows from the property ${\rm tr}\,(u v - v u)=0$ that the polynomials $P_{ij}$ are defined up to any cyclic permutations in factors of their monomials. 

There exists the following integrable generalization \cite{odrubsok} of system \eqref{man} to the case of $N$ matrices for arbitrary $N:$  
$$
\frac{d x_\alpha}{d t}=\sum_{\beta\ne \alpha} \frac{x_\alpha x_\beta^2-x_\beta^2 x_\alpha}{(\lambda_\alpha-\lambda_\beta) c_\beta}+
\sum_{\beta\ne \alpha} \frac{x_\beta x_\alpha^2-x_\alpha^2 x_\beta}{(\lambda_\alpha-\lambda_\beta) c_\alpha}.
$$
Here,
$$
\sum_1^N x_\alpha=C,
$$
where $C$ is a constant matrix.

\subsubsection{Non--abelian systems}

To check that \eqref{gensymmat} is a symmetry of \eqref{geneq} for arbitrary size of matrices $x_i,$ one only need the  associativity of the matrix product. Therefore, the compatibility of \eqref{geneq} and \eqref{gensymmat} is valid even if $x_1,\dots,x_N$ are generators of the free associative algebra ${\cal A}.$ We call systems on ${\cal A}$  {\it non-abelian systems}. 

However, in the non-abelian case the following questions arise
\begin{itemize}
\item What do formulas like  \eqref{geneq} and \eqref{gensymmat} mean?
\item How to define the functional ${\rm}\, tr$ for first integrals?
\end{itemize}

\begin{definition}
A linear map $d: {\cal A}\to {\cal A}$ is called derivation if it satisfies the 
Leibniz rule: $d (x y)=x d(y)+d(x) y.$
\end{definition}
If we fix $d(x_i)=F_i({\bf x}),$ where $x_i$ are generators of ${\cal A}$, then $d(z)$ is uniquely defined for any element $z\in {\cal A}$ by the Leibniz rule. It is clear that the polynomials $F_i$ can be taken arbitrarily. 

Instead of the dynamical system \eqref{geneq} we consider the derivation $D_t: {\cal A}\to {\cal A}$ such that $D_t(x_i)=F_i.$ The compatibility of \eqref{geneq} and \eqref{gensymmat} means that the corresponding derivations $D_t$ and $D_{\tau}$ commute: $D_t D_{\tau}-D_{\tau} D_t=0.$

To introduce the concept of a first integral, we need an analog of the
trace, which is not defined for the algebra ${\cal A}$. As a
matter of fact, in our calculations we use only two properties of
the trace, namely its linearity and the possibility to perform cyclic
permutations in monomials. Let us define an equivalence relation
for elements of ${\cal A}$ in a standard way.

\begin{definition}\label{def311} We say that two elements $f_1$ and $f_2$ of ${\cal A}$ are equivalent, and we denote it by $f_1 \sim f_2$,   iff $f_1$ can be obtained from $f_2$ by
cyclic permutations of factors in its monomials. We denote by ${\rm tr}\,f$ the equivalence class of an element $f$. 
\end{definition}

\begin{definition}\label{def33} An element ${\rm tr}\,h,$ where $h\in {\cal A},$ is said to be a first integral of the system \eqref{geneq}
if $D_t(h) \sim 0$.
\end{definition}

There is an obvious similarity between Definition \ref{def33} and the definition for conserved
densities in the theory of evolutionary PDEs (see Remark \ref{equi}). In both cases,
first integrals and conserved densities are defined as elements of
equivalence classes, the
difference is in the choice of the equivalence relation.

\begin{remark} \label{quotA} Any commutator is equivalent to zero and vise versa if $f_1 \sim f_2,$ then there exist elements $a_i,\, b_i$ such that 
$f_1-f_2=\sum_i [a_i, b_i]$. Therefore, we may regard ${\rm tr}\,f$ as an element of the quotient vector space ${\cal T}={\cal A}/[{\cal A},\,{\cal A}].$  It is clear that the derivation $D_t$ is well defined on ${\cal T}$.
\end{remark}

From the view-point of the symmetry approach, system \eqref{geneq} is integrable if it possesses infinitely many linearly independent symmetries. Note that if $x_i$ are matrices of a fixed size, then the existence of an infinite set of symmetries is impossible.

Another criterion of integrability is the existence of an infinite set of first integrals. 

\begin{example} \cite{miksokcmp} The following non-abelian system 
$$
u_{t}=v^2, \qquad v_{t}=u^2
$$
possesses infinitely many symmetries and first integrals.
F. Calogero has observed that  in the matrix case the functions $x_i=\lambda_i^{1/2}$, where the $\lambda_i$ are eigenvalues of the matrix  $u-v,$ satisfy  the following integrable system:
$$
x_i^{''}=-x_i^5+\sum_{j\ne i} \Big[(x_i-x_j)^{-3}+(x_i+x_j)^{-3}\Big].
$$ 
\end{example}

\subsection{Quadratic non-abelian systems}

In the case of arbitrary $N$ a homogeneous quadratic non-abelian system has the form 
\begin{equation}\label{quadN}
(u^i)_t = \sum_{j,k} C^i_{jk}\, u^j \,u^k, 
\end{equation}
where $u^i, \, i=1,\dots, N$ are non-commutative variables. We denote the set of coefficients $C^i_{jk},\, \, i,j,k=1,\dots,N$
by ${\bf C}$.

\begin{remark}\label{rem32} One can associate a $N$-dimensional algebra with
structural constants $ C^i_{jk}$ with any system \eqref{quadN}.
\end{remark}

The class of systems \eqref{quadN} is invariant with respect to the group ${\rm GL}_N$
of linear transformations
$$
\hat u^i = \sum_{j} s^i_{j}\, u^j.
$$
 
\begin{definition} A function $F(\bf C)$ is called a ${\rm GL}_N$
semi-invariant of degree $k$ if
$$
F(\hat {\bf C})=({\rm det}\,S)^k \, F({\bf C}),
$$ where $S$ is the matrix with entries $s^i_j.$ Semi-invariants of
degree $0$ are called {\it invariants}.
\end{definition}

\begin{definition} A row-vector ${\bf V}(\bf C)$ is called a vector ${\rm GL}_N$
semi-invariant of degree $k$ if
$$
{\bf V}(\hat {\bf C})=({\rm det}\,S)^k \, {\bf V}({\bf C})\,S.
$$ 
Vector semi-invariants of
degree $0$ are called {\it vector invariants}.
\end{definition}

\begin{lemma}\label{lem31} i) The product $I\, {\bf V}$ of a semi-invariant $I$ and a vector semi-invariant {\bf V} of degrees $k_1$ and $k_2$ is a vector semi-invariant of degree $k_1+k_2$.

ii) Let ${\bf V}_i,\, i=1,\dots,N$ be vector semi-invarians of degrees $k_i.$ Then the determinant of the matrix constituted of the vectors ${\bf V}_i$ is a (scalar) semi-invariant of degree $1+\sum k_i.$

\end{lemma}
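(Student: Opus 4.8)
The plan is to verify both claims directly from the two defining transformation laws, since each reduces to a short bookkeeping computation with determinants. Throughout, write $S=(s^i_j)$ for the matrix of the linear change $\hat u^i=\sum_j s^i_j u^j$, and recall that a scalar semi-invariant of degree $k$ satisfies $F(\hat{\mathbf C})=(\det S)^k F(\mathbf C)$, while a vector semi-invariant of degree $k$ satisfies $\mathbf V(\hat{\mathbf C})=(\det S)^k\,\mathbf V(\mathbf C)\,S$.

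For part i) I would simply substitute $\hat{\mathbf C}$ into the product. Since $I$ and $\mathbf V$ obey their respective laws,
$$
(I\,\mathbf V)(\hat{\mathbf C})=I(\hat{\mathbf C})\,\mathbf V(\hat{\mathbf C})=(\det S)^{k_1}I(\mathbf C)\cdot(\det S)^{k_2}\mathbf V(\mathbf C)\,S=(\det S)^{k_1+k_2}\,(I\,\mathbf V)(\mathbf C)\,S,
$$
which is exactly the defining relation for a vector semi-invariant of degree $k_1+k_2$. There is no obstacle here; it is one line.

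For part ii) the key observation is that stacking the row vectors $\mathbf V_1,\dots,\mathbf V_N$ as the rows of an $N\times N$ matrix $M(\mathbf C)$ turns the right multiplication by $S$ into a single matrix product: if $M$ has rows $\mathbf V_i$, then $MS$ has rows $\mathbf V_i S$. Applying the transformation law row by row, the $i$-th row of $M(\hat{\mathbf C})$ equals $(\det S)^{k_i}\mathbf V_i(\mathbf C)\,S$, so that $M(\hat{\mathbf C})=\operatorname{diag}\big((\det S)^{k_1},\dots,(\det S)^{k_N}\big)\,M(\mathbf C)\,S$. Taking determinants and using multiplicativity gives $\det M(\hat{\mathbf C})=(\det S)^{\sum_i k_i}\,\det M(\mathbf C)\,\det S=(\det S)^{1+\sum_i k_i}\det M(\mathbf C)$, proving that $\det M$ is a semi-invariant of degree $1+\sum_i k_i$.

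The only point that needs a word of care, and it is the closest thing to an obstacle, is the direction of the multiplication by $S$: because the transformation acts on a vector semi-invariant by right multiplication, one must factor it out as $MS$ rather than $SM$, and the scalar factors $(\det S)^{k_i}$ must be pulled out of the rows (not the columns) so that each contributes exactly once. Once this is arranged correctly, multiplicativity of the determinant closes both arguments with no further work.
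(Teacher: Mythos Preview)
Your proof is correct. The paper states this lemma without proof, so there is nothing to compare against; your argument supplies exactly the routine verification the paper leaves to the reader, and both parts are handled cleanly.
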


\begin{op} Construct series of simple algebras with the structural constants $C^i_{jk}$ such that equation \eqref{quadN} has a
symmetry of the form 
$$
(u_i)_{\tau}= \sum_{j,k,m} B^i_{jkm}\, u^j \,u^k \, u^{m}.
$$
\end{op} 

\subsection{Two-component non-abelian systems}

Consider non-abelian systems 
\begin{equation}\label{uv}
u_t=P(u,v)\, ,\qquad v_t=Q(u,v)\, ,\qquad P,Q\in {\cal A}
\end{equation}
on the free associative algebra ${\cal A}$ with generators $u$ and $v$.   

Define the involution
$\star$ on ${\cal A}$ by the formulas
\begin{equation}\label{star1}
u^\star=u\, ,\quad v^\star=v\, ,\quad (a\, b)^\star=
b^\star \, a^\star \, ,\quad a,b\in {\cal A}.
\end{equation}

Two systems related to each other by a linear
transformation of the form
\begin{equation}\label{invert}
\hat{u}=\alpha u+\beta v\, ,\qquad \hat{v}=\gamma u+\delta v\, ,\qquad
\alpha \delta-\beta\gamma \ne 0
\end{equation}
and by the involution \eqref{star1} are called {\it equivalent}.

\begin{definition} A system, which is equivalent to a system of the form
$$
u_t=P(u,v)\, ,\qquad v_t=Q(v)\,
$$
is called {\it triangular}. If $Q(v)=0,$ then the system is called {\it 
strongly triangular}.
\end{definition}

\begin{remark} From the algebraic point of view the triangularity is
equivalent to the fact that the corresponding algebra (see Remark \ref{rem32}) has one-dimensional double-side ideal.
\end{remark}

\begin{op}
Describe all non-equivalent non-triangular systems \eqref{uv} which possess infinitely many symmetries and/or first integrals.  
\end{op}
The simplest class of such systems are quadratic systems of the form
\begin{equation}
\begin{cases}
u_t   =  \alpha_1 u \, u  + \alpha_2 u \, v + \alpha_3  v \, u +
\alpha_4 v \, v,\\[2mm]
v_t   =  \beta_1 v  \, v  + \beta_2 v  \, u + \beta_3  u \, v + \beta_4
u \, u. \label{equ}
\end{cases}
\end{equation}

 \subsubsection{Equivalence}  
 
The group ${\rm GL}_2$ of transformations \eqref{invert} acts on polynomials of the eight coefficients $\alpha_i, \beta_i$ of the system \eqref{equ}. The corresponding infinitesimal action of 
 the Lie algebra $gl_2$ is defined by the following vector fields:
\begin{eqnarray*}
 X_{11}&=& \alpha_1 \frac{\partial}{\partial \alpha_1}- \alpha_4 \frac{\partial}{\partial \alpha_4}+\beta_2 \frac{\partial}{\partial \beta_2}
 +\beta_3 \frac{\partial}{\partial \beta_3}-2 \beta_4 \frac{\partial}{\partial \beta_4}, \\[3mm]
 X_{22}&=& \beta_1 \frac{\partial}{\partial \beta_1}- \beta_4 \frac{\partial}{\partial \beta_4}+\alpha_2 \frac{\partial}{\partial \alpha_2}
 +\alpha_3 \frac{\partial}{\partial \alpha_3}-2 \alpha_4 \frac{\partial}{\partial \alpha_4}, \\[3mm]
 X_{12}&=& -\beta_4 \frac{\partial}{\partial \alpha_1}+(\alpha_1-\beta_3) \frac{\partial}{\partial \alpha_2}+(\alpha_1-\beta_2) \frac{\partial}{\partial \alpha_3} \nonumber \\
 & &+(\alpha_2+\alpha_3-\beta_1) \frac{\partial}{\partial \alpha_4}+ (\beta_2+\beta_3) \frac{\partial}{\partial \beta_1}+\beta_4 \frac{\partial}{\partial \beta_2}+\beta_4 \frac{\partial}{\partial \beta_3}, \\[3mm]
 X_{21}&=& -\alpha_4 \frac{\partial}{\partial \beta_1}+(\beta_1-\alpha_3) \frac{\partial}{\partial \beta_2}+(\beta_1-\alpha_2) \frac{\partial}{\partial \beta_3} +(\beta_2+\beta_3-\alpha_1) \frac{\partial}{\partial \beta_4} \nonumber \\
 & &+ (\alpha_2+\alpha_3) \frac{\partial}{\partial \alpha_1}+\alpha_4 \frac{\partial}{\partial \alpha_2}+\alpha_4 \frac{\partial}{\partial \alpha_3}.
\end{eqnarray*}
The  eight-dimensional vector space spanned by the coefficients  $\alpha_i, \beta_i$ can be decomposed into a direct sum of three subspaces invariant
with respect to this action.  The first one is four-dimensional and spanned by 
\[
x_1=\alpha_4, \qquad x_2 = \alpha_2+\alpha_3-\beta_1, \qquad x_3=\beta_2+\beta_3-\alpha_1, \qquad x_4=\beta_4;  
\]
another two are two-dimensional and spanned by
\[
x_5=\alpha_1+\beta_3, \qquad x_6=\beta_1+\alpha_3;
\]
and
\[
x_7=\alpha_1+\beta_2  \qquad  x_8=\beta_1+\alpha_2.
\]
Denote these vector spaces by $W_1$, $W_2$ and $W_3$, respectively.

\begin{lemma} The vectors ${\bf V}_1=(x_5,x_6)$ and ${\bf V}_2=(x_7,x_8)$ are vector invariants.
\end{lemma}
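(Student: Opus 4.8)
The plan is to reduce the claim to an infinitesimal check and then verify it by direct substitution. Since $\mathrm{GL}_2(\C)$ is connected, a row vector $\mathbf{V}=(V_1,V_2)$ built from the coefficients $\alpha_i,\beta_i$ is a vector invariant (degree $0$) precisely when the defining relation $\mathbf{V}(\hat{\bf C})=\mathbf{V}({\bf C})\,S$ holds to first order along each one-parameter subgroup $S=\mathrm{Id}+\epsilon E_{ij}$, where $E_{ij}$ are the elementary matrices. Recalling that $X_{ij}$ is by construction the vector field computing $\frac{d}{d\epsilon}\big|_{0}F(\hat{\bf C})$ for the generator $E_{ij}$, I would first establish the infinitesimal criterion
$$
X_{ij}(V_k)=V_i\,\delta_{jk},\qquad i,j,k\in\{1,2\}.
$$
This follows by differentiating $\mathbf{V}({\bf C})(\mathrm{Id}+\epsilon E_{ij})$ in $\epsilon$: using $(E_{ij})_{lk}=\delta_{il}\delta_{jk}$, its $k$-th component is $V_k+\epsilon\,V_i\,\delta_{jk}$, and the degree being $0$ means no $(\det S)^k$ correction enters.

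With this criterion in hand, the rest is a routine verification of eight identities for each of the two vectors. For $\mathbf{V}_1=(x_5,x_6)=(\alpha_1+\beta_3,\ \beta_1+\alpha_3)$ one checks, using the explicit formulas for $X_{ij}$, that $X_{11}(x_5)=x_5$ and $X_{22}(x_6)=x_6$, while $X_{11}(x_6)=X_{22}(x_5)=0$; and that the off-diagonal generators give $X_{12}(x_6)=\alpha_1+\beta_3=x_5$ and $X_{21}(x_5)=\beta_1+\alpha_3=x_6$, with $X_{12}(x_5)=X_{21}(x_6)=0$. In each nonzero case the two surviving contributions recombine to the required component (for instance the $(\beta_2+\beta_3)\partial_{\beta_1}$ and $(\alpha_1-\beta_2)\partial_{\alpha_3}$ terms of $X_{12}$ acting on $x_6$ sum to $\alpha_1+\beta_3$), and the vanishing cases arise from symmetric cancellations (such as the $-\beta_4\partial_{\alpha_1}$ and $\beta_4\partial_{\beta_3}$ terms annihilating $x_5$). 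The identical pattern holds for $\mathbf{V}_2=(x_7,x_8)=(\alpha_1+\beta_2,\ \beta_1+\alpha_2)$, only with $\beta_3$ replaced by $\beta_2$ and $\alpha_3$ by $\alpha_2$ throughout.

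The only genuine point requiring care — the ``hard part,'' such as it is — is fixing the index conventions in the criterion $X_{ij}(V_k)=V_i\delta_{jk}$: one must be consistent about the fact that $S$ acts on $\mathbf{V}$ from the right, as dictated by the transformation law in the definition of a vector semi-invariant, and about the placement of the free index $k$ versus the summed index. Once this bookkeeping is pinned down, the eight computations per vector are immediate, and no structural input beyond the already-recorded decomposition of the coefficient space into the invariant subspaces $W_1,W_2,W_3$ is needed.
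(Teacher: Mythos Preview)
Your proof is correct. The paper does not provide a proof of this lemma; it is stated without argument, so your infinitesimal verification is precisely the sort of routine check the authors left to the reader. Your derivation of the criterion $X_{ij}(V_k)=V_i\,\delta_{jk}$ is consistent with the paper's conventions (one can cross-check it against the stated fact that $I_0=x_6x_7-x_5x_8$ is a degree-$1$ semi-invariant, which under your criterion gives $X_{ii}(I_0)=I_0$ and $X_{ij}(I_0)=0$ for $i\neq j$), and all sixteen identities you list are easily confirmed from the explicit formulas for $X_{11},X_{12},X_{21},X_{22}$.
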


The (scalar) semi-invariants of the transformation group \eqref{invert} have rather
simple form in the variables $x_i.$ %%There 
There exists only  one semi-invariant of degree 1:
\[
I_0 = x_6 x_7 - x_5 x_8.
\]
In the original variables  
$$
I_0=\alpha_1 \alpha_2 - \alpha_1 \alpha_3 - \alpha_3 \beta_2 - \beta_1 \beta_2 + \alpha_2 \beta_3 + \beta_1 \beta_3.
$$

 The following polynomials are semi-invariants of degree 2:
\[\begin{array}{c}
I_1 = x_2^2 x_3^2 + 4\, x_1 x_3^3 + 4\, x_2^3 x_4 + 18\, x_1 x_2 x_3 x_4 - 27\, x_1^2 x_4^2,\\ [3mm]
 I_2 = (x_5 - x_7)^3 \,x_1 -(x_5 - x_7)^2 (x_6 - x_8) \,x_2 -(x_5 - x_7) (x_6 - x_8)^2 \, x_3 + (x_6 - x_8)^3\, x_4,\\[3mm]
 I_3 =  (x_5^3 - x_7^3) \,x_1 +(x_7^2 x_8 - x_5^2 x_6) \,x_2 + (x_7 x_8^2 - x_5 x_6^2 ) \, x_3 + (x_6^3 - x_8^3)\, x_4,\\[3mm]
 I_4 = (x_5^2 - x_7^2)\, x_2^2 + (x_6^2 - x_8^2)\,x_3^2 + 3 \,(x_5^2 - x_7^2)\,x_1 x_3 + 3\, (x_6^2 - x_8^2)\,x_2 x_4+9\, ( x_7 x_8 - x_5 x_6 )\, x_1 x_4.
 \end{array}
\]

\begin{remark} The semi-invariant $I_1$ is symmetric with respect to the
involution \eqref{star1} and 
the other four semi-invariants are skew-symmetric.
\end{remark}
\begin{proposition} Any invariant $J$ of the transformation group
\eqref{invert} is a function of the four functionally independent invariants
$\ds J_i=\frac{I_i}{I_0^2}, \,\, i=1,2,3,4.$
\end{proposition}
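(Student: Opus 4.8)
The plan is to treat this as a question about first integrals of the $gl_2$-action and to settle it by a dimension count together with the Frobenius theorem. An invariant $J$ is precisely a common solution of $X_{11}(J)=X_{22}(J)=X_{12}(J)=X_{21}(J)=0$, i.e.\ a first integral of the distribution $\mathcal{D}$ spanned by these four vector fields. Since each $I_i$ ($i=1,2,3,4$) is a semi-invariant of degree $2$ while $I_0$ is a semi-invariant of degree $1$, the ratio $J_i=I_i/I_0^2$ has degree $0$ and so $J_i(\hat{\bf C})=J_i({\bf C})$; thus the $J_i$ are genuine invariants and are themselves first integrals of $\mathcal{D}$. The distribution $\mathcal{D}$ is integrable because the $X_{ab}$ close under commutation (they realise the Lie algebra $gl_2$), so Frobenius applies: near a generic point there are exactly $8-\operatorname{rank}\mathcal{D}$ functionally independent invariants, and every invariant is a function of any such maximal independent family.

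First I would show that $\operatorname{rank}\mathcal{D}=4$ generically, equivalently that the generic ${\rm GL}_2$-orbit is four-dimensional. This is clearest in the coordinates $x_1,\dots,x_8$: the block $(x_1,x_2,x_3,x_4)$ transforms as a binary cubic form (consistently, $I_1$ is its discriminant), while ${\bf V}_1=(x_5,x_6)$ and ${\bf V}_2=(x_7,x_8)$ transform as vectors. A generic binary cubic already has finite stabiliser $S_3$ in ${\rm GL}_2$, so the generic stabiliser of a point of the whole space is finite, its Lie algebra is trivial, and the orbit has dimension $\dim{\rm GL}_2=4$. Concretely I would evaluate the $4\times 8$ matrix of coefficients of $X_{11},X_{22},X_{12},X_{21}$ at one convenient point and exhibit a nonvanishing $4\times 4$ minor there; a nonzero minor at a single point forces rank $4$ on a dense open set. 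Hence the field of invariants has transcendence degree $8-4=4$.

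Next I would verify the functional independence of $J_1,J_2,J_3,J_4$ claimed in the statement, namely that $dJ_1\wedge dJ_2\wedge dJ_3\wedge dJ_4\neq 0$ at a generic point. It suffices to produce one point where the Jacobian of $(J_1,J_2,J_3,J_4)$ with respect to four of the coordinates is nonsingular, since the corresponding minor is a nonzero rational function and therefore cannot vanish identically. Together with the previous step, this shows that $J_1,\dots,J_4$ form a maximal family of functionally independent invariants.

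The conclusion is then immediate from Frobenius: every invariant $J$ is, on a dense open set, a function $\Phi(J_1,J_2,J_3,J_4)$ of this maximal independent family, which is exactly the assertion. The only genuine labour, and the step most prone to slips, is the generic-rank verification of the second paragraph---choosing a sample point at which the four fundamental vector fields are independent and at which the four differentials $dJ_i$ simultaneously remain independent. Once the transcendence degree is pinned at $4$, the remainder of the argument is purely formal.
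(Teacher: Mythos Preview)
Your proposal is correct and follows essentially the same approach as the paper: identify invariants as common solutions of $X_{ij}(J)=0$, argue that the four vector fields have generic rank $4$ (the paper just says they are ``linearly independent over functions''), and then verify by a Jacobian computation that $J_1,\dots,J_4$ are functionally independent. Your version is more fully articulated---you make the Frobenius step explicit and give a conceptual reason (finite stabiliser of a generic binary cubic) for the rank being $4$---whereas the paper simply declares both verifications to be easy or straightforward.
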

\begin{proof} Invariants are solutions of the PDE system $X_{ij}(J)=0, \,\,
i,j=1,2$. 
It is easy to verify that the vector fields $X_{ij}$ 
are linearly
independent over functions. Therefore, there exist only four functionally
independent invariants. A straightforward computation of the rank for the
Jacobi matrix shows that $J_i,\, i=1,2,3,4$ are functionally
independent. 
\end{proof}

\begin{remark} Actually there exist 9 linearly independent over $\C$ semi-invariants of degree 2. But only five of them are functionally independent. Moreover there exist 7 linearly independent over $\C$ vector semi-invariants of degree 1. Each semi-invariant of degree 2 can be obtained by the construction of Lemma \ref{lem31} applying to vector semi-invariants of degree 1 and vector invariants.
\end{remark}

\begin{example} The vectors
\begin{eqnarray*}
{\bf V}_3&=&\Big( -\beta_4 (\alpha_3 - \alpha_2)^2 + (\beta_3 - \beta_2) (\alpha_1 \alpha_2 - \alpha_1 \alpha_3 - \alpha_2 \beta_2 + \alpha_3 \beta_3), \\ && \qquad \alpha_4 (\beta_3 - \beta_2)^2 - (\alpha_3 - \alpha_2) (\beta_1 \beta_2 - \beta_1 \beta_3 - \beta_2 \alpha_2 + \beta_3 \alpha_3) \Big)
\end{eqnarray*}
and
\begin{eqnarray*}
{\bf V}_4&=&\Big((\beta_2 - \beta_3)  \alpha_4 \beta_4 - (\alpha_2 - \alpha_3) (\alpha_2 - \beta_1) \beta_4 + 
 \beta_3 (\alpha_1 \alpha_2 - \alpha_1 \alpha_3 - \alpha_2 \beta_2 + \alpha_3 \beta_3), \\ && \,\, - (\alpha_2 - \alpha_3)  \alpha_4 \beta_4 + (\beta_2 - \beta_3) (\beta_2 - \alpha_1) \alpha_4 - 
 \alpha_3 (\beta_1 \beta_2 - \beta_1 \beta_3 - \beta_2 \alpha_2 + \beta_3 \alpha_3) \Big)
\end{eqnarray*}
are vector semi-invariants of degree 1.
\end{example}

\begin{proposition}  System \eqref{equ} is triangular iff 
$
{\bf V}_3={\bf V}_4 =0 .
$
\end{proposition}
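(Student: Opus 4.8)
Reduction to an ideal problem. The plan is to translate the analytic notion of triangularity into the algebraic language of Remark~\ref{rem32} and then read off ${\bf V}_3={\bf V}_4=0$ as the exact solvability condition for the resulting linear-algebra problem. By the remark identifying triangularity with the existence of a one-dimensional two-sided ideal, \eqref{equ} is triangular iff the associated two-dimensional algebra $A$, with ${\bf e}_j\circ{\bf e}_k=C^i_{jk}{\bf e}_i$, possesses such an ideal. I would parametrize a candidate ideal by its annihilating covector $\ell=(\ell_1,\ell_2)$, $L=\ker\ell$, and introduce the $2\times2$ matrix $M(\ell)$ with entries $M_{jk}=\ell_1 C^1_{jk}+\ell_2 C^2_{jk}$. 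A direct check shows that $L$ is a two-sided ideal precisely when both rows and both columns of $M(\ell)$ are proportional to $\ell$, i.e. when $M(\ell)=c\,\ell\,\ell^{t}$ for some scalar $c$. The whole statement thereby reduces to: the system $M(\ell)=c\,\ell\,\ell^{t}$ has a nonzero solution $(\ell,c)$ iff ${\bf V}_3={\bf V}_4=0$.

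The direction "triangular $\Rightarrow$ vanishing" I would settle by covariance. Substituting the canonical coefficients $\beta_2=\beta_3=\beta_4=0$ (for which $L=\langle{\bf e}_1\rangle$, $\ell=(0,1)$, is manifestly an ideal) into the formulas for ${\bf V}_3$ and ${\bf V}_4$ yields $(0,0)$ in both cases. Since ${\bf V}_3,{\bf V}_4$ are vector semi-invariants, the locus $\{{\bf V}_3={\bf V}_4=0\}$ is invariant under the group \eqref{invert}; and under the involution \eqref{star1}, which merely interchanges $\alpha_2\leftrightarrow\alpha_3$ and $\beta_2\leftrightarrow\beta_3$, the canonical triangular form and this locus are both preserved. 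Hence the two conditions hold for every system equivalent to a canonical triangular one.

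The substantive direction is "vanishing $\Rightarrow$ triangular". The symmetry requirement $M_{12}=M_{21}$ is the single linear equation $\ell_1(\alpha_2-\alpha_3)=\ell_2(\beta_2-\beta_3)$, so generically the covector is forced to be $\ell=(\beta_2-\beta_3,\ \alpha_2-\alpha_3)$ up to scale. The key computation I would carry out is the identity
\[
{\bf V}_3=\bigl(M_{12}\ell_1-M_{11}\ell_2,\ \ M_{22}\ell_1-M_{12}\ell_2\bigr),\qquad \ell=(\beta_2-\beta_3,\ \alpha_2-\alpha_3),
\]
verified by expanding both sides as cubic polynomials in the $\alpha_i,\beta_i$. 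Its right-hand side vanishes exactly when the columns (hence, with the symmetry just imposed, the rows) of $M(\ell)$ are proportional to $\ell$, i.e. when $M(\ell)=c\,\ell\,\ell^{t}$; so on the open stratum $\ell\neq0$ the condition ${\bf V}_3=0$ already produces the ideal and the triangular form. It then remains to treat the degenerate stratum $\alpha_2=\alpha_3,\ \beta_2=\beta_3$, where this $\ell$ vanishes and $A$ becomes commutative: there a one-dimensional ideal is a common eigenvector of the two multiplication operators, equivalently a common root of a pair of binary quadratic forms, and I would show that ${\bf V}_4=0$ is precisely the resultant condition guaranteeing such a root.

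The main obstacle will be this sufficiency argument, and inside it the bookkeeping of the degenerate strata: one must confirm that ${\bf V}_3$ and ${\bf V}_4$ between them exhaust all configurations — the generic forced-$\ell$ case, the commutative case $\alpha_2=\alpha_3,\ \beta_2=\beta_3$, and the further special loci where the quadratic forms or their resultant themselves degenerate (e.g. where $I_0=0$) — and that in each the recovered covector yields a genuine two-sided ideal rather than a spurious rank-one solution. Tracking the $\star$-parity of ${\bf V}_3$ and ${\bf V}_4$ (one symmetric, one skew, paralleling the remark on $I_1$) should organize these cases and, in particular, explain why two vector conditions rather than one are required.
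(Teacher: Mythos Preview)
The paper states this proposition without proof, so there is nothing to compare against directly; your plan via the one--dimensional ideal criterion is the natural route and your key identity for ${\bf V}_3$ checks out. The necessity direction and the generic sufficiency case (where $(\alpha_2-\alpha_3,\beta_2-\beta_3)\neq(0,0)$) are handled correctly by your argument.

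The genuine gap is in your treatment of the commutative stratum $\alpha_2=\alpha_3,\ \beta_2=\beta_3$. You assert that there ``${\bf V}_4=0$ is precisely the resultant condition'' for a common eigenvector of the two multiplication operators, but this is false: substitute $\alpha_2=\alpha_3$ and $\beta_2=\beta_3$ into the displayed formulas and every term of both ${\bf V}_3$ and ${\bf V}_4$ vanishes identically, because each monomial carries a factor $\alpha_2-\alpha_3$, $\beta_2-\beta_3$, or $\alpha_1\alpha_2-\alpha_1\alpha_3-\alpha_2\beta_2+\alpha_3\beta_3$ (respectively its $\beta$--counterpart), all of which are zero on that locus. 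So ${\bf V}_4$ carries no information whatsoever in the commutative case and cannot play the role of a resultant.

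Worse, this is not a mere bookkeeping issue that a cleverer argument would fix: take $\alpha_1=\alpha_4=0,\ \alpha_2=\alpha_3=1,\ \beta_1=-1,\ \beta_2=\beta_3=0,\ \beta_4=1$, i.e.\ $u_t=uv+vu,\ v_t=uu-vv$. Then ${\bf V}_3={\bf V}_4=(0,0)$ by the above, yet the two left--multiplication matrices $L_{e_1}=\begin{pmatrix}0&1\\1&0\end{pmatrix}$ and $L_{e_2}=\begin{pmatrix}1&0\\0&-1\end{pmatrix}$ share no eigenvector, so the algebra has no one--dimensional ideal and the system is not triangular (the involution $\star$ acts trivially here). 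Thus the sufficiency direction, with the formulas for ${\bf V}_3,{\bf V}_4$ exactly as printed, fails on the commutative locus; your proof cannot be completed as planned, and the obstruction lies in the statement itself rather than in your strategy.
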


\begin{proposition}  System \eqref{equ} is strongly triangular iff the vectors $(\alpha_1,\alpha_2,\alpha_3,\alpha_4)$ and $ (\beta_4, \beta_3, \beta_2, \beta_1)$ are linearly dependent.  
\end{proposition}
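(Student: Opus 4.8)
The plan is to reduce strong triangularity to the existence of a nonzero linear form that is a genuine constant of the motion, and then to read off the linear-dependence condition by collecting coefficients of monomials. Writing $P$ and $Q$ for the right-hand sides of \eqref{equ} and $D_t$ for the associated derivation ($D_t(u)=P,\ D_t(v)=Q$), I first observe that the system is equivalent under a transformation \eqref{invert} alone to one with $\hat v_t=0$ precisely when there is a nonzero pair $(c,d)$ such that $w=c\,u+d\,v$ satisfies $D_t(w)=c\,P+d\,Q=0$ in ${\cal A}$; one then completes $w$ to an invertible change of variables. Expanding $c\,P+d\,Q$ and equating to zero the coefficients of the four linearly independent monomials $u\,u,\ u\,v,\ v\,u,\ v\,v$ yields the homogeneous system
$$
c\,\alpha_1+d\,\beta_4=0,\quad c\,\alpha_2+d\,\beta_3=0,\quad c\,\alpha_3+d\,\beta_2=0,\quad c\,\alpha_4+d\,\beta_1=0,
$$
whose two coefficient columns are exactly $(\alpha_1,\alpha_2,\alpha_3,\alpha_4)$ and $(\beta_4,\beta_3,\beta_2,\beta_1)$. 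A nonzero solution $(c,d)$ exists iff these two vectors are linearly dependent.

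The one genuine subtlety is that the equivalence in the definition of strong triangularity also allows the involution \eqref{star1}, so I must check that $\star$ does not enlarge the class. The clean way is to show that the property that \emph{some nonzero linear form $w$ satisfies $D_t(w)=0$} is invariant under the whole equivalence group. Under a transformation \eqref{invert} this is immediate, since the space of linear forms $\mathrm{span}(u,v)$ is preserved and being annihilated by the (intrinsic) derivation is coordinate-independent. Under $\star$, which sends the system to the one with derivation $D_t^\star(x)=(D_t(x^\star))^\star$, I use that linear forms are fixed, $w^\star=w$, so that $D_t^\star(w)=(D_t(w^\star))^\star=(D_t(w))^\star$; hence $D_t(w)=0$ iff $D_t^\star(w)=0$. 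As a direct sanity check, imposing $c\,P^\star+d\,Q^\star=0$ produces the very same four scalar equations, merely relisted, confirming that the involution is irrelevant to this condition.

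Assembling the two directions: if \eqref{equ} is strongly triangular, it is equivalent to some system with $v'_t=0$, which carries the nonzero linear constant $v'$; by the invariance just established the original system also admits a nonzero linear constant $w=c\,u+d\,v$, and the computation above forces $(\alpha_1,\alpha_2,\alpha_3,\alpha_4)$ and $(\beta_4,\beta_3,\beta_2,\beta_1)$ to be dependent. Conversely, linear dependence yields a nonzero $(c,d)$ solving the system, hence $D_t(c\,u+d\,v)=0$; completing $c\,u+d\,v$ to an invertible transformation \eqref{invert} puts \eqref{equ} into strongly triangular form. I expect the only real work, and the main potential pitfall, to be the careful treatment of the involution inside the equivalence group — specifically verifying that admitting a nonzero linear constant of motion is a bona fide invariant — whereas the monomial bookkeeping itself is entirely routine.
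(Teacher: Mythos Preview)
Your argument is correct. The paper states this proposition without proof, and what you have written is precisely the natural direct verification: a nonzero linear form $w=cu+dv$ is annihilated by $D_t$ iff the four monomial coefficients vanish, which is the stated linear-dependence condition; invariance of this property under \eqref{invert} is tautological, and under \eqref{star1} it follows since $\star$ merely swaps the $uv$ and $vu$ coefficients in both $P$ and $Q$ simultaneously, leaving the dependence condition unchanged.
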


\subsubsection{Examples}

 Some experiments with non-triangular systems \eqref{equ} having
symmetries of degree $m$ have been made in \cite{miksokcmp}.
  One of the results is:
\begin{theorem}
Any non-triangular equation \eqref{equ} possessing
 a symmetry of the form
 $$
\begin{cases} \nonumber
u_{\tau}  =\gamma _1 u\, u\, u  + \gamma _2 u\, u\, v +
\gamma _3 u\, v\, u + \gamma _4 v\, u\, u \, +  \\ \qquad \,
\gamma _5 u\, v\, v  + \gamma _6 v\, u\, v +
\gamma _7 v\, v\, u + \gamma _8 v\, v\, v ,\\[3mm]
v_{\tau}  = \delta _1 u\, u\, u  + \delta _2 u\, u\, v + \delta _3
u\, v\, u + \delta _4 v\, u\, u \, + \\ \qquad \,
\delta _5 u\, v\, v  + \delta _6 v\, u\, v +
\delta _7 v\, v\, u + \delta _8 v\, v\, v 
\end{cases}
$$
is equivalent to one of the following:
\begin{eqnarray*}
&&a): \begin{cases}
u_t=u\, u - u\, v,\\
v_t=v\, v - u\, v + v\, u,     
\end{cases} \qquad
b): \begin{cases}
u_t=u\, v,\\
v_t=v\, u,        
\end{cases} \qquad 
c):\begin{cases}
u_t=u\, u - u\, v,\\
v_t=v\, v - u\, v,        
\end{cases}
\\[3mm]&&
d):\begin{cases}
u_t= - u\, v, \\
v_t=v\, v + u\, v - v\, u,     
\end{cases} \qquad
e):\begin{cases}
u_t=u\, v - v\, u,\\
v_t=u\, u + u\, v - v\, u,     
\end{cases} \qquad 
f): \begin{cases}
u_t=v\, v,\\
v_t=u\, u.        
\end{cases}
\end{eqnarray*}
\end{theorem}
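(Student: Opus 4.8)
The plan is to turn the requirement ``\eqref{equ} admits a cubic symmetry'' into an explicit polynomial system in the coefficients and then to solve it modulo the equivalence group generated by \eqref{invert} and the involution \eqref{star1}. Writing $D_t$ for the derivation with $D_t(u)=P$, $D_t(v)=Q$ and $D_\tau$ for the derivation determined by the cubic ansatz, the symmetry condition is $[D_t,D_\tau]=0$, which it suffices to impose on the generators: $D_t(u_\tau)=D_\tau(P)$ and $D_t(v_\tau)=D_\tau(Q)$. Expanding both sides with the Leibniz rule produces homogeneous non-commutative polynomials of degree four; since $D_t$ raises the degree by one and $D_\tau$ by two, both sides are genuinely quartic, and equating the coefficients of each of the sixteen degree-four monomials in the two component equations yields a system $\Sigma$ of $32$ scalar relations that is bilinear in the quadratic data $(\alpha_i,\beta_i)$ and the cubic data $(\gamma_i,\delta_i)$.

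First I would read $\Sigma$ as a linear system in the unknowns $(\gamma_i,\delta_i)$ with coefficients depending on $(\alpha_i,\beta_i)$. The existence of a genuine (non-trivial) solution then amounts to rank conditions on this linear system, and these conditions are exactly polynomial constraints on $(\alpha_i,\beta_i)$. Because the whole problem is $GL_2\times\{\star\}$-equivariant, these constraints are expressible through the invariants and semi-invariants built in the preceding subsection; in particular the stratification of the coefficient space by the vanishing pattern of the semi-invariant $I_0$, the invariants $I_1,\dots,I_4$ (equivalently $J_i=I_i/I_0^2$) and the vector semi-invariants ${\bf V}_3,{\bf V}_4$ organizes the analysis. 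The non-triangularity hypothesis removes, by the Proposition above, precisely the locus ${\bf V}_3={\bf V}_4=0$, so only the remaining strata have to be examined.

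The reduction I would carry out is to first move $(\alpha_i,\beta_i)$ to a normal form under \eqref{invert} and \eqref{star1}, and only then test $\Sigma$. Using the decomposition of the eight-dimensional coefficient space into $W_1\oplus W_2\oplus W_3$ and the action of the vector fields $X_{ij}$, each orbit has a low-dimensional transversal, so that the bilinear system $\Sigma$ has to be solved only on a short list of representatives rather than on the full eight-parameter family. On each representative $\Sigma$ becomes a small, explicitly solvable polynomial system; the solutions that admit a cubic symmetry collapse, after a final normalization, to the six systems $a)$--$f)$. Pairwise inequivalence of these six is checked by evaluating the invariants $J_1,\dots,J_4$ together with the behaviour of $I_1$ under $\star$, which separate the orbits.

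The main obstacle is neither the derivation of $\Sigma$ nor any single case, but controlling the branching: $\Sigma$ factorizes into many alternatives (as already signalled for the analogous systems \eqref{kvazgen}), and one must show that the resulting tree of variants is exhaustive and that every branch either is triangular, is empty, or leads to one of $a)$--$f)$. Keeping this case analysis finite and transparent is exactly where the invariant-theoretic normalization of the previous subsection does the real work, and verifying completeness — that no orbit is overlooked and that no two listed systems coincide — is the delicate part of the argument.
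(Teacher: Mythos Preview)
The paper does not prove this theorem in the text; it is quoted as a result from \cite{miksokcmp}, so there is no detailed argument here to compare against line by line.

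Your outline is the correct strategy and is certainly what the cited paper does: impose $[D_t,D_\tau]=0$ on the generators, read off the resulting bilinear system $\Sigma$ in $(\alpha_i,\beta_i;\gamma_j,\delta_j)$, reduce modulo the $GL_2\times\langle\star\rangle$ action using the invariant data of the preceding subsection, and solve stratum by stratum, discarding triangular branches via the criterion ${\bf V}_3={\bf V}_4=0$.

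The gap is that what you have written is a plan, not a proof. The entire content of the theorem is the assertion that the solution set of $\Sigma$, modulo equivalence and modulo the triangular locus, consists of exactly the six listed orbits. Your sentence ``the solutions that admit a cubic symmetry collapse, after a final normalization, to the six systems $a)$--$f)$'' is a restatement of the theorem, not an argument for it. You correctly identify that controlling the branching and verifying completeness are the delicate parts --- but those \emph{are} the proof, and they are absent. To turn this into an actual proof you must (i) exhibit an explicit finite list of normal forms covering all non-triangular orbits, (ii) for each one, write down and solve the specialization of $\Sigma$, showing either that no cubic $(\gamma_j,\delta_j)$ exists or that the residual freedom in $(\alpha_i,\beta_i)$ forces one of $a)$--$f)$, and (iii) separate the six outputs by computing invariants. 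None of these steps is conceptually deep, but each requires concrete calculation that your proposal only promises.
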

\begin{remark}
Possibly equation a) has only one symmetry while equations b)-f) have infinitely many. It is proved for equations b), c), and f).
\end{remark}

It is a remarkable fact that a requirement of the existence of just
one cubic symmetry selects a finite list of equations with no free
parameters (or more precisely, all possible parameters can be
removed by linear transformations \eqref{invert}).

A list of six equations with quartic symmetries was presented in \cite{miksokcmp}. As mentioned by A. Odesskii, two of them are equivalent. 
The five non--equivalent equations are given by 
\begin{eqnarray*}
&& \begin{cases}
u_t= - u\, v,\\
v_t=v\, v + u\, v,     
\end{cases} \qquad
 \begin{cases}
u_t=-v\, u,\\
v_t=v\, v + u\,v,        
\end{cases} \qquad 
\begin{cases}
u_t=u\, u - u\, v - 2 v\,u,\\
v_t=v\, v -v \, u - 2 u\, v,        
\end{cases}
\\[3mm]&&
\begin{cases}
u_t=u \, u - 2 v\, u, \\
v_t=v\, v  - 2 v\, u,     
\end{cases} \qquad
\begin{cases}
u_t=u\, u - 2 u\, v,\\
v_t=v\, v + 4  v\, u.     
\end{cases}  
\end{eqnarray*}
Using computer algebra system CRACK \cite{wolf}, T.Wolf verified that it is a complete list of non--triangular systems that have quartic but no cubic symmetries.  

\begin{op}
Describe non-equivalent non-triangular systems \eqref{equ} that possess infinitely many symmetries and/or first integrals.  
\end{op}

\section{PDEs on free associative algebra}

  In this section we consider the so called non-abelian evolution equations,
which are natural generalizations of the evolution matrix equations.  

\subsection{Matrix integrable equations}
The matrix Burgers equation is given by
\begin{equation}\label{matbur}
{\bf U}_{t}={\bf U}_{xx}+2\, {\bf U} {\bf U}_{x},
\end{equation}
where ${\bf U}(x,t)$ is unknown $m\times m$-matrix.
The matrix KdV equation has the following form
\begin{equation}\label{matkdv}
{\bf U}_{t}={\bf U}_{xxx}+3\, ({\bf U} {\bf U}_{x}+{\bf U}_{x} {\bf U}). 
\end{equation}
 It is known that  this equation has infinitely many 
higher symmetries for arbitrary $m$. All of them can also be written in the
matrix form.
The simplest higher symmetry of \eqref{matkdv} is given by
\[
{\bf U}_{\tau}={\bf U}_{xxxxx}+5\, ({\bf U} {\bf U}_{xxx}+{\bf U}_{xxx} {\bf U})+10\,({\bf U}_{x} {\bf U}_{xx}+{\bf U}_{xx} {\bf U}_{x})+
+10\,({\bf U}^{2} {\bf U}_{x}+{\bf U} {\bf U}_{x} {\bf U}+{\bf U}_{x}{\bf U}^{2}).
\]
For $m=1$ this matrix hierarchy coincides with the usual KdV hierarchy.

In general, we may consider \cite{OlvSok98} matrix equations of the form
$$
{\bf U}_{t}=F({\bf U},\, {\bf U}_{1},\, \dots, \,{\bf U}_{n}), \qquad {\bf U}_i=\frac{\partial^i {\bf U}}{\partial x^i},
$$
where $F$ is a (non-commutative) polynomial. As usual in this chapter, the criterion of integrability is the existence of matrix  higher symmetries 
$$
{\bf U}_{\tau}=G({\bf U},\, {\bf U}_{1},\, \dots, \,{\bf U}_{m}).
$$
Such equations can be regarded as matrix generalizations of scalar integrable equations \eqref{eveq}.

Below we present a list of integrable matrix equations. The main goal is to demonstrate that the matrix KdV equation is not a single accident.  Many of known integrable models have matrix generalizations \cite{march}--\cite{kuper}.

In particular, the mKdV equation \eqref{mkdv} has two different matrix generalizations:
\begin{equation}\label{matmkdv}
{\bf U}_t={\bf U}_{xxx}+3 {\bf U}^2 {\bf U}_x+3 {\bf U}_x {\bf U}^2,
\end{equation}
and (see \cite{kuper})
$${\bf U}_t={\bf U}_{xxx}+3 [{\bf U}, {\bf U}_{xx}]-6 {\bf U} {\bf U}_x {\bf U}.$$
The matrix generalization of the NLS equation \eqref{NLS} is given by
\begin{equation}\label{matnls}
{\bf U}_t={\bf U}_{xx}-2\, {\bf U} {\bf V} {\bf U}, \qquad {\bf V}_t=-{\bf V}_{xx}+2\, {\bf V} {\bf U} {\bf V}.
\end{equation}

The following integrable matrix system of the derivative NLS type 
\begin{equation}\label{olsokder}
{\bf U}_t={\bf U}_{xx}+2\, {\bf U} {\bf V} {\bf U}_x, \qquad {\bf V}_t=-{\bf V}_{xx}+2\, {\bf V}_x {\bf U} {\bf V}
\end{equation}
was found in \cite{OlvSok99}.

The Krichever-Novikov equation \eqref{KN} with $P=0$ is called the {\it Shwartz KdV equation}. Its matrix generalization is given by
\begin{equation}\label{matskdv}
{\bf U}_t={\bf U}_{xxx}-\frac32\, {\bf U}_{xx} {\bf U}_x^{-1} {\bf U}_{xx}.
\end{equation}
The Krichever-Novikov equation \eqref{KN} with the generic $P$  has probably no matrix generalizations.

The matrix Heisenberg equation has the form
\begin{equation}\label{matheis}
{\bf U}_t = {\bf U}_{xx} -2\, {\bf U}_x ({\bf U}+{\bf V})^{-1} {\bf U}_{x},
\qquad {\bf V}_t = -{\bf V}_{xx} +2\, {\bf V}_x ({\bf U}+{\bf V})^{-1} {\bf V}_{x}. 
\end{equation}

One of the most renowned hyperbolic matrix integrable equations is the principle chiral $\sigma$-model
\begin{equation}\label{prchir}
{\bf U}_{xy}=\frac12\, ({\bf U}_x {\bf U}^{-1} {\bf U}_y+ {\bf U}_y {\bf U}^{-1} {\bf U}_x).
\end{equation}

The following, maybe, new integrable system 
$$
\begin{array}{c}
\begin{cases}
{\bf U}_t = \lambda_1 {\bf U}_{x} + (\lambda_2-\lambda_3)\, {\bf W}^t  {\bf V}^t , \\[2mm]
{\bf V}_t = \lambda_2 {\bf V}_{x} + (\lambda_3-\lambda_1)\, {\bf U}^t  {\bf W}^t , \\[2mm]
{\bf W}_t =\lambda_3 {\bf W}_{x} + (\lambda_1-\lambda_2)\, {\bf V}^t  {\bf U}^t
\end{cases}
\end{array}
$$
is a matrix generalization of the 3-wave model. In contrast with the previous equation it contains the matrix transposition denoted by $^t$.

Let ${\bf e}_1,\dots, {\bf e}_N$ be a basis of some associative algebra ${\cal B}$ and $U$ be the element \eqref{UU}.
Then, any of the matrix equations presented above gives rise to an integrable system for unknown functions $u_1,
\dots, u_N$ associated with ${\cal B}$. Indeed, only the associativity of the product in $\cal B$ is needed to verify that symmetries of a matrix equation remain symmetries of the corresponding system for $u_i$. 

Interesting examples of integrable multi-component systems  are produced by the
Clifford algebras and by the group algebras of associative rings. 

However, the most fundamental so called non-abelian equations correspond to free associative algebras (cf. Section 3.1).

\subsection{Non-abelian evolution equations}\label{naev}

In order to formalize the concept of matrix equations, let us consider
evolution equations on a free associative algebra ${\cal A}$. In the case of
one-field non--abelian equations the generators of ${\cal A}$ are
\begin{equation}\label{generat}
U, \quad U_{1}=U_{x}, \quad \dots, \quad U_{k}=\frac{\partial^k U}{\partial x^k}, \quad \dots .
\end{equation}
Since ${\cal A}$ is assumed to be a free algebra, no algebraic relations between
the generators \eqref{generat} are allowed.  
 All definitions can be easily generalized to the case of several non-abelian variables.

The formula 
\begin{equation}\label{nonabel}
U_{t}=F(U,\, U_{1},\, \dots, \,U_{n}), \qquad F\in {\cal A} 
\end{equation} 
does not mean that we consider an element of non-associative algebra depending on time $t$. As usual, formula \eqref{nonabel}
defines a derivation $D_t$ of  ${\cal A}$, which commutes with the basic derivation
$$
D=\sum_{0}^{\infty} U_{i+1} \frac{\partial}{\partial U_{i}}.
$$
It is easy to check that $D_t$ is defined by the vector field
\[
D_t=\frac{\partial}{\partial t}+
\sum_0^{\infty} D^{i} (F) \frac{\partial}{\partial U_i}.
\]

A generalization of the symmetry approach to differential equations on
free associative algebras requires proper definitions for concepts such as symmetry,
conservation law, Fr\'echet  derivative, and formal symmetry.

As in the scalar case, a symmetry is an evolution equation
\[
U_{\tau}=G(U,\, U_{1},\, \dots, \,U_{m}),
\]
such that the vector field
\[
D_{G}=\sum_0^{\infty} D^{i} (G) \frac{\partial}{\partial u_i}
\]
commutes with $D_t$. The polynomial $G$ is called {\it the symmetry generator}.

The condition $[D_t, D_{G}]=0$ is equivalent to $D_t(G)=D_{G}(F)$. The latter
relation can be rewritten as
$$
G_{*}(F)-F_{*}(G)=0,
$$
where the Fr\'echet derivative $H_{*}$ for any  $H\in {\cal A}$ can be defined as follows.

For any $a \in {\cal A}$ we denote by $L_a$ and $R_a$ the operators of left and right multiplication by $a$:
$$
L_a(X)=a\,X, \qquad R_a(X)=X\,a, \qquad X\in {\cal A}.
$$
The associativity of ${\cal A}$ is equivalent to the identity $[L_a, R_b]=0$
for any $a$ and $b$. Moreover,
$$
L_{ab}=L_{a}\,L_{b}, \qquad R_{ab}=R_{b}\,R_{a},
\qquad L_{a+b}=L_{a}+L_{b},   \qquad R_{a+b}=R_{a}+R_{b}.
$$
\begin{definition}\label{localr}
We denote by ${\cal O}$ the associative algebra generated by
all operators of left and right multiplication by elements \eqref{generat}.
This algebra is called the {\it algebra of local operators}.
\end{definition}

Let us now extend the set of generators \eqref{generat} by adding non-commutative symbols
$V_0, V_1, \dots$ and let us prolong the derivation $D$ by $D(V_i)=V_{i+1}$.

Given $H(U, U_1, U_2, \dots, U_k) \in {\cal A},$ we find
$$L_H=\frac{\partial}{\partial \varepsilon} H(U+\varepsilon V_0, \
U_1+\varepsilon V_1, \
U_2+\varepsilon V_2, \dots) \big| _{\varepsilon=0}
$$
and represent this expression as $H_{*} (V_0)$, where $H_{*}$ is a linear
differential operator of order $k,$ whose coefficients belong to $\cal O$.
For example, $(U_2+U U_1)_{*}=D^2+L_U D+R_{U_1}$.

In contrast with the definition of the symmetry, which is a straightforward
generalization of the corresponding scalar notion, the definition of a
conserved density has to be essentially modified.

Recall that in the scalar case a conserved density is a
function $\rho \in {\cal F}$ such that $D_t(\rho)=D(\sigma)$ for some
$\sigma \in {\cal F}$. Equivalent densities define
the same functional, whose values do not depend on time (see Section \ref{conslaw}). Here, the equivalence relation is defined as follows:
$\rho_1 \sim \rho_2$ iff $\rho_1-\rho_2=D(s), \,\,\, s \in {\cal F}$.
In other words, a conserved density is an equivalence class in ${\cal F}$ such that
$D_t$ takes it to zero equivalence class.
 
In the non-abelian case we hold the same line.
The following elementary operations define an equivalence relation:
\begin{itemize} 
\item Addition of elements of the form $D(s)$, where $ s \in {\cal A}$, to
the element $\rho \in {\cal A}$;
\item Cyclic permutation of factors in any
monomial of the polynomial $\rho$.
\end{itemize}

Two densities $\rho_{1}, \rho_{2}\in {\cal A}$
related to each other through a finite sequence of the elementary operations
are called {\it equivalent}. An equivalent definition is  $\rho_{1} \sim \rho_{2}$ iff   
$$\rho_{1} - \rho_{2}\in {\cal A}/({\rm Im}\,D+[{\cal A},\, {\cal A}]).
$$
It is clear that in the scalar case this
definition coincides with Definition \ref{def28}.
\begin{definition} The equivalence class of an element $\rho$ is called the {\it trace of} $\rho$ and is denoted by ${\rm tr}\,\rho$.
\end{definition}

A motivation of the definition is that in the matrix case the
functional
$$ I({\bf U}) =\int_{-\infty}^{\infty} {\rm trace}(\rho({\bf U},{\bf U}_x,\dots))\, dx$$
is well defined on the equivalence classes if $\rho({\bf 0},{\bf 0},\dots)={\bf 0}$ and ${\bf U}(x,t) \to {\bf 0}$ as $x\to \pm \infty.$

\begin{definition} The equivalence class $\rho$ is called {conserved density} for equation \eqref{nonabel} if $D_t(\rho)\sim 0$.
\end{definition}

Poisson brackets \eqref{PDEbrop} are defined on the vector space ${\cal A}/({\rm Im}\,D+[{\cal A},\, {\cal A}])$.

A general theory of Poisson and double Poisson brackets on algebras of differential functions was developed in \cite{SKV}.  
An algebra of (non-commutative) differential functions is defined as a unital
associative algebra ${\cal D}$ with a derivation $D$ and commuting derivations $\partial_i, \,\, i\in \Z_{+}$
such that the following two properties hold:
\begin{itemize}
\item[1).] For each $f\in {\cal D}$ , $\partial_i(f)=0 $ for all but finitely many $i$;
\item[2).] $[\partial_i,\, D]=\partial_{i-1}.$
\end{itemize}

\subsubsection{Formal symmetry}
 
At least for non-abelian equations of the form \eqref{nonabel}, where
\begin{equation}\label{cononabel}
F=U_{n}+f(U,\, U_{1},\, \dots, \,U_{n-1})
\end{equation}
all definitions and results concerning the formal symmetry 
(see Section 3.1) can be easily generalized.

\begin{definition} A formal series
$$
\Lambda=D+l_{0}+l_{-1}D^{-1}+\cdots \, ,
\qquad l_k \in {\cal O} 
$$
is called a {\it formal symmetry} for equation \eqref{nonabel}, \eqref{cononabel}
if it satisfies equation
$$
D_t (\Lambda)-[F_{*},\,\Lambda]=0\, .
$$
\end{definition}
\begin{remark}
Stress that the coefficients of both  $F_{*}$ and $\Lambda$ belong not to ${\cal A}$ but to the
associative algebra ${\cal O}$ of local operators
{\rm (see Definition \ref{localr} above)}.
\end{remark}
For example, in the case of non-abelian Korteweg-de Vries equation
\eqref{matkdv} one can take $\Lambda={\cal R}^{1/2},$ where ${\cal R}$ is the following recursion
operator for \eqref{matkdv} (see \cite{OlvSok98})):
$$
{\cal R}=D^2+2(L_U+R_U)+(L_{U_{x}}+R_{U_{x}})\,D^{-1}+
(L_U-R_U)\,D^{-1}\,(L_U-R_U)\, D^{-1}.
$$
In the scalar case, this recursion operator coincides with the usual one
(see \eqref{recop}.

Analogs of Theorems \ref{tlsym}, \ref{svsok}  can be proved
by reasonings similar to the ones used for the original statements.

 \chapter{Integrable systems and  non--associative algebras}
 
 One of the most remarkable observations by S. Svinolupov is the
discovery of the fact that polynomial multi-component integrable equations
are closely related to the well-known nonassociative algebraic structures such 
as left-symmetric algebras, Jordan algebras, triple Jordan systems, etc.
This connection allows one to clarify the nature of known vector and
matrix generalizations (see, for instance \cite{For1,For2,For3}) of classical
scalar integrable equations and to construct some new examples of this
kind \cite{SviSok94}.

\section{Notation and definitions of algebraic structures}
Let ${\cal A}$ be an $N$-dimensional algebra over $\mathbb{C}$ with a product $\circ$, ${\bf e}_1, \dots, {\bf e}_N$ be a basis in ${\cal A}.$
By $C^{i}_{jk}$ we denote the   structural constants of ${\cal A}.$
For any triple system with an operation $\{\cdot,\cdot,\cdot\}$ we denote by $B^i_{jkm}$ the structural constants:
$$
\{{\bf e}_j, {\bf e}_k, {\bf e}_m\}=\sum^N_{i=1} B^i_{jkm} {\bf e}_i.
$$
We denote by $U$ the element \eqref{UU}.  Hereinafter, we use the notation \eqref{as}, \eqref{br}.

\subsection{Left-symmetric algebras}

\begin{definition}
Algebras with identity 
$[X, Y, Z]=0$ are called {\it left-symmetric} \cite{vinb}.
\end{definition}
 \begin{example} \label{ex41} Examples of left-symmetric algebras:
\begin{itemize} 
\item[\bf 1. ] Any associative algebra is left-symmetric.
\item[\bf 2. ] The operation
\begin{equation}\label{vecleft}
{\bf x}\circ {\bf y}=\langle {\bf x},{\bf c}\rangle\, {\bf y}+\langle {\bf x},{\bf y}\rangle \,{\bf c}, 
\end{equation}
where $ \langle \cdot, \cdot \rangle$ is the scalar product, ${\bf c}$ is a given vector, defines a left-symmetric algebra.
\item[\bf 3.]  
Let ${\mathfrak A}$ be associative algebra. Assume that $R: {\mathfrak A}\rightarrow {\mathfrak A}$
satisfies the modified classical Yang-Baxter equation
$$
R([R(x),y]-[R(y),x])=[x,y]+[R(x),R(y)].
$$
Then the multiplication
$$
x\circ y=[R(x),\,y] - (x y + y x)
$$
is left-symmetric.
\end{itemize}
\end{example}

\subsection{Jordan algebras}\label{Sec412}

\begin{definition} An algebra ${\cal A}$ is called Jordan if the following identities
\begin{equation}\label{jord}
X\circ Y=Y\circ X, \qquad X^2\circ (Y\circ X)=(X^2\circ Y)\circ X
\end{equation}
hold. 
\end{definition}
 
\begin{example}\label{ex42} Examples of simple Jordan algebras:

\begin{itemize}
\item[\bf 1.]  The set of all $m\times m$ matrices with respect to the operation 
\begin{equation}\label{jora1}
{\bf X}\circ {\bf Y}={\bf X}\,{\bf Y} + {\bf Y}\,{\bf X};
\end{equation}
\item[\bf 2.]  The set of all symmetric $m\times m$ matrices with the same operation \eqref{jora1};
\item[\bf 3.]  The set of all $N$-dimensional vectors with respect to the operation
\begin{equation}\label{jora2}
{\bf x} \circ {\bf y} =\langle {\bf x} ,{\bf c} \rangle\,{\bf y} +\langle {\bf y} ,{\bf c} \rangle\,{\bf x} -\langle {\bf x} ,{\bf y} \rangle\,{\bf c},
\end{equation}
where $ \langle \cdot, \cdot \rangle$ is the scalar product, ${\bf c}$ is a given vector;
\item[\bf 4.]  The special Jordan algebra $H_3(O)$ of dimension 27.
\end{itemize}
 \end{example}

\subsection{Triple Jordan systems}

\begin{definition} \label{def53} A triple system $T$ is called Jordan if the following identities hold
$$
\{X,Y,Z\}=\{Z,Y,X\},
$$
$$
\{X,Y,\{V,W,Z\}\}-\{V,W,\{X,Y,Z\}\}=
\{\{X,Y,V\},W,Z\}-\{V,\{Y,X,W\},Z\}.
$$
\end{definition}
\begin{example}\label{ex43} Examples of simple triple Jordan systems: 

\begin{itemize}
\item[\bf a)] The set of all $N\times N$ matrices with respect to the operation
\begin{equation} \label{jormat1}
\{{\bf X},\,{\bf Y},\,{\bf Z}\} = \frac{1}{2}\,\Big({\bf X}\, {\bf Y}\, {\bf Z}+{\bf Z}\,{\bf Y}\,{\bf X}\Big);
\end{equation}
\item[\bf b)]
The set of all skew-symmetric $N\times N$ matrices with the operation \eqref{jormat1};
\item[\bf c)] The set of all $N$-dimensional vectors with the operation
\begin{equation} \label{jorvec1}
\{{\bf x},{\bf y},{\bf z}\}=\langle{\bf x},{\bf y}\rangle\, {\bf z}+\langle {\bf y},{\bf z}\rangle\, {\bf x}-\langle{\bf x},{\bf z}\rangle \, {\bf y};
\end{equation}
\item[\bf d)] The set of all $N$-dimensional vectors with respect to
\begin{equation} \label{jorvec2}
\{{\bf x},{\bf y},{\bf z}\}=\langle{\bf x},{\bf y}\rangle \, {\bf z}+\langle{\bf y},{\bf z}\rangle \,{\bf x}.
\end{equation}
\end{itemize}
\end{example}
\begin{remark} There is the following generalization of the product \eqref{jorvec2}. The vector space of all $n\times m$-matrices is a triple Jordan system
with respect to the operation
\begin{equation}\label{nm} 
\{{\bf X},{\bf Y},{\bf Z}\}= {\bf X}\, {\bf Y}^t \,{\bf Z}+{\bf Z}\, {\bf Y}^t \, {\bf X}, 
\end{equation}
where ``t'' stands for the transposition.
\end{remark}

There are close relations between Jordan algebras and triple systems.

\begin{proposition}\label{pr41}
For any triple Jordan system $\{X,Y,Z\}$ the product 
\begin{equation}\label{jorC}
X\circ Y=\{X, C ,Y\},
\end{equation}
where $C$ is an arbitrary fixed element, defines a Jordan algebra. 
\end{proposition}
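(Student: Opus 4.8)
The plan is to verify directly that the product $X\circ Y=\{X,C,Y\}$ satisfies the two Jordan algebra identities \eqref{jord}, using only the two defining identities of a triple Jordan system from Definition \ref{def53}. This is a routine but careful substitution, so the strategy is to reduce each Jordan-algebra axiom to a special case of the triple-system axioms by specializing appropriate arguments to the fixed element $C$.

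First I would establish commutativity. The product $X\circ Y=\{X,C,Y\}$ and $Y\circ X=\{Y,C,X\}$, and the first triple-system identity $\{X,Y,Z\}=\{Z,Y,X\}$ applied with middle argument $C$ gives $\{X,C,Y\}=\{Y,C,X\}$ immediately. So commutativity $X\circ Y=Y\circ X$ is the easy half and follows from the symmetry axiom alone.

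The substantive part is the Jordan identity $X^2\circ(Y\circ X)=(X^2\circ Y)\circ X$, where $X^2=X\circ X=\{X,C,X\}$. Rewriting everything through the triple product, the left-hand side becomes $\{\{X,C,X\},\,C,\,\{Y,C,X\}\}$ and the right-hand side becomes $\{\{\{X,C,X\},C,Y\},\,C,\,X\}$. The plan is to take the second (linearized) triple-system identity
$$
\{X,Y,\{V,W,Z\}\}-\{V,W,\{X,Y,Z\}\}=\{\{X,Y,V\},W,Z\}-\{V,\{Y,X,W\},Z\}
$$
and specialize the free arguments to $C$ in a way that produces exactly these expressions. The natural choice is to set the outer ``$Y$'' and ``$W$'' slots equal to $C$ and to identify $V$ and $Z$ with copies of $X$, so that the commutator-like left-hand side measures precisely the failure of associativity of $\circ$ that the Jordan identity controls. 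After this specialization one uses the symmetry axiom $\{A,C,B\}=\{B,C,A\}$ repeatedly to bring all terms into a common normal form and to cancel the unwanted ``inner'' term $\{V,\{Y,X,W\},Z\}=\{X,\{C,X,C\},X\}$ against its partner.

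The main obstacle I expect is bookkeeping: choosing the specialization of the four free arguments so that the linearized identity collapses to the quadratic Jordan identity, and then correctly applying the symmetry relation to the several triple products that appear (some with $C$ in the middle slot, some with $C$ in an outer slot). The risk is sign and ordering errors rather than any conceptual difficulty, since no associativity or further structure is available — only the two axioms of Definition \ref{def53}. I would organize the computation by first expanding both sides of \eqref{jord} in terms of $\{\cdot,\cdot,\cdot\}$, then writing down the instance of the linearized identity with the chosen arguments, and finally matching terms one by one. Since the statement is an identity that holds for every fixed $C$, it suffices to carry out this manipulation formally on arbitrary elements $X,Y$, and the conclusion that $\circ$ defines a Jordan algebra follows.
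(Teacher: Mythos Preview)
The paper does not actually prove this proposition; it is stated as a standard fact from the theory of Jordan triple systems (see, e.g., Loos or McCrimmon), and is used later only as input to the short proof of Proposition~\ref{pr43}. So there is no paper-proof to compare against, and your direct verification is a reasonable way to supply what the paper omits.

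Your plan is correct in outline but the sketch has two soft spots. First, the specific specialization you describe (``set $Y$ and $W$ to $C$ and identify $V$ and $Z$ with copies of $X$'') loses the element $Y$ from the Jordan identity entirely; the instance you actually want first is $(X,Y,V,W,Z)\mapsto(X^2,C,Y,C,X)$, which reduces the Jordan identity to showing
\[
\{Y,C,\{X^2,C,X\}\}=\{Y,\{C,X^2,C\},X\}.
\]
Second, this residual identity does \emph{not} follow from the symmetry axiom alone, contrary to what your last paragraph suggests. You need further instances of the second triple-system axiom: for example, $(X,Y,V,W,Z)\mapsto(X,C,X,C,Y)$ yields $\{X^2,C,Y\}=\{X,\{C,X,C\},Y\}$, and $(X,Y,V,W,Z)\mapsto(C,X,C,X,C)$ yields $\{\{C,X,C\},X,C\}=\{C,X^2,C\}$; combining several such instances in operator form (writing $L(a,b)z=\{a,b,z\}$ and using $[L(A,B),L(V,W)]=L(\{A,B,V\},W)-L(V,\{B,A,W\})$) one eventually obtains $[L(X,C),L(X^2,C)]=0$, which is exactly $[L_X,L_{X^2}]=0$. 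So the bookkeeping is genuinely heavier than ``one specialization plus symmetry,'' and you should expect to invoke the linearized identity at least three times before the terms close up. Over $\C$ this is routine once organized, and your overall strategy will go through.
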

\begin{proposition}\label{pr42} For any Jordan algebra with a product $\circ$ the formula 
\begin{equation}\label{trjor}
\{X,Y,Z\}=(X\circ Y)\circ Z+(Z\circ Y)\circ X - Y\circ (X\circ Z)
\end{equation}
defines a triple Jordan system.
\end{proposition}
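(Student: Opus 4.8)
The plan is to verify directly the two identities required by Definition~\ref{def53} for the triple product $\{X,Y,Z\}=(X\circ Y)\circ Z+(Z\circ Y)\circ X-Y\circ(X\circ Z)$. The symmetry $\{X,Y,Z\}=\{Z,Y,X\}$ is immediate: interchanging $X$ and $Z$ swaps the first two summands, and leaves the third invariant because $X\circ Z=Z\circ X$ by the commutativity in \eqref{jord}. For the second identity I would first recast the product in operator form. Writing $L_X$ for the operator of multiplication, $L_X(Z)=X\circ Z$ (left and right multiplication coincide by commutativity), one has $(X\circ Y)\circ Z=L_{X\circ Y}(Z)$, $(Z\circ Y)\circ X=L_XL_Y(Z)$ and $Y\circ(X\circ Z)=L_YL_X(Z)$, so that
\[
\{X,Y,Z\}=\left(L_{X\circ Y}+[L_X,L_Y]\right)(Z)=:\mathcal V_{X,Y}(Z).
\]
Thus the entire construction is governed by the operators $\mathcal V_{X,Y}=L_{X\circ Y}+[L_X,L_Y]$.

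Next I would reduce the second identity of Definition~\ref{def53} to a single operator relation. Reading each term as an operator acting on the last argument $Z$, the left-hand side becomes $\mathcal V_{X,Y}\mathcal V_{V,W}-\mathcal V_{V,W}\mathcal V_{X,Y}=[\mathcal V_{X,Y},\mathcal V_{V,W}]$, while the right-hand side becomes $\mathcal V_{\{X,Y,V\},W}-\mathcal V_{V,\{Y,X,W\}}$. Hence it suffices to establish the operator identity
\[
[\mathcal V_{X,Y},\,\mathcal V_{V,W}]=\mathcal V_{\{X,Y,V\},\,W}-\mathcal V_{V,\,\{Y,X,W\}},
\]
which is the fundamental formula of the theory of Jordan triple systems.

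To prove it I would use two standard consequences of the Jordan axioms, obtained by polarizing $X^2\circ(Y\circ X)=(X^2\circ Y)\circ X$ (equivalently $[L_X,L_{X^2}]=0$): first, that $D_{X,Y}:=[L_X,L_Y]$ is a derivation of $({\cal A},\circ)$, i.e. $[D_{X,Y},L_Z]=L_{D_{X,Y}(Z)}$; and second, the linearized Jordan identity $[L_{X\circ Y},L_Z]+[L_{Y\circ Z},L_X]+[L_{Z\circ X},L_Y]=0$. I would then expand $[\mathcal V_{X,Y},\mathcal V_{V,W}]$ into the four commutators of $L_{X\circ Y},D_{X,Y}$ with $L_{V\circ W},D_{V,W}$ and evaluate each: $[L_{X\circ Y},D_{V,W}]=-L_{D_{V,W}(X\circ Y)}$ and $[D_{X,Y},L_{V\circ W}]=L_{D_{X,Y}(V\circ W)}$ by the derivation property, while $[D_{X,Y},D_{V,W}]=D_{D_{X,Y}(V),W}+D_{V,D_{X,Y}(W)}$ by the Jacobi identity together with the same property. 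Expanding the right-hand side through $\{X,Y,V\}=(X\circ Y)\circ V+D_{X,Y}(V)$ and $\{Y,X,W\}=(X\circ Y)\circ W-D_{X,Y}(W)$, I would then compare the two sides after separating a ``derivation part'' (the $D_{\cdot,\cdot}$ terms) from a ``multiplication part'' (the $L_{\cdot}$ terms). The derivation parts coincide exactly by the linearized Jordan identity; and, after using that $D_{V,W}$ is a derivation to recombine $D_{V,W}(X)\circ Y+X\circ D_{V,W}(Y)=D_{V,W}(X\circ Y)$, the multiplication parts cancel using only commutativity of $\circ$.

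The main obstacle is the two polarization lemmas quoted above. Establishing that $[L_X,L_Y]$ is a derivation and deriving the linearized Jordan identity from \eqref{jord} require the substitution $X\mapsto X+\lambda V+\mu W$ followed by extraction of the multilinear components, and these are the only places where the Jordan identities beyond commutativity intervene. Once they are in hand, the remainder is disciplined bookkeeping of operator commutators; pleasantly, the multiplication-operator part of the fundamental formula turns out to follow from commutativity alone, so the genuinely Jordan-specific input enters only through the matching of the derivation parts.
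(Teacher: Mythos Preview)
The paper does not supply a proof of this proposition; it is stated as a classical fact from the theory of Jordan algebras (see, e.g., the references \cite{jac} or \cite{meyb} cited in the paper). Your argument is correct and is in fact the standard one: rewriting $\{X,Y,\cdot\}$ as the operator $\mathcal V_{X,Y}=L_{X\circ Y}+[L_X,L_Y]$, reducing the second identity of Definition~\ref{def53} to the ``fundamental formula'' $[\mathcal V_{X,Y},\mathcal V_{V,W}]=\mathcal V_{\{X,Y,V\},W}-\mathcal V_{V,\{Y,X,W\}}$, and then verifying the latter by splitting into multiplication and derivation parts. Your bookkeeping checks out; in particular, the derivation parts match precisely via the linearized Jordan identity $[L_{A\circ V},L_W]+[L_{V\circ W},L_A]+[L_{W\circ A},L_V]=0$ with $A=X\circ Y$, and the multiplication parts match using only commutativity and the derivation property of $D_{X,Y}=[L_X,L_Y]$. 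The two ``polarization lemmas'' you invoke are indeed the substantive input and are standard consequences of \eqref{jord}.
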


\begin{proposition}\label{pr43} For each triple Jordan system $\{X,Y,Z\}$ and any element $C$ the formula
$$ 
\sigma(X,Y,Z)=\{X,\{C,Y,C\},Z\} \label{newtr}
$$
defines a triple Jordan system.
\end{proposition}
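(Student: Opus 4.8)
The plan is to verify the two identities of Definition~\ref{def53} for the product $\sigma(X,Y,Z)=\{X,\{C,Y,C\},Z\}$. Throughout I abbreviate $V(X,Y)Z:=\{X,Y,Z\}$ and $QY:=\{C,Y,C\}$ for the fixed element $C$, so that $\sigma(X,Y,Z)=V(X,QY)Z$. The symmetry $\sigma(X,Y,Z)=\sigma(Z,Y,X)$ is immediate from the first axiom $\{X,QY,Z\}=\{Z,QY,X\}$, so the whole content lies in the second (Jordan) identity. It is convenient to rewrite the second axiom of Definition~\ref{def53} in the operator form
$$[V(X,Y),V(U,W)]=V(\{X,Y,U\},W)-V(U,\{Y,X,W\}),$$
which I call (J). Written with $M(X,Y):=V(X,QY)$, the Jordan identity for $\sigma$ reads $[M(X,Y),M(U,W)]=M(\sigma(X,Y,U),W)-M(U,\sigma(Y,X,W))$.

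First I would expand $[V(X,QY),V(U,QW)]$ by (J), obtaining $V(\{X,QY,U\},QW)-V(U,\{QY,X,QW\})$; the first term is already $M(\sigma(X,Y,U),W)$. Thus the proposition collapses to the single key identity
$$\{QY,X,QW\}=Q\{Y,QX,W\},\qquad(\star)$$
whose right-hand side is $Q\,\sigma(Y,X,W)$. To reach $(\star)$ I would prepare two lemmas, both derived from (J) and the first axiom. The shift lemma (a): $V(C,b)\,Q=Q\,V(b,C)$, equivalently $\{C,b,Qz\}=Q\{b,C,z\}$, which allows $Q$ to be pushed through the outer factors $V(C,\cdot)$. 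The polarization lemma (b): starting from the special case of (J) with coinciding operator pairs, $[V(x,y),V(x,y)]=0$, one gets $V(\{x,y,x\},y)=V(x,\{y,x,y\})$; putting $y=C$ and polarizing in $x$ yields $2V(\{X,C,Y\},C)=V(X,QY)+V(Y,QX)$. Granting (a) and (b), $(\star)$ follows mechanically: expand $V(QY,X)$ by (J) as $[V(C,Y),V(C,X)]+V(C,\{Y,C,X\})$, apply it to $QW$, move $Q$ outside with (a), collapse the surviving commutator with (J), and simplify the result $2V(\{Y,C,X\},C)-V(X,QY)$ to $V(Y,QX)$ using (b).

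The hard part will be the shift lemma (a), since the direct attempt through (J) is circular: (J) re-expresses the quantity $\{Qb,z,C\}-Q\{b,C,z\}$ in terms of the very commutator one wants to evaluate. I would break the circle by studying the antisymmetric combination $\Delta:=\{Qb,z,C\}-\{Qz,b,C\}$. Computing $\Delta$ in two ways---once as $2[V(C,b),V(C,z)]C$ (apply (J) to each of $V(Qb,z)$ and $V(Qz,b)$, then subtract), and once, using the first axiom to reorder arguments, as $[V(C,b),V(C,z)]C=-\Delta$---forces $\Delta=-2\Delta$, hence $3\Delta=0$ and $\Delta=0$ because the ground field is $\C$. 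Independently, adding the two orderings of (J) (the symmetrized relation) gives $\{Qb,z,C\}+\{Qz,b,C\}=2\,Q\{b,C,z\}$; together with $\Delta=0$ this pins down $\{Qb,z,C\}=Q\{b,C,z\}$, and the first axiom turns this into (a). After (a) and (b) are secured, the assembly producing $(\star)$ and hence the proposition is a short computation.
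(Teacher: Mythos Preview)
Your direct verification is correct: the shift lemma, the polarization lemma, and the assembly of $(\star)$ all go through as you describe, and once $\{QY,X,QW\}=Q\{Y,QX,W\}$ is in hand the Jordan identity for $\sigma$ follows immediately from one application of (J). The characteristic-$\neq 3$ step in your derivation of $\Delta=0$ is harmless over $\mathbb{C}$.

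The paper, however, takes a very different and much shorter route. It invokes Proposition~\ref{pr41} to say that $X\circ Y:=\{X,C,Y\}$ is a Jordan algebra, then Proposition~\ref{pr42} to say that
\[
\tau(X,Y,Z)=(X\circ Y)\circ Z+(Z\circ Y)\circ X-Y\circ(X\circ Z)
\]
is a Jordan triple system. A single application of the second axiom in Definition~\ref{def53} (take $X\to Y$, $Y\to C$, $V\to X$, $W\to C$, $Z\to Z$, then use the outer symmetry) shows that $\tau(X,Y,Z)=\{X,\{C,Y,C\},Z\}=\sigma(X,Y,Z)$. So $\sigma$ is a Jordan triple system because it \emph{coincides} with one already known to be Jordan.

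What each approach buys: your argument is self-contained, does not pass through the intermediate Jordan algebra, and produces the shifting identity $(\star)$ as a byproduct; the paper's argument is a two-line reduction to the preceding propositions and makes transparent that $\sigma$ is nothing but the canonical triple system of the Jordan product $\{X,C,Y\}$, but it relies on Propositions~\ref{pr41} and~\ref{pr42} as black boxes.
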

\begin{proof} To prove the statement, we apply Proposition \ref{pr42} to multiplication \eqref{jorC} and take into account the identities of Definition \ref{def53}. 
\end{proof}

\section{Jordan KdV systems}\label{JordKdv}
 Consider equations of the form \eqref{nonabel}, 
where $F$ is a non-commutative and non-associative polynomial.  Suppose that $F$ is fixed but the multiplication in the algebra is 
unknown. 

For example,  let us consider the KdV equation
\begin{equation}\label{Ukdv}
U_t=U_{xxx}+ 3\,U \circ U_x,
\end{equation}
where $\circ$ is a multiplication in some algebra ${\cal A}$. The main question is: 
\begin{question} For which algebras ${\cal A}$ is this equation integrable?
\end{question}

If $U$ is given by \eqref{UU}, then equation \eqref{Ukdv} is equivalent  to  the following $N$-component system
\begin{equation} \label{kort}
u^i_t=u^i_{xxx}+\sum_{k,j}\, C^i_{jk} \,u^k \, u^j_x, \qquad i,j,k=1,\dots,N.
\end{equation}
Vice versa any system of the form \eqref{kort} can be written as \eqref{Ukdv} for a proper algebra ${\cal A}$.  Systems \eqref{kort} seem to be a natural multi-component generalization of the KdV equation \eqref{kdv}.

\begin{remark}\label{rem41} Let $I$ be a double--sided  ideal in ${\cal A}.$  Choose a basis such that  ${\bf e}_{M+1}, \dots, {\bf e}_{N}$ span $I$. For such a basis we have $C^i_{jk}=0$ for $i\le M$ and $j>M$ or $k>M$. This means that the equations for $u_1,\dots, u_M$ constitute a closed subsystem of the same form 
\eqref{kort} and the whole system has a ``triangular'' form.  
\end{remark}

Any linear
transformation of $\vec u=(u_1,\dots,u_N)$ preserves the class of systems \eqref{kort}.  A reasonable description of integrable cases has to be invariant under these transformations. 

The system \eqref{kort} is homogeneous \eqref{homo} with $\mu=3,\, \lambda_i=2.$ Therefore, without loss of generality we may assume that the polynomial higher symmetries for \eqref{kort} are also homogeneous. Any such a symmetry has the following form  
\begin{equation}\label{Usym}
U_{\tau}=U_m + A_1(U, U_{n-2})+ A_2(U_1, U_{n-3})+\cdots .
\end{equation}
Here, the quadratic terms are defined by unknown bi-linear operations $A_i$, the cubic terms are defined by three-linear operations and so on\footnote{In a weak version of the symmetry approach one can assume that operations in \eqref{Usym} are expressed in terms of the multiplication $\circ$ in ${\cal A}$ like $A_1(X,\,Y)=c_1\, X\circ Y+c_2\, Y\circ X$ and so on.}.  

\begin{theorem}\label{svin} {\rm \cite{svin3}}. Suppose that the algebra ${\cal A}$ is finite-dimensional and commutative. Then equation \eqref{Ukdv} has a higher symmetry of the form \eqref{Usym},
where $m \ge 5$  iff ${\cal A}$ is a { \it Jordan algebra}.
\end{theorem}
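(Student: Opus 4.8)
The plan is to run the naive symmetry test, exactly as in the scalar fifth-order example of the Introduction, but keeping track of the unknown multiplication $\circ$ together with the unknown multilinear operations occurring in the symmetry. Writing $F = U_3 + 3\,U\circ U_1$ for the generator of \eqref{Ukdv} and $G$ for the generator of the candidate symmetry \eqref{Usym}, I treat first the fifth-order case $m=5$ (the first genuine higher symmetry; for larger $m$ the necessity argument is analogous, since the obstruction already appears in the lowest nonlinear corrections). Compatibility means $[D_t,D_\tau]=0$, i.e. $G_*(F)=F_*(G)$. Because \eqref{kort} is $\lambda$-homogeneous with $\lambda_i=2$ and $G$ has weight $5$, every monomial of $G$ carries a fixed differential weight; grouping the terms of $G_*(F)-F_*(G)$ simultaneously by their degree in $U$ (quadratic, cubic, quartic) and by the differential orders of the factors $U_i$ reduces the single $\mathcal A$-valued identity to a finite hierarchy of purely algebraic identities relating $\circ$ to the operations in $G$. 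Here $F_*=D^3+3(L_U D + R_{U_1})$, where $L_A,R_A$ denote left and right $\circ$-multiplication by $A$.

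First I would analyze the quadratic-in-$U$ part. The ansatz admits bilinear operations $A_1(U,U_3)$, $A_2(U_1,U_2)$ (and, a priori, their transposes), and the vanishing of the quadratic terms of $G_*(F)-F_*(G)$ is a system of \emph{linear} equations on these operations with right-hand sides built from $\circ$. Exactly as the scalar computation fixes the coefficients $10,20$ of \eqref{kdvsym}, this step determines the bilinear part of $G$ essentially uniquely in terms of $\circ$ (commutativity \eqref{jord} collapsing the transposed operations), and, crucially, imposes \emph{no} restriction on $\circ$ yet, mirroring the fact that scalar KdV carries its fifth-order symmetry for free.

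The decisive step is the cubic-in-$U$ part, which plays the role of the main algebraic relation $(a_2-a_1)(a_2-2a_1)(2a_2-5a_1)=0$ in the scalar problem. After substituting the bilinear part found above, the cubic terms of $G_*(F)-F_*(G)$ determine the trilinear operations in $G$ up to a residue, and that residue is a trilinear identity on $\circ$ alone. The claim I would establish is that this residual condition, written through the associator \eqref{as} and the bracket \eqref{br}, is equivalent to the fully linearized Jordan identity
\begin{equation*}
[L_X, L_{Y\circ Z}] + [L_Y, L_{Z\circ X}] + [L_Z, L_{X\circ Y}] = 0,
\end{equation*}
which is precisely the polarization of the second identity in \eqref{jord} under the commutativity $X\circ Y=Y\circ X$. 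Verifying that the obstruction matches this identity, and that it is neither weaker nor stronger, yields necessity: a symmetry of the form \eqref{Usym} with $m\ge 5$ forces $\mathcal A$ to be Jordan.

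For sufficiency I would argue in the opposite direction: assuming the Jordan identity, the explicit Jordan--KdV fifth-order flow (the natural analogue of \eqref{kdvsym}) is the candidate, and one checks that all remaining obstructions, in particular the quartic-in-$U$ terms of $G_*(F)-F_*(G)$ which in the scalar case cancel automatically, vanish as consequences of the displayed identity and its further polarizations. The main obstacle will be the bookkeeping at the cubic and quartic levels: there are several inequivalent ways to bracket three or four copies of a non-associative product, so the space of admissible operations is large, and the real work is to organize these contributions so that the surviving algebraic condition is recognized \emph{exactly} as the Jordan identity rather than as some ad hoc relation. Commutativity of $\mathcal A$ is used throughout to cut down the number of independent operations and to identify the obstruction with the symmetric identity \eqref{jord}.
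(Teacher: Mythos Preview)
Your strategy is exactly the one the paper uses (in its proof of the closely related Theorem~\ref{Svin}; for Theorem~\ref{svin} itself the paper only refers to Svinolupov's original structural--constant computation). There is, however, a degree--counting slip that would derail the execution as written.

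The Jordan identity $X^2\circ(Y\circ X)=(X^2\circ Y)\circ X$ involves \emph{four} algebra elements (three $\circ$'s), so in the compatibility condition $G_*(F)-F_*(G)=0$ it appears at the \emph{quartic}-in-$U$ level, not the cubic one. Concretely, the paper separates terms by the finer scaling $U_i\mapsto z_iU_i$: the quadratic coefficients ($z_5z_1$, $z_4z_2$) fix $A_1,A_2$; the cubic coefficient $z_3z_1z_0$ fixes $B_1(X,X,Y)=\tfrac12\big(2\,X\circ(X\circ Y)+(X\circ X)\circ Y\big)$ and yields only the weak--commutativity condition $(X\circ Y-Y\circ X)\circ Z=0$, which is vacuous under your standing commutativity hypothesis. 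The Jordan identity itself is extracted from the quartic coefficient $z_2z_0^3$, and sufficiency is then the check that the remaining quartic coefficients cancel in virtue of~\eqref{jord}. So your ``decisive step'' is one degree higher than where you placed it; your operator identity $[L_X,L_{Y\circ Z}]+\text{cyc.}=0$ is quadrilinear once applied to a vector, not trilinear, and cannot be the residue of the cubic block. Shifting this and using the $z_i$-grading in place of the coarser total-degree filtration makes the bookkeeping you anticipated much lighter.
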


The original proof of Theorem \ref{svin} was done for the system \eqref{kort} by straightforward computations in terms of structural constants. It turns out 
that the computations can be performed in terms of algebraic operations, which define the equation and the symmetry. 

In the following theorem we do not assume that the algebra ${\cal A}$ is commutative or finite-dimensional. For simplicity, we assume that equation \eqref{Ukdv} has a symmetry \eqref{Usym} of fifth order. For the KdV equation such a symmetry is given by \eqref{kdvsym}. 

\begin{theorem}\label{Svin} Suppose that the algebra ${\cal A}$ possesses the following property: \, if for any $Z$ we have
$$
(X\circ Y-Y\circ X)\circ Z = 0,
$$
then $X\circ Y-Y\circ X=0.$  Equation \eqref{Ukdv} has a homogeneous symmetry \eqref{Usym} of fifth order iff ${\cal A}$ is a Jordan algebra. 
\end{theorem}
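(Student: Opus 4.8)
The plan is to translate the compatibility requirement into the single operator identity $G_*(F)=F_*(G)$, where $F=U_3+3\,U\circ U_1$ is the generator of \eqref{Ukdv} and $G$ is the sought fifth-order symmetry, and then to read off algebraic constraints on $\circ$ by comparing coefficients of the differential monomials. Using the non-abelian Fréchet derivative of Section \ref{naev} one has $F_*=D^3+3L_U D+3R_{U_1}$, so that
$$
F_*(G)=D^3(G)+3\,U\circ D(G)+3\,G\circ U_1 ,
$$
while $G_*(F)$ is obtained by linearizing $G$ and substituting $F$ for $U$. First I would fix the ansatz dictated by $\lambda$-homogeneity (weight $2$ for $U$, weight $7$ for the whole symmetry): the linear term must be $U_5$, the quadratic terms are general bilinear operations applied to the pairs $(U,U_3)$ and $(U_1,U_2)$ as in \eqref{Usym}, and the cubic terms are general trilinear operations in the arguments $(U,U,U_1)$. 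In the weak version all of these are built from $\circ$ with undetermined scalar coefficients, which only simplifies the bookkeeping.

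Next I would substitute the ansatz into $G_*(F)-F_*(G)=0$ and organize the resulting identity by polynomial degree in $U$. The degree-one part cancels identically once the leading term is chosen to be $U_5$. The degree-two part fixes the bilinear operations (and the relevant scalar coefficients) in terms of $\circ$; a direct check shows that the genuinely leading quadratic monomials such as $U\circ U_6$ and $U_6\circ U$ cancel for any choice, so that commutativity is \emph{not} yet forced at this stage and one may carry the bilinear data forward. The essential information sits in the degree-three part, which is trilinear in $U$ with exactly one derivative distributed among the three factors.

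The key step is then to separate the collected cubic identity into its symmetric and antisymmetric components in the two undifferentiated copies of $U$. The antisymmetric component should reduce, after using \eqref{as} and \eqref{br}, to the requirement that $\bigl(X\circ Y-Y\circ X\bigr)\circ Z=0$ for all $X,Y,Z$; invoking the hypothesis of the theorem (left multiplication by commutators is faithful) this yields $X\circ Y=Y\circ X$, i.e. commutativity of ${\cal A}$. Once commutativity is established, the symmetric component collapses to a single multilinear relation among the associators $\mathrm{As}(\cdot,\cdot,\cdot)$, and I would verify that this relation is precisely the polarized form of the Jordan identity $X^2\circ(Y\circ X)=(X^2\circ Y)\circ X$ of \eqref{jord}. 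This identifies the constraint as ``${\cal A}$ is Jordan'' and proves the ``only if'' direction; since every step is an equivalence (the coefficients of $G$ are uniquely determined once the constraints hold), the same computation read backwards produces an explicit fifth-order symmetry for any Jordan ${\cal A}$, giving the ``if'' direction. The main obstacle is the degree-three bookkeeping: because ${\cal A}$ is a priori neither commutative nor associative, the many trilinear monomials cannot be freely rearranged, and the delicate point is to exhibit the exact antisymmetric combination producing $\bigl(X\circ Y-Y\circ X\bigr)\circ Z$ and to recognize the remaining symmetric combination as the full linearized Jordan identity rather than some weaker consequence of it.
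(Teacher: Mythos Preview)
Your overall strategy matches the paper's: write down the most general fifth-order ansatz, impose $G_*(F)=F_*(G)$, and extract algebraic constraints on $\circ$ by comparing coefficients. The paper organizes this via the scaling $U_i\mapsto z_iU_i$ and then reads off coefficients of monomials in the $z_i$, which is essentially your degree-by-degree decomposition refined by the distribution of derivatives.

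However, there is a genuine gap in your plan. You assert that the cubic (degree-three) piece of the compatibility identity does all the work: its antisymmetric part gives $(X\circ Y-Y\circ X)\circ Z=0$ and its symmetric part gives the polarized Jordan identity. The first half is right (in the paper this is the $z_3z_1z_0$ coefficient, which simultaneously fixes $B_1(X,X,Y)=\tfrac12\bigl(2\,X\circ(X\circ Y)+(X\circ X)\circ Y\bigr)$ and produces the identity \eqref{kdvid}). The second half cannot be: the Jordan identity $X^2\circ(Y\circ X)=(X^2\circ Y)\circ X$ is \emph{quartic} in its arguments, so its polarization is a $4$-linear relation and cannot appear as a coefficient in a trilinear (degree-three) piece. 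In the paper the Jordan identity emerges from the degree-four terms, specifically the $z_2z_0^3$ coefficient, after $A_1,A_2,B_1$ have already been determined and commutativity has been used. Your argument stops one degree too early.

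There is also a related inaccuracy: the degree-three piece is not ``trilinear in $U$ with exactly one derivative distributed among the three factors.'' The total derivative order in the compatibility relation is fixed (weight $10$ here), so the cubic terms carry monomials like $U_3U_1U$, $U_2U_2U$, $U_2U_1U_1$, $U_4UU$, etc. This is why the paper's scaling trick, isolating individual $z$-monomials such as $z_3z_1z_0$ and $z_2z_0^3$, is more efficient than a bare degree count; you should adopt it. Finally, to close the proof you must also check (as the paper notes) that the remaining higher-degree coefficients impose no new conditions once commutativity and the Jordan identity hold.
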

\begin{proof} The symmetry can be written in the form
$$
U_{\tau}=U_5+5\, A_1(U, U_3)+5\,A_2(U_1, U_2)+5\,B_1(U,U,U_1),
$$
where $A_1$ and $A_2$ are unknown bi-linear operations and $B_1$ is a three-linear operation. The symmetry condition 
$$
D_t(D_\tau(U))= D_\tau(D_t(U))
$$
is an identity with respect to the independent variables $U_5, U_4, \dots,U_0=U$. Let us perform the scaling $U_i\to z_i U_i, \,\,\, z_i\in \C$ and collect coefficients of different monomials in $z_i$. Comparing the coefficients of  $z_5 z_1,$ we get $A_1(X,Y)=X \circ Y$. The terms that involve $z_4 z_2$ gives us $A_2(X,Y)=X \circ Y+Y \circ X$.  The terms with $z_3 z_1 z_0$ implies identities
$$
B_1(X,X,Y)=\frac{1}{2} \Big( 2\, X\circ (X \circ Y)+(X \circ X)\circ Y\Big)
$$
and 
\begin{equation}\label{kdvid}
\Big(X\circ Y-Y \circ X\Big)\circ Z=0.
\end{equation}
So, we expressed the symmetry in terms of multiplication in the algebra ${\cal A}.$ Moreover, it follows from the assumption of the theorem that ${\cal A}$ is commutative.  Now, we obtain the Jordan identity from the coefficients of $z_2 z_0^3.$ The remaining identities are satisfied in virtue of \eqref{jord}. 
\end{proof}
\begin{remark}
It is easy to see that any simple algebra satisfies the requirement  of Theorem \ref{Svin}. 
\end{remark}
\begin{exercise} Prove that equation \eqref{Ukdv} has a higher homogeneous symmetry of the form \eqref{Usym},
where $m \ge 5$ iff the identities
$$
(X\circ X)\circ (X\circ Y) = X\circ \Big((X\circ X)\circ Y   \Big), 
$$
$$
2 \, (X\circ Y)\circ (X\circ Y)-2 X\circ \Big(Y\circ (X\circ Y)  \Big)+(X\circ X)\circ (Y\circ Y)- \Big((X\circ X)\circ Y\Big)\circ Y=0,
$$
and \eqref{kdvid} hold. 
\end{exercise}

According to Remark \ref{rem41}, the most interesting non-triangular integrable equations \eqref{Ukdv} correspond to simple Jordan algebras. Classification of such algebras can be found in \cite{jac}.  All simple Jordan algebras are described in Example \ref{ex42}.

\subsubsection{Integrable Jordan non-triangular KdV equations}
\begin{itemize}
\item[\bf 1)] The matrix KdV equation
$$
{\bf U}_t={\bf U}_{xxx}+{\bf U} {\bf U}_x+{\bf U}_x {\bf U},
$$ 
where ${\bf U}$ is an $m\times m$-matrix, is generated by \eqref{jora1}. It coincides with \eqref{matkdv} up to a scaling of ${\bf U}$;
\item[\bf 2)] The matrix  KdV equation under the reduction ${\bf U}^t={\bf U}$;
\item[\bf 3)] The vector KdV equation   \cite{SviSok94}:
\begin{equation}\label{VectorKdv}
{\bf u}_t={\bf u}_{xxx}+\langle {\bf c},\,{\bf u}\rangle \,{\bf u}_x+\langle {\bf c},\,{\bf u}_x\rangle \, {\bf u}-\langle {\bf u},\,{\bf u}_x\rangle \, {\bf c},
\end{equation}
where   $\bf c$ is a given constant vector, corresponds to the product \eqref{jora2}.
\end{itemize}
\begin{op} Find solitonic and finite-gap solutions for  the KdV equations related to the special Jordan algebra $H_3(O)$
\end{op}

\section{Left-symmetric algebras and Burgers type systems}

Consider the equation
\begin{equation}\label{Ubur}
U_t=U_{2} + 2\,U\circ U_1 + B(U,U,U),
\end{equation}
where $B$ is a three-linear operation. 
In the finite-dimensional case it is equivalent to a system of evolution equations of the form
\begin{equation} \label{burg}
u^i_t=u^i_{xx}+2 C^i_{jk} u^k u^j_x+B^i_{jkm} u^k u^j u^m,
\end{equation}
where $\quad i,j,k=1,\dots,N$.

The system \eqref{burg} is homogeneous \eqref{homo} with $\mu=2, \lambda_i=1.$  The Burgers equation \eqref{burgers} has a homogeneous symmetry \eqref{burgersym} of third order. The general ansatz of such symmetry in the case of equations \eqref{Ubur} is given by 
\begin{equation}\label{Ubursym}
U_{\tau}=U_3+3 A_1(U,U_2)+3 A_2(U_1,U_1)+3 B_1(U,U,U_1)+C_1(U,U,U,U)
\end{equation}
Denote $A(X,Y)$ by $X\circ Y$.
\begin{theorem}
Equation \eqref{Ubur} possesses a symmetry of the form \eqref{Ubursym} iff
$$B(X,X,X)=X\circ (X\circ X)- (X\circ X)\circ X,$$ and $\circ$ is a left-symmetric product.
The operations in \eqref{Ubursym} have the following form:
$$
A_1(X,Y)=X\circ Y, \qquad A_2(X,X)=X\circ X, \qquad B_1(X,X,Y)=X\circ (X\circ Y)+Y\circ (X\circ X)-(Y\circ X)\circ X,
$$
$$
C_1(X,X,X,X)=X\circ(X\circ (X\circ X))-X\circ ((X\circ X)\circ X)+(X\circ X)\circ (X\circ X)-((X\circ X)\circ X)\circ X.
$$
\end{theorem}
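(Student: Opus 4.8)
The plan is to proceed exactly as in the proof of Theorem~\ref{Svin}, reducing everything to the compatibility identity and then exploiting homogeneity. By \eqref{DFsym} the equation $U_\tau=G$ is a symmetry of \eqref{Ubur} precisely when $D_t(G)=D_\tau(F)$, where $F=U_2+2\,U\circ U_1+B(U,U,U)$ is the right-hand side of \eqref{Ubur}, $G$ is the right-hand side of \eqref{Ubursym}, and $\circ$ denotes the bilinear operation of the quadratic term. Both sides are non-commutative, non-associative polynomials in the jet variables $U_0=U,U_1,\dots,U_5$. Since \eqref{Ubur} is $\lambda$-homogeneous \eqref{homo} with $\mu=2$, $\lambda=1$ and $G$ has weight $3$, the identity $D_t(G)-D_\tau(F)=0$ is homogeneous; performing the scaling $U_i\to z_iU_i$ and collecting coefficients of the monomials in the $z_i$ splits it into a finite family of independent identities, graded by the total degree $2,3,4$ in $U$. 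The degree-one part reduces to the tautology $D^3(U_2)=D^2(U_3)$ and cancels automatically.

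Then I would match degree by degree. Expanding $D^3(U\circ U_1)$ and the Leibniz expansions of $D^2A_1$, $D^2A_2$ for the bilinear operation $\circ$, the degree-two part collapses to
\[
6\,U_2\circ U_2+6\,U_1\circ U_3=6\,A_1(U_1,U_3)+6\,A_2(U_2,U_2),
\]
whose two monomial types immediately give $A_1(X,Y)=X\circ Y$ and $A_2(X,X)=X\circ X$, identifying $A_1$ with $\circ$. The degree-three part involves the cubic term $B(U,U,U)$ of $F$, the as yet unknown operation $B_1(U,U,U_1)$ of $G$, and products of the already-determined quadratic pieces; matching its monomials determines $B_1$ in the stated form and forces $B(X,X,X)=X\circ(X\circ X)-(X\circ X)\circ X=-{\rm As}(X,X,X)$. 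The degree-four part then fixes $C_1$ and produces the remaining constraints on $\circ$.

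The key output, and the heart of the argument, is that these cubic and quartic constraints are equivalent to the vanishing of the left-symmetric associator $[X,Y,Z]={\rm As}(X,Y,Z)-{\rm As}(Y,X,Z)$ of \eqref{as}, \eqref{br}. Because the symmetry is written with repeated arguments ($B_1(U,U,U_1)$, $C_1(U,U,U,U)$), the raw identities involve only partially symmetrised combinations, so I would polarise — substitute $U\to U+\varepsilon W$ and read off the coefficient of $\varepsilon$ — to recover the full multilinear identity $[X,Y,Z]=0$. This is the step I expect to be the main obstacle: one must organise the many nested $\circ$-products (associators) generated at degrees three and four and check that, after polarisation, they assemble into the single left-symmetric identity with no extra conditions. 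In contrast with the Jordan case, where \eqref{kdvid} had to be resolved using a non-degeneracy hypothesis, here no such hypothesis is needed, since the constraint appears directly as $[X,Y,Z]=0$ rather than as a trailing product.

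Finally, for the converse I would verify that if $\circ$ is left-symmetric and $B(X,X,X)=X\circ(X\circ X)-(X\circ X)\circ X$, then choosing $A_1,A_2,B_1,C_1$ as in the statement makes every graded identity hold: the degree-two and degree-three identities hold by construction, while the degree-four identity reduces, after repeated use of $[X,Y,Z]=0$, to a triviality. This closes the equivalence.
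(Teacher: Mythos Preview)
The paper states this theorem without proof; it is an immediate analogue of Theorem~\ref{Svin}, whose proof the paper does give in detail, and your proposal follows that template exactly: write the compatibility condition $D_t(D_\tau U)=D_\tau(D_t U)$, scale $U_i\to z_iU_i$, and read off the operations and the algebraic identities monomial by monomial. Your degree-two computation is correct (with $A_2$ taken symmetric, as only $A_2(U_1,U_1)$ enters), and the overall strategy for degrees three and four is sound.

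One remark on the point you flag as the main obstacle. You are right that, unlike the Jordan case, no non-degeneracy hypothesis is needed here: at degree three the coefficient of $z_0z_1z_2$ already contains a term of the form $[U,U_1,U_2]$ coming from the interaction of $2U\circ U_1$ in $F$ with $3A_1(U,U_2)=3\,U\circ U_2$ in $G$, and since $U,U_1,U_2$ are independent this gives $[X,Y,Z]=0$ directly, with no trailing factor to strip. The remaining cubic and quartic relations then determine $B$, $B_1$, $C_1$ as stated and reduce to identities once left-symmetry is used; the polarisation you describe is the right bookkeeping device for checking this.
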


 In contrast with Jordan algebras, there is no complete classification of finite-dimensional simple left-symmetric algebras. 

Any associative algebra is left-symmetric. The system, which corresponds to the case of the matrix algebra, is the matrix 
Burgers equation \eqref{matbur}. 
 The vector Burgers equation   \cite{SviSok94}
$$
 {\bf u}_t={\bf u}_{xx}+2 \langle{\bf u},{\bf u}_x\rangle \,{\bf c}+2\langle{\bf c},{\bf u}\rangle \,{\bf u}_x+  \langle{\bf u}, {\bf u}\rangle\langle{\bf c},{\bf u}\rangle \,{\bf c}-\langle{\bf u},{\bf u}\rangle\langle{\bf c},{\bf c}\rangle \,{\bf u},
 $$
where ${\bf c}$ is a constant vector, comes from the left-symmetric algebra with the product \eqref{vecleft}.

\chapter{Integrable models associated with triple systems.}

\section{Integrability and  triple Jordan systems}\label{sec54}

Here, we consider some classes of polynomial integrable
systems of evolution equations with cubic non-linear terms. 
They are generalizations of the following famous scalar
integrable equations:  
the modified Korteweg-de Vries equation \eqref{mkdv}, 
the nonlinear Schr\"odinger equation \eqref{NLS},  and the nonlinear
derivative Schr\"odinger equation
$$
u_t = u_{xx} + 2 (u^2 v)_x, \qquad  v_t = - v_{xx} - 2 (v^2 u)_x.
$$
 
\subsection{MKdV-type systems}
Consider systems of the form
\begin{equation}\label{jormkdv}
u^i_t=u^i_{xxx}+\sum_{j,k,m} B^i_{jkm} u^k u^j u^m_x, \qquad i,j,k=1,\dots,N.
\end{equation}
Let $T$ be a  triple system with basis ${\bf e}_1,...,{\bf e}_N,$ such that 
$$
\{{\bf e}_j,{\bf e}_k,{\bf e}_m\} =\sum_i B^i_{jkm} {\bf e}_i.
$$
If $\, U=\sum_k u^k {\bf e}_k, \,$ then 
the algebraic form of the equation is given by
\begin{equation} \label{mkdvtr}
U_t = U_{xxx} + B(U, U_x, U).
\end{equation}
The triple systems $B(X,Y,Z)$ such that  $B(X,Y,Z) = B(Z,Y,X)$  are in one-to-one correspondence with the systems \eqref{jormkdv}.

Equations \eqref{mkdvtr} are homogeneous \eqref{homo} with $\mu=3, \lambda=1.$ 
They are also invariant with respect to the discrete involution $U\to -U.$ Without loss of generality we assume that all symmetries enjoy the same properties.

\begin{theorem}{\rm \cite{svin3}}\label{th54} For any  triple Jordan system $\{\cdot,\cdot,\cdot\}$, the equation \eqref{mkdvtr}, where $B(X,Y,X)=\{X,X,Y \}$  has a fifth order symmetry of the form 
\begin{equation}\label{symmkdv}
U_{\tau}=U_{5}+B_1(U,U,U_3)+B_2(U,U_1,U_2)+B_3(U_1,U_1,U_1)+C(U,U,U,U,U_1).
\end{equation}
\end{theorem}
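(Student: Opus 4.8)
The plan is to verify the symmetry condition directly, following the scheme used in the proof of Theorem~\ref{Svin}, but now keeping track of nonlinear terms of degrees $3$ and $5$ rather than $2$ and $3$. Set $F = U_{xxx} + B(U,U_x,U)$ and let $G$ denote the right-hand side of \eqref{symmkdv}. Then \eqref{symmkdv} is a symmetry of \eqref{mkdvtr} iff $G_*(F) = F_*(G)$, an identity in the independent jet variables $U, U_1, U_2, \dots$, where $F_*$ and $G_*$ are the operator-valued Fr\'echet derivatives. Both $F$ and $G$ are odd under the involution $U \to -U$ and homogeneous of weight $4$ and $6$ respectively; writing $F = F_1 + F_3$ and $G = G_1 + G_3 + G_5$ for their linear, cubic and quintic parts, the symmetry condition splits by total degree into
\begin{gather*}
G_{1*}(F_1) = F_{1*}(G_1), \\
G_{1*}(F_3) + G_{3*}(F_1) = F_{1*}(G_3) + F_{3*}(G_1), \\
G_{3*}(F_3) + G_{5*}(F_1) = F_{1*}(G_5) + F_{3*}(G_3), \\
G_{5*}(F_3) = F_{3*}(G_5).
\end{gather*}

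First I would dispose of the degree-$1$ relation, which holds automatically because $F_1 = U_3$ and $G_1 = U_5$ have constant coefficients and commute. Next I would refine the homogeneity by the full multi-scaling $U_i \mapsto z_i U_i$ used in the proof of Theorem~\ref{Svin}: collecting the coefficient of each monomial in the $z_i$ turns every line above into a finite system of purely multilinear identities among $B$, the operations $B_1, B_2, B_3, C$ and the triple product $\{\cdot,\cdot,\cdot\}$. From the monomials of the degree-$3$ relation carrying the top derivatives one reads off $B_1, B_2, B_3$ uniquely as polarizations of $B$, hence as explicit expressions in $\{\cdot,\cdot,\cdot\}$, just as $A_1, A_2$ were fixed in the proof of Theorem~\ref{Svin}; the degree-$5$ relation then pins down the five-linear operation $C$ and produces the first genuine constraints on $\{\cdot,\cdot,\cdot\}$.

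The crux is the closure step: with $B_1,B_2,B_3,C$ fixed as above, I must show that the remaining monomial-identities coming from the degree-$5$ and especially the degree-$7$ relation $G_{5*}(F_3) = F_{3*}(G_5)$ are consequences of the two defining identities of a triple Jordan system listed in Definition~\ref{def53}. The symmetry $\{X,Y,Z\} = \{Z,Y,X\}$ is used throughout to fold matching monomials, whereas the nested products $\{X,Y,\{V,W,Z\}\}$ that survive in degree $7$ are exactly the ones reorganized by the second (five-linear) Jordan identity. The main obstacle is therefore combinatorial rather than conceptual: one must linearize the Jordan identity in all of its slots and match it monomial-by-monomial against the terms produced by $G_{5*}(F_3) - F_{3*}(G_5)$. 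I expect, in analogy with the scalar mKdV equation and with the Jordan-KdV statement of Theorem~\ref{svin}, that no constraints beyond Definition~\ref{def53} arise, so that every triple Jordan system indeed yields the asserted fifth order symmetry.
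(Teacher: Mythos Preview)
Your proposal is correct and follows essentially the same approach as the paper, which does not prove Theorem~\ref{th54} in isolation but embeds it in the proof of the subsequent iff theorem: the scaling $U_i \to z_i U_i$ is used to determine $B_1, B_2, B_3, C$ explicitly from the degree-$3$ and degree-$5$ monomial coefficients (e.g.\ $C(X,X,X,X,Y)=B(X,B(X,Y,X),X)+\tfrac12 B(X,Y,B(X,X,X))$), and then the four remaining degree-$5$ identities $I_1,\dots,I_4$ and the two degree-$7$ identities are shown, via the method of undetermined coefficients over $S_5$, to follow from the single Jordan identity~$\mathcal{J}=0$. Your anticipated combinatorial crux---matching nested products against linearizations of the Jordan identity---is exactly what the paper carries out in detail.
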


The converse of this statement was proved by I. Shestakov and VS. 
\begin{theorem} The equation \eqref{mkdvtr} has a fifth 
order symmetry of the form \eqref{symmkdv}
iff
\begin{equation}\label{BBB}
B(X,Y,Z)=\{X,Z,Y\}+\{Z,X,Y\},
\end{equation}
where $\{\cdot,\cdot,\cdot\}$ is a triple Jordan system.
\end{theorem}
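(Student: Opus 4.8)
The plan is to follow the same strategy as in the proof of Theorem \ref{Svin}, reducing the symmetry requirement to a finite set of multilinear algebraic identities and then recognizing those identities as the axioms of Definition \ref{def53}. The ``if'' part is almost immediate: given a triple Jordan system, set $B(X,Y,Z)=\{X,Z,Y\}+\{Z,X,Y\}$ as in \eqref{BBB}; this $B$ is symmetric in its outer arguments, $B(X,Y,Z)=B(Z,Y,X)$, and its diagonal is $B(X,Y,X)=2\{X,X,Y\}$. After an obvious rescaling of the triple product (which preserves both identities of Definition \ref{def53}) the hypothesis $B(X,Y,X)=\{X,X,Y\}$ of Theorem \ref{th54} holds, and that theorem supplies the fifth order symmetry \eqref{symmkdv}. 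Thus only the ``only if'' direction requires genuine work.

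For the converse I would write the compatibility of \eqref{mkdvtr} and \eqref{symmkdv} in the form $G_*(F)=F_*(G)$, where $F=U_3+B(U,U_1,U)$ and $G$ is the right-hand side of \eqref{symmkdv}, and expand the Fr\'echet derivatives. Since $B,B_1,B_2,B_3$ are trilinear and $C$ is five-linear, every term of this identity is odd in $U$, and the identity splits, according to the number of field factors, into homogeneous components of field-degrees $1,3,5,7$. As in Theorem \ref{Svin}, I would make this splitting explicit by the substitution $U_i\to z_iU_i$ and collecting coefficients of the monomials in the $z_i$; the degree-$1$ component is the trivially satisfied identity for the leading terms $D^8U$.

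The degree-$3$ component determines $B_1,B_2,B_3$ as definite symmetrizations of $B$ (the analogues of $A_1(X,Y)=X\circ Y$ and $A_2(X,Y)=X\circ Y+Y\circ X$ from Theorem \ref{Svin}) and produces a first batch of constraints on $B$. At this stage I would introduce the candidate triple product $\{X,Y,Z\}$, defined through a fixed combination of $B$ and the $B_i$ dictated by these degree-$3$ relations, and show that the constraints force both the representation $B(X,Y,Z)=\{X,Z,Y\}+\{Z,X,Y\}$ of \eqref{BBB} and the outer symmetry $\{X,Y,Z\}=\{Z,Y,X\}$, i.e.\ the first identity of Definition \ref{def53}. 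The degree-$5$ component then pins down $C$ and yields the remaining relations; rewritten in terms of $\{\cdot,\cdot,\cdot\}$ these should coincide exactly with the five-linear Jordan identity of Definition \ref{def53}, while the degree-$7$ component has to be verified to be a consequence of the lower-degree ones, the counterpart of the closing remark in the proof of Theorem \ref{Svin}.

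The hard part will be precisely this last bookkeeping. Because $B$ is only symmetric in its outer arguments, the decomposition $B(X,Y,Z)=\{X,Z,Y\}+\{Z,X,Y\}$ does not by itself single out $\{\cdot,\cdot,\cdot\}$, so the canonical triple product must be extracted from the \emph{combined} degree-$3$ and degree-$5$ relations rather than from $B$ alone. One then has to check that the tangle of five-linear terms surviving at degree $5$ (together with consistency at degree $7$) collapses to the single identity of Definition \ref{def53} with no spurious extra conditions. Organizing these quintic computations --- ideally by polarizing the diagonal identities already known from Theorem \ref{th54} to guess the correct groupings of terms --- is where essentially all of the effort lies.
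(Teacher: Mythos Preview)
Your overall strategy matches the paper's: scale $U_i\to z_iU_i$, read off $B_1,B_2,B_3,C$ from the low monomials, collect the remaining constraints on $B$, and identify them with the Jordan axioms. Two points, however, deserve correction.

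First, the cubic (degree-$3$) component imposes \emph{no} constraint on $B$ at all; it only determines $B_1,B_2,B_3$ in terms of $B$ (and likewise the first quintic monomial fixes $C$). More importantly, the representation \eqref{BBB} and the outer symmetry $\{X,Y,Z\}=\{Z,Y,X\}$ are not ``forced by the constraints'' --- they are a purely algebraic identity available for \emph{any} $B$ with $B(X,Y,Z)=B(Z,Y,X)$. The paper simply \emph{defines}
\[
\{X,Y,Z\}=\tfrac12\bigl(B(Y,Z,X)+B(Y,X,Z)-B(X,Y,Z)\bigr),
\]
and one checks directly that $\{X,Z,Y\}+\{Z,X,Y\}=B(X,Y,Z)$ and $\{X,Y,Z\}=\{Z,Y,X\}$. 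So the triple product comes from $B$ alone, not from the combined degree-$3$/degree-$5$ relations; this is the key trick you are missing, and it removes the non-uniqueness worry you raised.

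Second, the actual content --- showing that the four quintic identities $I_1,\dots,I_4$ (from the monomials $z_0^3z_1z_3$, $z_0^3z_2^2$, $z_0^2z_1^2z_2$, $z_0z_1^4$) and the two septic ones are equivalent to the single Jordan identity ${\cal J}=0$ --- is not done by ``recognizing'' anything. The paper polarizes everything to five-variable multilinear identities and then, by the method of undetermined coefficients (i.e.\ solving large linear systems, in practice by computer), exhibits explicit expressions of $J_4$ in terms of $J_2,J_3$, of $J_2,J_3$ in terms of ${\cal J}$, and conversely. The degree-$7$ identities are disposed of the same way. So your plan is correct in outline, but the endgame is a machine-assisted linear-algebra verification rather than a structural argument.
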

\begin{proof}
The compatibility condition
\begin{equation}\label{com}
0=(U_t)_{\tau}-(U_{\tau})_t=P(U,U_1,...,U_5)
\end{equation}
 of \eqref{mkdvtr} and \eqref{symmkdv} leads to a defining polynomial $P$ that should be identically zero. After the scaling $U_i \to z_i U_i$ in $F$ all coefficients of different monomials in $z_0,..., z_5$ have to be identically zero. Equating the coefficient of $z_0 z_1 z_5$ to zero, we find that 
\begin{equation}\label{B1}
B_1(X, X, Y)=B(X, Y, X).
\end{equation}
The coefficient of $z_0 z_2 z_4$ leads to 
\begin{equation}\label{B2}
B_2(X, Y, Z)= 2 B(X, Y, Z)+2 B(X, Z, Y).
\end{equation}
All other terms with $z_5$ and $z_4$ disappear by virtue of \eqref{B1} and \eqref{B2}.
Comparing the coefficients of $z_1 z_2 z_3$, we obtain 
\begin{equation}\label{B2}
B_3(X, X, X)= B(X, X, X),
\end{equation}
while the coefficients of $z_3 z_1 z_0^3$ give rise to 
\begin{equation}\label{CC}
C(X, X, X, X, Y)= B(X, B(X,Y,X), X) + \frac{1}{2} B(X,Y,B(X,X,X)).
\end{equation}
Thus the symmetry \eqref{symmkdv} is expressed in terms of the triple system $B.$ All fifth order identities $I_i=0, \,i=1,2,3,4$ for the triple system $B$ come from coefficients of the monomials $z_0^3 z_1 z_3, z_0^3 z_2^2, z_0^2 z_1^2 z_2$ and $z_0 z_1^4$. They are defined by the formulas
$$
I_1(X,Y,Z)=2 B(X,Z,B(X,X,Y))-3 B(X,Z,B(X,Y,X))+B(Y,Z,B(X,X,X),
$$
$$
I_2(X,Y)=2 B(X,Y,B(X,X,Y))-3 B(X,Y,B(X,Y,X))+B(Y,Y,B(X,X,X)),
$$

$$\begin{array}{l}
I_3(X,Y,Z)=-2 B(X, Y, B(X, Y, Z)) + 6 B(X, Y, B(X, Z, Y)) - B(X, Y, B(Y, X, Z)) -  \\[1.5mm]
  B(X, Y, B(Z, X, Y))+4 B(X, Z, B(X, Y, Y)) - 
 2 B(X, Z, B(Y, X, Y)) - 2 B(X, B(Y, Y, Z), X) + \\[1.5mm]
 2 B(X, B(Y, Z, Y), X) + 2 B(X, B(Z, Y, Y), X) - 
 2 B(Y, Y, B(X, X, Z)) + 3 B(Y, Y, B(X, Z, X)) - \\[1.5mm]
 4 B(Y, Z, B(X, X, Y)) + 2 B(Y, Z, B(X, Y, X)) - 
 2 B(Y, B(X, Y, X), Z) - 2 B(Y, B(X, Z, X), Y) - \\[1.5mm]
 2 B(Z, Y, B(X, X, Y)) + 3 B(Z, Y, B(X, Y, X)) - 2 B(Z, B(X, Y, X), Y),
 \end{array}
$$
and
$$
I_4(X,Y)=B(X, Y, B(Y, Y, Y)) + 2 B(Y, Y, B(X, Y, Y)) - B(Y, Y, B(Y, X, Y)) - 
 2 B(Y, B(X, Y, Y), Y).
$$
It is clear that $I_2(X,Y)=I_1(X,Y,Y).$ Using  the method of undetermined coefficient, we will show that the identity $I_4=0$ is a consequence of the identities $I_2=0$ and $I_3=0.$ First, introduce the polarizations of these identities. 
Let \begin{itemize} \item $J_2(X,Y,Z,U,V)$ be the coefficient of $k_1 k_2 k_3 $ in $I_2(k_1 X + k_2 U + k_3 V, Y, Z)$; 
\item $J_3(X,Y,Z,U,V)$ be the coefficient of $k_1 k_2 k_3 k_4$ in $I_3(k_1 X + k_2 U, k_3 Y + k_4 V, Z)$;
\item $J_4(X,Y,Z,U,V)$ be the coefficient of $k_1 k_2 k_3 k_4$ in $I_4(X, k_1 Y + k_2 Z + k_3 U + k_4 V)$.
\end{itemize}
 Consider the following ansatz
 $$
 \begin{array}{c}
 Z=J_4(X,Y,Z,U,V)-\sum_{\sigma\in S_5} b_{\sigma} J_2\Big(\sigma(X),\sigma(Y),\sigma(Z),\sigma(U),\sigma(V)\Big)-\\[2mm]
  \sum_{\sigma\in S_5} c_{\sigma} J_3\Big(\sigma(X),\sigma(Y),\sigma(Z),\sigma(U),\sigma(V)\Big),
 \end{array}
 $$
 where $\sigma$ is a permutation of the set $\{X,Y,Z,U,V\}.$ To take into account the identity $B(X,Y,Z)=B(Z,Y,X),$ we fix the ordering
 $$
 U<V<X<Y<Z< B(\cdot,\cdot,\cdot)
 $$
 and replace all expressions of the form $B(P,Q,R)$ by  $B(R,Q,P)$ if $P>R.$ After that, equating the coefficients of similar terms in the relation $Z=0$, we obtain an overdetermined system of linear equations for the coefficients $b_{\sigma}$ and $c_{\sigma}$. Solving this system by computer we find that
 $$
 \begin{array}{l}
 \ds J_4(X,Y,Z,U,V)=\frac{1}{6}\Big(J_2(U, X, V, Y, Z) + J_2(U, X, Y, V, Z) + J_2(U, X, Z, V, Y) + 
 J_2(V, X, U, Y, Z) -\\[1.5mm] J_3(U, V, X, Y, Z) - J_3(U, V, X, Z, Y) - 
 J_3(U, V, Y, X, Z) - J_3(U, V, Z, X, Y) - J_3(U, Y, V, X, Z) - \\[1.5mm]
 J_3(U, Y, X, V, Z) - J_3(V, U, X, Y, Z) - J_3(V, U, X, Z, Y) - 
 J_3(V, U, Y, X, Z) - J_3(V, U, Z, X, Y) -\\[1.5mm] J_3(V, Y, U, X, Z) - 
 J_3(X, U, V, Y, Z) - J_3(X, U, V, Z, Y) - J_3(X, U, Y, Z, V) - 
 J_3(X, U, Z, Y, V) - \\[1.5mm] J_3(X, V, U, Y, Z) -J_3(X, V, U, Z, Y) - 
 J_3(Y, U, X, Z, V)\Big)
 \end{array}
 $$

Consider a triple system 
$$
\{X,Y,Z\} = \frac{1}{2} \Big( B(Y,Z,X)+B(Y,X,Z)-B(X,Y,Z)  \Big).
$$
It is easy to verify that
\begin{equation}\label{BB}
B(X,Y,Z)= \{X,Z,Y\} + \{Z,X,Y\}.
\end{equation}
Let us prove that the identities $J_2=J_3=0$ are equivalent to one identity ${\cal J}=0$, where (see Definition \ref{def53} of Jordan triple system)
$$
{\cal J}(X,Y,Z,U,V)=\{X, Y, \{U, V, Z\}\} - \{\{X, Y, U\}, V, Z\} - 
 \{U, V, \{X, Y, Z\}\} + \{U,\{ Y, X, V\}, Z\}.
$$
The identity ${\cal J}=0$ is supposed to be rewritten in terms of the triple system $B$ by means of \eqref{BB}.
 By the same method of undetermined coefficients we have verified that the identity ${\cal J}=0$ follows from $J_2=J_3=0$ and vice versa
 each of the identities $J_2$ and $J_3=0$ follows from ${\cal J}=0.$ For example, 
 $$
 \begin{array}{c}
J_2(X,Y,Z,U,V) = -{\cal J}(U, X, V, Z, Y) + {\cal J}(U, X, Y, Z, V) - {\cal J}(U, Y, X, Z, V) + \\[1.5mm]
 {\cal J}(V, X, Y, Z, U) - {\cal J}(V, Y, U, Z, X) - {\cal J}(V, Y, X, Z, U) - \\[1.5mm]
 {\cal J}(X, U, V, Z, Y) + {\cal J}(X, U, Y, Z, V) - {\cal J}(X, V, U, Z, Y) + \\[1.5mm]
 {\cal J}(Y, U, V, Z, X) + {\cal J}(Y, V, U, Z, X) + {\cal J}(Y, V, X, Z, U).
 \end{array}
 $$
 The formulas, which express $J_3$ through ${\cal J}$ and ${\cal J}$ through $J_2, J_3$, are much more complicated.
 
 Besides the above fifth order identities there exist two identities of order 7. The coefficient of $z_0^6 z_2$ in the polynomial $P$ yields 
 $$
 B(X, B(X, Y, X),\, B(X, X, X)) - B(X, B(X, Y, B(X, X, X)), X)=0
 $$
 while the coefficient of $z_0^5 z_1^2$ leads to
 $$
 \begin{array}{l}
 2 B[X, Y, B[X, X, B[X, Y, X]]] - 2 B[X, Y, B[X, Y, B[X, X, X]]] - 
 3 B[X, Y, B[X, B[X, Y, X], X]] +\\[1.5mm] 4 B[X, B[X, Y, X], B[X, Y, X]] + 
 2 B[X, B[X, Y, Y], B[X, X, X]] - 2 B[X, B[X, Y, B[X, X, Y]], X] + \\[1.5mm]
 3 B[X, B[X, Y, B[X, Y, X]], X] - 4 B[X, B[X, B[X, Y, X], Y], X] - \\[1.5mm]
 B[X, B[Y, Y, B[X, X, X]], X] + B[B[X, X, X], Y, B[X, Y, X]]=0.
 \end{array}
 $$
 Using the method of undetermined coefficients, one can check that both identities follows from ${\cal J}=0$.  
 \end{proof}
 
\begin{remark} Since equation \eqref{mkdvtr} is expressed throught $B(X,Y,X)$, it follows from \eqref{BB} that all equations that have the fifth order symmetry are described by Theorem \ref{th54}.
\end{remark}

The  integrable vector systems  corresponding to the operations \eqref{jorvec1} and \eqref{jorvec2} are given by
\begin{equation}\label{vecmkdv1}
{\bf u}_t={\bf u}_{xxx}+\langle{\bf u}, {\bf u}\rangle \, {\bf u}_x
\end{equation}
and
\begin{equation} \label{vecmkdv2}
{\bf u}_t={\bf u}_{xxx}+\langle{\bf u}, {\bf u}\rangle \, {\bf u}_x+\langle{\bf u}, {\bf u}_x\rangle \, {\bf u},
\end{equation}
respectively.

The triple Jordan system \eqref{jormat1} generates the matrix mKdV equation \eqref{matmkdv}.

 \subsection{NLS-type systems}

Multi-component generalizations (see \cite{For1, sv15}) of the nonlinear
Schr\"odinger equation \eqref{NLS}  are systems of $2N$ equations
of the following form
\begin{equation}
u_t^i=u_{xx}^i+2 \sum_{j,k,m}\,b_{jkm}^i u^j v^k u^m, \qquad 
v_t^i=-v_{xx}^i-2 \sum_{j,k,m}\,b_{jkm}^i v^j u^k v^m,
\label{mse}
\end{equation}
where $i=1,\dots,N$, and $a^i_{jkm}$ are
constants.  In terms of the corresponding triple system it can be written in the form 
\begin{equation}
U_t=U_{xx}+2 B(U,V,U),\qquad V_t=-V_{xx}-2 B(V,U,V).        \label{nmse}
\end{equation}
The triple systems $B(X,Y,Z)$ such that  $B(X,Y,Z) = B(Z,Y,X)$  are in one-to-one correspondence with the systems \eqref{mse}.
The systems \eqref{nmse} are homogeneous \eqref{homo} with $\mu=2, \lambda_i=1$.  
\begin{remark} The system \eqref{nmse} has the following additional scaling symmetry $U\to \lambda\, U,\,$ $V\to \lambda^{-1} V$. Therefore, we can assume without loss of generality that all symmetries of this equation are homogeneous and invariant with respect to the same scaling.
\end{remark}

\begin{theorem}{\rm \cite{sv15}} The equation \eqref{nmse}  has a third order symmetry 
of the form
$$
U_{\tau}=U_{xxx}+3 B_1(U,V,U_x)+3 B_3(U,V_1,U), \qquad V_{\tau}=U_{xxx}+3 B_2(V,U,V_x)+3 B_4(V,U_1,V)
$$
iff $B_1(X,Y,Z)=B_2(X,Y,Z)=B(X,Y,Z),$ $\quad B_3(X,Y,X)=B_4(X,Y,X)=0\,\,$ and $B(\cdot,\cdot,\cdot)$ is a triple Jordan system.
\end{theorem}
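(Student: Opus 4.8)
The plan is to run the naive symmetry test precisely as in the proof of the preceding mKdV-type theorem, now applied to the pair of flows \eqref{nmse} together with the proposed third order symmetry. Write $F=U_{xx}+2B(U,V,U)$ and $\hat F=-V_{xx}-2B(V,U,V)$ for the two components of the evolution, and let $G$, $\hat G$ denote the two components of the symmetry. Compatibility means that the two evolutionary derivations commute, which after eliminating all $t$- and $\tau$-derivatives in virtue of \eqref{nmse} becomes a pair of polynomial identities
$$
P\bigl(U,U_1,\dots;V,V_1,\dots\bigr)=0,\qquad \hat P\bigl(U,U_1,\dots;V,V_1,\dots\bigr)=0
$$
in the independent jet variables. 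Because the system is homogeneous and additionally admits the scaling $U\to\lambda U$, $V\to\lambda^{-1}V$ recorded in the Remark preceding the theorem, every monomial of $P$ and $\hat P$ has a prescribed multidegree, which keeps the number of terms to be matched finite and small.

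First I would extract the leading relations by the substitution $U_i\to z_iU_i$, $V_i\to w_iV_i$ and comparison of coefficients. The coefficient of the monomial of lowest complexity that carries the highest $x$-derivative forces $B_1=B$ from the $U$-equation and, by the $U\leftrightarrow V$ antisymmetry of \eqref{nmse}, $B_2=B$ from the $V$-equation; the coefficient of the monomial whose single derivative sits on the middle argument forces $B_3(X,Y,X)=0$ and $B_4(X,Y,X)=0$. Here one uses the built-in symmetry $B(X,Y,Z)=B(Z,Y,X)$, which is precisely the first defining relation of a triple Jordan system. At this stage the whole symmetry is written in terms of the single trilinear operation $B$, exactly paralleling the determination of $B_1,\dots,C$ in the mKdV computation for \eqref{mkdvtr}.

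The surviving coefficients then yield a finite list of multilinear identities $I_j(X,Y,Z,\dots)=0$ for $B$. Since the first Jordan axiom $B(X,Y,Z)=B(Z,Y,X)$ is already imposed, it remains to show that the collection $\{I_j=0\}$ is equivalent to the second axiom of Definition \ref{def53},
$$
{\cal J}(X,Y,V,W,Z)=\{X,Y,\{V,W,Z\}\}-\{V,W,\{X,Y,Z\}\}-\{\{X,Y,V\},W,Z\}+\{V,\{Y,X,W\},Z\}=0,
$$
read with $\{\cdot,\cdot,\cdot\}=B$; note that, in contrast with the mKdV case, no intermediate triple product has to be introduced because here $B$ itself is to be the Jordan triple system.

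The hard part will be exactly this equivalence of identity lists. As in the mKdV proof I expect to fully polarize each $I_j$ and ${\cal J}$, fix an ordering of the five arguments to absorb the symmetry $B(X,Y,Z)=B(Z,Y,X)$, and then apply the method of undetermined coefficients over the symmetric group $S_5$ to exhibit ${\cal J}$ as an explicit linear combination of permuted polarizations of the $I_j$ and, conversely, each $I_j$ as a combination of permuted copies of ${\cal J}$. This is pure linear-algebra bookkeeping and is best delegated to a computer. The converse implication, that any triple Jordan system produces a genuine third order symmetry, then follows by reading the same identities backwards, so that both directions are settled at once.
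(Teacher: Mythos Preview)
The paper does not actually supply its own proof of this theorem: it is stated with the citation \cite{sv15} and then immediately followed by examples. So there is nothing in the text to compare your argument against line by line. That said, your plan is exactly the method the paper exhibits in full for the neighbouring mKdV-type theorem, and it is the natural route here as well.

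A couple of remarks that sharpen what you wrote. First, the NLS computation is genuinely lighter than the mKdV one: the equation is second order and the symmetry is third order, so after the scaling/coefficient comparison the only non-linear obstructions are quintilinear in $U,V$ (two nested $B$'s), which is already the degree of the Jordan identity ${\cal J}$. There are no seventh-degree identities to chase, unlike in the mKdV proof. Second, your observation that no auxiliary triple product is needed is correct and is the main structural simplification over the mKdV case: here $B(X,Y,Z)=B(Z,Y,X)$ is built in from the start, so the first Jordan axiom holds automatically and only the second needs to be matched against the compatibility identities. Your proposed $S_5$ undetermined-coefficients check for the equivalence of the identity lists is appropriate, though in practice for this case the relations tend to collapse to ${\cal J}=0$ with far less bookkeeping than in the mKdV situation.
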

 
The above examples of simple triple Jordan systems from Section 4.1.3 provide several interesting vector and matrix integrable systems. 

\begin{example} The triple system \eqref{jormat1} produces the matrix NLS-system \eqref{matnls} up to a scaling.
\end{example} 

\begin{example} The well-known vector Schr\"odinger equation  \cite{Man}
\begin{equation}\label{vecNLS}
 {\bf u}_t={\bf u}_{xx}+2\langle {\bf u},{\bf v}\rangle \, {\bf u} ,\qquad
{\bf v}_t=-{\bf v}_{xx}-2\langle {\bf v},{\bf u}\rangle \,{\bf v}     
\end{equation}
 corresponds to the triple Jordan system \eqref{jorvec2}.
\end{example}

\begin{example} One more  integrable vector nonlinear Schr\"odinger equation
\begin{equation}
 {\bf u}_t={\bf u}_{xx}+2\langle {\bf u},{\bf v}\rangle \, {\bf u}-\langle {\bf u},{\bf u}\rangle\, {\bf v},\qquad
{\bf v}_t=-{\bf v}_{xx}-2\langle {\bf v},{\bf u}\rangle \,{\bf v}+\,\langle {\bf v},{\bf v}\rangle\, {\bf u}               \label{ex3}
\end{equation}
found in \cite{kuli1981}
corresponds to the triple Jordan system \eqref{jorvec1}. \end{example} 

\subsection{Derivative NLS-type systems}\label{secderiv}
The derivative NLS equation is given by
$$
u_t = u_{xx} + 2\, (u^2 v)_x, \qquad v_t=-v_{xx} + 2\, (v^2 u)_x.  
$$
Consider its following generalization: 
\begin{equation}\label{mderiv}
U_t = U_{xx} + B(U,V,U)_x, \qquad V_t=-V_{xx} + B(V,U,V)_x,
\end{equation}
where $B(X,Y,Z)=B(Z,Y,X).$ The systems \eqref{mderiv} are homogeneous \eqref{homo} with $\mu=1, \lambda_i=\frac{1}{2}.$ 
\begin{theorem} A system \eqref{mderiv} has a third order polynomial symmetry of the form 
\begin{equation}\label{sym33}
U_{\tau}=U_{xxx} + P(U, V, U_x, V_x, U_{xx}, V_{xx}), \qquad V_{\tau}=V_{xxx} + Q(U, V, U_x, V_x, U_{xx}, V_{xx}),
\end{equation}
iff $B(\cdot,\cdot,\cdot)$ is a
triple Jordan system. 
\end{theorem}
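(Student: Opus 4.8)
The plan is to reduce the symmetry condition to a finite list of multilinear identities on $B$ by the scaling-and-collecting technique used in the proofs of the mKdV- and NLS-type theorems above, and then to recognise those identities as the defining identities of a triple Jordan system (Definition \ref{def53}). Since the reduction will be an equivalence, both directions of the ``iff'' follow at once.

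First I would write the compatibility condition in Fr\'echet form, $G_*(F)=F_*(G)$ for each component, with $F=(U_{xx}+B(U,V,U)_x,\,-V_{xx}+B(V,U,V)_x)$ and $G=(U_{xxx}+P,\,V_{xxx}+Q)$, and regard the resulting expression as a polynomial in the independent jet variables $U,V,U_1,V_1,\dots$ that must vanish identically. The homogeneity recorded before the theorem ($\lambda=\tfrac12$), together with the evident scaling symmetry $U\to\lambda U$, $V\to\lambda^{-1}V$, forces every term of $P$ to have weight $\tfrac72$ and net $U$-degree $+1$; decomposing by polynomial degree, the nonlinear part of $P$ is a combination of trilinear operations $B'(U_a,V_b,U_c)$ with $a+b+c=2$, quintic operations carrying one derivative, and possibly septic operations involving no derivatives, while $Q$ is the image of $P$ under the involution $U\leftrightarrow V$ relating the two equations. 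I would then substitute the scalings $U_i\to z_iU_i$, $V_i\to w_iV_i$ and set to zero the coefficient of each monomial in the $z_i,w_i$.

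The coefficients lowest in degree determine the symmetry uniquely: matching the terms produced when $D^3$ (coming from the linear part $U_{xxx}$) acts on the cubic nonlinearity of the equation fixes the trilinear operations in $P$ as explicit rational multiples of the values of $B$ on the arguments $(U_a,V_b,U_c)$; matching the cubic--cubic cross terms then fixes the quintic (and any septic) operations as nested expressions in $B$. Once $P$ and $Q$ are expressed entirely through $B$, the remaining coefficients yield purely algebraic identities in $B$ on up to five arguments. I would check that, modulo the assumed outer symmetry $B(X,Y,Z)=B(Z,Y,X)$, these identities are generated by the single Jordan triple identity
$$
B(X,Y,B(V,W,Z))-B(V,W,B(X,Y,Z))=B(B(X,Y,V),W,Z)-B(V,B(Y,X,W),Z),
$$
which is exactly the second identity of Definition \ref{def53}.

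The main obstacle is this last equivalence. As in the proof of the mKdV-type theorem, I expect to establish it by polarisation and the method of undetermined coefficients: writing each extracted identity as a $\C$-linear combination of permuted copies of the Jordan identity (and conversely), after fixing a monomial ordering on the $B$-expressions to normalise the outer symmetry, and solving the resulting overdetermined linear system, in practice by computer algebra. A secondary technical point is that the equation sees $B$ only through $B(U,V,U)_x=2\,B(U,V,U_x)+B(U,V_x,U)$, so the identities first appear with repeated (``diagonal'') outer arguments; I would recover the fully polarised identities from these using that an outer-symmetric trilinear form is determined by its values $B(X,Y,X)$.
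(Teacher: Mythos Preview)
The paper does not actually give a proof of this theorem; it is stated in Section~\ref{secderiv} and the text immediately moves on to examples. So there is no paper proof to compare against directly. That said, your plan is exactly the methodology the paper carries out in detail for the neighbouring mKdV-type result (the theorem following Theorem~\ref{th54}): write the compatibility condition as a polynomial identity in the independent jet variables, perform the scaling $U_i\to z_iU_i$, $V_i\to w_iV_i$ and equate monomial coefficients, read off the trilinear, quintic and (possibly) septic pieces of $P,Q$ as nested values of $B$, and then show by polarisation and undetermined coefficients that the residual multilinear identities on $B$ are equivalent, under the standing outer symmetry $B(X,Y,Z)=B(Z,Y,X)$, to the Jordan triple identity of Definition~\ref{def53}. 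Your weight count ($\lambda_i=\tfrac12$, hence $P$ has weight $\tfrac72$ and net $U$-degree $+1$, allowing terms up to degree seven with no derivatives) is correct, and your remark that the identities first appear on diagonal arguments and must be polarised is exactly the subtlety the paper handles in the mKdV proof. The proposal is sound and is the intended approach.
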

Matrix and vector examples of integrable derivarive NLS type systems can be 
constructed using formulas \eqref{jormat1}-\eqref{jorvec2}.

\section{Some integrable systems corresponding to new classes of triple systems}

\subsection{Equations of potential mKdV type}  Consider  multi-component generalizations of known integrable equation (see \eqref{e2} and Example \ref{ex63})
\begin{equation}\label{potmkdv}
u_t=u_{xxx}+ 3\, u_x^3.
\end{equation}
This equation has the following fifth order symmetry 
$$
u_{\tau}=u_{xxxxx}+ 15 u_x^2 u_{xxx}+15 u_x u_{xx}^2+\frac{27}{2} u_x^5.
$$
The multi-component generalizations of \eqref{potmkdv} have the form 
\begin{equation} \label{potsys}
U_t= U_{xxx}+ 3 B(U_x, U_x, U_x), 
\end{equation}
where 
$$
B(X,Y,Z) = B\Big(\sigma(X), \sigma(Y), \sigma(Z)\Big), \qquad \sigma\in S_3.
$$
\begin{theorem} Equation \eqref{potsys} has a symmetry of the form 
$$
U_{\tau}=U_{5}+ 15 B_1(U_1, U_1, U_{3}) + 15 B_2(U_{2}, U_{2}, U_1) + \frac{27}{2} C( U_1, U_1, U_1, U_1, U_1)
$$
iff the symmetric triple system $B(\cdot,\cdot,\cdot)$ satisfies the identity 
$$
B(X, Y, B(Y, Y, Z)) + 
 B(Y, Y, B(X, Y, Z)) + B(Y, Z, B(X, Y, Y))  - 3 B(X, Z, B(Y, Y, Y)) =0. 
$$
\end{theorem}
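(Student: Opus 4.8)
The plan is to proceed exactly as in the proof of the preceding theorem, reducing the compatibility condition to a chain of multilinear identities by the scaling trick. Since the right-hand side of \eqref{potsys} and the proposed symmetry both depend only on the derivatives $U_1, U_2, \ldots$ and not on $U$ itself, translation invariance lets me work entirely in these variables. Writing $F = U_3 + 3B(U_1,U_1,U_1)$ and letting $G$ denote the proposed symmetry generator, the symmetry condition is $G_*(F) - F_*(G) = 0$. Using that $B$ is fully symmetric, the Fr\'echet derivative of $F$ is $F_*(H) = D^3 H + 9 B(U_1, U_1, D H)$, and $P := G_*(F) - F_*(G)$ is a polynomial in $U_1, \ldots, U_7$ that must vanish identically.

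First I would apply the scaling $U_i \to z_i U_i$ and collect the coefficients of distinct monomials in the $z_i$. Grouping $P$ by total degree in the $U_j$, the degree-one part cancels automatically ($D^5 U_3$ against $D^3 U_5$). The degree-three part fixes the two auxiliary trilinear operations: matching the coefficients of $z_1^2 z_3$ and $z_1 z_2^2$ forces $B_1(X,Y,Z) = B(X,Y,Z)$ and $B_2(X,Y,Z) = B(X,Y,Z)$, so that the cubic part of the symmetry is $15\, B(U_1,U_1,U_3) + 15\, B(U_2,U_2,U_1)$. The degree-five part then determines the quintic operation $C$ as a composition $B \circ B$ and, after this substitution, leaves a genuine constraint on $B$.

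Next I would show that this surviving constraint, together with the quadratic-in-$B$ relations produced by the degree-seven part, is equivalent to the single stated identity
$$
\mathcal{I}(X,Y,Z) = B(X,Y,B(Y,Y,Z)) + B(Y,Y,B(X,Y,Z)) + B(Y,Z,B(X,Y,Y)) - 3\, B(X,Z,B(Y,Y,Y)).
$$
The raw coefficients (of monomials such as $z_1^4 z_3$, $z_1^3 z_2^2$, $z_1^5 z_2$) give identities in which only special combinations of arguments appear; polarizing them, that is, substituting $X \to k_1 X + k_2 U + \cdots$ and extracting the mixed coefficients as in the $J_2, J_3, J_4$ computation above, recovers fully multilinear relations. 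Both directions of the theorem then follow from one linear-algebra statement: each polarized raw identity is a linear combination of permuted copies of the polarization of $\mathcal{I}$, and conversely $\mathcal{I}$ is recovered from them. Because $B$ is symmetric, I would fix an ordering of the arguments and of the nested $B$'s to reduce every monomial to a canonical form before comparing, solving the resulting finite system of linear equations for the combination coefficients by computer.

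The main obstacle will be this last step, namely verifying that the higher-degree (degree-seven, quadratic-in-$B$) conditions contain nothing beyond $\mathcal{I} = 0$; these are the analogues of the order-seven identities encountered in the mKdV-type theorem, and showing that they are consequences of $\mathcal{I} = 0$ requires the same undetermined-coefficients bookkeeping over the symmetric group of the five free slots. Once this is established, necessity and sufficiency hold simultaneously: the scaling analysis shows that $G$ is a symmetry precisely when every extracted coefficient vanishes, and all of them vanish if and only if $\mathcal{I} \equiv 0$.
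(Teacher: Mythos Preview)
The paper does not actually give a proof of this theorem: it is stated, followed only by the Remark recording $B_1(X,X,Y)=B(X,X,Y)$, $B_2(X,X,Y)=B(X,X,Y)$, and $C(X,X,X,X,X)=B(X,X,B(X,X,X))$. So there is nothing to compare against line by line; your outline is the intended argument, patterned on the mKdV-type proof that the paper \emph{does} carry out in detail, and it is the right way to prove the statement.

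Two small corrections to your weight bookkeeping. For \eqref{potsys} the variable $U$ has weight $0$, so $U_i$ has weight $i$ and the defining polynomial $P$ has total weight $8$. Hence at degree five the only admissible monomials are $z_1^{4}z_4$, $z_1^{3}z_2z_3$, and $z_1^{2}z_2^{3}$; the ones you list ($z_1^{4}z_3$, $z_1^{3}z_2^{2}$, $z_1^{5}z_2$) all have weight $7$ and do not occur. Likewise at degree seven there is a single monomial $z_1^{6}z_2$, so the ``cubic in $B$'' check is a single identity, not a family. Also, from the symmetry ansatz you can only determine $B_1(X,X,Y)$ and $B_2(X,X,Y)$ (exactly as in the paper's Remark), not the full trilinear forms; this is harmless since only these diagonal values enter $G$, but your phrasing ``$B_1(X,Y,Z)=B(X,Y,Z)$'' overstates what the computation gives.

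With these adjustments your plan goes through: the $z_1^{4}z_4$ coefficient fixes $C(X,X,X,X,X)=B(X,X,B(X,X,X))$, the remaining degree-five coefficients yield identities quadratic in $B$ whose polarizations are equivalent to $\mathcal I=0$, and the sole degree-seven identity (cubic in $B$) must then be shown to follow from $\mathcal I=0$ by the same undetermined-coefficients method over $S_7$ that the paper uses in the mKdV case.
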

\begin{remark} The symmetry is defined by 
$$
B_1(X,X,Y)= B(X, X, Y), \qquad B_{2}(X, X, Y)=B(X, X, Y), $$$$
C(X,X,X,X,X)=B(X, X, B(X, X, X)).
$$
\end{remark}

\subsection{Systems of Olver--Sokolov type}

A multi-component generalization of the matrix system \eqref{olsokder} reads as follows
\begin{equation}\label{osder}
U_t = U_{xx} + 2\, B(U,V,U_x), \qquad V_t=-V_{xx} + 2\,B(V_x,U,V),
\end{equation}
where $B(\cdot,\cdot,\cdot)$ is a triple system. The systems of the form \eqref{osder} are homogeneous with the same weights as 
in Section \ref{secderiv}.

\begin{theorem}\label{th47} A system \eqref{osder} has a third order polynomial symmetry of the form \eqref{sym33} iff $B(\cdot,\cdot,\cdot)$ satisfies the identities
\begin{equation} \label{soksh}
\begin{array}{c}
B\Big(X, Y, B(X,V,U) \Big) = B\Big(X, B(Y,X,V), U \Big), \\[2mm] 
B\Big(X, B(Y,Z,V), Z \Big) = B\Big(B(X,Y,Z), V , Z \Big), \\[2mm] 
B\Big(X, Y, B(Z,Y,V)\Big) = B\Big(B(X,Y,Z), Y , V \Big).
\end{array}
\end{equation}
\end{theorem}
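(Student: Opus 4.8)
The plan is to follow the same scheme used in the proof of the preceding mKdV-type converse theorem, now adapted to the two-field setting. First I would fix the most general admissible symmetry ansatz. The system \eqref{osder} is homogeneous with $\mu=1,\ \lambda_i=\tfrac12$ and, just as for \eqref{nmse}, admits the scaling $U\to\lambda U,\ V\to\lambda^{-1}V$; without loss of generality a symmetry of the form \eqref{sym33} may be assumed homogeneous and invariant under this scaling. Invariance forces every monomial of $P$ to contain exactly two $U$-jets and one $V$-jet (so that it scales like $U_\tau$), while homogeneity forces the total $x$-derivative order inside each trilinear term to equal two; symmetrically, every monomial of $Q$ carries two $V$-jets and one $U$-jet. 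Thus $P$ is an unknown linear combination of the finitely many trilinear expressions obtained by applying the triple operation $B$ to the jets $(U,U,V)$ with two derivatives distributed among the three argument slots — e.g. $B(U_{xx},U,V)$, $B(U_x,U_x,V)$, $B(U_x,U,V_x)$, $B(U,U,V_{xx})$ and all placements of the $V$-slot — and $Q$ is described the same way. Note that terms such as $D\big(B(U,V,U_x)\big)$ are automatically in the span of this list via the Leibniz rule, so the ansatz is complete.

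Next I would impose the compatibility conditions $(U_t)_\tau-(U_\tau)_t=0$ and $(V_t)_\tau-(V_\tau)_t=0$, eliminating all $t$- and $\tau$-derivatives through the two flows. This yields defining polynomials in the independent jet variables $U,V,U_1,V_1,\dots$ that must vanish identically. Following the earlier proofs I would then rescale $U_i\to z_iU_i,\ V_i\to w_iV_i$ and separate the identity monomial by monomial in the parameters $z_i,w_j$. The coefficients of the lowest-degree monomials pin down the unknown trilinear operations of the ansatz in terms of the single operation $B$ that defines the equation, exactly the phenomenon seen in the NLS- and derivative-NLS-type theorems; once the symmetry is fully expressed through $B$, the remaining monomials produce a finite collection of trilinear-in-$B$ identities, among which the three relations \eqref{soksh} should appear as the independent generators. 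The analysis of the $Q$-flow runs parallel to that of the $P$-flow because the two equations in \eqref{osder} have the same structure, so I expect the computation to essentially duplicate, halving the genuinely new work.

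The necessity direction is then mostly bookkeeping: collect the surviving coefficients and recognize \eqref{soksh}. The substantive part is sufficiency, namely showing that \emph{every} identity produced by the compatibility condition is an algebraic consequence of the three identities \eqref{soksh}. Here I would use the method of undetermined coefficients precisely as in the mKdV proof: polarize each surplus identity (replace a repeated argument by a generic linear combination and read off a chosen multilinear coefficient), choose a basis of the trilinear monomials by fixing an ordering of the arguments, and then search for each surplus identity as an explicit linear combination of permutations of the polarizations of \eqref{soksh}. Solving the resulting overdetermined linear system for the combination coefficients — in practice on a computer — would close the argument.

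The main obstacle I anticipate is exactly this last reduction. Since $B$ carries no symmetry a priori, the number of trilinear monomials, and hence of raw identities extracted from the coefficients, is large, and it is not evident beforehand that the three relations \eqref{soksh} generate all of them rather than a proper subset of the constraints; this redundancy is what makes the clean statement true and it has to be verified rather than guessed. A secondary, more mundane difficulty is enumerating the symmetry ansatz completely: omitting a single admissible trilinear term would spuriously impose extra constraints on $B$ and falsify the equivalence, so the combinatorial listing of derivative placements and argument orders must be done with care.
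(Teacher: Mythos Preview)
The paper does not actually include a proof of Theorem~\ref{th47}; the statement is given without argument. Your proposal is precisely the methodology the paper uses for the analogous results (the proof of Theorem~\ref{Svin} and the mKdV-type converse following Theorem~\ref{th54}): fix the homogeneous, scaling-invariant symmetry ansatz, expand the compatibility condition, separate by monomial weights after rescaling the jets, solve the low-order coefficients for the unknown operations in terms of $B$, then verify via undetermined coefficients that the remaining identities are generated by~\eqref{soksh}. So your plan is the expected one and is consistent with how the paper handles the companion theorems.
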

\begin{remark} The latter identity means that the multiplication
$$
X \circ Y = B(X,Z,Y)
$$ 
is associative for any $Z.$ Moreover, varying $Z,$ we obtain a vector space of compatible associative products {\rm \cite{odsok1}}. 
\end{remark} 

In the paper \cite{loos} the associative triple systems with identities
\begin{equation}\label{astrip}
B\Big(X, Y, B(Z,U,V) \Big) = B\Big(X, B(Y,Z,U), V \Big) = B\Big(B(X,Y,Z),U, V \Big)
\end{equation}
were considered. It is clear that \eqref{astrip} implies \eqref{soksh}.

Since the triple matrix product 
$$
B(X,Y,Z)= X Y Z, \qquad X,Y,Z \in Mat_{n}
$$
satisfies \eqref{astrip}, the system  \eqref{olsokder} belongs to the class described in Theorem \ref{th47}. One more example of a triple system satisfying \eqref{astrip} is given by
$$
B(X,Y,Z)= X Y^t Z, \qquad X,Y,Z \in Mat_{n,m}.
$$

As far as I know, triple systems \eqref{soksh} have never been considered by algebraists.

\begin{op} Find all simple triple systems \eqref{soksh}.
\end{op}

\chapter{Integrability and deformations of non-associative structures}
 
\section{Inverse element in triple Jordan systems and rational integrable equations}

\subsection{Inverse element as a solution of PDE-system}

The function $\ds y=\frac{1}{x}$ can be defined as the homogeneous solution of the differential equation $$y'+y^2=0.$$  
It turns out that an inverse element for triple Jordan systems can be defined in a similar way. 

Let $\{X,Y,Z\}$ be a triple system, ${\bf e}_1,\dots,{\bf e}_N$ be its basis, $U=\sum_k u_k {\bf e}_k.$ Our aim is to define an  element
\begin{equation}\label{inverse}
\phi(U)=\sum_{k=1}^N \phi_k(u_1,\dots,u_N)\, {\bf e}_k
\end{equation}
 inverse for $U$ as a solution of a proper system of PDEs.
\begin{proposition} The following overdetermined system 
\begin{equation}
\frac{\partial \phi}{\partial u^k}=-\{\phi, \ {\bf e}_k, \  \phi\},
\label{vecdef} \end{equation}
where $k=1,\dots ,N$, is compatible iff $\{X,Y,Z\}$ is a triple Jordan system.
\end{proposition}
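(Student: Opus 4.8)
The plan is to read off compatibility of \eqref{vecdef} as the equality of mixed second derivatives of $\phi$ and to convert it, using the trilinearity of the bracket, into a pointwise algebraic identity that can be compared directly with the axioms of Definition~\ref{def53}.

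First I would compute the Frobenius condition $\partial_j\partial_k\phi=\partial_k\partial_j\phi$, where $\partial_k\stackrel{def}{=}\partial/\partial u^k$. Differentiating $\partial_k\phi=-\{\phi,{\bf e}_k,\phi\}$ with respect to $u^j$, using that ${\bf e}_k$ is constant and that $\{\cdot,\cdot,\cdot\}$ is linear in each slot, and then eliminating the first derivatives by means of \eqref{vecdef} itself, I obtain $\partial_j\partial_k\phi=\{\{\phi,{\bf e}_j,\phi\},{\bf e}_k,\phi\}+\{\phi,{\bf e}_k,\{\phi,{\bf e}_j,\phi\}\}$. Antisymmetrising in $j,k$ shows that compatibility is equivalent to
\[\{\{\phi,{\bf e}_j,\phi\},{\bf e}_k,\phi\}+\{\phi,{\bf e}_k,\{\phi,{\bf e}_j,\phi\}\}=\{\{\phi,{\bf e}_k,\phi\},{\bf e}_j,\phi\}+\{\phi,{\bf e}_j,\{\phi,{\bf e}_k,\phi\}\}\]
for all $j,k$. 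For the overdetermined system to be compatible (so that a solution exists for a generic prescribed value of $\phi$), this relation must hold as an identity in the unknown $\phi$; since the value of $\phi$ may be prescribed arbitrarily and the ${\bf e}_k$ form a basis, I may replace $\phi$ by an arbitrary element $X$ and ${\bf e}_j,{\bf e}_k$ by arbitrary $Y,W$, reducing compatibility to the single polynomial identity
\[\{\{X,Y,X\},W,X\}+\{X,W,\{X,Y,X\}\}=\{\{X,W,X\},Y,X\}+\{X,Y,\{X,W,X\}\}\qquad(\ast)\]
valid for all $X,Y,W$.

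It then remains to match $(\ast)$ with the identities of a triple Jordan system (for the forward implication one restricts, as throughout the chapter, to symmetric triple systems $\{A,B,C\}=\{C,B,A\}$). Using this outer symmetry, the first and third terms of $(\ast)$ coincide with the second and fourth, so $(\ast)$ collapses to $\{X,W,\{X,Y,X\}\}=\{X,Y,\{X,W,X\}\}$. For the direction ``Jordan $\Rightarrow$ compatible'' I would derive this last relation directly from the main Jordan identity of Definition~\ref{def53} specialised at $V=Z=X$, together with the symmetry. For the converse I would fully polarise $(\ast)$, writing $X=X_1+X_2+X_3$ and extracting the multilinear component, obtaining a $5$-linear identity whose suitable specialisations reproduce the main Jordan identity.

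The main obstacle is precisely this final equivalence. A single specialised relation such as $(\ast)$ does not rearrange into the Jordan axiom in one step: an isolated application of the main identity leaves a residual term of the form $\{X,\{W,X,Y\},X\}$, so the argument genuinely requires the polarisation together with careful bookkeeping of the permutations of the five arguments, in the same spirit as the undetermined-coefficient computations carried out earlier in the text. Once that equivalence is established in both directions, combining it with the reduction of compatibility to $(\ast)$ yields the stated ``iff''.
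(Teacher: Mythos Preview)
Your computation of the Frobenius condition and the reduction to the identity $(\ast)$ are correct, and the forward direction (Jordan $\Rightarrow$ compatible) works as you indicate: specialising the main identity of Definition~\ref{def53} at $V=Z=X$ and the same identity with $Y\leftrightarrow W$, then using the outer symmetry $\{A,B,C\}=\{C,B,A\}$, one gets after adding and subtracting that $3\big(\{X,Y,\{X,W,X\}\}-\{X,W,\{X,Y,X\}\}\big)=0$, which is exactly your collapsed form of~$(\ast)$.

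The gap is in the converse. The identity $(\ast)$ alone does \emph{not} force the outer symmetry $\{X,Y,Z\}=\{Z,Y,X\}$, and hence does not force the triple system to be Jordan. A concrete counterexample: take the associative triple product $\{X,Y,Z\}=XYZ$ on matrices. Then
\[
\{\{X,Y,X\},W,X\}+\{X,W,\{X,Y,X\}\}=XYXWX+XWXYX=\{\{X,W,X\},Y,X\}+\{X,Y,\{X,W,X\}\},
\]
so $(\ast)$ holds identically, the system $\partial_k\phi=-\phi\,{\bf e}_k\,\phi$ is compatible (its solution is $\phi=U^{-1}$), yet $\{X,Y,Z\}=XYZ$ is not a Jordan triple system. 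Thus the ``only if'' direction is false unless the outer symmetry is taken as an ambient hypothesis. You partly acknowledge this for the forward direction, but your polarisation plan for the converse cannot manufacture the symmetry axiom out of~$(\ast)$; you must assume it. With symmetry granted, the remaining task---showing that the polarisation of $\{X,W,\{X,Y,X\}\}=\{X,Y,\{X,W,X\}\}$ implies the second Jordan identity---is the genuine computation, and your proposal only sketches it; that step should be carried out explicitly rather than deferred to an analogy with the earlier undetermined-coefficient arguments. The paper itself states the proposition without proof, so there is no argument there to compare with.
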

\begin{definition} For any triple Jordan system any homogeneous\footnote{This means that $\sum u_i \frac{\partial \phi}{\partial u_i}=- \phi$.} solution \eqref{inverse} of the system \eqref{vecdef} is called 
the {\it inverse element} for the element $U$.
\end{definition}
\begin{remark}
It can be easily seen that any homogeneous solution of  \eqref{vecdef} satisfies the identity
\begin{equation}\label{Uphi}
\phi(U)=\{\phi(U),\, U,\, \phi(U)\}.
\end{equation}
\end{remark}
\begin{example}
For the matrix triple Jordan system \eqref{jormat1}  
the homogeneous solution $ \phi({\bf U})$
of \eqref{vecdef} is just the matrix inverse ${\bf U}^{-1}.$
\end{example}
\begin{example} For the triple Jordan system \eqref{jorvec1} the homogeneous solution for \eqref{vecdef} is given by
$$
 \phi({\bf u})=\frac{{\bf u}}{\vert {\bf u} \vert^2}.
$$
\end{example}

The following algebraic definition of $U^{-1}$ is well known in the theory of triple Jordan systems.
Let us define a linear operator $P_X$ by the formula
$P_X(Y)=\{X,Y,X\}$. If $P_{U}$ is non-degenerate, then by definition $U^{-1}=P_{U}^{-1}(U)$.
\begin{proposition}\label{pr45}
 If $P_{U}$ is non-degenerate, then
 \begin{equation}\label{phiu}
 \phi(U) = P_{U}^{-1}(U)
 \end{equation}
 satisfies \eqref{vecdef}.
\end{proposition}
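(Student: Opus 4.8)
The plan is to start from the algebraic characterization $P_U(\phi)=\{U,\phi,U\}=U$ of $\phi=P_U^{-1}(U)$ and to differentiate it, thereby reducing the PDE system \eqref{vecdef} to a purely algebraic identity in the triple Jordan system. First I would differentiate $\{U,\phi,U\}=U$ with respect to $u^k$. Since $\partial U/\partial u^k=\mathbf{e}_k$ and the triple product is linear in each slot, trilinearity together with the symmetry $\{X,Y,Z\}=\{Z,Y,X\}$ gives
\[
2\{U,\phi,\mathbf{e}_k\}+\Big\{U,\tfrac{\partial\phi}{\partial u^k},U\Big\}=\mathbf{e}_k,
\]
that is, $P_U\big(\partial\phi/\partial u^k\big)=\mathbf{e}_k-2\{U,\phi,\mathbf{e}_k\}$. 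As $P_U$ is nondegenerate, $\partial\phi/\partial u^k$ is determined by its image under $P_U$, so the desired equation $\partial\phi/\partial u^k=-\{\phi,\mathbf{e}_k,\phi\}=-P_\phi(\mathbf{e}_k)$ is equivalent to
\[
P_U\big(P_\phi(\mathbf{e}_k)\big)=2\{U,\phi,\mathbf{e}_k\}-\mathbf{e}_k .
\]

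The left-hand side I would evaluate using the fundamental formula $P_{P_X(Y)}=P_X\,P_Y\,P_X$, a standard consequence of the identities of Definition \ref{def53}. Taking $X=U$, $Y=\phi$ and using $P_U(\phi)=U$ yields $P_U=P_U\,P_\phi\,P_U$, so nondegeneracy of $P_U$ forces $P_U\,P_\phi=P_\phi\,P_U=\mathrm{id}$; in particular $P_U\big(P_\phi(\mathbf{e}_k)\big)=\mathbf{e}_k$, and incidentally $P_\phi(U)=\phi$, which recovers \eqref{Uphi}. The required identity therefore collapses to $\mathbf{e}_k=2\{U,\phi,\mathbf{e}_k\}-\mathbf{e}_k$, i.e. to the single operator identity
\[
\{U,\phi,Z\}=Z\qquad\text{for all } Z,
\]
where I write $L(U,\phi)\colon Z\mapsto\{U,\phi,Z\}$ for left multiplication. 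This is the familiar statement that multiplication by an element and its inverse is the identity operator.

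Establishing $L(U,\phi)=\mathrm{id}$ is the heart of the argument and the step I expect to be the main obstacle. Here I would combine two further standard consequences of Definition \ref{def53}. Specializing the five-linear identity by placing $U,\phi$ in the outer slots and $\{U,Z,U\}$ inside (arguments $X=V=U$ and the final slot $=U$, with $Y=\phi$, $W=Z$), and using $\{U,\phi,U\}=U$, produces
\[
L(U,\phi)\,P_U=2P_U-P_U\,L(\phi,U).
\]
Independently one has the commutation identity $L(U,\phi)\,P_U=P_U\,L(\phi,U)$; substituting it and cancelling the invertible $P_U$ gives $L(\phi,U)=\mathrm{id}$, and feeding this back yields $L(U,\phi)=\mathrm{id}$. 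The delicate feature is that nondegeneracy of $P_U$ — the middle-slot map $Z\mapsto\{U,Z,U\}$ — is precisely what lets one pass from these operator relations to $L(U,\phi)=\mathrm{id}$; the annihilator conditions generated purely by the five-linear identity act on the outer slots and cannot be collapsed without it. Once $\{U,\phi,Z\}=Z$ is secured, the reduction above shows that $\phi=P_U^{-1}(U)$ satisfies \eqref{vecdef}, which completes the proof of \eqref{phiu}.
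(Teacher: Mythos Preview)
The paper states Proposition~\ref{pr45} without proof, so there is no argument to compare against directly. Your approach is correct: differentiating $P_U(\phi)=U$ and using the invertibility of $P_U$ reduces \eqref{vecdef} to the two operator identities $P_U P_\phi=\mathrm{id}$ and $L(U,\phi)=\mathrm{id}$. Both are indeed classical facts in Jordan triple system theory. The first is the fundamental formula $P_{P_X Y}=P_X P_Y P_X$ specialised to $P_U(\phi)=U$; the second you obtain by combining the identity $L(X,Y)P_X=P_X L(Y,X)$ with the relation $L(U,\phi)P_U=2P_U-P_U L(\phi,U)$ that you extracted from $\{U,\phi,U\}=U$, and then cancelling the invertible $P_U$.

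The only point worth flagging is that neither the fundamental formula nor the commutation identity $L(X,Y)P_X=P_X L(Y,X)$ appears explicitly in the paper; both require a short derivation from the identity of Definition~\ref{def53}. For instance, $L(X,Y)P_X=P_X L(Y,X)$ follows by writing the five-linear identity twice---once with $(X,Y,V,W,Z)=(X,Y,X,Z,X)$ and once with $(X,Z,X,Y,X)$---using the outer symmetry, and eliminating the cross-term. Since you explicitly call these ``standard consequences,'' this is a matter of presentation rather than a gap. The proof is complete.
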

\begin{remark}
It follows from \eqref{Uphi} and from Proposition \ref{pr45} that if $P_{U}$ is non-degenerate, then there exists a unique homogeneous solution of \eqref{vecdef}.
\end{remark}
\begin{example}\label{ex29} For the triple Jordan system \eqref{jorvec2} the operator $P_{\bf u}$ is degenerate for any ${\bf u}$ and the formula \eqref{phiu} does not work. The general solution of \eqref{vecdef} is given by
$$
\phi({\bf u})=\frac{\bf c}{2 \langle \bf c,\, \bf u\rangle },
$$
where $\bf c$ is arbitrary constant vector. This formula is a special case of 
$$
\phi(U)=\frac{1}{2}\,C\,(U^t\, C)^{-1},
$$
which gives a solution in the case of \eqref{nm}. 
\end{example}
\begin{remark}
We see that in Example \ref{ex29} there are many homogeneous solutions of \eqref{vecdef}.  
\end{remark}
\begin{op} Find $\phi(U)$ for the triple Jordan system of $m\times m$ skew-symmetric matrices for odd $m$.
\end{op}

\subsection{Several classes of integrable rational Jordan  models}

In all ``rational'' integrable models 
described below  $\phi(U)$ denotes an arbitrary solution of the system \eqref{vecdef}.

Let us introduce the following notation
$$ \alpha_{U}(X,Y)=\{X,\,\phi(U),\,Y\} $$ and
$$ \sigma_{U}(X,Y,Z)=\{X,\,\{\phi(U),Y,\phi(U)\},\,Z\}. $$
\begin{remark} According to Propositions \ref{pr41} and \ref{pr43} for any fixed $U$ the operation $\alpha_{U}(X,Y)$ is a multiplication in a Jordan algebra while  $\sigma_{U}(X,Y,Z)$ defines a triple Jordan system. The coefficients $u_1,\dots, u_N$ of the element $U$ can be regarded as deformation parameters of these Jordan structures.
\end{remark}

\subsubsection{Class 1} For any triple Jordan system consider the hyperbolic equation 
\begin{equation} U_{xy}=\alpha_U(U_x, U_y). \label{chir} \end{equation}
In the matrix case, \eqref{chir} coincides with the equation of the
principal chiral field \eqref{prchir}.
For this reason we will call \eqref{chir} the Jordan chiral field equation.

It is easy to verify that \eqref{chir} admits the following Lax
re\-pre\-sen\-ta\-tion
$$ \Psi_x=\frac{2}{(1-\lambda)} L_{U_x}\Psi, \qquad
\Psi_y=\frac{2}{(1+\lambda)} L_{U_y}\Psi.$$
As usual, we denote by $L_X$ the operator of left multiplication. 
Note that this formula gives us a Lax representation for the matrix $\sigma$-model 
\eqref{prchir}:
$$\Psi_x=\frac{1}{(1-\lambda)}M\Psi,
\qquad\Psi_y=\frac{1}{(1+\lambda)}N\Psi,$$
where $\Psi$ is a matrix and
$$ M\Psi=-{\bf U}_x {\bf U}^{-1}\Psi-\Psi {\bf U}^{-1} {\bf U}_x, \qquad N\Psi=-{\bf U}_y {\bf U}^{-1}\Psi-
\Psi {\bf U}^{-1} {\bf U}_y,$$
which is different from the standard one.

\subsubsection{Class 2} 
The following evolution equation \begin{equation} \label{geoex}
U_t = U_{xxx} - 3
\alpha_{U} (U_x, U_{xx}) +
             \frac32 \sigma_{U}(U_x, U_x, U_x)
\end{equation} 
has infinitely many higher symmetries for any triple Jordan systems. The matrix and two vector equations, which correspond to the triple systems \eqref{jormat1}, \eqref{jorvec1} and \eqref{jorvec2}, have
the following form: $$ {\bf U}_t = {\bf U}_{xxx} -
\frac{3}{2}\,{\bf U}_x {\bf U}^{-1} {\bf U}_{xx}-\frac{3}{2}\, {\bf U}_{xx} {\bf U}^{-1} {\bf U}_x
+\frac{3}{2}\, {\bf U}_{x} {\bf U}^{-1} {\bf U}_{x} {\bf U}^{-1} {\bf U}_{x},
$$
$$
  {\bf u}_t =  {\bf u}_{xxx} - 3 \frac{\langle  {\bf u}, {\bf u}_x\rangle}{ \vert  {\bf u} \vert^2}  {\bf u}_{xx} -
  3\frac{\langle  {\bf u}, {\bf u}_{xx}\rangle}{\vert  {\bf u} \vert^2}  {\bf u}_x
         +3\frac{\langle  {\bf u}_x, {\bf u}_{xx}\rangle }{ \vert  {\bf u} \vert^2}  {\bf u} 
       -\frac{3}{2}\frac{\vert  {\bf u}_x \vert^2}{ \vert  {\bf u} \vert^2} {\bf u}_x + 
      6\frac{\langle  {\bf u}, {\bf u}_x\rangle ^2}{ \vert  {\bf u} \vert^4} {\bf u}_x
          -3\frac{\langle  {\bf u}, {\bf u}_x\rangle \vert  {\bf u}_x \vert^2 }{\vert  {\bf u} \vert^4}  {\bf u},
$$
and
$$
 {\bf u}_t =  {\bf u}_{xxx} - \frac{3}{2} \frac{\langle {\bf c}, {\bf u}_x\rangle }{\langle {\bf c}, {\bf u}\rangle }  {\bf u}_{xx} -
                    \frac{3}{2} \frac{\langle {\bf c}, {\bf u}_{xx}\rangle }{\langle {\bf c}, {\bf u}\rangle }  {\bf u}_x 
                  + \frac{3}{2} \frac{\langle {\bf c}, {\bf u}_x\rangle ^2}{\langle {\bf c}, {\bf u}\rangle ^2}  {\bf u}_x.$$

\subsubsection{Class 3} 
The following equations 
$$ V_t= V_{xxx}
- \frac32 \alpha_{V_x}(V_{xx},V_{xx})
             $$
of the Schwartz-KdV type have infinitely many symmetries for any triple Jordan system. They are related to the equations of Class 2 by the potentiation $U=V_x$. The
matrix equation is given by \eqref{matskdv}. 
The two 
vector Shwartz-KdV equations have the form
$$
{\bf u}_t = {\bf u}_{xxx} -  3\, \frac{\langle {\bf u}_x,{\bf u}_{xx}\rangle }{\vert {\bf u}_x \vert^2} {\bf u}_{xx} +
\frac{3}{2}\, \frac{\vert {\bf u}_{xx} \vert^2}{\vert {\bf u}_x \vert^2} {\bf u}_x,$$ and
$$
{\bf u}_t = {\bf u}_{xxx} -  \frac{3}{2} \frac{\langle {\bf c},{\bf u}_{xx}\rangle }{\langle {\bf c},{\bf u}_x\rangle } {\bf u}_{xx}.$$

\subsubsection{Class 4} The scalar representative of this class is the
Heisenberg model
$$u_t=u_{xx}-\frac{2}{u+v}u_x^2, \qquad v_t=-v_{xx}+\frac{2}{u+v}v_x^2.$$
The following coupled system 
$$
U_t = U_{xx} -2
\alpha_{U+V}(U_x, U_{x}), \qquad V_t = -V_{xx} +2
\alpha_{U+V}(V_x, V_{x}) 
$$
is a Jordan generalization of equation \eqref{matheis}. It
has the following third order symmetry:
$$
 U_t =
U_{xxx} -6 \alpha_{U+V}(U_x, U_{xx}) +
             6 \sigma_{U+V} (U_x, U_x, U_x),\quad 
V_t = V_{xxx} -6 \alpha_{U+V}(V_x,V_{xx}) +
             6 \sigma_{U+V} (V_x, V_x, V_x).
$$ 
One of the two vector equations is given by 
$$
\begin{cases}\ds {\bf u}_t = {\bf u}_{xx} -4\, \frac{\langle {\bf u}_x, {\bf u}+{\bf v}\rangle }{\vert {\bf u}+{\bf v}\vert ^2} {\bf u}_{x}+
2\, \frac{\vert {\bf u}_x \vert^2}{\vert {\bf u}+{\bf v}\vert ^2} ({\bf u}+{\bf v}), \\[5mm]
\ds {\bf v}_t = -{\bf v}_{xx}+4\, \frac{\langle {\bf v}_x, {\bf u}+{\bf v}\rangle }{\vert {\bf u}+{\bf v}\vert ^2} {\bf v}_{x}-
2\, \frac{\vert {\bf v}_x \vert^2}{\vert {\bf u}+{\bf v}\vert ^2} ({\bf u}+{\bf v}). \end{cases}$$

\begin{op}
Find Lax representations in the Tits--Kantor--Koecher superstructural algebra {\rm \cite{meyb}} for equations from Classes 2-4. 
\end{op}

\section{Deformations of non-associative algebras and integrable systems of geometric type} 
\subsection{Geometric description of deformations} 

Let $E$ be the Euclidean connection on an $N$-dimensional manifold ${\cal M}$, ${\bf u}=
(u_l,\cdots, u_N)$ be the local coordinates on ${\cal M}$. Denote by $E^i_{jk}({\bf u})$ the components
of $E$. Let us consider a connection $\Gamma$ with the components $\Gamma^i_{jk}({\bf u})=E^i_{jk}({\bf u})+C^i_{jk}({\bf u}),$
where $C^i_{jk}({\bf u})$ are components of a tensor field $C$ on ${\cal M}$.
\begin{definition} \cite{ss} The connection $\Gamma$ is called a {\it covariantly constant deformation
of the Euclidean connection} if the deformation tensor $C$ is covariantly constant
with respect to $\Gamma$.
\end{definition}
It follows from the standard formulas for recalculating the curvature and torsion
under the deformation of a connection (see, for example, \cite{helg}) that both the
curvature tensor of $\Gamma$ and the torsion can be expressed in terms of the deformation
tensor $C$ only:
\begin{equation}\label{tor}
T^i_{jk}=C^i_{jk}-C^i_{kj},
\end{equation}
\begin{equation}\label{curv}
R^i_{mjk}= \sum_{r} C^i_{rm} C^r_{jk}-C^i_{rm} C^r_{kj}+ C^i_{kr} C^r_{jm} - C^i_{jr} C^r_{km}. 
\end{equation}
\begin{remark} Since the tensor $C$ is covariantly constant, then ${\cal M}$ is  a space of covariantly
constant curvature and torsion.
\end{remark}

Rewriting in terms of the Euclidean connection $E$  the fact that $C$ is a covariantly
constant tensor, we obtain
\begin{equation}\label{nabla}
 \nabla_m\Big(C^i_{jk} \Big) = \sum_{r}  C^i_{rk} C^r_{mj}+ C^i_{jr} C^r_{mk}- C^i_{mr} C^r_{jk}.
\end{equation}
Here, we denote by $\nabla_m$ the covariant $u^m$-derivative with respect to $E$.
The relations \eqref{nabla} will be regarded as an overdetermined system of first order
PDE's with respect to unknown functions $C^i_{jk}({\bf u})$. We intend to investigate
the compatibility conditions for \eqref{nabla}. Of course, they are independent of the
choice of coordinate system. For calculation, it is natural to use the coordinate
system in which all components of $E$ are identically zero. With  this preferred local
coordinates, \eqref{nabla} takes the form 
of overdetermined system of PDEs for functions $C^i_{jk}(u^1,\dots, u^N)$:
\begin{equation} \label{defalg}
\displaystyle\frac{\partial C^i_{jk}}{\partial u^m}+C^i_{rk} C^r_{mj}+C^i_{jr} C^r_{mk}-C^i_{mr} C^r_{jk}=0.
\end{equation}

\subsection{Algebraic description of deformations}

Let ${\cal V}$ be a vector space with a basis ${\bf e}_1, \dots , {\bf e}_N$. The tensor $C({\bf u})$ gives rise to
the $N$-parameter family of multiplications on ${\cal V}$. Namely, the products of basis
vectors are defined by the formula
\begin{equation}\label{mustdef}
{\bf e}_j\circ {\bf e}_k = \sum_i C^i_{jk}({\bf u})\,{\bf e}_i.
\end{equation}

In terms of the product \eqref{mustdef} the deformation equation \eqref{defalg} takes the form
$$
\partial_X(Y \circ Z) = (X \circ Y)  \circ Z + Y  \circ (X  \circ Z) - X  \circ (Y  \circ Z).  
$$
Here and below, for any $X = \sum x^i {\bf e}_i$ we denote by $\partial_X$ the vector field $\ds \sum x^i \frac{\partial}{\partial u^i}$.
Note that \eqref{tor} and \eqref{curv} can be rewritten in the following compact form
$$
T(X, Y) = X \circ Y - Y \circ X,
$$
$$
R(X, Y, Z) = [Y, Z, X], \qquad X,Y,Z \in {\cal V},
$$
where $[\cdot,\cdot,\cdot]$ is defined by \eqref{as} and \eqref{br}.
The notation $T(X, Y)$ means the value of a tensor $T$ on vectors
$X$ and $Y$.

\begin{theorem} The system \eqref{defalg} is compatible iff for any ${\bf u}=(u^1,\dots, u^N)$ the product \eqref{mustdef} satisfies the following identity  {\rm \cite{ss, gss}}:
\begin{equation}\label{SS}
[V, X, Y \circ Z] - [V, X, Y] \circ Z - Y \circ [V, X, Z] = 0.
\end{equation}
\end{theorem}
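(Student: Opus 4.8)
The plan is to treat \eqref{defalg} as a first-order system in total-differential form $\partial C=\Phi(C)$, whose right-hand side $\Phi$ is algebraic in the unknowns $C^i_{jk}$, and to invoke the Frobenius (complete integrability) theorem: the system admits a solution through every prescribed initial value $C(\mathbf{u}_0)$ if and only if the cross-derivative relations $\partial_n\partial_m C^i_{jk}=\partial_m\partial_n C^i_{jk}$ hold as a consequence of the equations themselves, i.e. after substituting $\partial_m C=\Phi_m(C)$. Thus everything reduces to computing the prescribed mixed second derivatives invariantly and checking when they are symmetric.

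First I would record the operator form of \eqref{defalg} already displayed after \eqref{mustdef}, namely $\partial_X(Y\circ Z)=(X\circ Y)\circ Z+Y\circ(X\circ Z)-X\circ(Y\circ Z)$ for constant vectors $X,Y,Z$, and upgrade it to a Leibniz rule valid for $\mathbf{u}$-dependent arguments. Writing $A\circ B=\sum a^i b^j C^k_{ij}\,\mathbf{e}_k$ and differentiating coefficient-wise, the derivative of the structure constants is supplied by the operator form, giving
\[
\partial_W(A\circ B)=(\partial_W A)\circ B+A\circ(\partial_W B)+(W\circ A)\circ B+A\circ(W\circ B)-W\circ(A\circ B).
\]
This single identity is the only computational tool needed.

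Next I would apply this rule twice to compute $\partial_W\partial_X(Y\circ Z)$ for constant $W,X,Y,Z$. At each differentiation the term $-W\circ(\,\cdot\,)$ coming from the structure-constant derivative cancels against the corresponding left-multiplication term, so the expression telescopes to a sum of fully parenthesized triple and quadruple products of $W,X,Y,Z$. Antisymmetrizing in $W\leftrightarrow X$, a large block of terms (those carrying $W\circ Z$, $X\circ Z$, or an outer left factor $W\circ$ or $X\circ$) cancels in pairs, and the survivors reassemble, via the definitions \eqref{as}, \eqref{br}, into
\[
\partial_W\partial_X(Y\circ Z)-\partial_X\partial_W(Y\circ Z)=-\Big([W,X,Y\circ Z]-[W,X,Y]\circ Z-Y\circ[W,X,Z]\Big).
\]
Since the left side must vanish for a genuine solution and, conversely, its vanishing is exactly the Frobenius integrability condition, the system \eqref{defalg} is compatible iff the right-hand bracket vanishes for all $W,X,Y,Z$, which is precisely \eqref{SS} (with $V$ playing the role of $W$).

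I expect the main obstacle to be the bookkeeping in the double application of the Leibniz rule: there are on the order of two dozen nonassociative monomials to track, and one must organize the telescoping cancellations and the final $W\leftrightarrow X$ antisymmetrization carefully so that the residual terms visibly collapse into the three associator-brackets of \eqref{SS}. A secondary point requiring care is the precise invocation of the Frobenius theorem: because $\Phi$ depends on $C$ but not on its derivatives, the integrability condition is purely the symmetry of the substituted mixed partials, with no further prolongation needed, and the ``if'' direction then yields genuine local solutions $C^i_{jk}(\mathbf{u})$ for arbitrary initial data, matching the geometric statement that $\Gamma$ is a covariantly constant deformation.
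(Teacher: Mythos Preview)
Your approach is correct and complete. The paper does not actually supply a proof of this theorem; it is stated with references to \cite{ss, gss}, so there is nothing in the text to compare your argument against.

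For the record, the computation you outline works exactly as you describe. With the Leibniz rule
\[
\partial_W(A\circ B)=(\partial_W A)\circ B+A\circ(\partial_W B)+(W\circ A)\circ B+A\circ(W\circ B)-W\circ(A\circ B),
\]
the inner terms $\pm W\circ(\,\cdot\,)$ indeed telescope at each step, so that each of the three blocks $\partial_W\big((X\circ Y)\circ Z\big)$, $\partial_W\big(Y\circ(X\circ Z)\big)$, $\partial_W\big(X\circ(Y\circ Z)\big)$ reduces to four quadruple products. After the $W\leftrightarrow X$ antisymmetrization six of the twelve surviving terms cancel pairwise (those of the shape $(A\circ Y)\circ(B\circ Z)$ and $A\circ(\cdots)$ with $A,B\in\{W,X\}$ acting in the outer slot), and the remaining six group precisely into
\[
-[W,X,Y\circ Z]+[W,X,Y]\circ Z+Y\circ[W,X,Z],
\]
matching \eqref{SS}. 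The Frobenius argument is also legitimate here: the right-hand side of \eqref{defalg} is algebraic in the $C^i_{jk}$ and smooth, so the substituted mixed partials being symmetric is both necessary and sufficient for local solvability through arbitrary initial data $C^i_{jk}(\mathbf{u}_0)$, which is exactly what ``compatible'' means in this context.
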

\begin{theorem}{\rm \cite{ss}} \quad {\bf a)} The class of algebras with identity \eqref{SS} contains:
\begin{itemize}
\item[1)] Associative algebras;
\item[2)] Left-symmetric algebras;
\item[3)] Lie algebras;
\item[4)] Jordan algebras;
\item[5)] LT-algebras. Any commutative algebra with identity  \eqref{SS} is an LT-algebra.
\end{itemize}

{\bf b)} Let $C^i_{jk}({\bf u})$ be the solution of system \eqref{defalg} with an initial data $C^i_{jk}(0)$. If $C^i_{jk}(0)$ are structural constants of an algebra from  one of the classes 1--5,  then the algebra with multiplication \eqref{mustdef} belongs to the same class for any ${\bf u}$. In other words, all these classes of algebras are invariant with respect to the deformation \eqref{defalg}. 
\end{theorem}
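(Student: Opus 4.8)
The plan is to treat parts \textbf{a)} and \textbf{b)} by quite different means: part \textbf{a)} by a case-by-case verification of identity \eqref{SS}, and part \textbf{b)} by a single conceptual argument based on the covariant constancy of the deformed product. For part \textbf{a)} I would first dispose of the associative and left-symmetric cases simultaneously: in both the triple product $[X,Y,Z]$ of \eqref{br} vanishes identically, since ${\rm As}(X,Y,Z)=0$ for associative algebras while $[X,Y,Z]=0$ is the very definition of a left-symmetric algebra. Hence every term of \eqref{SS} is zero. For Lie algebras, writing $X\circ Y$ for the bracket, I would use the Jacobi identity (in derivation form) to get ${\rm As}(X,Y,Z)=-Y\circ(X\circ Z)$ and therefore $[X,Y,Z]=(X\circ Y)\circ Z$. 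Substituting $W=V\circ X$, identity \eqref{SS} collapses to
$$W\circ(Y\circ Z)-(W\circ Y)\circ Z-Y\circ(W\circ Z)=0,$$
which is exactly the assertion that ${\rm ad}_W$ is a derivation of the bracket, i.e.\ the Jacobi identity once again.

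The genuinely computational case, and what I expect to be the main obstacle in part \textbf{a)}, is the Jordan one. Here $\circ$ is commutative and satisfies \eqref{jord}; the idea is to fully polarize the Jordan identity $X^2\circ(Y\circ X)=(X^2\circ Y)\circ X$ and rewrite the resulting multilinear relation as an operator identity among the multiplication operators $L_X$. Expanding \eqref{SS} through the definitions \eqref{as} and \eqref{br}, I would then show it to be a consequence of this linearized Jordan identity, either via the standard operator relations of Jordan theory or, failing a slick argument, by the method of undetermined coefficients already employed elsewhere in the text. The LT-algebra case requires nothing: by the stated definition an LT-algebra is precisely a commutative algebra satisfying \eqref{SS}.

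For part \textbf{b)} I would exploit the geometric meaning of \eqref{defalg}. In the preferred coordinates, where the Euclidean connection vanishes, the system states exactly that the structure tensor $C$, equivalently the product \eqref{mustdef}, is covariantly constant for $\Gamma$, i.e.\ $\nabla C=0$. The defect of any of the defining identities -- the associator, the torsion $X\circ Y-Y\circ X$ of \eqref{tor}, the Jacobi expression, the polarized Jordan identity, or \eqref{SS} itself -- is a tensor obtained by contracting finitely many copies of $C$. Since $\nabla$ obeys the Leibniz rule over tensor products and commutes with contractions, and $\nabla C=0$, every such defect tensor $I$ satisfies $\nabla I=0$ and is thus covariantly constant. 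A covariantly constant tensor is determined along any curve by its value at one point: restricting to a path from the initial point, $I$ obeys the linear homogeneous parallel-transport ODE with zero initial value, so by uniqueness $I\equiv 0$ on the whole connected neighbourhood as soon as $I$ vanishes at ${\bf u}=0$. Because the initial data lies in one of the classes $1$--$5$, all relevant defects vanish at ${\bf u}=0$, hence for every ${\bf u}$, which is exactly invariance of the class.

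Finally I would explain how the two parts interlock and flag the one delicate point. Part \textbf{a)} guarantees that an algebra from classes $1$--$5$ satisfies \eqref{SS} at ${\bf u}=0$, which by the compatibility theorem immediately preceding is what \eqref{defalg} needs in order to admit a solution; applying the same parallel-transport argument to the defect of \eqref{SS} shows that \eqref{SS} persists along the deformation, so the system stays integrable and the whole construction is globally consistent. The subtlety to handle with care is the polarization step: the Jordan identity (and hence the Jordan input to the geometric argument) is not multilinear as written, so one must pass to its fully linearized form before treating its defect as an honest tensor to which the covariant-constancy argument applies; in characteristic zero this loses nothing, since the identity holds iff all its polarizations do.
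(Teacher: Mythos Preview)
The paper does not supply a proof of this theorem; it is stated with a bare citation to \cite{ss}, so there is nothing in the text to compare against. Assessed on its own merits, your plan is correct.

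For part~\textbf{a)}, your treatment of the associative, left-symmetric, and Lie cases is complete and right; in particular the reduction $[X,Y,Z]=(X\circ Y)\circ Z$ for a Lie bracket and the collapse of \eqref{SS} to the derivation form of Jacobi is exactly the argument. For the LT case, observe that in a commutative algebra $[V,X,\,\cdot\,]=-[L_V,L_X]$, so \eqref{SS} says precisely that every commutator $[L_V,L_X]$ is a derivation of $\circ$; that is the standard defining identity of a Lie triple (LT) algebra, which is why the text phrases item~5 as an equivalence in the commutative case. The Jordan case is the only one you leave as a sketch; it does go through, and the cleanest route is via the linearized Jordan operator identity $[L_X,L_{Y\circ Z}]+[L_Y,L_{Z\circ X}]+[L_Z,L_{X\circ Y}]=0$ (equivalently, that each $[L_X,L_Y]$ is a derivation, which already yields \eqref{SS} once you note $[V,X,\,\cdot\,]=-[L_V,L_X]$ holds in any commutative algebra). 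You should actually carry this step out rather than defer to undetermined coefficients.

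Your argument for part~\textbf{b)} is the right one: \eqref{defalg} is literally $\nabla^{\Gamma}C=0$, any multilinear identity in $C$ has a defect tensor built from tensor products and contractions of $C$, hence is itself $\nabla^{\Gamma}$-parallel, and parallel transport of the zero tensor is zero. Your handling of the polarization issue and of the interlock with the compatibility theorem (so that \eqref{SS} itself persists and the system remains integrable along the deformation) is correct and worth keeping in the write-up.
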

\begin{op} Suppose that $C^i_{jk}({\bf u})$ satisfies \eqref{defalg}. Prove that for small ${\bf u}$ the algebra with the stuctural constants $C^i_{jk}({\bf u})$ is isomorphic to the algebra with the structural constants $C^i_{jk}(0)$.

\end{op}
\begin{proposition} Let $\{X,Y,Z\}$ be a triple Jordan system and $\phi({\bf u})$ be a solution of \eqref{vecdef}. Then the structural constants of the product
$$
X\circ Y=\{X,\, \phi,\, Y\}
$$
satisfy the system \eqref{defalg}.
\end{proposition}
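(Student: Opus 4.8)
The plan is to verify the coordinate-free form of the deformation equation \eqref{defalg} directly for the product $X\circ Y=\{X,\phi,Y\}$. Translating the structure-constant system \eqref{defalg} into the product $\circ$ (with the differentiation index $m$ playing the role of $X$ and the indices $j,k$ the roles of $Y,Z$) yields the equation
\[
\partial_X(Y\circ Z)=X\circ(Y\circ Z)-(X\circ Y)\circ Z-Y\circ(X\circ Z),
\]
which is what must hold for all $X,Y,Z$. The decisive structural observation is that the product depends on the point $\mathbf{u}$ only through $\phi$, so that for fixed vectors $Y,Z$ the operator $\partial_X$ acts solely on $\phi$. Using the linearity of the triple product in each argument together with \eqref{vecdef}, I would first compute $\partial_X\phi=\sum_k x^k\,\partial\phi/\partial u^k=-\sum_k x^k\{\phi,\mathbf{e}_k,\phi\}=-\{\phi,X,\phi\}$, and hence $\partial_X(Y\circ Z)=\{Y,\partial_X\phi,Z\}=-\{Y,\{\phi,X,\phi\},Z\}$.

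Substituting $A\circ B=\{A,\phi,B\}$ into the three terms on the right-hand side and moving everything to one side, the statement reduces to the single algebraic identity
\[
\{X,\phi,\{Y,\phi,Z\}\}-\{\{X,\phi,Y\},\phi,Z\}-\{Y,\phi,\{X,\phi,Z\}\}+\{Y,\{\phi,X,\phi\},Z\}=0 .
\]
I would then group the first and third terms and apply the second defining identity of a triple Jordan system from Definition \ref{def53} with the substitution $(X,Y,V,W,Z)\mapsto(X,\phi,Y,\phi,Z)$, which gives exactly
\[
\{X,\phi,\{Y,\phi,Z\}\}-\{Y,\phi,\{X,\phi,Z\}\}=\{\{X,\phi,Y\},\phi,Z\}-\{Y,\{\phi,X,\phi\},Z\}.
\]
Inserting this relation into the displayed identity cancels all four terms in pairs, which completes the verification.

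It is worth noting that this argument uses \emph{only} the second Jordan-triple identity of Definition \ref{def53}; the symmetry identity $\{X,Y,Z\}=\{Z,Y,X\}$ is not needed for this particular computation (it enters only indirectly, through Propositions \ref{pr41}--\ref{pr43}, in guaranteeing that $\phi$ and the deformed product are meaningful to begin with). The difficulty here is bookkeeping rather than depth: the only genuinely delicate point is to pin down the correct coordinate-free form of \eqref{defalg}, in particular the placement of the three signs, which is most safely fixed by testing the one-dimensional model $\{x,y,z\}=xyz$ with the homogeneous solution $\phi=1/u$ of \eqref{vecdef}. Once the signs and the correspondence between the differentiated slot and the vector field $\partial_X$ are fixed, the whole proof is a single instantiation of the Jordan-triple identity, with no residual calculation.
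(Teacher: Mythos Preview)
Your proof is correct. The paper states this proposition without proof, so there is nothing to compare your argument against; the direct verification you give---computing $\partial_X\phi=-\{\phi,X,\phi\}$ from \eqref{vecdef} and then reducing the deformation equation to a single instance of the second identity in Definition~\ref{def53} with $(X,Y,V,W,Z)\mapsto(X,\phi,Y,\phi,Z)$---is precisely the natural argument.

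Your caution about the signs is well founded and not merely pedantic. The coordinate-free form you extract from \eqref{defalg},
\[
\partial_X(Y\circ Z)=X\circ(Y\circ Z)-(X\circ Y)\circ Z-Y\circ(X\circ Z),
\]
is indeed the correct transcription of \eqref{defalg} as written; it is the negative of the displayed formula appearing a few lines above the proposition in the paper, so there is an internal sign inconsistency in the text. Your one-dimensional check with $\{x,y,z\}=xyz$ and $\phi=1/u$ (giving $C(u)=1/u$, $\partial_uC=-C^2$) confirms that your version is the one compatible with \eqref{defalg} and with \eqref{vecdef}. With the opposite sign convention the four terms would not cancel but would instead sum to $-2\{Y,\{\phi,X,\phi\},Z\}$, so fixing the sign independently, as you did, is essential.
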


\subsection{Equations of geometric type}

Consider evolution systems of the form
\begin{equation}
u^{i}_{t}=u^i_{xxx}+\alpha^{i}_{jk}({\bf u})\,u^{j}_{x}\,u^{k}_{xx}+
\gamma^{i}_{jks}({\bf u})\,u^{j}_{x}\, u^{k}_{x}u^{s}_{x}, \qquad i=1,\dots,N.
\label{geom} 
\end{equation}
For such systems unknown coefficients are not constants but functions of $u^1,\dots,u^N$. 
Here and below, we assume that the {\it summation is carried out over repeated indices}. 

It is convenient to
rewrite \eqref{geom} in the following way
\begin{equation}
u^{i}_{t}=u^{i}_{xxx}+3 \alpha^{i}_{jk}\, u^{j}_{x}u^{k}_{xx}+\Big(
\frac{\partial \alpha^{i}_{km}}{\partial u^j}+2\alpha^{i}_{jr}\alpha^{r}_{km}-
\alpha^{i}_{rj}\alpha^{r}_{km}+
\beta^{i}_{jkm}\Big)\,u^{j}_{x}u^{k}_{x}u^{m}_{x}.\label{geeq}
\end{equation}
The class of systems \eqref{geeq} is invariant under the arbitrary
point transformations
${\bf u} \rightarrow  {\bf \Phi}({\bf u})$, where ${\bf u}=(u^1,\dots, u^N)$.
It is  easy to see that under such a change of coordinates,
$\alpha^{i}_{jk}$ and $\beta^{i}_{jkm}$ are transformed just as
components of an affine connection $\Gamma$ and a tensor $\beta$, respectively. 

\begin{example} In the case $N=1$ the equation \eqref{geeq} has the form
$$
u_t=u_{xxx} + 3 \alpha(u)\, u_x u_{xx} + \Big(\alpha'(u)+\alpha(u)^2+\beta(u)\Big)\, u_x^3. 
$$
Using the symmetry approach (see Section 2.2), one can verify that this equation possesses higher symmetries iff $\beta'=2 \alpha \beta.$ By a proper point transformation of the form $ u \rightarrow \Phi(u)$ the function $\alpha$ can be reduced to zero (for $N=1$ any affine connection is flat) and the function $\beta$ becomes a constant. The equation 
$u_t=u_{xxx}+ {\rm const} \,u_x^3$ is known to be integrable. It is related to the mKdV equation by a potentiation.
\end{example}

Without loss of generality we assume that the tensor $\beta$ is symmetric:
$$
\beta(X,Y,Z)=\beta(Y,X,Z)=\beta(X,Z,Y)
$$
for any vectors  $X,Y,Z$.

\begin{question} For which affine connections $\Gamma$ and which tensors $\beta$ is the equation \eqref{geeq} integrable?
\end{question}

Let $R$ and $T$ be the curvature and the torsion tensors of $\Gamma$. In order to formulate the classification results, we introduce the following
tensor:
\begin{equation}\label{bsd1}
\sigma(X,Y,Z)=\beta(X,Y,Z)-\frac{1}{3} \delta(X,Y,Z)+
\frac{1}{3} \delta(Z,X,Y),
\end{equation}
\begin{equation}\label{bsd2}
\delta(X,Y,Z)=T(X,T(Y,Z))+R(X,Y,Z)-\nabla_X(T(Y,Z)). 
\end{equation}
Using Bianchi's identity $R(X,Y,Z)+R(Y,Z,X)+R(Z,X,Y)=0$, one can check that
\begin{equation} \sigma(X,Y,Z)=\sigma(Z,Y,X). \label{jord1}
\end{equation}
\begin{remark} Taking into account \eqref{jord1}, the symmetricity of $\beta$ and the identity $\delta(X,Y,Z)=-\delta(X,Z,Y),$ one can check that 
\begin{equation}\label{bsd3}
\beta(X,Y,Z)=\frac{1}{3} \Big(\sigma(X,Y,Z)+\sigma(Y,Z,X)+\sigma(Z,X,Y)\Big).
\end{equation}

\end{remark}

\begin{theorem}\footnote{This theorem was proved by S.~Svinilupov and V.~Sokolov (unpublished) and was formulated in the survey \cite{habsokyam} dedicated to the memory of Sergey Svinolupov.}  The equation \eqref{geeq} possesses a higher symmetry of the
form
$$ {\bf u}_{\tau}={\bf u}_{n}+ {\bf G}({\bf u}, {\bf u}_x,\cdots , {\bf u}_{n-1}),
\qquad n>3,$$
iff the following identities for the tensors $T, R$ and $\sigma$ hold:
\begin{equation} 
\nabla_X(R(Y,Z,V))=R(Y,X,T(Z,V)),\label{Cond1}
\end{equation}
\begin{equation} 
\nabla_X\Big(\nabla_Y\big(T(Z,V)\big)-T\big(Y,T(Z,V)\big)-R(Y,Z,V)\Big)=0,\label{Cond2}
\end{equation}
\begin{equation} 
\nabla_X\Big(\sigma(Y,Z,V)\Big)=0, \label{Cond3}
\end{equation}
\begin{equation}  T\big(X,\sigma(Y,Z,V)\big)+T\big(Z,\sigma(Y,X,V)\big)    
+ T\big(Y,\sigma(X,V,Z)\big)+T\big(V,\sigma(X,Y,Z\big)=0, \label{Cond4}
 \end{equation}
\begin{equation}\begin{array}{l}  \sigma\big(X,\sigma(Y,Z,V),W\big)-\sigma\big(W,V,\sigma(X,Y,Z)\big)     
 + \\[1.5mm] \sigma\big(Z,Y,\sigma(X,V,W)\big)-\sigma\big(X,V,\sigma(Z,Y,W)\big)=0, \label{jord2} 
 \end{array}
 \end{equation}
 and relation \eqref{bsd1}, where the tensors $\beta$ and $\delta$ are eliminated by means of \eqref{bsd2} and \eqref{bsd3}.
  \end{theorem}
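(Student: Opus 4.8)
The plan is to treat the existence of a higher symmetry as the statement that the two evolutionary flows commute, i.e. that $G_*(F)=F_*(G)$ for $F$ the right-hand side of \eqref{geeq} and $G={\bf u}_n+{\bf G}$, and then to extract the defining relations by expanding this identity in the jet variables ${\bf u}_x,{\bf u}_{xx},\dots$. The essential structural fact I would exploit from the start is that the class \eqref{geeq} is invariant under point transformations ${\bf u}\to{\bf \Phi}({\bf u})$, under which $\alpha^i_{jk}$ transforms as an affine connection and $\beta^i_{jkm}$ as a tensor. Consequently every obstruction produced by the symmetry test must be expressible through the invariants of $\Gamma$, namely the torsion $T$, the curvature $R$, their covariant derivatives, and the symmetric tensor $\beta$; this guarantees in advance that the final answer takes the coordinate-free form \eqref{Cond1}--\eqref{Cond4}, \eqref{jord2}. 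For the computation itself I would pass to normal coordinates at an arbitrary fixed point ${\bf u}_0$, where $\alpha^i_{jk}({\bf u}_0)=0$ and the first partial derivatives of $\alpha$ reproduce $R$ and $T$; this converts bare derivatives of the connection coefficients into curvature and torsion and keeps intermediate expressions manageable.

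For the forward implication I would proceed as in the proof of Theorem \ref{tlsym}, determining the coefficients of ${\bf G}$ recursively from the leading derivative downwards. The highest-weight terms of the symmetry condition fix the top coefficients of ${\bf G}$ uniquely in terms of $\alpha$ and $\beta$ (just as the first coefficients of a formal symmetry coincide with those of $G_*^{1/m}$); each further step requires that a known expression vanish, and these successive obstructions are exactly the identities to be proved. I expect them to organize by derivative level: the terms linear in the highest jet of ${\bf G}$ yield the curvature conditions \eqref{Cond1} and \eqref{Cond2}, the next level produces the covariant constancy \eqref{Cond3} of $\sigma$ together with the torsion relation \eqref{Cond4}, and the purely algebraic terms (those with no surviving covariant derivative) deliver the Jordan-type identity \eqref{jord2}. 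Throughout, the combination of $\beta$ and the connection data that actually appears is precisely the tensor $\sigma$ defined by \eqref{bsd1}--\eqref{bsd2}, which is symmetric in its outer arguments by \eqref{jord1}; rewriting everything through $\sigma$ and eliminating $\beta$ by \eqref{bsd3} is what renders the conditions clean.

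For the converse I would observe that \eqref{jord2} is the defining identity of a Jordan triple system (Definition \ref{def53}) for the trilinear operation $\sigma$. Hence, assuming \eqref{Cond1}--\eqref{Cond4} and \eqref{jord2}, one can build the fifth-order candidate symmetry out of $\sigma$ and the covariant derivative exactly along the lines of Theorem \ref{th54}, and verify directly that its commutator with \eqref{geeq} vanishes; the covariant-constancy condition \eqref{Cond3} and the curvature conditions \eqref{Cond1}, \eqref{Cond2} are precisely what is needed to move $\nabla$ past the flow and close the computation. Producing one genuine higher symmetry in this way, together with the covariance of the construction, then yields the required symmetry of any admissible order $n>3$.

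The main obstacle, as in the scalar triple-system classifications of Section \ref{sec54}, is the sufficiency argument: showing that no new obstructions arise beyond \eqref{Cond1}--\eqref{Cond4} and \eqref{jord2}. Concretely, the symmetry condition generates, at higher derivative levels, further tensor identities, and one must prove that each is an algebraic consequence of the listed ones --- the analogue of deducing the seventh-order identities from ${\cal J}=0$ in the mKdV case. I expect this to require a method of undetermined coefficients and polarization identical in spirit to that proof, carried out for the operation $\sigma$, and this purely algebraic reduction, rather than the differential bookkeeping, is where the real work lies.
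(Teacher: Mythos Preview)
The paper does not actually prove this theorem: it is stated with a footnote attributing the proof to Svinolupov and Sokolov (unpublished) and to the survey \cite{habsokyam}, and no argument is given in the text. So there is no proof in the paper to compare your proposal against.

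That said, your outline is consistent with the methodology the paper does display for the analogous polynomial classifications (Theorems \ref{Svin} and \ref{th54} and the proof following it): write the compatibility condition, expand by weight, read off the unknown operations from the leading terms, and collect the residual identities. Your use of point-transformation covariance to justify in advance that the answer must be built from $T$, $R$, $\nabla$, and $\beta$ (hence $\sigma$) is the right structural observation, and normal coordinates are the natural device to convert raw derivatives of $\alpha$ into curvature and torsion at a point.

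Two cautions. First, your expected stratification --- curvature conditions at one level, covariant constancy at the next, the Jordan identity last --- is plausible but not something you can assert without the computation; in practice the conditions \eqref{Cond1}--\eqref{Cond4} and \eqref{jord2} tend to emerge entangled and must be disentangled by substituting earlier relations into later ones, as in the mKdV-type proof where $I_4$ had to be shown to follow from $I_2,I_3$. Second, for the forward direction you must check that the conditions are forced by a symmetry of \emph{any} order $n>3$, not just $n=5$; the argument via the formal symmetry (Theorem \ref{tlsym}) handles this uniformly, whereas a direct fifth-order computation only proves the case $n=5$. Your converse is fine: exhibiting one fifth-order symmetry suffices, since the theorem asserts existence of some $n>3$.
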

 \begin{remark} The identities \eqref{jord1} and \eqref{jord2} mean that $\sigma^{i}_{jkm}({\bf u})$
are the structural constants of a triple Jordan system for any ${\bf u}$.
\end{remark}

\begin{remark} \label{rem53} In the case $T=0$ only the following conditions remain:
\begin{equation}\label{Tcon1}
  \nabla_X\big(\sigma(Y,Z,V)\big)=0,
\end{equation}
\begin{equation}\label{fou}
R(X, Y, Z) = \sigma(X,Z,Y) - \sigma(X ,Y, Z),
\end{equation}
\eqref{jord1} and \eqref{jord2}. 
This means that we are dealing with a symmetric space and a special covariantly constant deformation of a triple Jordan system. 
\end{remark}
  
It can be verified that for equations \eqref{geoex} we have $T=0$ and the integrability conditions of Remark \ref{rem53} are satisfied. More generally, 
any equation of the form
\begin{equation} \label{class1}  U_t = U_{xxx} - 3 \,A_U(U_x, U_{xx}) +
             \frac32\, B_U(U_x, U_x, U_x),
\end{equation}
where the product $-A_U(\cdot, \cdot)$ is defined as the deformation \eqref{defalg} of any Jordan algebra and $B_U$ is the corresponding triple system \eqref{trjor}, satisfy the conditions of Remark \ref{rem53}.

The following two equations  
$$
{\bf u}_t={\bf u}_{xxx}+\frac32\, \big(P({\bf u},{\bf u}_x)
({\bf c}-\vert {\bf c}\vert ^2 {\bf u})\big)_x+3\, \mu\, \vert {\bf c}\vert ^2
P({\bf u},{\bf u}_x)\, {\bf u}_x,
$$
where $\ds \mu=-\frac{1}{2}$ or $\mu=0$,  ${\bf u}$ is an $N$-dimensional vector, $\bf c$ is a given constant vector, and
$$P({\bf u},{\bf u}_x)=\left|{\bf u}_x+\frac{\langle{\bf c},{\bf u}_x\rangle}{1-\langle{\bf c},{\bf u}\rangle}\, {\bf u}\right|^2,
$$
satisfy the integrability conditions but do not belong to the class \eqref{class1}.

\begin{theorem}
 For any triple Jordan system $\{\cdot,\cdot,\cdot\}$
with the structural
constants $s^{i}_{jkm}$, there exists a unique (up to point transformations)
solution of equations \eqref{jord1}, \eqref{jord2}, \eqref{Tcon1} and  \eqref{fou}  such that 
\begin{equation} \sigma^{i}_{jkm}(0)=s^{i}_{jkm}. \label{data1} \end{equation}
\end{theorem}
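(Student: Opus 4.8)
The plan is to reduce the theorem, which lives in the torsion-free case of Remark \ref{rem53}, to an explicit construction built from the element $\phi({\bf u})$ solving \eqref{vecdef}, supplemented by the rigidity of covariantly constant geometric data. The unknowns are a symmetric connection $\Gamma$ and a tensor field $\sigma$, subject to the pointwise Jordan identities \eqref{jord1}, \eqref{jord2}, the parallelism \eqref{Tcon1}, and the curvature prescription \eqref{fou}. First I would fix the gauge by passing to geodesic normal coordinates centred at the base point, so that $\Gamma^i_{jk}$ vanishes there; since $T=0$ the connection is symmetric, and this normalisation exhausts the point-transformation freedom up to the stabiliser of the base point. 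This identifies the base point with the origin appearing in the initial condition \eqref{data1}.

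For existence I would start from the given triple Jordan system $s=\{\cdot,\cdot,\cdot\}$ and solve the overdetermined system \eqref{vecdef} for $\phi({\bf u})$: by the compatibility proposition preceding the definition of the inverse element, \eqref{vecdef} is compatible precisely because $s$ is Jordan, so a local solution exists. I then set the deformation tensor to be the structure tensor of the product $\alpha_{U}(X,Y)=\{X,\phi(U),Y\}$ and put $\sigma_{U}(X,Y,Z)=\{X,\{\phi(U),Y,\phi(U)\},Z\}$. By the proposition immediately before the theorem, the structure constants of $\alpha_U$ satisfy \eqref{defalg}, so $\Gamma=E+C$ with $C$ the structure tensor of $\alpha_U$ is a covariantly constant deformation of the flat connection. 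Its torsion \eqref{tor} vanishes because $\{X,\phi,Y\}=\{Y,\phi,X\}$ by \eqref{jord1}, which matches the hypothesis $T=0$, and its curvature is the associator \eqref{curv} of $\alpha_U$. By Propositions \ref{pr41} and \ref{pr43} both $\alpha_U$ and $\sigma_U$ are genuine Jordan structures for every ${\bf u}$, so \eqref{jord1} and \eqref{jord2} hold automatically.

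The main obstacle is the verification of the two remaining identities \eqref{Tcon1} and \eqref{fou} for this explicit $\sigma_U$. For \eqref{fou} one must show that the associator of the deformed product $\alpha_U$ equals $\sigma_U(X,Z,Y)-\sigma_U(X,Y,Z)$; I would prove this as a pointwise identity in the Jordan triple system by differentiating the defining relation \eqref{Uphi} and feeding in the linearised Jordan identities. For the parallelism \eqref{Tcon1} I would differentiate $\sigma_U$ in $u^m$, substitute $\frac{\partial\phi}{\partial u^m}=-\{\phi,{\bf e}_m,\phi\}$ from \eqref{vecdef}, and check that the outcome cancels exactly against the connection terms assembled from $\alpha_U$ — this is where \eqref{jord2} is consumed. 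Establishing the initial condition \eqref{data1} requires choosing the base point so that $\phi$ there is a tripotent $e$ with $P_{e}=\mathrm{Id}$, whence $\{e,Y,e\}=Y$ and $\sigma_U$ collapses to $s$; controlling this normalisation (and the homogeneity of $\phi$) is the delicate analytic point, and it leans on Proposition \ref{pr45}.

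For uniqueness up to point transformations I would appeal to the rigidity of affine locally symmetric connections. Since $\sigma$ is $\nabla$-parallel and $R$ is the parallel combination \eqref{fou} of $\sigma$, one gets $\nabla R=0$, so $\Gamma$ is locally symmetric and is determined near the base point by the pair $(\Gamma(0),R(0))$, hence by $\sigma(0)=s$. Concretely, given two solutions with the same $\sigma(0)$, I would transport a frame outward along geodesics and invoke the Frobenius theorem, whose integrability conditions here are exactly the Bianchi identity and the derivation property of $R$ — both consequences of \eqref{jord1}--\eqref{jord2} — to produce a point transformation intertwining the two connections and their tensors. This delivers the asserted uniqueness.
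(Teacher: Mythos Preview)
The paper states this theorem without proof, so there is nothing to compare against directly; your proposal must be judged on its own merits.

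Your uniqueness argument is essentially sound: $\nabla\sigma=0$ together with \eqref{fou} forces $\nabla R=0$, so the connection is locally symmetric and is determined in normal coordinates by $R(0)$, hence by $\sigma(0)=s$; the parallel tensor $\sigma$ is then fixed by transport. That part only needs tightening, not rethinking.

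The existence argument, however, has a genuine gap. Your construction produces $\sigma_U(X,Y,Z)=\{X,\{\phi(U),Y,\phi(U)\},Z\}$, and to meet the initial condition \eqref{data1} you need an element $e=\phi(0)$ with $P_e=\mathrm{Id}$. Such an element does not exist in every triple Jordan system. For instance, in the system \eqref{jorvec2} one has $P_e(Y)=2\langle e,Y\rangle\,e$, which has rank one for every $e$, so $P_e=\mathrm{Id}$ is impossible whenever $N\ge 2$. More tellingly, the paper itself, immediately before the theorem, exhibits two explicit integrable equations that satisfy the conditions of Remark~\ref{rem53} but do \emph{not} belong to the class \eqref{class1}; since \eqref{class1} is exactly the family produced by the deformation recipe you invoke, this shows the $\phi$-based construction cannot be the whole story. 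The ``delicate analytic point'' you flag is therefore not a detail but an obstruction.

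A route that does work is to bypass $\phi$ entirely: from $s$ form the algebraic curvature tensor $R_0$ via \eqref{fou}, verify from \eqref{jord1}--\eqref{jord2} that each $R_0(X,Y)$ acts as a derivation of $R_0$ and of $s$, and then invoke the standard construction of a locally symmetric connection with prescribed curvature $R_0$; the derivation property is precisely what guarantees that $s$ extends to a $\nabla$-parallel tensor $\sigma$ on that space. This is the holonomy/Lie-triple argument, and it covers all Jordan triple systems, unital or not.
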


In the case $T \ne 0$, a generalization of the symmetric
spaces arises. We do not know if such affine connected spaces have been considered by geometers.

\chapter{Vector integrable evolution equations}

In Sections \ref{JordKdv}--\ref{sec54}  we have seen that many examples of integrable models related to non-associative algebras are polynomial either matrix or vector equations. A first attempt for classification of such equations was made in \cite{sokwolf}.  

\section{Integrable polynomial homogeneous vector systems}
In this section we consider several types of vector polynomial $\lambda$-homegeneous systems (see Theorem \ref{schomo}). 
By analogy with the scalar case  $\lambda$ is supposed to be 2,1, or $\ds \frac{1}{2}$. Vector NLS type systems \eqref{vecNLS} and \eqref{ex3} give us examples of such integrable systems. 
Here, we present several interesting vector systems from  \cite{sokwolf}, which have higher symmetries.  
\begin{example} The vector analogue of the Ibragimov-Shabat equation \eqref{IbShab} is given by 
$$
\label{ibsh}
{\bf u}_{t}={\bf u}_{xxx}+3 \langle {\bf u}, {\bf u}\rangle \,{\bf u}_{xx}+6 \langle {\bf u},{\bf u}_x\rangle\, {\bf u}_x+             3 \langle {\bf u},{\bf u}\rangle^2 \, {\bf u}_x+3 \langle {\bf u}_x, {\bf u}_x\rangle \, {\bf u}. 
$$
\end{example}
\begin{example} There are two following systems  of the derivative NLS type with two vectors ${\bf u}$ and {\bf v}:
$$
\begin{cases}
{\bf u}_t = \;\;\,{\bf u}_{xx} + 2 \alpha \langle {\bf u}, {\bf v}\rangle \,{\bf u}_x
                   + 2 \alpha \langle {\bf u},  {\bf v}_x\rangle \,{\bf u}
                   - \alpha \beta \langle {\bf u},  {\bf v}\rangle^2 \,{\bf u}, \\[2mm]
 {\bf v}_t =   -    {\bf v}_{xx} + 2 \beta \langle {\bf u},  {\bf v}\rangle  \,{\bf v}_x
                   + 2 \beta \langle  {\bf v}, {\bf u}_x\rangle  \,{\bf v}
                   + \alpha \beta \langle {\bf u}, {\bf v}\rangle^2  \,{\bf v}  
\end{cases}
$$
and \cite{japan1}
$$
\begin{cases}
{\bf u}_t = \;\;\,{\bf u}_{xx} + 2 \alpha \langle {\bf u}, {\bf v}\rangle {\bf u}_x
                   + 2 \beta \langle {\bf u}, {\bf v}_x\rangle {\bf u}
                   + \beta (\alpha -2 \beta) \langle {\bf u},{\bf v}\rangle^2 {\bf u},\\[2mm]
{\bf v}_t =   -   {\bf v}_{xx} + 2 \alpha \langle {\bf u}, {\bf v}\rangle {\bf v}_x
                   + 2 \beta \langle {\bf v}, {\bf u}_x\rangle {\bf v}
                   - \beta (\alpha -2 \beta) \langle {\bf u},{\bf v}\rangle^2 {\bf v} . 
\end{cases}
$$
Here, $\alpha$ and $\beta$ are arbitrary constants. One of them can be normalized by a scaling.  In the above examples $\ds \lambda=\frac{1}{2}$.
 \end{example}
\begin{example} Four systems with a vector ${\bf u}$ and a scalar function $u,$ where $\lambda=2,$ were found in  \cite{sokwolf}:
\begin{equation}\label{svs}
\begin{cases}
 u_t = u_{xxx} + u u_x - \langle {\bf u}, {\bf u}_x\rangle, \\[2mm]
{\bf u}_t = {\bf u}_{xxx} + u {\bf u}_x + u_x {\bf u}, 
\end{cases}
\end{equation}
\begin{equation}
\label{eeq2V}
\begin{cases}
u_t = u_{xxx} + 3 u u_x + 3 \langle {\bf u},  {\bf u}_x\rangle, \\[2mm]
 {\bf u}_t = u \,  {\bf u}_x + u_x \,  {\bf u},
\end{cases}
\end{equation}
\begin{equation}
\label{eeq3V}
\begin{cases}
u_t = \langle {\bf u}, \,  {\bf u}_x\rangle, \\[2mm]
{\bf u}_t = {\bf u}_{xxx} + 2 u \, {\bf u}_x + u_x \, {\bf u},
\end{cases}
\end{equation}
\begin{equation}
\label{eeq4V}
\begin{cases}
u_t = u_{xxx} + u u_x + \langle {\bf u}, {\bf u}_x\rangle, \\[2mm]
{\bf u}_t = -2\,{\bf u}_{xxx} - u \, {\bf u}_x.
\end{cases}
\end{equation}
System \eqref{svs} is just \eqref{VectorKdv}, where ${\bf c}=(1,0,\dots,0)$. System \eqref{eeq2V} has been considered in 
\cite{kuper1}.  It is a vector generalization of the Ito
equation.
Systems (\ref{eeq3V}) and (\ref{eeq4V}) are vector generalizations of the corresponding
scalar systems from the paper \cite{drsok}. 
\begin{op}
Find a Lax representation for systems \eqref{svs}-\eqref{eeq4V} in the frames of general approach of {\rm \cite{kac2}}.
\end{op}

\begin{remark}
The fifth order symmetry
\begin{equation}
\label{kuper}
\begin{cases}
\begin{array}{rcl}
u_\tau&=&u_{xxxxx} + 10 u u_{xxx}+25 u_x u_{xx}+20 u^2 u_x-\\[2mm]
   & &10 \langle {\bf u}, {\bf u}_{xxx}\rangle- 15 \langle {\bf u}_x, {\bf u}_{xx}\rangle
      -10 u_x \langle {\bf u},{\bf u}\rangle-20 u \langle {\bf u}, {\bf u}_x\rangle,\\[4mm]
{\bf u}_\tau&=&-9\,{\bf u}_{xxxxx} - 30 u \, {\bf u}_{xxx}-45 u_x \, {\bf u}_{xx}
      -(35 u_{xx}+20 u^2+5 \langle {\bf u},{\bf u}\rangle)\,{\bf u}_x\\[2mm]
   & &-(10 u_{xxx}+20 u u_x+5 \langle {\bf u},{\bf u}_x\rangle)\,{\bf u}
\end{array}
\end{cases}
\end{equation}
of system \eqref{eeq3V} is nothing but a vector
generalization of the Kaup-Kuperschmidt equation. Indeed,
if the vector part is absent {\rm (}i.e., ${\bf u}=0${\rm )}, then system
\eqref{eeq3V} becomes trivial and \eqref{kuper} turns out
to be the Kaup-Kuperschmidt equation \eqref{kk}.
\end{remark}
\end{example}

A big list of integrable systems with one scalar and one vector unknown functions and $\lambda=1$ can be found in \cite{TsWolf}. 
 
\section{Symmetry approach to integrable vector equations}

It turns out that for vector models the symmetry approach can be developed \cite{meshsokCMP} in the same universality as for the scalar equations (see Section 2.2). 
In particular, we do not assume anymore that the right-hand side of the equation is polynomial. 

\begin{example} The following
vector Harry Dym equation
\begin{equation}\label{HD}
{\bf u}_t = \langle {\bf u}, {\bf u}\rangle ^{3/2} \, {\bf u}_{xxx},
\end{equation}
where $\bf u$ is an $N$-component vector, has infinitely many symmetries and conservation laws for any $N$.
\end{example}

Equations \eqref{vecmkdv1}, \eqref{vecmkdv2} and \eqref{HD} belong to the following
class of vector equations:
\begin{equation}
{\bf u}_t=f_n \, {\bf u}_n+f_{n-1}\, {\bf u}_{n-1}+\cdots+f_1\,{\bf u}_1+f_0\,{\bf u}, \qquad 
{\bf u}_{i}=\frac{\partial^i  {\bf u}}{\partial x^i}.
\label{gensys}
\end{equation}
Here, the (scalar) coefficients $f_i$ depend on 
scalar products between vectors $ {\bf u},{\bf u}_{x},...,{\bf u}_{n-1}$.   The assumption
that the vector space is finite-dimensional is inessential for us.
For instance, the vectors could be functions of $t, x$, and $y$ and the scalar product 
$$
\langle U, \,V\rangle  = \int_{-\infty}^{\infty} U(x,t,y)\, V(x,t,y) \,dy.
$$
In this case we arrive at a special class of $(2+1)$-dimensional non-local evolution equations.

The crucial point is that we regard the variables 
\begin{equation}\label{izvar}
u[i,j]=\langle {\bf u}_i,\, {\bf u}_j\rangle , \qquad i,j=0,1,\dots \qquad j\ge i
\end{equation}
as independent. We denote by $\cal{F}$ a field of functions depending on   variables \eqref{izvar}. 

It is clear that in the case of the standard scalar product in $N$-dimensional vector space all such equations \eqref{gensys} are {\it isotropic}, i.e. they are invariant with respect to the group $SO(N)$.

\subsubsection{Examples of integrable  non-isotropic vector equations}

\begin{example}\label{ex52}  The integrable vector KdV equation \eqref{VectorKdv} (see \cite{SviSok94} and Section \ref{JordKdv}).
\end{example}
\begin{example}\label{ex53}
Consider the equation \cite{golsok}:
\begin{equation}
{\bf u}_t=\Big({\bf u}_{xx}+\frac{3}{2} \langle {\bf u}_x, \, {\bf u}_x \rangle \, {\bf u}\, \Big)_x +\frac{3}{2} \langle {\bf u}, \,R({\bf u})\rangle \, {\bf u}_x, \qquad \vert {\bf u} \vert = 1,
\label{LL}
\end{equation}
where $R={\rm diag} (r_1,...,r_N)$ is a constant matrix.  
If $N=3$, then \eqref{LL} is a commuting flow of the Landau-Lifshitz equation.  
\end{example} 

In the case of Example \ref{ex52}
we have to take for ${\cal F}$ functions of a finite number of independent variables
\begin{equation}\label{cizvar}
\langle {\bf u}_i,\, {\bf u}_j\rangle , \quad j\ge i,  \qquad \langle {\bf c},\, {\bf c}\rangle, \qquad \langle {\bf u}_i,\, {\bf c}\rangle.
\end{equation}
\begin{op} Solve several simple classification problems for such type equations.
\end{op}

For equations like \eqref{LL} we may consider $(X,\,Y)=\langle X,\, R(Y)\rangle $ as a new scalar product and take for ${\cal F}$ functions of independent variables
\begin{equation}\label{anizvar}
u[i,j]=\langle {\bf u}_i,\, {\bf u}_j\rangle , \qquad v[i,j]=({\bf u}_i,\, {\bf u}_j), \qquad i,j=0,1,\dots \qquad j\ge i. 
\end{equation}
\begin{remark}
It is clear that linear transformations of the scalar products
$$
\bar u[i,j] = c_1 u[i,j]+c_2 v[i,j], \qquad \bar v[i,j] = c_3 u[i,j]+c_4 v[i,j],\qquad c_1 c_4 \ne c_2 c_3
$$
 preserve the class of all integrable equations of such type.
\end{remark}
The symmetry approach we are developing in the next sections can be easily extended to equations \eqref{gensys} with coefficients that depend on variables \eqref{anizvar}.

\subsection{Canonical densities}
Consider the equations \eqref{gensys} with coefficients from ${\cal F}$, where ${\cal F}$ is a field of functions, which depend on variables \eqref{izvar}, \eqref{cizvar} or \eqref{anizvar}.

\begin{theorem}\label{th51} {\rm (\cite{meshsokCMP})}. 
If the equation \eqref{gensys} possesses an infinite series of
vector commuting flows of the form
\begin{equation}  \label{symvec}
{\bf u}_{\tau}=g_m \, {\bf u}_m+g_{m-1}\, {\bf u}_{m-1}+\cdots+g_1\,{\bf u}_1+g_0\,{\bf u}, \qquad g_i\in {\cal F},  \end{equation}
then \begin{itemize}
\item[i).] there exists a formal Lax pair $L_t=[A, \, L]$, where 
\begin{equation}\label{forLax}
L=a_1 \, D+a_0 \, + a_{-1}\, D^{-1} +\cdots   \,, \qquad  A=\sum_0^n  f_i\, D^i.
\end{equation}
Here, $f_i$ are the coefficients of equation \eqref{gensys} and $a_i\in {\cal F}$.

\item[ii).]The following functions
\begin{equation} \label{kanon}
\rho_{-1}=\frac{1}{a_1},  \qquad  \rho_0=\frac{a_0}{a_1},
\qquad  \rho_i={\rm res} \, L^i,
\qquad i \in \N     
\end{equation}
are conserved densities for the equation \eqref{gensys}.
\end{itemize}

The conservation laws with densities \eqref{kanon} are called  {\it canonical}. 
\end{theorem}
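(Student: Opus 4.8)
The plan is to adapt the scalar argument that produced formal symmetries (Theorems \ref{tlsym} and \ref{svsok}) to the vector setting, where the key structural fact is that scalar products \eqref{izvar} are treated as independent variables. First I would set up the linearization of the vector equation \eqref{gensys}. Given a symmetry \eqref{symvec}, the compatibility condition $D_t({\bf u}_\tau) = D_\tau({\bf u}_t)$ must hold. Differentiating each scalar coefficient along the flow and using that the vector fields ${\bf u}_i$ enter only through the operator $A=\sum f_i D^i$ acting on ${\bf u}$, I would show that the leading vectorial part of the compatibility condition forces $[D_t - A_*, \text{(something)}]$ to vanish to high order, exactly as in \eqref{GF}. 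The crucial observation is that because the coefficients $f_i$ depend only on the invariants \eqref{izvar}, all vectorial structure is carried by the single operator $A$, so the problem reduces to a scalar pseudo-differential computation for the coefficients $a_i \in {\cal F}$.

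Next I would construct the formal Lax operator $L$ of \eqref{forLax} by the method used for the formal symmetry $\Lambda$: since the symmetry \eqref{symvec} has very large order $m$, the relation analogous to \eqref{GF} shows that the first several coefficients of the would-be Lax operator coincide with those of $A_*^{1/n}$, hence lie in ${\cal F}$. The existence of an \emph{infinite} sequence of symmetries with $m_i \to \infty$ then guarantees that all coefficients $a_i$ can be determined step by step in ${\cal F}$, giving a genuine formal series $L$ satisfying $L_t = [A, L]$. This is precisely the vector analogue of Proposition \ref{Lambdaspace} and Theorem \ref{tlsym}, and the reductions to solving $D(a_i) = S_i$ with $S_i \in {\rm Im}\, D$ proceed as in Remark \ref{rem24}.

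For part ii) I would invoke Adler's theorem (Theorem \ref{adler}). Once $L_t = [A, L]$ holds, each power $L^i$ also satisfies $(L^i)_t = [A, L^i]$, so $D_t(\rho_i) = D_t({\rm res}\, L^i) = {\rm res}([A, L^i]) \in {\rm Im}\, D$, which is exactly the statement that $\rho_i$ is a conserved density. The densities $\rho_{-1} = 1/a_1$ and $\rho_0 = a_0/a_1$ come from the normalization of $L$ and ${\rm res}\log L$ just as in Definition \ref{drho} and formula \eqref{rho-1}; here $a_1$ plays the role of the leading coefficient, and $\rho_{-1} = a_1^{-1}$ is the direct analogue of $F_n^{-1/n}$ after taking the $n$-th root.

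The main obstacle I anticipate is verifying that the recursive determination of the $a_i$ stays inside ${\cal F}$, i.e. that the right-hand sides $S_i$ of the equations $D(a_i) = S_i$ are genuinely total derivatives of functions of the scalar-product variables \eqref{izvar}. In the scalar case this is automatic from the existence of infinitely many symmetries; in the vector case one must check that the operator $D$ restricted to ${\cal F}$ has an analogue of Theorem \ref{tgms} (vanishing of a suitable variational derivative being necessary and sufficient for membership in ${\rm Im}\, D$), and that the bookkeeping of scalar products generated by $D$ does not force the coefficients out of ${\cal F}$. Establishing this closure property of ${\cal F}$ under the relevant integrations is the technical heart of the argument; once it is in place, the residue computation via Adler's theorem delivers the conserved densities \eqref{kanon} immediately.
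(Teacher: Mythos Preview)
Your overall architecture is right---approximate solutions from high-order symmetries, taking roots, gluing, then Adler's theorem for part ii)---but there is a genuine conceptual slip in part i) that would derail the argument.

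You route the compatibility condition through the Fr\'echet derivative, invoking \eqref{GF} and speaking of $A_*^{1/n}$. But as the paper's formula \eqref{Freshe} shows, the Fr\'echet derivative $F_*$ of the right-hand side of \eqref{gensys} has coefficients in the larger algebra ${\cal O}$ (involving the rank-one operators $R_{i,j}$), not in ${\cal F}$. So $A_*^{1/n}$ would have ${\cal O}$-coefficients, and the formal symmetry equation $L_t=[F_*,L]$ is a different equation from the Lax equation $L_t=[A,L]$ claimed in \eqref{forLax}. The paper's Remark immediately after Theorem \ref{th52} makes exactly this distinction: $A$ is \emph{not} the Fr\'echet derivative, which is why $L$ is called a formal Lax operator rather than a formal symmetry.

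The key step you are missing is that for this special class of vector equations one bypasses linearization entirely. Writing ${\bf u}_t=A({\bf u})$ and ${\bf u}_\tau=B({\bf u})$ with $A=\sum f_iD^i$, $B=\sum g_iD^i$ scalar differential operators, the compatibility $({\bf u}_t)_\tau=({\bf u}_\tau)_t$ gives directly the \emph{operator} identity
\[
B_t-[A,B]=A_\tau,
\]
with everything already scalar-coefficient. For large $m={\rm ord}\,B$ the right-hand side has small order, so $B$ is an approximate solution of $L_t=[A,L]$; then $B^{1/m}$ is an approximate first-order solution, and the infinite sequence of symmetries lets you glue these into an exact $L$ with $a_i\in{\cal F}$. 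This is the paper's route, and it sidesteps the closure-of-${\cal F}$ obstacle you flagged, because the equation never leaves scalar coefficients in the first place. Your part ii) via Adler's theorem is correct and matches the paper.
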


\begin{proof} Idea of the proof. 
i). Let us rewrite the equation \eqref{gensys} and its commuting flow  
\eqref{symvec} in the form 
\begin{equation}
{\bf u}_t=A({\bf u}), \qquad {\bf u}_{\tau}=B({\bf u}), \qquad \mbox{where} \quad B=\sum_0^m
g_i\, D^i. \label{compat}
\end{equation}
The compatibility of  \eqref{compat} leads to the operator identity  
$$
B_t-[A,\, B]=A_{\tau}.
$$
For large  $m$ we may ``ignore'' the right-hand side, since it has a small order comparing with the other terms. In other words, the operator $L=B$ satisfies $L_t=[A, \, L]$ approximately. Then the series of first order $L_m=B^{1/m}$ is an approximate solution as well. The gluing of the first order approximate solutions corresponding to different commuting flows into an exact formal Lax operator $L$ is similar to the scalar case \cite{sokshab}.

ii). It follows from  Adler's theorem \ref{adler}.
\end{proof}

\begin{theorem}\label{th52} {\rm (\cite{meshsokCMP})}.
 If the equation \eqref{gensys} possesses an infinite series of
conserved densities from ${\cal F}$,
then \begin{itemize} \item[i).] there exist a formal Lax operator $L$ 
and a series $S$ of the form
$$
S=s_1 \, D+s_0 \, +s_{-1}\, D^{-1}+s_{-2}\,
D^{-2}+\cdots   \,,
$$
such that
$$
S_t+A^{+}\, S+S \, A=0 ,   \qquad S^{+}=-S,  \qquad  L^{+}=-S^{-1} L S,    
$$
where the upper index $+$ stands for the formal conjugation.

\medskip

\item[ii).] Under the conditions of Item i) the densities \eqref{kanon},
with $i=2k,$ are of the form $\rho_{2 k}=D(\sigma_k)$ for
some functions $\sigma_k.$
\end{itemize} 
\end{theorem}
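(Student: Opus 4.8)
The plan is to follow the scheme of the scalar Theorems~\ref{svsok} and~\ref{contriv}, transferring each step to the vector setting where, thanks to isotropy, every relevant operator is again a \emph{scalar} pseudo-differential series with coefficients in ${\cal F}$. Write the equation \eqref{gensys} in operator form ${\bf u}_t=A({\bf u})$ with $A=\sum_0^n f_i D^i$, and let the infinite family of conserved densities be $\rho_j\in{\cal F}$ with ${\rm ord}\,\rho_j\to\infty$. The two objects to produce in Item~i) are a formal symplectic series $S$, solving $D_t(S)+A^+S+SA=0$ with $S^+=-S$, and a formal Lax operator $L$ of the form \eqref{forLax} solving $L_t=[A,L]$ as in Theorem~\ref{th51}. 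Item~ii) will then be a short residue computation built on the relation $L^+=-S^{-1}LS$.

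First I would manufacture cosymmetries. Applying the vector Euler operator to $D_t(\rho_j)=D(\sigma_j)$ produces vector cosymmetries $X_j=\frac{\delta\rho_j}{\delta{\bf u}}$, which satisfy the conjugate linearized equation, the vector analogue of \eqref{varR}. Taking the Fr\'echet derivative of this equation, exactly as \eqref{RQ} is derived from \eqref{varR} via Lemma~\ref{Lemvar}, one obtains a \emph{self-adjoint} scalar operator $T_j=(X_j)_*$ of even order $2\,{\rm ord}\,\rho_j$ (cf.\ Remark~\ref{rem26}) solving $D_t(T_j)+A^+T_j+T_jA=Q_j$ up to a remainder $Q_j$ whose order is much smaller than $2\,{\rm ord}\,\rho_j$. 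Hence the leading coefficients of each $T_j$ coincide with those of an exact solution of the symplectic equation, and — just as higher and higher symmetries are glued into a formal symmetry in the proofs of Theorem~\ref{tlsym} and Theorem~\ref{th51} — the operators $T_j$ of growing order glue into exact formal symplectic operators. Choosing two of them, $T_1,T_2$, of distinct even orders, their ratio $T_1^{-1}T_2$ satisfies the Lax equation \eqref{Lambdaeq} (with $F_*$ replaced by $A$), so that $L:=(T_1^{-1}T_2)^{1/(2p)}$, with $2p$ the order of $T_1^{-1}T_2$, is the desired first-order formal Lax operator. Setting $S:=T_1\,L^{\,1-2p}$ gives a first-order solution of the symplectic equation, and replacing it by $\tfrac12(S-S^+)$ makes it skew-adjoint without lowering the order, yielding $S^+=-S$.

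It remains to establish $L^+=-S^{-1}LS$, the vector counterpart of \eqref{LS}. This follows by conjugating the rewritten symplectic relation $(D_t+A^+)\circ S=S\circ(D_t-A)$: conjugation turns it into a relation showing that $-S^{-1}LS$ and $L^+$ obey the same first-order equation \eqref{Lambdaeq} with the same leading coefficient, so by the structure of its solutions described in Proposition~\ref{Lambdaspace} they coincide. Granting this, Item~ii) is immediate. Since formal conjugation is an anti-automorphism and $(-1)^{2k}=1$, we get $(L^{2k})^+=(-S^{-1}LS)^{2k}=S^{-1}L^{2k}S$, whence ${\rm res}\,(L^{2k})^+={\rm res}(S^{-1}L^{2k}S)$. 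By Corollary~\ref{cor21} the right-hand side differs from $\rho_{2k}={\rm res}\,L^{2k}$ by an element of ${\rm Im}\,D$, while the elementary identity ${\rm res}\,T^+=-{\rm res}\,T$ (only the $D^{-1}$-coefficient of $T$ contributes to ${\rm res}\,T^+$) gives ${\rm res}\,(L^{2k})^+=-\rho_{2k}$. Combining the two, $2\rho_{2k}\in{\rm Im}\,D$, so $\rho_{2k}=D(\sigma_k)$ for some $\sigma_k\in{\cal F}$.

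The main obstacle is Item~i), and within it the claim that the self-adjoint operators built from the vector cosymmetries really are \emph{scalar} series and really glue into exact solutions. One must verify that the isotropy of \eqref{gensys} forces the Fr\'echet derivatives of $X_j$ into the scalar algebra of operators with coefficients in ${\cal F}$, so that Adler's residue theorem~\ref{adler} and Corollary~\ref{cor21} apply verbatim, and one must track orders carefully to guarantee that the gluing recovers every coefficient of $S$ and of $L$. Once this reduction to the scalar formalism of \cite{sokshab} is in place, the remaining work — in particular the verification of $L^+=-S^{-1}LS$ — is routine bookkeeping of formal conjugation, following \cite{meshsokCMP}.
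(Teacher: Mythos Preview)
Your argument for Item~ii) is fine and mirrors the scalar computation in Theorem~\ref{contriv}: once $L^{+}=-S^{-1}LS$ is established, the residue identity ${\rm res}\,T^{+}=-{\rm res}\,T$ together with Corollary~\ref{cor21} forces $\rho_{2k}\in{\rm Im}\,D$. This part needs no correction.

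The gap is in Item~i), precisely at the point you flag as the ``main obstacle'' but then assume can be closed by isotropy. In the vector setting the cosymmetry $X_j=\delta\rho_j/\delta{\bf u}$ satisfies the linearized equation with $F_*^{+}$, the full Fr\'echet derivative \eqref{Freshe}, \emph{not} with $A^{+}$; and its Fr\'echet derivative $(X_j)_*$ is an operator with coefficients in ${\cal O}$, not in ${\cal F}$. Isotropy does not force $(X_j)_*$ to be scalar: already for $\rho=u_{[0,0]}^{2}$ one gets $X=4u_{[0,0]}\,{\bf u}$ and $X_*=4u_{[0,0]}+8R_{0,0}$, which lies genuinely in ${\cal O}\setminus{\cal F}$. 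So the approximate symplectic equation you derive is $T_t+F_*^{+}T+TF_*\approx 0$ in ${\cal O}$, whereas the theorem asks for a \emph{scalar} $S$ solving $S_t+A^{+}S+SA=0$. Passing from the former to the latter is exactly the non-trivial content of \cite{meshsokCMP}; it is not a matter of ``routine bookkeeping'' but requires explaining why the scalar part of the ${\cal O}$-valued construction (or some other reduction) obeys the $A$-equation exactly. The Remark following \eqref{Freshe} in the paper underscores that the $A$-formalism and the $F_*$-formalism are genuinely different in the vector case. Without this bridge, your gluing produces an object in the wrong operator algebra, and neither Adler's theorem nor Corollary~\ref{cor21} applies to it as stated.
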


\begin{proof} 
For the proof  see \cite{sokshab, meshsokCMP}.
\end{proof}

\begin{remark} In contrast with \eqref{Lambdaeq} the differential operator $A$ in \eqref{forLax} is not the Fr\'echet derivative of the right-hand side of \eqref{gensys}. For this reason we avoid the names ``formal symmetry'' or ``formal recursion operator'' for the series $L$.
\end{remark}

\subsection{Euler operator and  Fr\'echet derivative}

 To define the canonical conserved densities we have considered differential operators and pseudo-differential series with scalar coefficients from 
${\cal F}$. However, similar to the matrix case (see Section \ref{naev}), it is not sufficient for constructing of Hamiltonian and recursion operators, and we should change the ring of coefficients. 

Let ${\cal F}$ be a field of functions that depend on \eqref{izvar}. Denote by $R_{i,j}$ an ${\cal F}$-linear operator that acts on vectors  by the rule   
$$R_{i,j}(\bf v)= \langle {\bf u}_j, \bf v\rangle \,{\bf u}_i.$$ 
It is easy to see that 
$$
R_{i,j} R_{p,q}=\langle {\bf u}_j, {\bf u}_p\rangle  R_{i,q}, \qquad R_{i,j}^t=R_{j,i}, \qquad {\rm tr}\, R_{i,j}=\langle {\bf u}_i, {\bf u}_j\rangle ,$$
$$  D\circ R_{i,j}=R_{i,j} D+R_{i+1,j}+R_{i,j+1}.
$$
Denote by $\cal O$ the algebra over $\cal F$, generated by operators $R_{i,j}$ and by the identity operator. 

The Fr\'echet derivative of an element from ${\cal F}$ is a differential operator with coefficients from $\cal O.$ Such differential operators are called {\it local}.  For instance,  the Fr\'echet derivative of the right-hand side  $F$ of an equation  \eqref{gensys} is equal to
\begin{equation} \label{Freshe}
F_{*}= \sum_{k} f_k D^k+\sum_{i,j,k} \frac{\partial f_k}{\partial u_{[i,j]}} \left(R_{k,i}\, D^j+R_{k,j}\, D^i\right), 
\end{equation}
where $\,\, i,j,k=0,\dots,n.$

The Euler operator (or the variational derivative) is given by 
$$
\frac{\delta }{\delta {\bf u}}=\sum_{i\le j} (-D)^i \circ  {\bf u}_j
\Big( \frac{\partial }{\partial u_{[i,j]}}\Big)+ (-D)^j \circ {\bf u}_i
\Big( \frac{\partial }{\partial u_{[i,j]}} \Big).
\label{euler}
$$

Possibly most of the Hamiltonian structures for vector integrable equations are non-local. For example, the
Hamiltonian operator  $\cal H$ and the symplectic operator  $\cal T$  for the vector MKdV-equation 
$$
{\bf u}_t={\bf u}_{xxx}+\langle{\bf u}, {\bf u}\rangle \, {\bf u}_x
$$
are given by
 $$
 {\cal H}({\bf w})=D({\bf w})+\langle {\bf u}, \, D^{-1}\circ {\bf u}\rangle \, {\bf w} - \langle{\bf u},\,D^{-1}\circ {\bf w}\rangle {\bf u},$$
 $$ 
 {\cal T}({\bf w})=D({\bf w})+{\bf u}\, D^{-1}\circ\langle {\bf u},\,{\bf w}\rangle. 
$$
It is easy to see that these operators, written as pseudo-differential series, have coefficients from $\cal O$.
\begin{remark} One can define a formal symmetry as a pseudo-differential series with coefficients from ${\cal O}$ that satisfies \eqref{Lambdaeq} and develop the symmetry approach related to the existence of such a formal symmetry. But the Fr\'echet derivative \eqref{Freshe} is much more complicated than the operator $A$ defined by \eqref{forLax}. 
Moreover Theorem \ref{th51} is true for non-isotropic equations also while the definition of $\cal O$ essentially depends on the choice 
of ${\cal F}$ {\rm(see Examples \ref{ex52} and \ref{ex53})}.

\end{remark}

\subsection{Vector isotropic equations of KdV-type}

Consider vector equations of the form
\begin{equation}\label{evec}
  {\bf u}_{t} =  {\bf u}_{xxx}+f_2  {\bf u}_{xx}+f_1  {\bf u}_x+f_0  {\bf u}, 
\end{equation}
where  coefficients $f_i$  of the equation are  scalar 
functions of the following six  independent variables:
\begin{equation}\label{dynvar}
\langle{\bf u},\, {\bf u}\rangle,\quad  \langle{\bf u}, {\bf u}_x\rangle,\quad  \langle {\bf u}_x, {\bf u}_x\rangle,\quad  \langle{\bf u}, {\bf u}_{xx}\rangle,\quad  \langle{\bf u}_x, {\bf u}_{xx}\rangle,\quad  \langle{\bf u}_{xx}, {\bf u}_{xx}\rangle.
\end{equation}

\begin{theorem}\label{canvec} {\rm \cite{meshsokCMP}} For equations \eqref{evec} the canonical densities are defined by the following recurrence  formula:
\begin{eqnarray}
\rho_{n+2}&=&\frac{1}{3}\biggl[\sigma_n-f_0\,\delta_{n,0} -2\,f_2\,\rho_{n+1
}-
f_2\,D\rho_{n} - f_1\,\rho_{n}\biggr] \nonumber\\[2mm]
&&-\frac{1}{3}\biggl[f_2\,\sum_{s=0}^{n} \rho_{s}\,\rho_{n-s}+\sum_{0\le s+k\le n}\rho_{s}\,\rho_{k}\,
\rho_{n-s-k}+3\sum_{s=0}^{n+1}\rho_{s}\, \rho_{n-s+1}\biggr] \nonumber \\[2mm]
&&-D\biggl[\rho_{n+1}+\frac{1}{2}\sum_{s=0}^{n}\rho_{s}\,\rho_{n-s}+\frac{1}{3} D \rho_{n}\biggr],\quad n\ge 0, \label{rekkur}
\end{eqnarray}
where $\delta_{i,j}$ is the Kronecker delta and $\rho_0,\ \rho_1$ are defined by  
$$
\rho_0 = -\frac{1}{3}\,f_2,
$$
$$ \label{ro1}
\rho_1 =\frac{1}{9}\,f_2^2-\frac{1}{3}\,f_1+\frac{1}{3}\,D(f_2).
$$
\end{theorem}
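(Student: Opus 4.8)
The statement to prove is Theorem~\ref{canvec}, which asserts the recursive formula \eqref{rekkur} for the canonical densities of the vector KdV-type equation \eqref{evec}. My plan is to deduce it directly from the formal Lax construction guaranteed by Theorem~\ref{th51}. By that theorem, since \eqref{evec} possesses an infinite series of commuting flows, there is a formal series $L=a_1 D + a_0 + a_{-1}D^{-1}+\cdots$ with scalar coefficients $a_i\in{\cal F}$ satisfying $L_t=[A,L]$, where $A=D^3+f_2 D^2+f_1 D+f_0$ is the differential operator on the right-hand side of \eqref{evec}. The canonical densities are then $\rho_{-1}=1/a_1$, $\rho_0=a_0/a_1$, and $\rho_i={\rm res}\,L^i$. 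The key normalization is that for a third-order operator $A$ the leading coefficient $a_1$ of $L$ can be fixed to $1$ (the first three coefficients of $L$ coincide with those of $A^{1/3}$, as in the scalar discussion preceding Example in Section~\ref{sec226}), so $\rho_{-1}=1$ disappears and $\rho_0=a_0=-\tfrac{1}{3}f_2$.

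First I would set up the equation $L_t=[A,L]$ coefficient by coefficient. Writing $L=D+\sum_{k\ge 0}\ell_{-k}D^{-k}$ with $\ell_0=\rho_0$, I expand both sides using the product rule $D^k\circ sD^m$ from the definition of the pseudo-differential calculus in Section~2.3, and collect powers of $D$. The crucial observation is that $\rho_i={\rm res}\,L^i$ can be expressed through the coefficients $\ell_{-k}$, and conversely; the standard device (cf. \cite{CheLeeLiu79, mesh}) is to pass from the coefficients of $L$ to the residues $\rho_i$ directly, obtaining a recursion among the $\rho_i$ themselves rather than among the $\ell_{-k}$. Concretely, I would use the relation $L^{i+1}=L\cdot L^i$ together with the trace identity to write ${\rm res}\,L^{i+1}$ in terms of ${\rm res}\,L^i$ and lower residues, and feed in the equation $L_t=[A,L]$ to express the time derivative. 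The terms $\sigma_n$ appearing in \eqref{rekkur} are precisely the fluxes: since $D_t(\rho_n)=D(\sigma_n)$ by Theorem~\ref{prho} (the residue of a commutator is a total derivative by Adler's Theorem~\ref{adler}), the quantity $\sigma_n$ enters as the antiderivative of $D_t(\rho_n)$, and isolating $\rho_{n+2}$ from the $D^{-1}$-level relation produces the $\tfrac{1}{3}[\sigma_n-\cdots]$ bracket.

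The structural source of all the quadratic and cubic convolution sums $\sum\rho_s\rho_{n-s}$ and $\sum\rho_s\rho_k\rho_{n-s-k}$ is the substitution of $L=(\text{first-order series})$ into the expansion of powers and logarithms; these are exactly the combinatorial terms one meets when comparing ${\rm res}\,L^{n+2}$ with products of lower residues, and the factor $\tfrac{1}{3}$ throughout reflects the third-order leading term $D^3$ in $A$. I would verify the seed values $\rho_0=-\tfrac{1}{3}f_2$ and $\rho_1=\tfrac{1}{9}f_2^2-\tfrac{1}{3}f_1+\tfrac{1}{3}D(f_2)$ by direct computation at the $D^2$ and $D$ levels of $L_t=[A,L]$, matching the analogous scalar formulas \eqref{first2} (which have the same shape with $f_i$ replaced by the scalar $f_i$ here). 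The decisive point that makes the vector case no harder than the scalar one is that the coefficients of both $L$ and $A$ are genuine scalars in ${\cal F}$, so the noncommutativity of the vector operators ${\cal O}$ never enters the residue computation — this is the content of Theorem~\ref{th51} and is explicitly noted in the Remark following it.

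The main obstacle I anticipate is bookkeeping rather than conceptual: correctly tracking the binomial coefficients produced by $D^k\circ sD^m$ when reorganizing the residue recursion, and ensuring that the antiderivative $\sigma_n$ is inserted consistently so that the final expression for $\rho_{n+2}$ lands as a well-defined element of ${\cal F}$ (equivalently, that the right-hand side of \eqref{rekkur} really lies in ${\rm Im}\,D$ modulo the canonical-density terms, per Remark~\ref{rem24}). I would handle this by deriving the recursion first in the scalar case, confirming it reproduces \eqref{rekkur_sc}, and then observing that replacing products of scalar derivatives $u_i u_j$ by the inner products $\langle{\bf u}_i,{\bf u}_j\rangle$ changes nothing in the formal manipulation, since $A$ and $L$ have scalar coefficients throughout; the only genuine check is that the six independent variables \eqref{dynvar} suffice to close the computation at each order, which follows because $D$ acts on them within ${\cal F}$.
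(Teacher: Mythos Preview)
The paper does not prove this theorem: it is stated with a citation to \cite{meshsokCMP} and followed only by a remark, with no argument given. The scalar analogue \eqref{rekkur_sc} is likewise quoted without proof, the derivation being attributed to ``the ideas of \cite{CheLeeLiu79, mesh}''. So there is no in-text proof to compare your proposal against.

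That said, your plan is sound and matches the strategy the paper implicitly points to. The observation that the vector case collapses to the scalar one because $A$ and $L$ have genuinely scalar coefficients in ${\cal F}$ is exactly the right structural point. One place to sharpen: the step where you propose to obtain the recursion via $L^{i+1}=L\cdot L^i$ and ``the trace identity'' is not the cleanest route to a formula of the shape \eqref{rekkur}. That relation lets you express ${\rm res}\,L^{i+1}$ through the coefficients $\ell_{-k}$ of $L$, but you then need to invert back to the $\rho_j$'s, and the fluxes $\sigma_n$ do not enter naturally. The method actually cited by the paper (Chen--Lee--Liu, Meshkov) avoids this: one introduces a formal solution $\psi=\exp\bigl(\int\chi\,dx\bigr)$ of $A\psi=\psi_t$ with $\chi=z+\sum_{n\ge 0}\rho_n z^{-n-1}$, substitutes, and expands. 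The equation $A\psi=\psi_t$ becomes a polynomial relation in $\chi,D\chi,D^2\chi$ equated to $\psi_t/\psi=\sum\sigma_n z^{-n-1}+\cdots$, and collecting powers of $z^{-1}$ yields \eqref{rekkur} directly, with the cubic convolution coming from $\chi^3$ (the $D^3$ term) and the $\tfrac{1}{3}$ from inverting the leading $3\rho_{n+2}$. This also explains the seeds $\rho_0,\rho_1$ without appealing to $A^{1/3}$.
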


\begin{remark}
The same formulas for the canonical densities  hold for the non-isotropic case. The most general recurrence  formula for equations \eqref{gensys} with $n=3$ is presented in Section \ref{sub627}.

\end{remark}
\begin{op} Perform a complete classification of integrable equations \eqref{evec} in isotropic or/and non-isotropic cases.
\end{op}

Some particular results were obtained in \cite{M-Sok2, Bal2}. Here, we formulate one of them. 
\subsubsection{Shift-invariant equations}

Consider equations of the form
\begin{equation}\label{poteq}
  {\bf u}_{t} =  {\bf u}_{xxx}+f_2  {\bf u}_{xx}+f_1  {\bf u}_x,
\end{equation}
where  $f_i$ depend only on
$\,\, \langle {\bf u}_x, {\bf u}_x\rangle ,\,$ 
$\langle {\bf u}_x, {\bf u}_{xx}\rangle ,\,$ 
$\langle {\bf u}_{xx}, {\bf u}_{xx}\rangle \,.$
It is clear that such equations are invariant with respect to translations of the form ${\bf u} \to {\bf u}+{\bf c}$.
\begin{theorem} {\rm \cite{M-Sok2}} Any equation \eqref{poteq} with infinite sequence of higher symmetries or local conservation laws\footnote{We call a conservation law $D_t(\rho)=D(\sigma)$ local if $\rho,\sigma \in {\cal F}$.} belongs to the following list:
$$
{\bf u}_t={\bf u}_{xxx} +\frac{3}{2}\, \Big(\frac
{{a^2\,u_{[1,2]}}^2}{1+a\,u_{[1,1]}}
-a\,u_{[2,2]} \Big)\,{\bf u}_x,   $$
$${\bf u}_t={\bf u}_{xxx}-3\,{\frac {u_{[1,2]}}{u_{[1,1]}}}\,{\bf u}_{xx}
+
 \frac{3}{2}\,{\frac {u_{[2,2]}}{u_{[1,1]}}}\,{\bf u}_x, $$
$${\bf u}_t={\bf u}_{xxx}-3\,\frac {u_{[1,2]}}{u_{[1,1]}}\,{\bf u}_{xx} +
\frac{3}{2}\,\left({\frac {u_{[2,2]}}{u_{[1,1]}}}+{\frac {{u_{[1,2]}}^{2}}
{{u_{[1,1]}}^{2} ( 1+a\,u_{[1,1]}) }}\right) {\bf u}_x,$$

$$
{\bf u}_t={\bf u}_{xxx}- \frac{3}{2}\,(p+1)\,\frac
{u_{[1,2]}}{p\,u_{[1,1]}}\,{\bf u}_{xx}+
\frac{3}{2}\,(p+1)\left( {\frac {u_{[2,2]}}{u_{[1,1]}}}
-{\frac {a\,{u_{[1,2]}}^{2}}{{p}^{2}\,u_{[1,1]}}} \right){\bf u}_x.
$$
Here, $a$ is an arbitrary constant and $p=\sqrt{\mathstrut 1+a\,u_{[1,1]}}$. Notice that, if $a=0,$ 
the last equation of the list is reduced to
$$
{\bf u}_t={\bf u}_{xxx}-3\,{\frac {u_{[1,2]}}{u_{[1,1]}}}\,{\bf u}_{xx}+
{3}\,{\frac
{u_{[2,2]}}{u_{[1,1]}}}\,{\bf u}_x.
$$
\end{theorem}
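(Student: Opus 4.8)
The plan is to apply the symmetry test machinery developed in Theorem~\ref{th51} and Theorem~\ref{canvec} directly to the shift-invariant ansatz~\eqref{poteq}. Since the coefficients $f_2, f_1$ depend only on the three scalar products $u_{[1,1]}=\langle {\bf u}_x,{\bf u}_x\rangle$, $u_{[1,2]}=\langle {\bf u}_x,{\bf u}_{xx}\rangle$ and $u_{[2,2]}=\langle {\bf u}_{xx},{\bf u}_{xx}\rangle$ (and $f_0=0$), the first step is to specialize the recurrence~\eqref{rekkur} for the canonical densities $\rho_n$. First I would compute $\rho_0=-\tfrac13 f_2$ and $\rho_1=\tfrac19 f_2^2-\tfrac13 f_1+\tfrac13 D(f_2)$ explicitly as functions on the jet space, and then demand, order by order, that the relations $D_t(\rho_n)=D(\sigma_n)$ admit a flux $\sigma_n\in{\cal F}$. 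The practical criterion here is the one from Remark~\ref{rem23} and the discussion following Theorem~\ref{tgms}: I would apply the Euler operator $\delta/\delta{\bf u}$ to $D_t(\rho_n)$ and require the result to vanish identically as a function of the independent scalar-product variables.

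The core of the argument is that each such vanishing condition reduces to an overdetermined system of PDEs for $f_2$ and $f_1$ as functions of $u_{[1,1]}, u_{[1,2]}, u_{[2,2]}$. The next step is to organize these conditions: the lowest canonical densities ($\rho_1,\rho_2,\rho_3$) should already force the leading structure, namely that $f_2$ is (up to the admissible translation-invariant freedom) either zero or proportional to $u_{[1,2]}/u_{[1,1]}$, and that $f_1$ is constrained to a two-parameter family involving the auxiliary quantity $p=\sqrt{1+a\,u_{[1,1]}}$. I would then feed in one or two further conditions (from $\rho_4$, and if necessary $\rho_5$) to pin down the remaining functional freedom and to eliminate spurious branches. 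The claim that the four listed equations exhaust the possibilities amounts to showing that the branching tree of the factorized algebraic/differential constraints closes after finitely many steps, exactly as in the scalar KdV-type classification of the Introduction and Section~2.2.

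The hard part will be the bookkeeping of the branches and verifying completeness rather than any single computation. Because the defining relations factor (just as in the Mikhailov–Shabat–Yamilov-style analysis cited in the excerpt, where ``such factorized equations lead to a tree of variants''), each condition splits into several cases, and one must check that every branch either terminates in one of the four normal forms or is equivalent to an already-found case under the admissible transformations ${\bf u}\to{\bf u}+{\bf c}$ and the scaling/reparametrization freedom. A secondary subtlety is the appearance of the radical $p$: the substitution $p^2=1+a\,u_{[1,1]}$ is needed to render the coefficients rational in the natural variables, and I would have to confirm that the overdetermined system is genuinely solved by the stated $f_i$ and not merely formally, paying attention to the degenerate limit $a\to 0$ in which the fourth equation collapses to the simpler form noted at the end. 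Finally, the \emph{necessity} direction is what the canonical densities provide; for the converse (that each listed equation actually possesses the infinite hierarchy) I would invoke an independent integrability justification, e.g.\ a Lax pair or a quasi-local recursion operator of the type discussed in Section~2.3, since satisfying finitely many canonical-density conditions is only necessary.
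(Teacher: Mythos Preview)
Your plan is essentially correct and matches the methodology that underlies this result, but you should be aware that the paper does not actually give a proof of this theorem: it is stated as a cited result from \cite{M-Sok2}, with only the remark following it indicating how integrability of the listed equations was confirmed. So there is no ``paper's own proof'' to compare against beyond that remark.

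That said, your outline is the right one and is the approach of the cited reference: specialize the canonical-density recurrence \eqref{rekkur} to the shift-invariant ansatz, impose $D_t(\rho_n)\in{\rm Im}\,D$ for the first few $n$, and solve the resulting overdetermined system for $f_1,f_2$ by branching. One point where you diverge from what the paper reports: for the sufficiency direction you propose a Lax pair or quasi-local recursion operator, whereas the paper's remark states explicitly that integrability was justified in \cite{M-Sok2} by constructing first-order auto-B\"acklund transformations with a spectral parameter (and by exhibiting a fifth-order symmetry). Also, for the conservation-law case you should invoke not only Theorem~\ref{th51} but the stronger conditions of Theorem~\ref{th52}, namely that the even canonical densities $\rho_{2k}$ are trivial; this gives additional constraints beyond those coming from the symmetry branch and is needed to cover the ``or local conservation laws'' clause of the hypothesis.
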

\begin{remark} We verified that each of these equations possesses a symmetry of fifth order and non-trivial local conservation laws. To prove their integrability auto-B\"acklund transformations of first order with a spectral parameter were found in {\rm \cite{M-Sok2}}.  
\end{remark}

\subsection{Auto-B\"acklund transformations}

An auto-B\"acklund transformation of the first order is defined by  
$$
{\bf u}_x=h\, {\bf v}_x+f\, {\bf u}+g\, {\bf v},
$$
where ${\bf u}$  and ${\bf v}$ are solutions of the same vector equation. The functions $f,g$ and $h$ are (scalar) functions of variables  
$$
u_{[0,0]}{=} \langle{\bf u}, {\bf u}\rangle,\qquad v_{[i,j]}\stackrel{def}{=} \langle{\bf
v}_i, {\bf v}_j\rangle,\qquad
w_i\stackrel{def}{=}\langle{\bf u}, \, {\bf v}_i\rangle,\qquad  i,j\ge 0.
$$
\begin{example} The auto-B\"acklund transformation for the vector Swartz-KdV equation 

$${\bf u}_t={\bf u}_{xxx}-3\,{\frac {u_{[1,2]}}{u_{[1,1]}}}\,{\bf u}_{xx}
+
 \frac{3}{2}\,{\frac {u_{[2,2]}}{u_{[1,1]}}}\,{\bf u}_x$$
is given by
$$
{\bf u}_x=\frac {2\,\mu}{{\bf v}_x^{2}}\,\langle \bf u-\bf v,\,{\bf v}_{x}\rangle\,
({\bf u}-{\bf v})-\frac{\mu}{{\bf v}_x^{2}}\,\vert\bf u-\bf v\vert^{2}\,{\bf v}_x,
$$
where $\mu$ is an arbitrary (spectral) parameter.
The superposition formula 
$$\bf z=\bf u +(\mu- \nu)\,\frac{\nu\, (\bf u- {\bf v'})^2\,(\bf u-\bf v)
-\mu\,(\bf u-\bf v)^2\,(\bf u-{\bf v'})}{\big(\mu\,(\bf u-\bf v)-
\nu\,(\bf u-{\bf v'})\big)^2},
$$
corresponding to this auto-B\"acklund transformation connects 4 different solutions 
$$
\begin{CD}
\bf v' &@>\mu>> & \bf z \\
@A\nu AA & & @AA\nu A \\
\bf u& @>>\mu >& \bf v
\end{CD}
$$
of the vector Schwartz-KdV equation. It defines a known 
integrable vector discrete model. 
\end{example}

\subsection{Equations on the sphere}
\subsubsection{Isotropic equations}

Consider the equations \eqref{evec} on the sphere $\langle \bf u, \, \bf u\rangle =1$. In this case, the coefficients are functions of only three  independent variables
\begin{equation}\label{dynvarsph}
u_{[1,1]} = \langle{\bf u}_x, {\bf u}_x\rangle, \qquad  u_{[1,2]} = \langle {\bf u}_x, {\bf u}_{xx}\rangle,\qquad  u_{[2,2]} = \langle{\bf u}_{xx}, {\bf u}_{xx}\rangle
\end{equation}
instead of six variables \eqref{dynvar}. Differentiating the constraint $\langle \bf u, \, \bf u\rangle =1$, one can express the remaining scalar products through \eqref{dynvarsph}. Moreover, 
the relation $\langle \bf u, \,\bf u_t\rangle =0$ implies that 
$$f_0 = f_2\,u_{[1,1]} + 3\,u_{[1,2]}$$ 
and any equation \eqref{evec} on the sphere has the following form
\begin{equation}\label{eqsf}
{\bf u}_t={\bf u}_3+f_2\,{\bf u}_2+f_1\,{\bf u}_1+(f_2\,u_{[1,1]} +3\,u_{[1,2]})\, {\bf u}, \qquad \vert {\bf u}\vert = 1.
\end{equation}

\begin{theorem} {\rm \cite{meshsokCMP}} Any equation \eqref{eqsf} with an infinite sequence of higher symmetries or local conservation laws belongs to the following list:
\begin{align*}
&{\bf u}_t={\bf u}_{xxx}-3\,\frac{u_{[1,2]}}{u_{[1,1]}}\,{\bf u}_{xx}+\frac{3}{2}\,\biggl(\frac{u_{[2,2]}}{u_{[1,1]}}+\frac{u_{[1,2]}^2}{u_{[1,1]}^2\,(1+a\,u_{[1,1]})}\biggr)\,{\bf u}_x, \\[2mm]
&{\bf u}_t={\bf u}_{xxx}+\frac{3}{2}\biggl(\frac{a^2\,u_{[1,2]}^2} {1+a\,u_{[1,1]}}-a\,(u_{[2,2]}-u_{[1,1]}^2)+u_{[1,1]}\biggr)\,{\bf u}_x +3\,u_{[1,2]}\,{\bf u},  \\[2mm]
&{\bf u}_t= {\bf u}_{xxx}-3\,\frac{u_{[1,2]}}{u_{[1,1]}}\,{\bf u}_{xx}+ \frac{3}{2}\frac{u_{[2,2]}}{u_{[1,1]}}\,{\bf u}_x,\\[2mm]
&{\bf u}_u={\bf u}_{xxx}-3\,\frac{u_{[1, 2]}}{u_{[1, 1]}}\,{\bf u}_{xx}+3\,\frac{u_{[2, 2]}}{u_{[1, 1]}}\, {\bf u}_x,\\[2mm]
&{\bf u}_t=\,{\bf u}_{xxx}-3\,{\frac {\left (q+1\right )\, u_{[1,2]}}{2\,q\,u_{[1,1]}}}\,{\bf u}_{xx}+3\,{\frac {\left (q-1\right )\,u_{[1,2]}}{2\,q}} \, {\bf u}\\[1mm]
&\qquad+\frac{3}{2}\left ({\frac {\left (q+1\right ) u_{[2,2]}}{u_{[1,1]}}}-{\frac {\left (q+1\right )a\, u_{[1,2]}}{2}}{{q}^{2} u_{[1,1]}}+u_{[1,1]}\left(1-q\right )\right ){\bf u}_x,\\[2mm]
&{\bf u}_t={\bf u}_3+3\,u_{[1,1]} {\bf u}_1+3\,u_{[1,2]}\,{\bf u}, \\[2mm]
&{\bf u}_t={\bf u}_3+\frac{3}{2}\,u_{[1,1]} {\bf u}_1+3\,u_{[1,2]} {\bf u},
\end{align*}
where $a$ and $c$ are arbitrary constants, $q=\sqrt{\mathstrut 1+a\,\bf u_{[1,1]}}$.
\end{theorem}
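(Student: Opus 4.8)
The plan is to treat this as a constrained version of the classification of isotropic KdV-type vector equations \eqref{evec} on the sphere $\langle\bf u,\bf u\rangle=1$, using the canonical density machinery of Theorems \ref{th51}, \ref{th52} and the explicit recurrence \eqref{rekkur} from Theorem \ref{canvec}. First I would record the kinematic reductions forced by the constraint: differentiating $\langle\bf u,\bf u\rangle=1$ gives $\langle\bf u,\bf u_x\rangle=0$, $\langle\bf u,\bf u_{xx}\rangle=-u_{[1,1]}$, $\langle\bf u,\bf u_{xxx}\rangle=-3\,u_{[1,2]}$, etc., so that all six variables \eqref{dynvar} collapse to the three variables \eqref{dynvarsph}, and the relation $\langle\bf u,\bf u_t\rangle=0$ pins down $f_0=f_2\,u_{[1,1]}+3\,u_{[1,2]}$. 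This puts every candidate equation into the normal form \eqref{eqsf}, leaving only the two unknown functions $f_2$ and $f_1$ of $(u_{[1,1]},u_{[1,2]},u_{[2,2]})$ to be determined.

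Next I would impose the necessary integrability conditions. By Theorems \ref{th51} and \ref{th52}, an equation with an infinite sequence of symmetries or local conservation laws must admit a formal Lax operator \eqref{forLax}, and the canonical densities $\rho_i$ computed from \eqref{rekkur} (with $\rho_0=-\tfrac13 f_2$, $\rho_1=\tfrac19 f_2^2-\tfrac13 f_1+\tfrac13 D(f_2)$, and $f_0$ already substituted) must all be conserved, i.e. $D_t(\rho_i)\in\mathrm{Im}\,D$. I would apply the Euler operator $\tfrac{\delta}{\delta\bf u}$ (Theorem \ref{tgms}, in the vector form indicated in the excerpt) to $D_t(\rho_i)$ for $i=1,2,3,\dots$, each step yielding an overdetermined PDE system for $f_2$ and $f_1$ in the three independent variables. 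The derivatives $u_{[1,1]}$, $u_{[1,2]}$, $u_{[2,2]}$, together with the higher scalar products appearing after applying $D_t$, are treated as genuinely independent, so equating coefficients of distinct monomials produces a tree of differential-algebraic constraints. The conserved-density conditions for conservation laws are strictly stronger (the even canonical densities must additionally be trivial, by Theorem \ref{th52} ii) and \ref{contriv}), but since the target list is asserted to be the same for symmetries and for local conservation laws, I expect the symmetry conditions alone to already cut the answer down to the stated seven equations.

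Solving the resulting overdetermined system is where the real work lies: integrating the PDEs for $f_2,f_1$ naturally introduces the substitution $q=\sqrt{1+a\,u_{[1,1]}}$ and branches into cases according to whether certain combinations vanish (e.g. $f_2\equiv0$ versus $f_2=-3\,u_{[1,2]}/u_{[1,1]}$ up to the scaling weight), which is precisely how the seven branches of the list arise, with $a$ entering as the integration constant. I would carry the branching far enough (three or four canonical densities should suffice, analogous to the KdV-type classification in Section \ref{sec226}) to eliminate all free functional freedom and leave only the constant $a$. Finally, integrability of each surviving equation must be established by different means, since the canonical-density conditions are only necessary: here I would invoke the verification mentioned in the excerpt that each listed equation has a fifth-order symmetry and non-trivial local conservation laws, and exhibit the first-order auto-B\"acklund transformations with spectral parameter found in \cite{M-Sok2, meshsokCMP} as the sufficiency argument. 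The main obstacle will be the case analysis in solving the overdetermined system — keeping the branches organized, correctly handling the degenerate loci where denominators like $u_{[1,1]}$ or $1+a\,u_{[1,1]}$ vanish, and verifying that the seemingly distinct branches are genuinely inequivalent rather than related by the admissible scalings and the freedom $\bf u\mapsto$ orthogonal image.
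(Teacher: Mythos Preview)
The paper does not actually prove this theorem: it is stated with a bare citation to \cite{meshsokCMP}, and no proof or sketch appears in the text. So there is nothing to compare against directly. That said, your proposal is exactly the strategy the surrounding sections set up and that the cited paper carries out: reduce to the normal form \eqref{eqsf} via the sphere constraint, feed $f_0,f_1,f_2$ into the recurrence \eqref{rekkur} of Theorem~\ref{canvec}, impose $D_t(\rho_i)\in\mathrm{Im}\,D$ for successive $i$ to obtain an overdetermined PDE system in the three variables \eqref{dynvarsph}, branch on the resulting algebraic constraints, and then confirm integrability of each survivor by exhibiting a fifth-order symmetry and an auto-B\"acklund transformation with spectral parameter. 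Your outline matches the methodology the paper advertises, including the role of Theorem~\ref{th52}\,ii) for the conservation-law branch.
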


\subsection{Equations with two scalar products}

Suppose that the coefficients of the equation  \eqref{evec} are functions of the variables \eqref{anizvar}, where $i\le j\le 2$.
Under the assumption $\langle {\bf u},\,{\bf u}\rangle = 1\, $ and up to transformations $\bar v[i,j] = u[i,j]+{\rm const}\, v[i,j],$ all equations that have 
an infinite series of higher symmetries 
 have been found \cite{meshsokCMP, BalMesh}.  The equations  of the classification list that have rational coefficients are:
$${\bf u}_t={\bf u}_3+\Big(\frac{3}{2}\,u_{[1,1]}+v_{[0,0]}\Big)\,{\bf u}_1
+3\,u_{[1,2]}\,{\bf u}_0, 
$$
$$
{\bf u}_t={\bf u}_3 - 3\,\frac{u_{[1,2]}}{u_{[1,1]}}\,{\bf u}_2+
\frac{3}{2}\biggl(\frac{u_{[2,2]}}
{u_{[1,1]}}+\frac{u_{[1,2]}^2}{u_{[1,1]}^2}+\frac{v_{[1,1]}}{u_{[1,1]}}
\biggr)\,{\bf u}_1,
$$
$$
{\bf u}_t={\bf u}_3-3\frac{u_{[1,2]}}{u_{[1,1]}}{\bf u}_2 +
\frac{3}{2}\biggl(\frac{u_{[2,2]}}{u_{[1,1]}}+
\frac{u_{[1,2]}^2}{u_{[1,1]}^2}-\frac{(v_{[0,1]}+u_{[1,2]})^2}{q\,u_{[1,1]}}
+\frac{v_{[1,1]}}{u_{[1,1]}}\biggr) {\bf u}_1,
$$

where $q=u_{[1,1]}+ v_{[0,0]}+a,\,\,$ $a$ is an arbitrary constant,
$$
\boldsymbol u_t=\boldsymbol u_3-3\frac{v_{[0,1]} }{v_{[0,0]}} \boldsymbol u_2
-3\left(\frac{v_{[0,2]} }{v_{[0,0]}}-2\frac{ v_{[0,1]}^2 }{v_{[0,0]}^2}\right)\boldsymbol u_1
+3 \left(u_{[1,2]} -\frac{v_{[0,1]} }{v_{[0,0]}}u_{[1,1]}\right)\boldsymbol u ,  $$
$$
\boldsymbol u_t=\boldsymbol u_3-3\frac{v_{[0,1]} }{v_{[0,0]}} \boldsymbol u_2 
-3\left(\frac{2v_{[0,2]}+v_{[1,1]}+a}{2v_{[0,0]}}-\frac{5}{2}\frac{v_{[0,1]}^2 }{v_{[0,0]}^2}\right)\boldsymbol u_1 
+3 \left(u_{[1,2]} -\frac{v_{[0,1]} }{v_{[0,0]}}u_{[1,1]}\right)\boldsymbol u ,  
$$
$$
\boldsymbol u_t= \boldsymbol u_3- 3\frac{v_{[0,1]} }{v_{[0,0]}}\left( \boldsymbol u_2+u_{[1,1]}\boldsymbol u \right) +3 u_{[1,2]}\,\boldsymbol u+ 
\frac{3}{2}\Big(-\frac {u_{[2,2]}}{v_{[0,0]}}+\frac {(u_{[1,2]}+v_{[0,1]})^{2}}{v_{[0,0]}
(v_{[0,0]}+u_{[1,1]})}+ $$$$\qquad \quad +\frac {(v_{[0,0]}+u_{[1,1]})^{2}}{v_{[0,0]}}
+\frac {v_{[0,1]}^{2}-v_{[0,0]}\,v_{[1,1]}}{v_{[0,0]}^{2}}\Big)\, \boldsymbol u_1. %\\[2mm]
$$
The first of these equations is just  \eqref{LL}. To justify the integrability for all equations auto-B\"acklund transformations with the spectral parameter were found.

\subsubsection{Vector hyperbolic equations on the sphere with integrable third-order symmetries}
In \cite{MeshSokHy} vector hyperbolic equations of the form 
$$
{\bf u}_{xy}= h_0 {\bf u}+h_1  {\bf u}_x+h_2  {\bf u}_y, \qquad {\bf u}^2=1
$$
on the sphere were considered. Here,  $h_{i}$ are some  scalar-valued  functions 
depending on two different scalar products  $(\cdot\, ,\cdot)$ and $\langle\cdot\, ,\cdot\rangle$ between vectors ${\bf u}, {\bf u}_x$ and ${\bf u}_y$.  All such equations that have integrable vector $x$ and $y$-symmetries were found (cf. Section \ref{hyphyp}).

\begin{example} A hyperbolic integrable equation on the sphere is given by 
$$
\bf u_{xy}=\frac{\bf u_x}{\langle\bf u,\bf u \rangle}\Big(\langle\bf u,\bf u_y \rangle+\sqrt{\vphantom{\bf u_y^2}1+
\langle\bf u,\bf u \rangle (\bf u_x, \bf u_x)^{-2}}\ \phi \Big)-(\bf u_x,\bf u_y)\bf u,  $$$$
{\rm where} \qquad \phi =\sqrt{\langle\bf u,\bf u_y \rangle^2+\langle\bf u,\bf u\rangle(1-\langle\bf u_y,\bf u_y\rangle)}.$$
In the case $N=2$ this equation is equivalent to (cf. \eqref{borzyk} )
$$
u_{xy}= {\rm sn}(u) \sqrt{\vphantom{u_y^2}u_x^2+1}\sqrt{u_y^2+1}.  
$$
\end{example}

\subsection{Equations with constant vector}\label{sub627}

In Section \ref{JordKdv} the   integrable vector KdV equation \eqref{VectorKdv} appears. 
The simplest conserved densities for this equation are given by
$$ \rho_1=({\bf c}, {\bf u}), \qquad  \rho_2={\bf c}^2 {\bf u}^2-2{(\bf c,\bf u)}^2,
$$ and 
$$ \rho_3=6 ({\bf c},{\bf u}_x)^2-3 {\bf c}^2 {\bf u}_x^2+3 {\bf c}^2 {\bf u}^2({\bf c},{\bf u})-4({\bf c},{\bf u})^3.$$
This equation belongs to the class of equations of the form 
\begin{equation}\label{vecc}
  {\bf u}_{t} =f_3 {\bf u}_{xxx}+f_2  {\bf u}_{xx}+f_1  {\bf u}_x+f_0  {\bf u}+h \, {\bf c}.
\end{equation}
In this case the coefficients depend on 
\begin{equation}\label{verC}
 ({\bf u}_i,\, {\bf u}_j), \qquad ({\bf u}_i,\, {\bf c}), \qquad i\le j.
 \end{equation}
Without loss of generality we may assume that $({\bf c},{\bf c})=1.$
 
All statements of theorem \ref{th51} can be easily generalized to the case of equations of the form
\begin{equation}
{\bf u}_t=f_n \, {\bf u}_n+f_{n-1}\, {\bf u}_{n-1}+\cdots+f_1\,{\bf u}_1+f_0\,{\bf u} + h\, {\bf c} = A({\bf u})+h \, {\bf c},  
\label{gensyscc}
\end{equation}
where the coefficients $f_i$ and $h$ depend on \eqref{verC}. 

\begin{theorem} Equations \eqref{gensyscc}, which have infinitely many symmetries, possess a formal Lax operator
$$
L=a_1 \, D_x+a_0 \, +a_{-1}\, D_x^{-1} +\cdots   \,
$$
with scalar coefficients such that 
$$L_t=[A, \, L].$$ 
\end{theorem}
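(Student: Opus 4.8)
The plan is to reduce the statement to the construction already carried out in the proof of Theorem~\ref{th51}, showing that the extra term $h\,{\bf c}$ does not interfere with the operator equation that governs the formal Lax series. First I would fix a symmetry of large order $m$ and write it, in analogy with \eqref{symvec}, in the form
$$
{\bf u}_{\tau}=B({\bf u})+\bar h\,{\bf c},\qquad B=\sum_{j=0}^{m} g_j\, D^j,
$$
where the coefficients $g_j$ and $\bar h$ depend on the variables \eqref{verC}, while the equation itself is ${\bf u}_t=A({\bf u})+h\,{\bf c}$ with $A=\sum_{i=0}^n f_i\, D^i$. Since ${\bf c}$ is a constant vector, $D^k(\bar h\,{\bf c})=D^k(\bar h)\,{\bf c}$ and likewise for $h$, so applying the total derivatives and using $[D_t,D]=[D_\tau,D]=0$ gives
$$
{\bf u}_{t\tau}=A_\tau({\bf u})+(A\circ B)({\bf u})+\big(A(\bar h)+D_\tau(h)\big)\,{\bf c},
$$
$$
{\bf u}_{\tau t}=B_t({\bf u})+(B\circ A)({\bf u})+\big(B(h)+D_t(\bar h)\big)\,{\bf c},
$$
where $A_\tau=\sum_i D_\tau(f_i)\,D^i$, $B_t=\sum_j D_t(g_j)\,D^j$, and $\circ$ denotes composition of scalar differential operators.

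Second, I would separate these vector identities into a part lying in the span of ${\bf u},{\bf u}_x,\dots$ and a part proportional to ${\bf c}$. In the differential field generated by the variables \eqref{verC} the quantities ${\bf c},{\bf u},{\bf u}_x,\dots$ are treated as independent, so the two parts may be equated separately. The ${\bf c}$-component yields only a scalar consistency relation between $h$ and $\bar h$, namely $A(\bar h)+D_\tau(h)=B(h)+D_t(\bar h)$, which plays no role in the construction of $L$. The component spanned by the ${\bf u}_i$ gives, exactly as in the derivation of the operator identity in the proof of Theorem~\ref{th51} (cf.~\eqref{compat}),
$$
B_t-[A,B]=A_\tau.
$$
The essential observation is that $A_\tau$, although its coefficients $D_\tau(f_i)$ now involve the scalar part $\bar h$ of the symmetry through the products $({\bf u}_a,{\bf c})$, is still a differential operator of the \emph{fixed} order $n$; consequently its contribution is negligible compared with the leading part of $B$ whenever $m$ is large, and $B$ satisfies $B_t=[A,B]$ approximately, i.e.\ up to a remainder of order $n$.

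Finally, having recovered the same approximate Lax equation as in the isotropic case, I would invoke verbatim the remaining steps of the proof of Theorem~\ref{th51}: the first-order series $L=B^{1/m}$ satisfies $L_t=[A,L]$ up to a low-order remainder, and gluing together the first-order approximations coming from symmetries of growing order $m$ produces an exact formal series $L=a_1 D+a_0+a_{-1}D^{-1}+\cdots$ with scalar coefficients $a_i\in{\cal F}$ obeying $L_t=[A,L]$; the canonical densities \eqref{kanon} then follow via Adler's theorem~\ref{adler}. I expect the only genuinely delicate points to be the justification that the ${\bf c}$-terms split off cleanly — i.e.\ that adjoining the constant vector ${\bf c}$ enlarges the set of independent scalar products to \eqref{verC} without introducing relations among ${\bf u}_i$ and ${\bf c}$ — together with the standard but technical gluing of the approximate solutions into a single formal series, which is inherited unchanged from the scalar theory \cite{sokshab}.
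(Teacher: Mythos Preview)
Your argument is correct and follows exactly the route the paper intends: it states the theorem as an ``easy generalization'' of Theorem~\ref{th51} without giving a separate proof, and your proposal supplies precisely the missing step---namely that after splitting off the ${\bf c}$-component of the compatibility condition one recovers the same operator identity $B_t-[A,B]=A_\tau$ with $A_\tau$ of fixed order $n$, whence the construction of $L$ proceeds verbatim as in Theorem~\ref{th51}. Your care in noting that $A_\tau$ picks up contributions from $\bar h$ via the products $({\bf u}_a,{\bf c})$ but remains of order $n$ is exactly the point that makes the generalization go through.
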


For equation \eqref{vecc} the recurrent formula for densities of canonical conservation laws  
$$
D_t\,\rho_i=D_x\,\theta_i, \qquad i=-1,0,1,2,\dots
$$
is given by
\begin{align*}
\rho_{n+2}&=\frac{a}{3}\biggl(\theta_n-f_0\,\delta_{n,0} -2\,af_2\,\rho_{n+1}-f_2\,\frac{d}{dx}\rho_{n} - f_1\,\rho_{n}\biggr) \nonumber\\[2mm]
&-\frac{a}{3}\,f_2\,\sum_{i+j=n} \rho_{i}\,\rho_{j}-\frac{1}{3}\,a^{-2}\sum_{i+j+k= n}\rho_{i}\,\rho_{j}\,\rho_{k}- a^{-1}\sum_{i+j=n+1}\rho_{i}\, \rho_{j}
\nonumber \\[2mm]
&-a^{-2} D\,a\,\rho_{n+1}-\frac{1}{2}\,a^{-2}D \sum_{i+j=n}\rho_{i}\,\rho_{j}-\frac{1}{3}\,a^{-2}\,D^2\, \rho_{n},\qquad n\ge 0, \label{rekkur1}
\end{align*}

 where $f_3=a^{-3}$,  $\delta_{i,j}$ is the Kronecker delta,  and 
$$
\rho _{-1}=a,
$$
$$\rho _0=-\frac{1}{3}\,a^3 f_2-D \ln a,
$$
$$
\rho _1=\frac{a}{3}\theta _{-1}+a^{-1}\rho_0^2-\frac{1}{3}\,a^2 f_1+a^{-3}\left(D \,a\right)^2+2\,a^{-2}\rho _0 
D\,a-a^{-1}\,D\,\rho _0
-\frac{1}{3}\,a^{-2} D^2\,a.
$$

\begin{remark} Formula \eqref{rekkur} is valid for integrable equations \eqref{gensys} with $n=3$ in the cases, when the coefficients depend on one or two scalar products. If $f_3=a=1$ it coincides with \eqref{rekkur}.
\end{remark}

Although the recurrent formula for the canonical densities are obtained, no classification results for equations \eqref{vecc} are known.

\chapter{Appendices}

\section{Appendix 1. Hyperbolic equations with integrable third order symmetries}
\begin{theorem}
\begin{align*}
u_{xy}&=c_1e^u+c_2e^{-u}; \qquad  \qquad 
u_{xy} = f(u)\sqrt{\vphantom{u_y^2}u_x^2+1}, \quad {\rm where} \quad f''=c f;  \\[1.5mm]
u_{xy}&= \sqrt{\vphantom{u_y^2}u_x}\sqrt{u_y^2+1}; \qquad \qquad 
u_{xy} = \sqrt{P(u)-\mu}\,\sqrt{\vphantom{u_y^2}u_x^2+1}\sqrt{u_y^2+c};  \\[1.5mm]
u_{xy}&= 2\,uu_x; \qquad \qquad 
u_{xy} = 2\,u_x\,\sqrt{u_y};  \qquad \qquad 
u_{xy} = u_x\,\sqrt{u_y^2+1};  \\[1.5mm]
u_{xy}& = \sqrt{\rule{0pt}{1.4ex}u_xu_y};  \qquad \qquad 
u_{xy} = \frac{u_x(u_y+c)}{u},\ \ c\ne0;  \qquad \qquad 
u_{xy} = (c_1 e^u+c_2 e^{-u})\,u_x;   \\[1.5mm]
u_{xy}&= \frac{u_y \eta}{\sinh(u)}\Big(\eta e^u - 1\Big); \qquad \qquad u_{xy}= \frac{2 u_y \eta}{\sinh(u)}\Big(\eta \cosh(u) - 1\Big); \\[1.5mm]
u_{xy}&= \frac{2 \eta \xi}{\sinh(u)}\Big((\eta \xi+1) \cosh(u) - \xi - \eta \Big); \qquad \qquad u_{xy}= \frac{u_y}{u} \eta (\eta - 1)+
c\,u\, \eta (\eta+1);  \\[1.5mm]
 u_{xy}&= \frac{2 u_y}{u} \eta (\eta - 1); \qquad  u_{xy} = \frac{2 \eta \xi}{u} (\eta - 1) (\xi - 1); \qquad \qquad u_{xy}=\frac{u_x u_y}{u}-2 u^2 u_y;  \\[1.5mm]
 u_{xy}&=\frac{u_x(u_y+c)}{u}-u u_y; \qquad \qquad u_{xy} = \sqrt{u_y} + c\, u_y; \qquad \qquad u_{xy} = c\, u.
\end{align*}
 Here, $\mu$ is a solution of the equation $4 \mu^3-g_2 \mu-g_3=0$,  $c,c_1,c_2,g_2,g_3$ are constants, and  $\xi =\sqrt{u_y+1}\,,\ \eta =\sqrt{u_x+1}.\,$
 \end{theorem}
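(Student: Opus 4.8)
The statement is a complete classification, so the plan is to run the symmetry approach of Section~\ref{hyphyp} in both characteristic directions at once and solve the resulting overdetermined system. First I would record the defining conditions. With $L$ the linearization operator \eqref{linop} of the hyperbolic equation (taking $F=\Psi$, $u_1=u_x$, $\bar u_1=u_y$), an $x$-symmetry $u_t=A(u,u_x,u_{xx},u_{xxx})$ and a $y$-symmetry $u_\tau=B(u,u_y,u_{yy},u_{yyy})$ are exactly the solutions of $L(A)=0$ and $L(B)=0$ of the prescribed differential order, where all mixed derivatives are eliminated through $u_{xy}=\Psi$, $u_{xxy}=D\Psi$, and so on. The problem is thus to find every $\Psi$ for which both linear equations admit such solutions.

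Next I would extract constraints on $\Psi$ from the $x$-symmetry alone. Writing out $L(A)=0$ and collecting the coefficients of the top $y$-prolonged variables, one sees that the coefficient of $u_{xxx}$ in $A$ and the lower-order coefficients are rigidly tied to $\partial\Psi/\partial u_x$, $\partial\Psi/\partial u_y$ and their total derivatives; this is the mechanism by which the main Laplace invariants $H_1,H_0$ of \eqref{linop} enter. The decisive economy comes from the hypothesis that $A$ is \emph{integrable} as a third-order scalar evolution equation: it must then lie in the known list of such equations, whose admissible $u_{xxx}$-dependence is restricted to the three types recorded in \cite{MikShaSok91} and whose canonical densities satisfy the necessary conditions \eqref{canlaws}. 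Imposing that $A$ pass the first few of these conditions collapses the a~priori functional freedom in $A$ to a short list of normal forms (linear, KdV/mKdV-type, Calogero--Degasperis-type, Krichever--Novikov-type, and their potentiations).

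With these normal forms in hand I would, for each admissible $A$, turn $L(A)=0$ into an overdetermined system of ordinary and partial differential equations for $\Psi$. The $x\leftrightarrow y$ interchange symmetry of the whole problem yields a mirror system from $L(B)=0$, and I would intersect the two. As usual the algebraic part factorizes into a tree of branches; along each branch $\Psi$ is forced to solve a nonlinear ordinary differential equation in one of its arguments, which produces the exponential, square-root and elliptic structures of the list (for example the constraint $4\mu^3-g_2\mu-g_3=0$ and the factors $\sqrt{u_x^2+1}$, $\sqrt{u_y}$, $\eta=\sqrt{u_x+1}$, $\xi=\sqrt{u_y+1}$). Branches incompatible with the Liouville-type dichotomy of Theorem~\ref{th61} are discarded, and the survivors are matched against the stated equations.

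The main obstacle will be the completeness of the case analysis rather than any single computation. One must carry every branch of the factorized system to the end, correctly handle the degenerate subcases where $A$ or $B$ drops to lower effective order or where the coefficient of its leading derivative degenerates, and check that the two mirror systems are simultaneously solvable precisely on the listed $\Psi$. A second, independent difficulty is the sufficiency half of the classification: for each surviving $\Psi$ one must confirm \emph{actual} integrability and not merely the necessary conditions, which means exhibiting the third-order symmetries explicitly and, where possible, Lax pairs or auto-B\"acklund transformations. Keeping the bookkeeping of this large tree of variants under control --- the very reason flagged in Section~\ref{hyphyp} for why such computations are tedious --- is where the real effort goes.
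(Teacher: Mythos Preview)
Your overall strategy matches the approach the paper attributes to \cite{meshsokHyp}: assume both the $x$- and $y$-symmetries are third-order \emph{integrable} evolution equations, draw them from the known classification list (Appendix~3), and for each admissible pair solve the compatibility conditions $L(A)=0$, $L(B)=0$ for $\Psi$. The paper itself does not prove the theorem in the text; Appendix~1 merely states the result and defers to the cited reference, so there is no in-paper argument to compare against beyond the brief description in Section~\ref{hyphyp}.

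There is, however, one genuine slip in your outline. You write that ``branches incompatible with the Liouville-type dichotomy of Theorem~\ref{th61} are discarded''. This is wrong: Theorem~\ref{th61} characterizes \emph{Darboux integrable} (Liouville-type) equations---those possessing nontrivial $x$- and $y$-integrals and equivalently a terminating Laplace sequence---which is a strictly different notion from possessing third-order higher symmetries. Several equations on the Appendix~1 list are \emph{not} of Liouville type; for instance $u_{xy}=c_1e^u+c_2e^{-u}$ with $c_1c_2\neq 0$ (the sinh-Gordon equation) has $H_r\neq 0$ for all $r$, and likewise the equation $u_{xy}=\sqrt{P(u)-\mu}\sqrt{u_x^2+1}\sqrt{u_y^2+c}$ generalizes \eqref{borzyk}, which is $S$-integrable rather than Darboux integrable. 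Invoking Theorem~\ref{th61} as a filter would incorrectly exclude precisely these equations. The correct constraints are solely (i) that $A$ and $B$ each belong to the third-order integrable list and (ii) the joint compatibility $L(A)=L(B)=0$ with $u_{xy}=\Psi$; Liouville-type considerations play no role in this particular classification.
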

 
\section{Appendix 2. Scalar hyperbolic equations of Liouville type}
In this section   {\it nonlinear} equations
(\ref{hyper}) of Liouville type known to the author  are collected \cite{zibsok,Laine}. A list is
presented up to the involution
$x \leftrightarrow y$ and  
transformations of the form
$$
x \rightarrow \zeta(x), \qquad y \rightarrow \xi(y), \qquad
u \rightarrow \theta(x,y,u).
$$

{\bf Class 1.} Equations of this class have the form
\begin{equation}\label{EQU1}
u_{xy}=-\frac{W_y}{W_{u_x}}.
\end{equation}
Here the function $W(x,y,u_x)$ is defined from the equation
$$
u_x=q_0(x,y)+\sum_{i=1}^n\alpha_i(y) \, q_i(x,W),
$$
where $\alpha_i, q_i$ are arbitrary functions. $\square$

{\bf Class 2.} (see Section \ref{sec351})
\begin{equation}\label{EQU2}
u_{xy}=-\frac{P_{u}}{P_{u_{x}}}\, u_y,
\end{equation}
where
$$
u_x=\alpha(x,P) u^2+\beta(x,P) u+\gamma(x,P),
$$
and $\alpha, \beta, \gamma$ are arbitrary functions.
\begin{remark} The following well known
equation:
$$
u_{xy}= e^u \, u_y
$$
gives us an example of equation from Class 2.
\end{remark}

{\bf Class 3.} Equations of the form
$$
u_{xy}=A_n(x,y) \, \sqrt{u_x\, u_y}, \qquad n\ge 1,
$$
where $A_n(x,y)$ is defined as follows, are Liouville integrable. Let 
\begin{equation}\label{ch2}
  h_{0}=\frac 14 A_n^2, \qquad h_{1}=\frac 14 A_n^2-
  \frac{\pa^2}{\pa x\pa y}\ln A_n.
\end{equation}
Define $h_i, \, i>1$ by the formula 
\begin{equation}\label{ch1}
  h_{k+1}=2h_k-h_{k-1}-\frac{\pa^2}{\pa x\pa y}\ln h_k,
  \qquad k=1,\ldots,  
\end{equation} 
Then, by definition, $A_n$ is any solution of the equation 
$
  h_{n}=0.
$
 In particular, Class 3 contains equations with
$$
A_n=\frac{2 n \lambda}{\lambda (x+y)-xy}
$$
and their degenerations with 
$$
A_n=\frac{2 n}{(x+y)}. \qquad \square
$$

Apart from these three series of Liouville type equations the following equations of the form 
\begin{equation} \label{annz}
u_{xy}= A(x,y,u) \, B(u_x) \bar B(u_y)
\end{equation}
are known: \eqref{liou},  \eqref{pryamoy}, \eqref{kosliu},

\begin{equation} \label{he1} 
u_{xy}=\left(\frac
1{u-x}+\frac 1{u-y}\right)\,u_x\,u_y; 
\end{equation}
\begin{equation} \label{he2}  u_{xy}=f(u) \, b(u_x), \qquad {\rm where} \qquad f f''-f'^2=0, \qquad bb^\prime+u_1=0;
\end{equation}
 \begin{equation} \label{he3}   u_{xy}=f(u)\, b(u_x) \, \bar b(u_y),  \quad {\rm where} \qquad (\ln f)^{\prime\prime}-f^2=0, \quad
bb^\prime+u_1=0, \quad \bar b\bar b^\prime+\bar u_1=0;
\end{equation}
\begin{equation} \label{he4} 
u_{xy}=\frac 1u \, b(u_x) \, \bar b(u_y),   \quad {\rm where} \qquad \ bb^\prime+c b+u_1=0,
\qquad \quad \bar b\bar b^\prime+c \bar b+\bar u_1=0;
\end{equation}
 \begin{equation} \label{he5} 
u_{xy}=\frac 1{(x+y)\, b(u_x)\, \bar b(u_y)},  \quad {\rm where} \qquad b^\prime=b^3+b^2, \qquad \qquad \bar b^\prime=\bar b^3+\bar b^2; 
\end{equation}
\begin{equation} \label{he6} 
u_{xy}=\frac{1}{u} \, b(u_x)\, \bar b(u_y), \quad {\rm where} \qquad bb^\prime+b-2u_1=0, \qquad \quad \bar b\bar b^\prime+\bar
b-2\bar u_1=0;
\end{equation}
The equations of the form more complicated than \eqref{annz} are \cite{Laine,zibsok2}
\begin{equation}
\label{L1} u_{xy}=\left(\frac{u_{y}}{u-x}+\frac{u_{y}}{u-y}\right)u_{x}+\frac{u_{y}}{u-x}\sqrt{u_{x}};
\end{equation}
 \begin{equation}
\label{L2}
u_{xy}=2\left[(u+Y)^{2}+u_{y}+(u+Y)\sqrt{(u+Y)^{2}+u_{y}}\right]\times\left[\frac{\sqrt{u_{x}}+u_{x}}{u-x}-\frac{u_{x}}{\sqrt{(u+Y)^{2}+u_{y}}}\right],
\end{equation}
where $Y=Y(y)$ is an arbitrary function, and \eqref{Eq1}.  
  
 \section{Appendix 3. Integrable scalar evolution equations}
 \subsubsection{Admissible point transformations}
 
Let us describe point transformations we use in the classification of   equations \eqref{scalar}.

\begin{itemize}
\item[1)] The transformations\footnote{Hereinafter, once  transformation rules for some of the variables $t,$ $x$, or $u$ are not indicated in the formulas, this means that the corresponding variables are not changed.}
$$
 \tilde u = \phi(u);
$$
\item[2)] the scalings
$$
\tilde x = a x,\qquad \ \tilde t = a^n t ;
$$
\item[3)] the Galilean transformation
$$
 \tilde x = x + c t;
$$
\item[4)] If the function $F$ is independent of $u$, then the transformation
$$
\tilde u = u + c_1 x + c_2 t 
$$
is admissible;  
\item[5)] if $F(\lambda u, \lambda u_1, \dots,\lambda u_{n-1}) =\lambda F(u, u_1, \dots,u_{n-1})$, then for arbitrary constants $a$ and $b$ the transformation
$$
\tilde u = u \exp(at + bx) \label{t6}
$$
is applicable.
\end{itemize}

The equations related by the above transformations are called
{\it equivalent}. It is important to note that the classification is purely algebraic. We are not interested in properties of the solutions of the studied equations such as being real and the functions and constants that are involved in the transformations can be complex. For instance, the equations $u_t=u_3-u_1^3$ and
$u_t=u_3+u_1^3$ are regarded as equivalent.

\subsection{Third order equations}
 
  \begin{theorem} {\rm \cite{soksvin2, sokshab, meshsok} 
 Up to transformations of the form 1)--5) each non--linear equation 
$$
u_t=u_3+F(u,\, u_1, \, u_2).
$$}
possessing infinitely many higher symmetries 
  belongs to the list
\begin{align}
&u_t=u_{xxx}+uu_x,\ \nonumber \\[1.5mm]
&u_t=u_{xxx}+u^2u_x,\nonumber\\[1.5mm]
&u_t=u_{xxx}+u_x^2, \nonumber\\[1.5mm]
&u_t=u_{xxx}-\frac{1}{2}u_x^3+(c_1e^{2u}+c_2e^{-2u})u_x, \nonumber\\[1.5mm]
&u_t=u_{xxx}-\frac{3u_xu_{xx}^2}{2(u_x^2+1)}+ c_1 (u_x^2+1)^{3/2}+ c_2 u_x^3, \nonumber\\[1.5mm]
&u_t=u_{xxx}-\frac{3u_{xx}^2}{2u_x}+\frac{1}{ u_x}-\frac{3}{2} \wp(u)u_x^3, \label{list7}\\[1.5mm]
&u_t=u_{xxx}-\frac{3u_xu_{xx}^2}{2(u_x^2+1)}-\frac{3}{2} \wp(u)u_x(u_x^2+1), \label{list6}\\[1.5mm]
&u_t=u_{xxx}-\frac{3u_{xx}^2}{2u_x}, \nonumber\\[1.5mm]
&u_t=u_{xxx}-\frac{3u_{xx}^2}{4u_x}+c_1u_x^{3/2}+c_2u_x^2, \nonumber\\[1.5mm]
&u_t=u_{xxx}-\frac34\,\frac{u_{xx}^{2}}{{u_{x}}+1}+3\,u_{xx}u^{-1}(\sqrt{u_{x}+1}-u_{x}-1)\nonumber\\
&\qquad-6 \,u^{-2} u_{x}(u_{x}+1)^{3/2}+3 \,u^{-2} u_{x}\,(u_{x}+1)(u_{x}+2),\nonumber \\[3.5mm]
&u_t=u_{xxx}-\frac34\,\frac {u_{xx}^{2}}{{u_{x}}+1}-3\,\frac{u_{xx}\,(u_{x}+1)\cosh {u}}{\sinh{u}}+3\,\frac {u_{xx}\sqrt {u_{x}+1}}{\sinh{u}}\nonumber \\
&\qquad -6\,\frac {u_{x}(u_{x}+1)^{3/2}\cosh{u}}{\sinh^{2}u}+3\,\frac {u_{x}\,(u_{x}+1)(u_{x}+2)}{\sinh^{2}u}+u_{x}^2(u_{x}+3),\nonumber \\[3.5mm]
&u_t=u_{xxx}+3\,u^2u_{xx}+9\,uu_x^2+3\,u^4u_x, \nonumber\\[1.5mm]   
&u_t=u_{xxx}+3\,uu_{xx}+3\,u_x^2+3\,u^2 u_x.  
\end{align}
Here, $\wp(u)$ is any solution of the equation $ ( \wp')^2=4 \wp^3-g_2 \wp-g_3$, and $c_1,c_2, g_2, g_3$ are arbitrary constants. 
 \end{theorem}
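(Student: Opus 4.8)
The plan is to apply the necessary integrability conditions developed in Section~\ref{sec226}, specifically the canonical conserved densities $\rho_i$ computed via the recursive formula \eqref{rekkur_sc} for third order equations of the form \eqref{vvs}. Since the candidate equation is $u_t=u_3+F(u,u_1,u_2)$, we know by Theorems~\ref{tlsym} and~\ref{prho} that the existence of infinitely many higher symmetries forces the equation to possess a formal symmetry $\Lambda$, hence each canonical density $\rho_i$ must satisfy $D_t(\rho_i)\in\operatorname{Im}D$. First I would impose the very first condition, $D_t(\rho_0)=D(\sigma_0)$ with $\rho_0=-\tfrac13 f_2$, and extract from it the leading restrictions on the $u_2$-dependence of $F$. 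Because each condition $\tfrac{\delta}{\delta u}D_t(\rho_i)=0$ is a differential identity in the independent jet variables $u,u_1,u_2,u_3,\ldots$, the coefficients of the highest derivatives give overdetermined ODEs/PDEs for the partial derivatives of $F$.

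The key steps, in order, would be: (1) compute $\rho_0,\rho_1$ from \eqref{first2} and impose $\tfrac{\delta}{\delta u}D_t(\rho_1)=0$, which restricts the functional form of $F$ in $u_2$ (typically forcing $F$ to be a low-degree polynomial or rational expression in $u_2$, in parallel with the KdV-type analysis in the ``Classification of integrable KdV-type equations'' subsection); (2) proceed to $\rho_2$ and $\rho_3$ via \eqref{rekkur_sc}, each new condition tightening the allowed dependence on $u_1$ and introducing the relevant transcendental or elliptic structure; (3) organize the resulting algebraic-differential system into a tree of cases according to factorized constraints, exactly as was done for the fifth order naive test and for the systems \eqref{kvazgen}. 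Each branch then yields one of the normal forms in the list, with the constants $c_1,c_2,g_2,g_3$ and the Weierstrass function $\wp(u)$ emerging from integrating the reduced ODEs (the elliptic cases \eqref{list7}, \eqref{list6} arise precisely when the condition on the $u_1$-dependence reduces to $(\wp')^2=4\wp^3-g_2\wp-g_3$).

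The main obstacle I expect is twofold. First, verifying that finitely many canonical densities suffice to cut the list down to exactly the stated equations: one must argue (as the book does for related problems) that after imposing $\rho_0,\ldots,\rho_k$ for some explicit small $k$ no further equations survive, and that all surviving candidates are genuinely integrable. The latter is the converse direction and is \emph{not} handled by the necessary conditions; here I would invoke, for each entry of the list, either a known Lax representation, a differential substitution of Miura or Cole--Hopf type (as announced in the conjecture relating all such equations to KdV or Krichever--Novikov), or a recursion operator such as \eqref{recop} and \eqref{KNR1}. Second, the reduction to normal forms requires the catalogue of admissible point transformations 1)--5) from Appendix~3, and one must check that distinct branches of the case tree do not collapse to transformation-equivalent equations and, conversely, that the transformations genuinely identify the redundant parameters. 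The bookkeeping of these equivalences, together with the explicit integration of the reduced ODEs into the elliptic and hyperbolic/exponential normal forms, is the most laborious part, and for the full technical details I would refer to \cite{soksvin2, sokshab, meshsok} as indicated in the theorem statement.
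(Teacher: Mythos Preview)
Your approach is correct and coincides with the methodology the paper develops and that the cited references \cite{soksvin2, sokshab, meshsok} carry out: impose the canonical conserved density conditions \eqref{canlaws}, \eqref{rekkur_sc} coming from the formal symmetry, branch on the resulting overdetermined system for $F$, reduce by the admissible transformations 1)--5), and confirm integrability of each survivor by exhibiting a Lax pair, recursion operator, or Miura-type link to KdV/Krichever--Novikov. The paper itself does not include the proof of this theorem but explicitly defers to \cite{meshsok} for the full argument, which follows exactly the scheme you outline (and which the paper illustrates in miniature for the subclass $F=f(u,u_1)$ in the subsection on KdV-type equations).
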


\begin{remark} Quite often instead of  equations \eqref{list7} and \eqref{list6} one considers point equivalent to them equations \eqref{KN} and \eqref{CD2}. 
If $P'\ne0$, then one can make the transformation $u=f(v)$, where $(f')^2= P(f)$, in  equations \eqref{KN} and \eqref{CD2}. Then for $v$ we get   equations \eqref{list7} and \eqref{list6}, respectively.  
\end{remark}

\subsection{Fifth order equations} \begin{theorem} {\rm \cite{meshsok} Suppose nonlinear equation 
$$
u_t=u_5+F(u, u_x, u_2, u_3, u_4)
$$
 satisfies two conditions}:
\begin{itemize}
\item[ 1)] there exists an infinite sequence of higher symmetries
\begin{equation} \label{symm}
u_{\tau_{i}}=G_i(u,...,u_{n_{i}}), \qquad i=1,2,... , \qquad n_{i+1}>n_{i}>\cdots>5;
\end{equation}
\item[ 2)] there exist no symmetries \eqref{symm} of orders $1<n_i<5.$
\end{itemize}
Then the equation is equivalent to one in the  list

\begin{align*}
&u_t=u_{5}+5 u u_{3}+5 u_1 u_2+5 u^2 u_1,  
\\[2mm]
\label{tst2}
&u_t=u_{5}+5 u u_{3}+\frac{25}{2} u_1 u_2+5 u^2 u_1, \\[2mm]
 &u_t=u_{5}+5 u_1 u_{3}+\frac{5}{3} u_1^3, \\[2mm]
 &u_t=u_{5}+5 u_1 u_{3}+\frac{15}{4}u_2^2 + \frac{5}{3} u_1^3, \\[2mm]
 &u_t=u_{5}+5 (u_1-u^2) u_{3}+5 u_2^2-20 u u_1 u_2-5 u_1^3+5 u^4 u_1, \\[2mm]
 &u_t=u_{5}+5 (u_2-u_1^2) u_{3}-5 u_1 u_2^2+u_1^5, \\[3mm]
 &\begin{aligned}
u_t&=u_{5}+5 (u_2-u_1^2+\lambda_1 e^{2u}-\lambda_2^2 e^{-4u}) \, u_{3}-5 u_1 u_2^2+15 (\lambda_1 e^{2u}+4 \lambda_2^2 e^{-4u})\, u_1 u_2 
\\[1.5mm]
&\qquad +u_1^5-90 \lambda_2^2 e^{-4u}\, u_1^3 +5(\lambda_1 e^{2u}-\lambda_2^2 e^{-4u})^2\, u_1,
\end{aligned}
\\[3mm]
 &\begin{aligned}
u_t&=u_{5}+5 (u_2-u_1^2-\lambda_1^2 e^{2u}+\lambda_2 e^{-u}) \, u_{3}-5 u_1 u_2^2-15 \lambda_1^2 e^{2u} \, u_1 u_2 \\[1.5mm]
&\qquad +u_1^5+5(\lambda_1^2 e^{2u}-\lambda_2 e^{-u})^2 \, u_1, \qquad \lambda_2\ne 0,
\end{aligned}
\\[3mm]
 &
\begin{aligned}
u_t&=u_{5}-5\frac{u_2 u_{4} }{ u_1}+ 5\frac{u_2^2 u_{3}}{ u_1^2}+5\left(\frac{ \mu_1}{ u_1}+\mu_2 u_1^2\right)u_{3}-5\left(\frac{ \mu_1}{ u_1^2}+\mu_2 u_1\right)u_2^2
\\[2mm]
&\qquad \quad -5 \frac{\mu_1^2}{ u_1}+ 5 \mu_1\mu_2 u_1^2 +\mu_2^2 u_1^5,
\end{aligned}
\\[3mm]
 &\begin{aligned}
u_t&=u_{5}-5\frac{u_2 u_{4}}{u_1}-\frac{15}{4 }\,\frac{u_{3}^2}{u_1}+\frac{ 65}{4}\,\frac{u_2^2 u_{3}}{ u_1^2} +5\left(\frac{\mu_1}{u_1}+\mu_2 u_1^2\right)\, u_{3}
-\frac{135}{16 }\frac{u_2^4}{u_1^3}
\\
&\qquad -5\left(\frac{7 \mu_1}{4 u_1^2}-\frac{\mu_2 u_1}{2}\right) u_2^2-5 \frac{\mu_1^2}{u_1}+ 5 \mu_1\mu_2 u_1^2 +\mu_2^2 u_1^5,
\end{aligned}
\\[3mm]
\end{align*}
\begin{align*}
&\begin{aligned}
u_t&= u_{5}-\frac{ 5 }{2}\,\frac{ u_2u_{4} }{ u_1}-\frac{5}{4}\,\frac{u_{3}^2}{ u_1}+5\frac{u_2^2 u_{3}}{u_1^2} +\frac{5\, u_2 u_{3}}{2 \sqrt{u_1}}
-5 (u_1-2 \mu u_1^{1/2}+\mu^2) \, u_{3} -\frac{ 35}{16}\,\frac{u_2^4}{ u_1^3} \\
 &\qquad  -\frac{5}{3}\,\frac{u_2^3}{u_1^{3/2}} +5\Big (\frac{ 3 \mu^2}{ 4 u_1} -\frac{\mu }{\sqrt{u_1}}+\frac{1}{4}\Big)\, u_2^2
+ \frac{ 5}{ 3}\, u_1^3
 - 8 \mu u_1^{5/2}+15 \mu^2 u_1^2-\frac{40}{3}\,\mu^3 u_1^{3/2},
\end{aligned} \\[4mm]
&\begin{aligned}
u_t&=u_{{{ 5}}}+\frac52\,{\frac { f-u_{1}}{{f}^{2}}}\,u_{{2}}u_{{{ 4}}}+\frac54\,{\frac {2\,f -u_{1}}{{f}^{2}}}\,u_3^{2} +5\,\mu\, ( u_{1}+f ) ^{2}u_{{{3}}}+\frac54\,{\frac {4\,{u_{1}}^{2}-8\,u_{1}f+{f}^{2}}{{f}^{4}}}\,u_2^{2}u_{{{ 3}}} +\\
&{\frac {5}{16}}\,{\frac {2-9\,u_{1}^{3}+18\,u_{1}^{2}f}{{f}^{6}}}\,u_2^{4} +\frac {5\mu}{4}\,{\frac {( 4\,f-3\,u_{1} )( u_{1}+f )^{2}}{{f}^{2}}}\,u_2^{2}+{\mu}^{2} ( u_{1}+f ) ^{2}\big( 2\,f ( u_{1}+f ) ^{2}-1\big),
\end{aligned}
\\[4mm]
&\begin{aligned}
u_t&=u_{{{ 5}}}+\frac52\,\frac {f- u_{1}}{f^{2}}\, u_2u_4+\frac54\,\frac {2\,f- u_1}{f^2}\, u_3^{2}-5\,\omega\, ( {f}^{2}+u_1^{2} ) u_3 +\frac54\,{\frac {4\,u_1^{2}-8\,u_{1}f+{f}^{2}}{{f}^{4}}}\,u_2^{2}u_3+\\
&{\frac {5}{16}}\,\frac {2-9\,u_1^{3}+18\,u_1^{2}f}{f^6}\,u_2^{4}
 +\frac54\,\omega\,\frac {5\,u_1^{3} -2\,u_1^{2}f-11\,u_{1}{f}^{2}-2}{f^2}\, u_2^{2}-\frac52\,{\omega'}\, ( u_1^{2}-2\,u_{1}f+5\,{f}^{2} )u_{1}u_2 \\
&+5\,{\omega}^{2}u_{1}{f}^{2} ( 3\,u_{1}+f ) ( f-u_{1}),
\end{aligned}
\\[4mm]
&\begin{aligned}
u_{{t}}&=u_5+\frac52\,{\frac {f-u_{1}}{{f}^{2}}}\,u_2u_4+\frac54\,{\frac {2\,f -u_{1}}{{f}^{2}}}\,u_3^{2}
+\frac54\,{\frac {4\,u_1^{2}-8\,u_{1}f+{f}^{2}}{{f}^{4}}}\,u_2^{2}u_3
\\
&+{\frac {5}{16}}\,{\frac {2-9\,{u_{1}}^{3}+18\,{u_{1}}^{2}f}{{f}^{6}}}\,u_2^{4}+5\,\omega\,{\frac {2\,{u_{1}}^{3}+{u_{1}}^{2}f-2\,u_{1}{f}^{2}+1}{{f}^{2}}}\,u_2^{2}
\\
& -10\,\omega\,u_{{{3}}} ( 3\,u_{1}f+2\,u_1^{2}+2\,{f}^{2} )-10\,{\omega'} ( 2\,{f}^{2}+u_{1}f+{u_{1}}^{2} )\,u_{1}u_{{2}}\\[2mm]
& +20\,{\omega}^{2}u_{1} ( {u_{1}}^{3}-1 )( u_{1}+2\,f ),
\end{aligned}
\\[4mm]
&\begin{aligned}
u_t&=u_5+\frac52\,\frac {f-u_1}{f^2}\,u_2u_4+\frac54\,{\frac {2\,f-u_{1}}{{f}^{2}}}\,u_3^{2}-5\,c\frac {{f}^2+u_1^2}{{\omega}^{2}}\,u_3
\\
&+\frac54\,{\frac {4\,u_1^{2}-8\,u_{1}f+{f}^{2}}{{f}^{4}}}\,u_2^{2}u_3 +\frac {5}{16}\,\frac {2-9\,{u_{1}}^{3}+18\,u_1^{2}f }{{f}^{6}}\,u_2^{4}
\\
&-10\,\omega\, ( 3\,u_{1}f+2\,u_1^{2}+2\,{f}^{2} )\,u_3-\frac54\,c\,\frac { 11\,u_{1}{f}^{2}+2\,u_1^{2}f +2-5\,u_1^{3} }{{\omega}^{2}{f}^2}\,u_2^{2}
\\
&+5\,\omega\,\frac {2\,u_1^{3}+u_1^{2}f-2\,u_{1}{f}^{2}+1}{f^2}\,u_2^{2}+5\,c\,{\omega'}\,\frac {u_1^{2}+5\,{f}^{2}-2\,u_{1}f }{{\omega}^3}\,u_{1}u_2
\\
&-10\,{\omega'}\,(2\,{f}^{2}+u_{1}f+{u_{1}}^{2})\,u_1u_2 +20\,{\omega}^{2}u_{1} ( u_1^{3}-1 ) ( u_{1}+2\,f )
\\[2mm]
&+40\,{\frac {c\,u_{1}{f}^{3} ( 2\,u_{1}+f ) }{\omega}}+5\,{\frac {{c}^{2}u_{1}{f}^{2} ( 3\,u_{1}+f )( f-u_{1} ) }{{\omega}^{4}}},\qquad c\ne0.
\end{aligned}
\end{align*}
{\it Here}, $\lambda_1$, $\lambda_2$, $\mu$, $\mu_1$, $\mu_2$, {\it and} $c$ {\it are parameters},
{\it the function} $f(u_1)$ {\it solves the algebraic equation}
\begin{equation}\label{alg1}
(f+u_1)^2(2f-u_1)+1=0,
\end{equation}
{\it and} $\omega(u)$ {\it is any nonconstant solution to the differential equation}
\begin{equation}\label{Waier}
\omega'^2=4\, \omega^3+c. \qquad
\end{equation}
\end{theorem}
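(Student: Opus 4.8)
The statement is a classification theorem for fifth-order scalar evolution equations $u_t=u_5+F(u,u_1,u_2,u_3,u_4)$ possessing an infinite hierarchy of higher symmetries, and the proof should proceed through the necessary integrability conditions developed in Chapter~2. The plan is to exploit Theorem~\ref{tlsym}: any equation with infinitely many higher symmetries possesses a formal symmetry $\Lambda$, and by Theorem~\ref{prho} the canonical densities $\rho_i={\rm res}\,\Lambda^i$ define local conservation laws. The first step is to write down, following the recursive scheme at the end of Section~\ref{sec226}, the explicit canonical densities $\rho_{-1}=F_5^{-1/5}$, $\rho_0$, $\rho_1,\dots$ in terms of the coefficients $F_i=\partial F/\partial u_i$ of the Fr\'echet derivative. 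Since the leading coefficient of our equation is $1$, we have $\rho_{-1}=1$, so the first nontrivial constraints come from requiring $\rho_0,\rho_1,\rho_2,\dots\in{\rm Im}\,D$, i.e. applying the Euler operator $\delta/\delta u$ to $D_t(\rho_i)$ and setting the result to zero as in~\eqref{delro}.

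\textbf{Key steps in order.} First I would determine the admissible $u_4$- and $u_3$-dependence, which by the Lemma following~\cite{DSS} must be quasi-linear of the prescribed quadratic form; imposing the lowest canonical conditions $\delta/\delta u(D_t(\rho_0))=0$ and then $\delta/\delta u(D_t(\rho_1))=0$ forces strong ODE constraints on the coefficients $A_i(u,u_1)$. The strategy is identical to the KdV-type computation carried out explicitly in Section~2.8: equate coefficients of the highest independent jet variables ($u_9,u_8,\dots$) in the identity $\delta/\delta u(D_t(\rho_i))=0$ to zero, obtaining first the dependence on $u_4$, then $u_3$, then $u_2$, descending in the differential order. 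Second, I would use the two hypotheses together: condition~2) (absence of lower-order symmetries $1<n_i<5$) is essential because it rules out equations lying in the hierarchies of second- and third-order integrable equations (by the ``almost commutativity'' result of~\cite{sok1} and the scaling Theorem~\ref{schomo}, equations with a third-order symmetry would otherwise appear). Third, I would integrate the resulting overdetermined system of ODEs for the coefficients; the appearance of $\wp$-type and Weierstrass functions $\omega$ satisfying $\omega'^2=4\omega^3+c$, and the algebraic function $f$ defined by~\eqref{alg1}, signals that the integration constants organize into elliptic and algebraic-curve data exactly as in the third-order list~\eqref{list7},~\eqref{list6}. Finally, I would reduce modulo the admissible point transformations 1)--5) listed in Appendix~3 to arrive at canonical representatives, producing the stated list.

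\textbf{Main obstacle.} The principal difficulty is the sheer size and branching structure of the overdetermined algebraic-differential system: after the $u_4,u_3,u_2$ dependence is fixed, the conditions from $\rho_2,\rho_3,\rho_4$ (and likely $\rho_5,\rho_6$ to separate the finely-tuned cases with parameters $\mu,\lambda_1,\lambda_2,c,\omega$) produce a large tree of cases, many of which collapse only after nontrivial substitutions. Managing this branching tree and proving that the thirteen-or-so families are \emph{exhaustive}---that no branch has been overlooked---is where computer algebra is indispensable and where the real labor lies. A secondary, more conceptual obstacle is confirming \emph{sufficiency}: the necessary conditions guarantee only a formal symmetry, so for each equation in the list one must separately exhibit genuine integrability, either through a Lax pair, a recursion operator (as in Section~2.12 for the Krichever--Novikov case, whose reductions yield~\eqref{list7},~\eqref{list6}), or a Miura/Cole--Hopf-type substitution to a known equation. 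For the exponential and elliptic members this verification is itself substantial and would rely on the existence of auto-B\"acklund transformations with a spectral parameter as the decisive justification of integrability.
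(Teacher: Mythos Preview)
Your proposal is correct and follows the same methodology the paper lays out in Chapter~2: the existence of a formal symmetry via Theorem~\ref{tlsym}, the canonical densities~\eqref{rro} and their conservation via Theorem~\ref{prho}, the reduction of the admissible $u_4,u_3,u_2$-dependence via the Lemma in Section~2.1 (the fifth-order analog of the recursive scheme~\eqref{rekkur_sc}), and then the case-by-case integration of the resulting overdetermined ODE system modulo the point transformations of Appendix~3. Note that the paper itself does not supply a proof of this theorem; it is quoted from~\cite{meshsok}, and your outline matches the general strategy described in Section~\ref{sec226} and illustrated in detail for the third-order case in Section~2.8.
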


\section{Appendix 4. Systems of two equations}

\begin{theorem}{\rm \cite{MikShaYam87,SokWol99}}
Any nonlinear nontriangular system \eqref{kvazgen},
having a symmetry \eqref{kvazsym},
up to scalings of $t$, $x$, $u$, $v$, shifts of $u$ and $v$, and the
involution
$$\label{inv}
u \leftrightarrow v, \qquad t \leftrightarrow -t
$$
belongs to the following list:
$$
 \begin{cases}
u_t=u_{xx} + (u + v) u_x + u v_x, \\[1.5mm]
v_t=-v_{xx}+ (u + v) v_x + v u_x, 
\end{cases}
$$
$$
 \begin{cases}
u_t= u_{xx} - 2 (u + v) u_x - 2 u v_x +2 u^2 v + 2 u v^2 +
\alpha u + \beta v + \gamma, \\[1.5mm]
v_t=-v_{xx} + 2 (u + v) v_x + 2 v u_x - 2 u^2 v - 2 u v^2  -
\alpha u - \beta v - \gamma, 
\end{cases}
$$
$$
 \begin{cases}
u_t=u_{xx} + v u_x + u v_x, \\[1.5mm]
v_t=-v_{xx} + v v_x + u_x, 
\end{cases}
$$
$$
 \begin{cases}
u_t=u_{xx}+2 v u_x+2 u v_x+2 u v^2 + u^2 +\alpha u + \beta v + \gamma, \\[1.5mm]
v_t=-v_{xx}- 2 v v_x - u_x, 
\end{cases}
$$
$$
 \begin{cases}
u_t=u_{xx}+\alpha v_x + (u + v)^2+ \beta (u+v) + \gamma, \\[1.5mm]
v_t=-v_{xx}+\alpha u_x - (u + v)^2 - \beta (u+v) - \gamma,   
\end{cases}
$$
$$
 \begin{cases}
u_t=u_{xx} + (u + v) u_x + 4 \alpha v_x +
\alpha (u+v)^2 +\beta (u+v) +\gamma, \\[1.5mm]
v_t=-v_{xx} + (u + v) v_x + 4 \alpha u_x -
\alpha (u+v)^2 - \beta (u+v)-\gamma,  
\end{cases}
$$
$$
 \begin{cases}
u_t=u_{xx}+2 \alpha u^2 v_x+2 \beta u v u_x  +
\alpha (\beta - 2 \alpha) u^3 v^2 + \gamma u^2 v+\delta u, \\[1.5mm]
v_t=-v_{xx}+2 \alpha v^2 u_x+2 \beta u v v_x  -
\alpha (\beta - 2 \alpha) u^2 v^3 - \gamma u v^2-\delta v, 
\end{cases}
$$
$$
 \begin{cases}
u_t=u_{xx} + 2 u v u_x + (\alpha + u^2) v_x, \\[1.5mm]
v_t=-v_{xx} + 2 u v v_x +(\beta + v^2) u_x,  
\end{cases}
$$
$$
 \begin{cases}
u_t=u_{xx}+ 2 \alpha u v u_x + 2 \alpha u^2 v_x - \alpha \beta u^3 v^2
+\gamma u, \\[1.5mm]
v_t=-v_{xx}+2 \beta v^2 u_x + 2 \beta u v v_x + \alpha \beta u^2 v^3
-\gamma v, 
\end{cases}
$$
$$
 \begin{cases}
u_t=u_{xx}+ 2 u v u_x + 2 (\alpha + u^2) v_x+ u^3 v^2 + \beta u^3 +
 \alpha u v^2 +\gamma u, \\[1.5mm]
v_t=-v_{xx}-2 u v v_x - 2 (\beta + v^2) u_x- u^2 v^3-\beta u^2 v -
\alpha v^3  -\gamma v, 
\end{cases}
$$
$$
 \begin{cases}
u_t=u_{xx}+4 u v u_x+ 4 u^2 v_x + 3 v v_x + 2 u^3 v^2 + u v^3
+ \alpha u, \\[1.5mm]
v_t=-v_{xx} -2 v^2 u_x -2 u v v_x - 2 u^2 v^3 - v^4 - \alpha v, 
\end{cases}
$$
$$
 \begin{cases}
u_t=u_{xx}+4 u u_x + 2 v v_x, \\[1.5mm]
v_t=-v_{xx}-2 v u_x-2 u v_x- 3 u^2 v - v^3 + \alpha v, 
\end{cases}
$$
$$
 \begin{cases}
u_t=u_{xx}+v v_x, \\[1.5mm]
v_t=-v_{xx}+ u_x, 
\end{cases}
$$
$$
 \begin{cases}
u_t=u_{xx}+ 6 (u + v) v_x - 6 (u+v)^3-\alpha (u+v)^2-
\beta (u+v)-\gamma, \\[1.5mm]
v_t=-v_{xx}+ 6 (u + v) u_x + 6 (u+v)^3+\alpha (u+v)^2+
\beta (u+v)+\gamma, 
\end{cases}
$$
$$
 \begin{cases}
u_t=u_{xx}+v v_x, \\[1.5mm]
v_t=-v_{xx}+u u_x. 
\end{cases}
$$
\end{theorem}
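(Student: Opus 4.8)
The plan is to carry out the \emph{naive symmetry test} sketched in Section~2.1.3. The crucial preliminary reduction is already supplied by the Lemma stated there: any nontriangular system \eqref{kvazgen} admitting a fourth order symmetry \eqref{kvazsym} is automatically polynomial, with first order coefficients at most bilinear in $u,v$ and with $A_0,B_0$ polynomials of degree at most five. This turns the problem from a functional classification into a purely algebraic one, namely finding all admissible values of the finitely many constants $a_{12},a_1,a_2,a_0,p_2,p_{11},p_1,p_0,b_{12},b_1,b_2,b_0,q_2,q_{11},q_1,q_0$ together with the coefficients of $A_0$ and $B_0$. First I would impose the compatibility conditions $u_{t\tau}=u_{\tau t}$ and $v_{t\tau}=v_{\tau t}$, eliminating every $t$- and $\tau$-derivative in virtue of \eqref{kvazgen} and \eqref{kvazsym}, and then equate to zero the coefficient of each independent jet monomial in $u_x,v_x,u_{xx},v_{xx},\dots$. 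The leading relations first express the coefficients $f,g$ of the symmetry through the coefficients of the equation, exactly as the coefficients $c_i$ of \eqref{kdv7} are determined by the $a_i$ of \eqref{kdv5}; what remains is an overdetermined algebraic system constraining the equation itself.

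The heart of the argument is resolving that algebraic system, whose most rigid part consists of the factorized quadratic relations displayed in Section~2.1.3. I would branch according to which of the quantities $a_{12}$, $b_{12}$, $p_{11}-a_{12}$, $q_{11}-b_{12}$, $p_2$, $q_2$ vanish, since each factored product splits the analysis into subcases. For instance, a nonzero $p_2$ forces $b_{12}=q_{11}$, $a_{12}=p_{11}$ and $a_{12}=-2b_{12}$, collapsing the quadratic part almost completely, while the relations $(a_{12}-p_{11})(p_{11}-q_{11})=0$ and their partners couple the remaining quadratic coefficients to one another. At each terminal node of this tree the quadratic coefficients are pinned down, and the lower-order conditions arising from the cubic, quartic and quintic monomials become a solvable linear-plus-polynomial system for the linear coefficients and for $A_0,B_0$.

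Next, in each surviving branch I would apply the equivalence group — scalings of $t,x,u,v$, the shifts of $u$ and $v$, and the involution $u\leftrightarrow v,\ t\leftrightarrow -t$ — both to normalize the remaining free parameters and to recognize branches that coincide up to equivalence; any node that reduces to a triangular system (one equation decoupling, as in the excluded example $u_t=u_{xx}+2uv_x,\ v_t=-v_{xx}-2vv_x$) is discarded. Collecting the resulting normal forms yields the fifteen systems of the list. The main obstacle I anticipate is precisely the bookkeeping of this tree: the factorized relations generate many overlapping cases, and it is easy either to overlook a branch or to list equivalent systems twice. This is exactly the difficulty that motivates the paper's Reason~3, that such a classification must be reproduced by an independent computation (for systems \eqref{kvazgen} the answer is polynomial and the whole calculation can be rerun, e.g.\ with computer algebra, as a check).

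Finally, passing the fourth order symmetry test is only a necessary condition, so to complete the theorem I would confirm genuine integrability of each listed system. For this I would exhibit, case by case, either an explicit infinite hierarchy of higher symmetries, or a Cole--Hopf / Miura type differential substitution reducing the system to a linear or already-integrable one (in the spirit of the $C$-integrable example \eqref{borov}), or a Lax representation and an auto-B\"acklund transformation with a spectral parameter. With the necessary algebraic reduction of Steps~1--3 and the constructive integrability check of Step~4 in hand, the statement follows.
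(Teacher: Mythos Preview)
Your proposal follows exactly the approach the paper sketches in Section~2.1.3: use the Lemma to reduce to a polynomial ansatz with finitely many unknown constants, impose the compatibility condition $u_{t\tau}=u_{\tau t}$, $v_{t\tau}=v_{\tau t}$, branch on the displayed factorized relations, discard triangular cases, and normalize by the equivalence group. The paper itself gives no more detail than this and defers the actual computation to the references \cite{MikShaYam87,SokWol99}.

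One small point: your Step~4 is unnecessary for the theorem as stated. The claim is only that any nontriangular system \emph{possessing a fourth order symmetry} \eqref{kvazsym} lies in the list; it does not assert that every listed system has infinitely many symmetries or is integrable in any stronger sense. So the classification is complete once the algebraic tree of Steps~1--3 is exhausted, and no subsequent integrability verification is required.
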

Here, $\alpha, \beta, \gamma, \delta$ are arbitrary constants.  We omitted the term $(c u_x, c v_x)^t$ on the right-hand sides of
all systems. It is a Lie symmetry, corresponding to the invariance
of our classification problem with respect to the shift of $x$.


\begin{thebibliography}{999}

 \bibitem{zibshab1} Zhiber  A.V. and  Shabat  A.B., {\it Klein-Gordon equations with a nontrivial group },  {\em Sov. Phys. Dokl.}, 
 1979, {\bf 24}(8), 608--609.

\bibitem{sokshab} Sokolov V.V. and Shabat A.B., {\it Classification of
integrable evolution equations},
{\em Soviet Scientific Reviews},  Section C, 1984, {\bf 4},  221--280.

\bibitem{MikShaYam87} Mikhailov, A.V., Shabat, A.B. and Yamilov, R.I.,
{\it Symmetry approach to classification of nonlinear equations.
Complete lists of integrable systems}.
{\em Russian Math. Surveys}, 1987, {\bf 42}(4), 1--63.

\bibitem{MikShaSok91} Mikhailov A.V., Shabat A.B. and Sokolov V.V., 
{\it Symmetry Approach to Classification of Integrable Equations},
In {\em What is integrability?} {\rm Ed. V.E. Zakharov,
Springer Series in Nonlinear Dynamics, Springer-Verlag, 1991, 115--184}.

\bibitem{ASY} Adler  V.~E., Shabat  A.~B. and Yamilov  R.~I.,{ \it Symmetry approach to the integrability problem}, {\em  Theoret. and Math. Phys.}, 2000, {\bf 125}(3), 355--424. 

\bibitem{AMS} Adler  V.~E., Marikhin V.~G. and  Shabat  A.~B.,{ \it Lagrangian chains and canonical Backlund transformations}, {\em  Theoret. and Math. Phys.}, 2001, {\bf 129}(2), 1448-1465. 

\bibitem{int2} Mikhailov A. V. and  Sokolov V. V.,  {\it Symmetries of
differential equations and the problem of Integrability}, {\rm Ed. A.V. Mikhailov}, {\em Lecture Notes in Physics}, 
Springer, 2009, {\bf 767}, 19--88.

\bibitem{ibrag} Ibragimov N. H., {\it Transformation Groups Applied to Mathematical Physics }, {\em  Dordrecht: D. Reidel}, 1985, 394 pp. 

\bibitem{Fokas} Fokas  A.S., {\it Symmetries and integrability}, {\em Studies in Applied Mathematics}, 1987, {\b 77}(3),  253--299. 

\bibitem{KLV} Krasilchchik I. S.,  Lychagin V. V.  and Vinogradov A. M., {\it Geometry of jet spaces and nonlinear partial differential equations}, {\em Series	(Adv. Stud. Contemp. Math., {\bf 1}}, 1996, 
Publication	New York, NY : Gordon and Breach,  441 pp.
 

\bibitem{Olv93} Olver P.J., {\it Applications of Lie Groups
to Differential Equations},
(2nd edn), {\em Graduate Texts in Mathematics},  1993, {\bf 107},
Springer--Verlag, New York, 513 pp.

\bibitem{SvSok26} Svinolupov S. I. and  Sokolov V. V.
{\it Factorization of evolution equations}, {\em Russian Math. Surveys}, 1992, {\bf 47}(3), 127 - 162.

\bibitem{Z} Novikov S.P., Manakov S.V.,  Pitaevskii L.P. and Zakharov  V.E., {\it Theory of Solitons: The Inverse Scattering Method, } 1984, Springer-Verlag,  320 pp. 

\bibitem{AblSeg81} Ablowitz M.J. and Segur H.,
{\it Solitons and the Inverse Scattering Transform}, 1981, SIAM, Philadelphia, 425 pp.

\bibitem{sw}  Sanders J. A. and Wang J. P., {\it On the Integrability of Homogeneous Scalar Evolution Equations,}
{\em J. Differential Equations}, 1998, {\bf 147}, 410--434.


\bibitem{SawKot74} Sawada S. and Kotera T. 
{\it A method for finding $N$-soliton solutions of the KdV and KdV-like equation},  {\em Prog. Theor. Phys.}, 1974, {\bf 51}, 1355--1367.

\bibitem{Kau80} Kaup D.J., 
{\it On the inverse scattering problem for the cubic eigenvalue
problem of the class} $\varphi_{xxx}+6 Q \phi_x+6 R \phi=\lambda \phi$,
{\em Stud. Appl. Math.}, 1980, {\bf 62}, 189-216.


\bibitem{back} B\"acklund  A.V., {\it Einiges uber Curven und Flachentransformationen}, {\em Lund Universitets Arsskrift}, 1875, {\bf 10}, 1--12.

\bibitem{miura} Miura R.M., {\it Korteweg-de Vries equation and generalization,
     I. A remark\-able explicit nonlinear transformation}, 1968, {\em J. Math. Phys.},
     {\bf 9}, 1202--1204.
     

\bibitem{zib} Zhiber A.V., {\it Quasilinear hyperbolic equations with an infinite-dimensional symmetry algebra},  {\em Russ AC SC Izv. Math.}, 1995, {\bf 45}(1), 33--54.

 \bibitem{BorZyk}   Borisov A. B. and Zykov S. A.,  {\it Dressing chain of discrete symmetries and breeding of
nonlinear equations},  {\em  Theoret.
and Math. Phys.}, 1998, {\bf 115}(2), 530--541.


\bibitem{meshsokHyp}  Meshkov  A. G. and  Sokolov V. V., {\it   Hyperbolic equations with symmetries of third order},  {\em  Theoret. and Math. Phys.}, 2011, {\bf 166}(1), 43--57.

\bibitem{zibsok} Zhiber A.V. and  Sokolov V.V., {\it 	Exactly integrable hyperbolic equations of Liouville type},  {\em Russ. Math. Surv.}, 2001 {\bf 56}(1), 61--101.


\bibitem{zibshab2}  Zhiber A.V. and  Shabat A.B., {\it Systems of equations $\, u_x=p(u,v)$,  $\,v_y=q(u,v)$ that possess symmetries}, {\em Soviet Math. Dokl.}, 1984, {\bf 30}(1), 23--26.



\bibitem{ibshab}  Ibragimov N.~Kh. and  Shabat A.~B., {\it Infinite Lie-B\"acklund algebras}, {\em Funct. anal. appl.}, 1980. {\bf 14}(4), 313--315.

\bibitem{soksvin1}  Svinolupov S.I. and  Sokolov  V.V., {\em Evolution equations with nontrivial conservation laws}, {\em Funct. anal. appl.}, 1982, {\bf 16}(4),  317--319.

\bibitem{svin4}  Svinolupov S.I., {\it Second-order evolution  equations with symmetries},
{\em Russian Mathematical Surveys}, 1985, {\bf 40}(5), 241--242.

 
\bibitem{soksvin}  Sokolov V. V.  and Svinolupov S. I.,  {\it Weak nonlocalities
in evolution equations},  {\em  Math. Notes}, 1990.  {\bf 48}(6), 1234 -- 1239.


 \bibitem{soksvin2}  Svinolupov S.I. and Sokolov V.V., {\it On conservations laws for the equations possessing nontrivial Lie-B\"acklund algebra}, {\em Integrable systems: collection of the papers}. Ed. A.B. Shabat, BB AS USSR, Ufa. 1982, 53--67. (in Russian)
 
 \bibitem{meshsok} Meshkov  A. G.,  Sokolov V. V. {\em   Integrable evolution equations with constant separant},  Ufa Mathematical Journal, 
2012, {\bf 4}(3),  104--154, \, arXiv nlin.SI 1302.6010.

\bibitem{hss}  Heredero R.H., Sokolov V.V., and  Svinolupov S.I. {\it Toward the classification of third order integrable evolution equations}, {\em J. Phys. A.}, 1994. {\bf 13}, 4557--4568.

\bibitem{her} Heredero R.H., {\it Classification of fully nonlinear integrable evolution equations of third order}, {\em J. Nonlin. Math. Phys.}, 2005,  {\bf 12}(4), 567--585.

\bibitem{cvsokyam} Svinolupov S.I., Sokolov V.V., and Yamilov  R.I., {\it On B\"acklund transformations for integrable evolution equation}, {\em Sov. Math., Dokl.}, 1983, {\bf 28}, 165--168.



\bibitem{DSS}  Drinfel'd V.G., Svinolupov S.I., and  Sokolov V.V., {\it Classification of fifth order evolution equations
possessing infinite series of conservation laws}, {\em Dokl. AN USSR,}, 1985, {\bf A10}, 7--10. (in Russian)

\bibitem{SokSha80} Sokolov V.V. and Shabat A.B., 
$(L-A)$-{\it pairs and Riccati-type substitutions}, {\em Funct. Anal. Appl.}, 1980, {\bf 14}(2),    148--150.

\bibitem{FujWat83} Fujimoto, A. and Watanabe J., 
{\it Classification of fifth-order evolution equations with nontrivial
symmetries}, {\em Math. Japonicae}, 1983, {\bf 28}(1), 43--65.

\bibitem{sok1} Sokolov  V.V., {\it 	On the symmetries of evolution equations}, {\em Russ. Math. Surv.}, 1988, {\bf 43}(5), 165--204.  


\bibitem{ow} Olver P. and Wang J.P., {\it Classification of integrable one-component systems on associative algebras}, {\em Proc. London Math.
Soc.}, 2000, {\bf 81}(3), 566--586.


\bibitem{CheLeeLiu79} Chen H.H., Lee Y.C. and Liu C.S.,
{\it Integrability of nonlinear Hamiltonian systems by inverse
scattering method}, {\em Phys. Scr.}, 1979, {\bf 20}(3-4), 490--492.

\bibitem{fokas}  Fokas A.S., {\it A symmetry approach to exactly solvable evolution equations}, {\em J. Math. Phys.}, 1980. {\bf 21}(6), 1318--1325.

\bibitem{Kap82} Kaptsov, O.V., 
{\it Classification of evolution equations with respect to conservation
laws}, {\em Funct. Anal. Appl.}, 1982, {\bf 16}(1), 59--61. 

\bibitem{AbeGal83} Abellanas, L. and Galindo A.J. (1983).
Evolution equations with high order conservation laws. {\em J.
Math. Phys.}, {\bf 24}, 3, 504--509.

\bibitem{MikSha85} Mikhailov A.V. and Shabat A.B.,
{\it Integrability conditions for systems of two equations}
 $ \ u_t = A(u) u_{xx} + B(u, \ u_x).$ I, 
{\em Theor. Math. Phys.},  1985, {\bf 62}(2), 163--185.


\bibitem{MikSha86} Mikhailov A.V. and Shabat A.B., 
{\it Integrability conditions for systems of two equations}
$ \ u_t = A(u) u_{xx} + B(u, \ u_x).$ II,
{\em Theor. Math. Phys.}, 1986, {\bf 66}(1), 47--65.

\bibitem{MSY} Mikhailov  A.V., Shabat  A.B., and  Yamilov R.I., {\it Extension of the module of invertible transformations. Classification of integrable systems}, {\it Commun. Math. Phys.}, 1988, {\bf 115}(1), 1--19.

\bibitem{SokWol99} Sokolov V.V. and Wolf T.,
{\it A symmetry test for quasilinear coupled systems},
{\em Inverse Problems}, 1999, {\bf 15}, L5--L11.

\bibitem{BPR} Borovik A.E., Popkov V.Yu. and Robuk V.N.,  {\it Generation of nonlinear structures in exactly solvable dissipative
systems}, {\em Dokl. Akad. Nauk SSSR}, 1989, {\bf 305}, 841--843.

\bibitem{kapl} Kaplansky I., {\it An introduction to differential algebra}, 1957, Paris, Hermann, 63 pp. 

\bibitem{GMSh} Gel'fand I.M., Manin Yu. I. and Shubin  M.A., {\it Poisson brackets and kernel of variational
derivative in formal variational calculus}, {\em Funct. Anal. Appl.}, 1976,  {\bf 10}(4), 30--34.


\bibitem{adler} Adler M., {\it On a trace functional for formal pseudo differential operators and the symplectic structure of the Korteweg de Vries type equations}, {\em Invent. Math.}, 1979, {\bf 50}(3), 219--248.

\bibitem{Dorf} Dorfman I. Ya.,   {\it  Dirac Structures and Integrability of Nonlinear
Evolution Equations}, John Wiley\&Sons, Chichester, 1993, 176 pp. 

\bibitem{ibshab0}  Ibragimov N.Kh. and  Shabat A.B., {\it Evolutionary equations with nontrivial Lie--B\"acklund group}, {\em Funct. anal. appl.}, 1980, {\bf 14}(1), 19--28.

\bibitem{mesh}  Meshkov A.G., {\it Necessary conditions of the integrability}, {\em Inverse Problems}, 1994, {\bf 10}, 635--653.

\bibitem{caldeg} Calogero F. and Degasperis A., {\it Reduction technique for
matrix nonlinear evolutions equations solvable by the spectral transform}, Preprint 151,
Istituto di Fisica G. Marconi Univ. di Roma, 1979, 1--37. 

 
\bibitem{HSS}  Heredero R, H., Shabat A.B. and Sokolov V.V., {\it A new class of linearizable equations}, {\em Journal of Physics A: Mathematical and General}, 2003, {\bf 36}(47), L605--L614.

\bibitem{sylv} Carpentier S. {\it Compatible Hamiltonian operators for the Krichever--Novikov equation}, {\em Comptes Rendus Mathematique}, 2017, {\bf 355}(7), 744--747.

\bibitem{Fuch} Fuchssteiner B., {\it Application of Hereditary Symmetries
to Nonlinear Evolution Equations}, {\em Nonlinear Anal.
Theory Meth. Appl.}, 1979, {\bf 3}, 849--862. 



\bibitem{sokkn} Sokolov  V. V., {\it Hamiltonian property of the Krichever-Novikov equation},
{\em Sov. Math. Dokl.},  1984, {\bf 30}, 44--47.

\bibitem{demsok} Demskoy D. K. and  Sokolov V. V., {\it On recursion
operators for elliptic models},  {\em Nonlinearity}, 2008, {\bf 21},
1253--1264.

\bibitem{ore} Ore O., {\it Theory of non-commutative polynomials}, {\em Ann. Math.},  1933, {\bf 34}, 480--508.

\bibitem{mokfer}  Mokhov O.I. and Ferapontov E.V., {\it Nonlocal Hamiltonian operators of
hydrodynamic type associated with constant curvature metrics}, {\em Russian
Math. Surveys}, 1990, {\bf 45}(3), 218--219.

\bibitem{malnov} Maltsev  A. Ya. and  Novikov  S. P., {\it On the local Hamiltonian systems  in the weakly non-local Poisson brackets}, 
{\em Phys. D}, 2001, {\bf 156}(1-2), 53--80. 

\bibitem{LezSav}  Leznov A.~N. and Savel'ev M.~V., {\it Group-theoretical methods for integration of
nonlinear dynamical systems}, 2012, Birkhauser Boston, 292 pp.

\bibitem{darbu1} Darboux G., {\it Sur les \'equations aux
d\'eriv\'ees partielles du second ordre},
{\em Ann. Ecole Normale Sup.}, 1980, {\bf VII}, ~163--173.

\bibitem{gursa} Goursat E., {\it Lec\ cons sur l'int\'egration des \'equations
aux d\'eriv\'ees partielles du second ordre \`a deux variables
ind\'ependantes},
{\em Tome I, Tome II, Hermann, Paris}, 1896, 1898.

 
\bibitem{gau} Gau E., {\it Sur l'int\'egration des \'equations
aux d\'eriv\'ees partielles du second ordre par la m\'etode de M.
Darboux}, {\em Journ. de Math.}, 1911, {\bf 6}(7), 123--240.

\bibitem{vesio} Vessiot E., {\it Sur les \'equations aux d\'eriv\'ees partielles
du second ordre, $F(x,y,p,q,r,s,t)=0,$ integrables par la methode de Darboux},
{\em J.Math. Pure Appl.}, 1939, {\bf 18}, 1--61.

\bibitem{gosse} Gosse R., {\it Des certaines \'equations
aux d\'eriv\'ees partielles du second ordre int\'egrables
par la m\'etode de Darboux}, {\em Journ. de Math.}, 1925, {\bf 9}(4),
381--399.

\bibitem{zibsok1} Sokolov V.V. and Zhiber A.V.,
{\it On the Darboux integrable hyperbolic
equation}, {\em Phys. Lett. A}, 1885, {\bf 208}, 303--308.


\bibitem{andkam1} Anderson I. M. and Kamran N., {\it The variational bicomplex
for hyperbolic second-order scalar partial differential equations in the
plane},  {\em Duke Math. J.}, 1997, {\bf 87}(2), 265--319.

\bibitem{SokSt} Sokolov  V. V. and Startsev S. Ya., {\it Symmetries of nonlinear hyperbolic systems of the Toda chain type}, {\em Theoret. and Math. Phys.}, 2008, {\bf 155}(2), 802--811. 



\bibitem{star} Startsev S.Ya., {\it Laplace invariants for integrable hyperbolic
equations and differential substitutions of Miura type},
{\em PhD Thesis}, Ufa Mathematical Institute, 1997 (in Russian),  arxiv nlin. 1710.11068.

\bibitem{ZibSt} Zhiber A.V.  and  Startsev S. Ya., {\it Integrals, solutions, and existence problems for Laplace transformations of linear hyperbolic systems}, {\em Math. Notes}, 2003, {\bf 74}(6), 803--811. 

\bibitem{gurzib}  Guryeva A. M. and  Zhiber A. V., {\it Laplace Invariants of Two-Dimensional Open Toda Lattices}, {\em Theor. Math. Phys.}, 2004, {\bf 138}(3), 338--355. 


\bibitem{star1} Startsev S.Ya., {\it Differential substitutions of the Miura
transformation type}, {\em Theoret. and Math. Phys.}, 1998, {\bf 116}(3),
1001--1010.

\bibitem{zibsok2} Zhiber A. V. and Sokolov V. V., {\it New example of a
nonlinear hyperbolic equation, possessing integrals},
{\em Theoret. and Math. Phys.}, 1999, {\bf 120}(1), 834--839.



\bibitem{SanWan} Sanders J.A. and Wang J.P, {\it Family of operators and their Lie algebras}, {\em J. Lie Theory}, 2002, 
{\bf 12}(2), 503--514.

\bibitem{kis} Kiselev  A. V. and van de Leur J. W., {\it Symmetry algebras of Lagrangian Liouville-type systems}, 
{\em Theoret. and Math. Phys.}, 2010, {\bf 162}(2), 149--162. 






 

\bibitem{man} Manakov S.V., {\it Note on the integration of Euler's
equations of the dynamics of an n-dimensional rigid body},  {\em Funct. Anal.
Appl.}, 1976, {\bf 10}(4),  93--94.

\bibitem{odrubsok} Odesskii A. V., Roubtsov V.N., and Sokolov V. V., {\it Bi-Hamiltonian ODEs with matrix variables},  
{\em Theoret. and Math. Phys.} 2012, {\bf 171}(1), 442--447.

\bibitem{miksokcmp}  Mikhailov A. V. and  Sokolov V. V., {\it Integrable ODEs
on Associative Algebras},
{\em Comm. in Math. Phys.}, 2000, {\bf 211}(1), 231--251.

\bibitem{wolf} Wolf T., {\it The program CRACK for solving PDEs in general relativity}, {\em Relativity and Scientific Computing}, 
{\rm Ed. F. W. Hehl, R. A. Puntigam and H. Ruder}, Berlin: Springer, 1996, 241--257.

\bibitem{march} Marchenko V. A., 
{\it Nonlinear equations and operator algebras}, Kiev, Naukova dumka, 1986, 152 pp.

\bibitem{OlvSok98} Olver P.J. and Sokolov V.V.,
{\it Integrable Evolution Equations on Associative Algebras},
{\em Commun.  Math. Phys.}, 1998, {\bf 193}, 245-268.


\bibitem{kuper} Kupershmidt B. A., {\it KP or mKP}, Providence, R.I. : American Mathematical Society, 2000, 600 pp.

\bibitem{OlvSok99} Olver P.J. and Sokolov V.V.,
{\it Non-abelian integrable
systems of the derivative nonlinear Schrodinger type},
{\em Inverse Problems.}, 1998, {\bf 6},  L5--L8.

\bibitem{SKV} De Sole A., Kac V.G., and Valeri D., {\it Double Poisson vertex algebras and non-commutative Hamiltonian equations},
{\em Adv. Math.}, 2015, {\bf 281}, 1025--1099.


  \bibitem{For1} Fordy A.P. and Kulish P.,  {\it Nonlinear Schr\"odinger equations and simple Lie algebras}, 
{\em Commun. Math. Phys.}, 1983, {\bf 89}, 427--443.

\bibitem{For2} Fordy A.P.,  {\it Derivative nonlinear Schr\"odinger equations and Hermitian symmetric spaces},
  {\em J. Phys. A.: Math. Gen.}, 1984, {\bf 17}, 1235--1245.

\bibitem{For3} Athorne C. and  Fordy A.P.,  {\it Generalized KdV and MKdV equations associated with symmetric spaces},
        {\em J. Phys. A: Math. Gen.}, 1987, {\bf 20}, 1377--1386.
        
         \bibitem{SviSok94}  Svinolupov, S.I. and Sokolov, V.V. 
{\em Vector--matrix generalizations of classical integrable equations}.
Theor. Math. Phys., 1994, {\bf 100}(2), 959--962.

\bibitem{vinb}  Vinberg E. B., {\it Convex homogeneous cones}, {\em Transl. Moscow Math. Soc.,} 1963, {\bf 12}, 340--403.

\bibitem{svin3}  Svinolupov S.I., {\it Jordan algebras and generalized Korteweg-de Vries equations}, {\em Theor. Math. Phys.}, 
 1991, {\bf 87}(3), 391--403.
 
 \bibitem{jac} Jacobson N., {\it Structure and representations of Jordan algebras}, Providence, 1968, 461 pp. 
 
 

\bibitem{svin44}  Svinolupov S.I., {\it Jordan algebras and integrable systems}, {\em Funct. Anal. Appl.}, 1993, {\bf 27}(4),  257--265. 

\bibitem{sv15} Svinolupov  S.I., {\it Generalized Schr\"odinger equations and Jordan pairs},
{\em Comm. in Math. Physics}, 1992, {\bf 143}(1), 559--575.

\bibitem{Man} Manakov S.V. {\it On the theory of two-dimensional stationary self-focusing electromagnetic waves}. {\em Sov.
Phys. JETP}, 1974, {\bf 38}(2), 248--253

 
 \bibitem{kuli1981} Kulish P.P. and Sklyanin E.K., 
{\it O(N)-invariant nonlinear Schodinger equation. - A new completely integrable
system}, {\em Phys. Lett. A}, 1981, {\bf 84}(7), 349--352.

 \bibitem{meyb} Meyberg K., {\it Lectures on algebras and triple systems}, {\em Lecture Notes},
University of Virginia, Charlottesville, 1972, 266 pp.

\bibitem{odsok1} Odesskii A. V. and Sokolov V. V., {\it Pairs of compatible
associative algebras, classical Yang-Baxter equation and quiver
representations},
{\em Comm. in Math. Phys.}, 2008, {\bf 278}(1), 83--99.

\bibitem{loos} Loos O., {\it Assoziative Tripelsysteme}, {\em Manuscr. Math.}, 1972, {\bf{7}}, 103--112 (in German).

\bibitem{ss} Sokolov V. V. and  Svinolupov S. I., {\it Deformation of
nonassociative algebras and integrable differential equations}, {\em Acta
Applicandae Mathematica}, 1995, {\bf 41}(1-2), 323--339.

 \bibitem{helg} Helgason S., {\it Differential Geometry, Lie Groups, and Symmetric Spaces}, Academic Press, New
York, 1978, 634 pp.


\bibitem{gss} Golubchik I. Z. and  Sokolov V. V., and  Svinolupov S. I., {\it A new class of nonassociative algebras and a generalized factorization method,}  {\em Schr\"odinger Inst. preprint serie}, Wien, 1993, Preprint {\bf 53}, 19 p.

\bibitem{habsokyam} Habibullin  I.V.,  Sokolov V.V., and  Yamilov R.I., {\it Multi-component integrable systems and non-associative structures}, in  {\em Nonlinear physics: theory and
experiment},  Ed. E. Alfinito, M. Boiti, L. Martina, F. Pempinelli,  World Scientific Publisher. Singapore, 1996,  139--168.

\bibitem{sokwolf}  Sokolov V.V. and Wolf T., {\it Classification of integrable polynomial vector evolution equations}, {\em J. Phys. A.}, 2001, {\bf 34}, 11139--11148.

\bibitem{japan1} Tsuchida T. and Wadati M., {\it New integrable systems of derivative nonlinear
Schr\"{o}dinger equations with multiple components}, {\em Phys. Lett. A}, 1999, {\bf 257}, 53--64.

\bibitem{kuper1} Kupershmidt B.A.,
{\it A coupled Korteweg-de Vries equation with dispersion},
 {\em J. Phys. A}, 1985, {\bf 18}, L571--L573.
 
 
\bibitem{drsok} Drinfeld V.G. and Sokolov V.V.,
{\it Lie algebras and equations of Korteweg de Vries type}, 
{\em Jour. Sov. Math.}, 1985, {\bf 30}, 1975--2036.

\bibitem{kac2} De Sole A, Kac V.G. and Valer D., {\it Classical affine W-algebras and the associated integrable Hamiltonian hierarchies for classical Lie algebras}, 2017, arXiv:1705.10103, 57 pp

\bibitem{TsWolf} Tsuchida T. and Wolf T.,  {\it Classification of polynomial integrable systems of mixed scalar and vector evolution equations: I}, 
{\em Journal of Physics A: Mathematical and General}, 2005, {\bf 38}(35), 7691--7733. 


\bibitem{meshsokCMP} Meshkov A.~G. and Sokolov V.~V. {\it  Integrable evolution equations on the $N$-dimensional sphere}, {\em Commun. Math. Phys.}, 2002, {\bf 232}(1), 1--18.

\bibitem{golsok} Golubchik  I.~Z. and  Sokolov V.~V.,
 {\it  Multicomponent generalization of the hierarchy of the Landau-Lifshitz equation},
{\em Theor. and Math. Phys}, 2000, {\bf 124}(1),  909--917.

\bibitem{M-Sok2} Meshkov A.~G. and Sokolov V.~V., {\it Classification of integrable divergent $N$-component  evolution systems}, {\em Theor. Math. Phys.}, (2004), {\bf 139}(2), 609--622.

\bibitem{Bal2} Balakhnev M.~Ju. and Meshkov A.~G., {\it On a classification of integrable vectorial evolutionary equations}, {\em J. Nonlin. Math. Phys.}, 2008,  {\bf 15}(2), 212--226. 


\bibitem{BalMesh} Balakhnev M. Ju. and Meshkov A. G., {\it Integrable anisotropic evolution equations on a sphere}, {\em SIGMA}, 2005, {\bf 1}(paper 027), 11 pp. 

\bibitem{MeshSokHy} Meshkov A.~G. and Sokolov V.~V., {\it  Vector Hyperbolic Equations on the Sphere Possessing Integrable Third-Order Symmetries}, {\em Letters in Mathematical Physics}, 2014, {\bf 104}(3), 341--360.

\bibitem{Laine} Laine M. E., {\it  Sur l'application de la methode de Darboux aux equations $s=f(x,y,z,p,q)$}, {\em Comptes Rendus},  1926, {\bf 182}, 1126--1127.
 

\end{thebibliography}
\end{document}